\documentclass[11pt]{article}

\usepackage[a4paper, margin=2.5cm]{geometry}

\usepackage[T1]{fontenc}
\usepackage[utf8]{inputenc}

\usepackage{mathrsfs}
\usepackage{mathtools}%
\mathtoolsset{showonlyrefs}
\usepackage{amsthm}
\usepackage{amsxtra}%
\usepackage{amsfonts}%
\usepackage{amssymb}%

\usepackage[
    backend=biber, 
    style=alphabetic,
    maxbibnames=99, 
    minbibnames=99
]{biblatex}
\AtEveryBibitem{
    \clearfield{urlyear}
    \clearfield{urlmonth}
    \clearfield{urlday}
    \clearfield{abstract}
    \clearfield{issn}
    \clearfield{isbn}
    \clearfield{editor}
    \clearfield{address}
    \clearfield{url}
}
\usepackage{color}
\usepackage{graphicx}
\usepackage{subfig}
\usepackage{array}
\usepackage{booktabs}
\usepackage{bbm}
\usepackage{enumitem}
\usepackage{dsfont}
\usepackage{authblk}
\usepackage[obeyFinal]{todonotes}
\usepackage{mleftright}
\mleftright

\usepackage{tikz}
\usetikzlibrary {graphs,graphs.standard}

\usepackage{zref-clever}
\zcsetup{nameinlink, cap, noabbrev}

\usepackage{hyperref}
\hypersetup{colorlinks,breaklinks,
	linkcolor=blue,urlcolor=blue,
	anchorcolor=blue,citecolor=blue}

\usepackage{orcidlink}
\newtheorem{theorem}{Theorem}
\newtheorem{lemma}[theorem]{Lemma}
\AddToHook{env/lemma/begin}{%
   \zcsetup{countertype={theorem=lemma}}}
\zcRefTypeSetup{lemma}{Name-sg=Lemma}

\newtheorem{corollary}[theorem]{Corollary}
\AddToHook{env/corollary/begin}{%
   \zcsetup{countertype={theorem=corollary}}}
\zcRefTypeSetup{corollary}{Name-sg=Corollary}
\newtheorem{proposition}[theorem]{Proposition}
\AddToHook{env/proposition/begin}{%
   \zcsetup{countertype={theorem=proposition}}}
\zcRefTypeSetup{proposition}{Name-sg=Proposition}

\theoremstyle{definition}
\newtheorem{remark}[theorem]{Remark}
\AddToHook{env/remark/begin}{%
   \zcsetup{countertype={theorem=remark}}}
\zcRefTypeSetup{remark}{Name-sg=Remark}

\newtheorem{definition}[theorem]{Definition}
\AddToHook{env/definition/begin}{\zcsetup{countertype={theorem=definition}}}
\zcRefTypeSetup{definition}{Name-sg=Definition,Name-pl=Definitions}

\AddToHook{env/assumption/begin}{\zcsetup{countertype={theorem=assumption}}}
\zcRefTypeSetup{assumption}{Name-sg=Assumption,Name-pl=Assumptions}

\AddToHook{env/example/begin}{%
   \zcsetup{countertype={theorem=example}}}
\zcRefTypeSetup{example}{Name-sg=Example,Name-pl=Examples}

\numberwithin{equation}{section}
\numberwithin{theorem}{section}

\DeclareMathOperator{\supp}{supp}
\DeclareMathOperator{\sign}{sign}
\DeclareMathOperator{\dist}{dist}
\DeclareMathOperator{\Id}{Id}
\DeclareMathOperator{\Var}{Var}
\DeclareMathOperator{\Cov}{Cov}

\DeclareMathOperator{\exterior}{ext}
\DeclareMathOperator{\Unif}{Unif}

\DeclareMathOperator{\TV}{TV}

\newcommand{\sprod}[2]{\langle #1\rangle_{#2}}

\newcommand{\set}[1]{\left\{#1\right\}}
\newcommand{\norm}[1]{\left\|#1\right\|}
\newcommand{\abs}[1]{\left\lvert #1 \right\rvert}
\newcommand{\expa}[1]{\exp\left( #1 \right)}

\renewcommand{\bar}[1]{{\overline{#1}}}
\newcommand{\eps}{\varepsilon}
\renewcommand{\d}{\,\mathrm{d}}

\renewcommand{\P}{{\mathbbm P}}
\newcommand{\E}{{\mathcal E}}

\newcommand{\R}{\mathbbm{R}}
\newcommand{\N}{\mathbbm{N}}

\newcommand{\Z}{\mathbbm{Z}}
\newcommand{\I}{\mathbbm{1}}

\let\emptyset\varnothing

\newcommand{\phntm}{\phantom{{}={}}}

\title{Entropic repulsion to the middle layer}
\author{Max Mihailescu\,\orcidlink{0009-0002-7382-2390}\thanks{Hausdorff Center for Mathematics and Institute for Applied Mathematics, University of Bonn. \texttt{mihailescu@iam.uni-bonn.de}} \ \addtocounter{footnote}{1}and Ron Peled\,\orcidlink{0000-0002-6449-4666}\thanks{Department of Mathematics, University of Maryland, College Park. \texttt{peledron@umd.edu}}}

\begin{document}

\maketitle
\begin{abstract}
    We consider $\nabla\varphi$ height functions with even and convex interaction energy $W$ on the lattice $\Z^d$, which are restricted to take values in the set $\{-S, \ldots, S\}$ for some integer $S\ge1$. We study the effect of entropic repulsion, which tends to push the spin values to the middle layer.
    
    We prove that the model has a \emph{unique} Gibbs measure, with exponential decay of correlations, in the cases:
    \begin{itemize}
        \item Dimension $d=2$ at \emph{all temperatures}.
        \item Dimensions $d\ge3$ at \emph{all temperatures}, for a wide class of $W$ with non-increasing second derivative, including the family $W(x)=|x|^p$ for $p\in[1,2]$.
        \item Dimensions $d\ge 3$ at both low and high temperatures, $T\in (0,\frac{W(1)d}{4(\ln d+\ln 8)})\cup(d W(2S),\infty]$, with the normalization $W(0)=0$. At low temperatures, our proof provides an alternative to Pirogov--Sinai methods.
    \end{itemize}
    Conversely, we exhibit a class of even and convex interaction energies $W$ which, in high dimensions and suitable temperature regimes, have \emph{multiple} Gibbs measures.
    
    Though uniqueness may fail, we show that the magnetization of every Gibbs measure lies in $(-\frac{1}{2},\frac{1}{2})$. This implies the delocalization of the model restricted to take values in $\{0,1,\ldots\}$ (i.e., conditioned to lie above a floor) for all dimensions, any even and convex $W$, and all temperatures.

    Our methods extend to additional setups: We prove that height functions taking values in the real interval $[-1,1]$ \emph{always} have a unique Gibbs measure, a result previously proved only for the quadratic interaction. For height functions taking values in $\{-S+\frac{1}{2}, \ldots, S-\frac{1}{2}\}$, $S\ge1$ integer, we prove that the magnetization of every Gibbs measure lies in $(-1,1)$.

    The special case $W(x)=x^2$ of our results addresses questions left open in the work of Bricmont--El Mellouki--Fr\"ohlich (1986).
\end{abstract}

\section{Introduction}

Statistical physics provides many examples of interfaces, such as the domain walls between different phases in spin systems, or in the study of elastic membranes. Generally, their properties are determined by a competition between surface tension, which favors flat interfaces, and entropy, which favors roughness. Such interfaces are often modeled by
$\nabla\varphi$ height functions: functions $\varphi$ on $\Z^d$, which may be real- or integer-valued, with formal Hamiltonian
\begin{equation}\label{eq:formal hamiltonian without rho}
    \mathscr H(\varphi) :=  \beta\sum_{\set{i,j} \in \mathcal E(\Z^d)} W\left(\varphi_i - \varphi_j\right),
\end{equation}
where the \emph{interaction energy} $W$ is an even function (i.e., $W(x) = W(-x)$), where $\mathcal E(\Z^d)$ is the set of edges of the $\Z^d$ lattice and where, for convenience in our later arguments, we include the inverse temperature parameter $\beta$ directly in the Hamiltonian. One further restricts to convex $W$, to obtain the stochastic monotonicity of the distribution in its boundary values (FKG).

A natural family of examples is given by $W(x) = |x|^p$ for $p\ge 1$. Of special note are $W(x)=|x|$, whose integer-valued version is named the Solid-On-Solid model and is commonly used to approximate interfaces in the Ising model, and $W(x) = x^2$, the (lattice) Gaussian free field. The real-valued Gaussian free field is exactly solvable, and the extent to which its properties extend to other choices of interaction energies, or to the integer-valued case, have been the subject of many investigations; see Velenik~\cite{Velenik:LocalizationDelocalizationRandom2006a}, Dembo--Funaki~\cite{Dembo:StochasticInterfaceModels2005} and Sheffield~\cite{Sheffield:RandomSurfaces2005} for classical monographs on the subject.

In this work we consider the effect of confining $\nabla\varphi$ height functions between two hard walls (a floor and a ceiling), by requiring that $|\varphi_i|\le S$ for all $i$. Under these constraints, height functions centered around height $0$ enjoy more room to fluctuate and are thus entropically favored. How strong is this \emph{entropic repulsion} effect? A seminal result on this topic was obtained by McBryan--Spencer~\cite{McBryan:DecayCorrelationsSymmetric1977} who considered the effect for the \emph{real-valued} Gaussian free field. Their results, extended by Bricmont--El Mellouki--Fröhlich~\cite[Proposition A1]{Bricmont:RandomSurfacesStatistical1986a} based on an argument of Sokal, show that the hard wall confinement localizes the model in a strong sense: It has a unique Gibbs measure, and this measure is centered at height zero and its correlations decay exponentially. An alternative proof was recently obtained by D’Alimonte--Lammers~\cite[Section 8]{DAlimonte:FreeEnergyAnalyticity2026}.

Our goal is to investigate the effect of the hard wall confinement $|\varphi_i|\le S$ for \emph{integer-valued} $\nabla\varphi$ height functions (which can be seen as a spin-$S$ model). Will such models again have a unique Gibbs measure with exponential decay of correlations? 

It is a classical fact that uniqueness and exponential decay hold at \emph{high temperature}, by the Dobrushin uniqueness condition. Further, the entropic repulsion intuition, that concentrating on the middle layer provides more room for low-energy excitations, can be justified at sufficiently \emph{low temperatures} by Pirogov--Sinai theory (see, e.g.,~\cite[Appendix 3]{Bricmont:RandomSurfacesStatistical1986a} and, for $S=1$, Bricmont--Slawny~\cite[Section III]{Bricmont:FirstOrderPhase1986}). For the latter fact, our paper develops a new proof which avoids Pirogov--Sinai theory and relies instead on a comparison to a model with a magnetic field. All these results, however, leave open the behavior in the \emph{intermediate temperature regime}.

Let us now highlight our main contributions. Our first main result is that uniqueness and exponential decay of correlations hold in \emph{two dimensions at all temperatures}, for all even and convex interaction energies $W$, and all integers $S\ge 1$. Somewhat surprisingly, we discover a more complicated picture in three and higher dimensions. On the one hand, uniqueness and exponential decay continue to hold \emph{for all temperatures} and integers $S\ge 1$ for a wide range of interaction energies, including the family $W(x) = |x|^p$ for $1\le p\le 2$ (in particular, for the Solid-On-Solid and integer-valued Gaussian free field models). 
On the other hand, in high dimensions, we give examples of even and convex interaction energies for which there are \emph{multiple Gibbs measures} at certain temperatures!

In addition, we show that even when uniqueness fails, the models are still approximately centered in the sense that in every dimension $d \geq 2$, temperature, and $S \in \Z_{>0}$, the magnetization of every Gibbs measure lies in the interval $(-\frac12, \frac12)$.

\smallskip
Finally, we digress from the integer-valued setup and briefly consider other options for the spin values. First, we revisit the real-valued case, $\varphi:\Z^d\to[-1,1]$ (studied by~\cite{McBryan:DecayCorrelationsSymmetric1977,Bricmont:RandomSurfacesStatistical1986a,DAlimonte:FreeEnergyAnalyticity2026} for $W(x) = x^2$). Our methods adapt to this case and show that uniqueness of Gibbs measures is a generic phenomenon, occurring for \emph{all} even and convex interactions $W$ (in all dimensions and temperatures). We have not explored the question of exponential decay of correlations in the real valued case.

Second, we consider height functions valued in an \emph{even} number of states, $\varphi:\Z^d\to\{-S+\frac{1}{2},\ldots,S-\frac{1}{2}\}$, $S \in \Z_{>0}$. Here, one may expect multiplicity of Gibbs measures to be the norm at low temperatures (as occurs, e.g., for the Ising model, obtained when $S=1$ and $W(1)>W(0)$). Still, adapting our methods we conclude that the magnetization of all Gibbs measures must lie in $(-1,1)$, for all even and convex interactions $W$.

\subsubsection*{Motivation and related literature}

We briefly discuss additional background and motivation for the study of interfaces confined between two hard walls.

First, one physical motivation for this model is the \emph{commensurate-incommensurate transition}. A system in the incommensurate phase close to the transition point consists of regions of commensurate order, which are separated by domain walls. These walls can be modeled as a gas of random surfaces, which interact through the constraint that they must not intersect. As a simplification for a uniaxial system, Fisher--Fisher~\cite{Fisher:WallWanderingDimensionality1982} proposed to consider a sequence of stacked random interfaces in which every other interface is flat, so the model reduces to the study of a random surface confined between two walls. 

This point of view was taken up by Bricmont--El Mellouki--Fröhlich~\cite[Section 3]{Bricmont:RandomSurfacesStatistical1986a} (see also~\cite[Section 4]{Velenik:LocalizationDelocalizationRandom2006a}) who studied random surfaces confined between a floor and a ceiling in both the real- and integer-valued cases. In the integer-valued case, they focused on the Gaussian free field interaction and obtained the following results for the infinite-volume limit with \emph{zero boundary values}: (1) In dimensions $d\ge 3$ and in dimension $d=2$ at low temperatures, the model exhibits exponential decay of correlations for any integer $S\ge 1$ (for $d\ge 3$ they rely on a result of Göpfert--Mack~\cite{Gopfert:ProofConfinementStatic1982b} for the unconfined model). (2) In dimension $d=2$, the model is never ``critical'' in the sense that its susceptibility (i.e., the sum of its two-point function) is finite at all temperatures. Our work complements their study in two ways. First, we show that there is a unique Gibbs measure in all dimensions $d\ge 2$ and all temperatures. Second, they point out that  exponential decay remains open in dimension $d=2$ at intermediate and high temperatures, and ask whether ``some kind of phase transition'' takes place there. We show that exponential decay persists at all temperatures (ruling out an analog of a Berezinskii--Kosterlitz--Thouless transition in the confined model). We emphasize that the Sokal argument from the real-valued case, which derived uniqueness of Gibbs measures from the exponential decay of correlations in the zero-boundary values measure, does not apply in the integer-valued case.

Second, the case $S=1$ and $W(x) = \frac{1}{2} x^2$ of our setup is a special case of the \emph{Blume-Capel} model. The latter is the model on $\varphi:\Z^d\to\{-1,0,1\}$ with Hamiltonian
\begin{equation}\label{eq:Blume-Capel Hamiltonian}
    \mathscr H^{\text{BC}}(\varphi) :=  -\beta\sum_{\set{i,j} \in \mathcal E(\Z^d)} \varphi_i \varphi_j - \Delta\sum_i \varphi_i^2,
\end{equation}
which reduces to~\eqref{eq:formal hamiltonian without rho} when $\Delta = -2d\beta$. The Blume-Capel model can be thought of as a dilute version of the Ising model (with $\varphi_i=0$ indicating that $i$ is absent from the lattice). It has received much attention since its introduction by Blume~\cite{Blume:TheoryFirstOrderMagnetic1966} and Capel~\cite{Capel:PossibilityFirstorderPhase1966}; see~\cite{Bricmont:FirstOrderPhase1986} and~\cite{Friedli:StatisticalMechanicsLattice2017} for a mathematical treatment of the low temperature phase diagram.
It is known that for every $\Delta\in\R$ the model undergoes a magnetization phase transition at some critical value $\beta_c(d,\Delta)\in(0,\infty)$ (see~\cite[Theorem 1.1]{Gunaratnam:ExistenceTricriticalPoint2024a}). Recently, Gunaratnam--Krachun--Panagiotis~\cite{Gunaratnam:ExistenceTricriticalPoint2024a} established several fundamental facts on the model, including details of its critical behavior and a proof of exponential decay of correlations throughout the subcritical regime. The exponential decay result was very recently extended by Panagiotis--Veitch~\cite{Panagiotis:SubcriticalSharpnessRealvalued2026} to a large class of generalized Ising models, including the Hamiltonian~\eqref{eq:Blume-Capel Hamiltonian} when $\varphi$ satisfies our restriction $|\varphi_i|\le S$. A further result on sub-critical sharpness in a class of Ising-like models was very recently obtained by van Engelenburg--Heeney--Lis \cite{Engelenburg:DoubleClusterSwapping2026} by introducing a new representation for these models.

Our uniqueness result implies that the point $\Delta=-2d\beta$ always lies in the subcritical regime\footnote{Conversely, Pirogov--Sinai theory may be used to show that for any $\varepsilon>0$, the model with $\Delta = -(2d-\varepsilon)\beta$ has multiple Gibbs measures for large enough $\beta$. An explicit theorem implying this is~\cite[Theorem 2.2 with condition (18)]{Peled:LongrangeOrderDiscrete2020}, after reweighting to make all single-site weights $1$ using Section 3.4.1 there.}. Consequently, for the interaction $W(x)=\frac{1}{2}x^2$, exponential decay of correlations follows from~\cite{Gunaratnam:ExistenceTricriticalPoint2024a} (for $S=1$) and~\cite{Panagiotis:SubcriticalSharpnessRealvalued2026} (for general $S$). Our results show that such exponential decay is a general property of confined $\nabla\varphi$ height functions.

Third, it is also interesting (e.g., in connection with the notion of wetting) to consider a surface restricted to lie above a floor, in the sense that $\varphi_i\ge 0$ everywhere. Here, entropy considerations may repel the surface from the floor, lifting it to a height that grows with the system size. This entropic repulsion phenomenon has been established at low temperatures (where Pirogov--Sinai theory is applicable) and also when the unrestricted surface is rough, but appears to remain open in other temperature ranges~\cite[Sections 4.1 and 4.2]{Bricmont:RandomSurfacesStatistical1986a},~\cite[Section 3]{Velenik:LocalizationDelocalizationRandom2006a},~\cite{lubetzky2016harmonic}. As observed in~\cite{Bricmont:RandomSurfacesStatistical1986a}, the phenomenon will occur whenever the surface confined between two hard walls (at arbitrary finite distance from each other) is repelled to the middle layer. Our results imply that this delocalization happens at \emph{all temperatures} and in \emph{all dimensions} for all convex and even $W$, in both the integer- and real-valued cases.

\subsection{Main results}
\label{sec:main_results}

We proceed to formally describe our results. Let $S \in \Z_{>0}:=\{1,2,\ldots\}$. The model is described by the set of configurations
\begin{equation}
    \Omega :=  \set{-S, \dots, S}^{\Z^d}
\end{equation}
and the formal Hamiltonian
\begin{equation}\label{eq:def_hamiltonian}
    \mathscr H_{\beta, \rho}(\sigma) := \beta \sum_{\set{i,j} \in \mathcal E(\Z^d)} W\left(\sigma_i - \sigma_j\right) + \rho \sum_{i \in \Z^d} \sigma_i^2, \quad \sigma \in \Omega,
\end{equation}
where $W \colon \{-2S,\ldots, 2S\} \to \R$ is the interaction energy, $\beta>0$ the interaction strength, or inverse temperature, parameter and $\rho \in \R$ a parameter; our focus is on the case $\rho=0$, but we allow general (possibly negative!) $\rho$, both as a natural extension of our results and as some of our proofs make use of it. A positive $\rho$ may be thought of as a ``squared mass''.

We assume throughout that $W$ is even (i.e., $W(n)=W(-n)$ for every $n$) and convex, with the latter meaning that 
\begin{equation}\label{eq:second discrete derivative of W}
    W^{(2)}(n) := W(n+1)- 2 W(n)+W(n-1)  \geq 0
\end{equation}
for each $n$ in this interval. If strict inequality holds in~\eqref{eq:second discrete derivative of W} for all such $n$ then we say that $W$ is strictly convex (but this is not generally assumed). 
It convenient to normalize $W$ by requiring that $W(0)=0$.

Our interest is in the infinite-volume behavior of the model, as captured by the notion of Gibbs measures (cf.~\zcref{sec:gibbs_measures})
Let $\mathscr G(d, \beta,\rho)$ be the set of all Gibbs measures for a given dimension $d$, interaction strength $\beta>0$ and $\rho \in \R$. We will study under which assumptions we have uniqueness in the sense that 
\begin{equation}
    \abs{\mathscr G(d, \beta, \rho)} = 1
\end{equation}
and, when uniqueness holds, under which assumptions the correlations of the infinite volume measure decay exponentially fast.
For the latter goal, we say that a measure $\mu$ on $\Omega$ has \emph{exponential decay of correlations with constants $c_1,c_2>0$} if for all $f,g \colon \Omega \to \R$ local,
\begin{equation}\label{eq:exponential_decay_local_functions_main}
\abs{\mu\left(fg\right) - \mu\left(f\right) \mu\left(g\right)}
    \leq \norm{f}_\infty \norm{g}_\infty c_1^{\abs{\supp f} + \abs{\supp{g}}} \expa{-c_2 \dist\left(\supp f, \supp g\right)},
\end{equation}
where $\mu({\cdot})$ denotes the expectation with respect to $\mu$ and where $f$ is called \emph{local} if there is a finite $\Delta \Subset \Z^d$ such that whenever $\sigma_1,\sigma_2\in\Omega$ agree on $\Delta$ then $f(\sigma_1)=f(\sigma_2)$. The smallest such $\Delta$ is denoted by $\supp f$. 

\medskip
The following are our main results. Again, our focus is on the $\rho=0$ case. 
First, it is standard that uniqueness holds at high temperatures. The following theorem, for which we give two proofs (providing slightly different constants), makes this explicit.
\begin{theorem}[High temperature]\label{thm:main_high_temperature}
    Let $d \geq 2$ and $\rho \in \R$. Assume that $W$ is even, convex and normalized such that $W(0)=0$.
    If $\beta < \frac{1}{W(2S)d}$ then $\mathscr G(\beta,\rho) = \set{\mu_{\beta,\rho}}$ and $\mu_{\beta,\rho}$ has exponential decay of correlations with constants $c_1,c_2>0$ depending on $d, \beta, W, S$.
\end{theorem}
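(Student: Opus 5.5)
We would verify the Dobrushin uniqueness condition. Dobrushin's theorem gives both uniqueness of the Gibbs measure and exponential decay of correlations in the form \eqref{eq:exponential_decay_local_functions_main}, with constants depending only on the Dobrushin coefficient and the range of the interaction. The single-site specification at $i\in\Z^d$, given boundary values $\eta$ on the $2d$ neighbours, is
\[
\gamma_i(\sigma_i=s\mid\eta)\propto \exp\Bigl(-\beta\sum_{j\sim i}W(s-\eta_j)-\rho s^2\Bigr),\qquad s\in\{-S,\dots,S\}.
\]
The $\rho$ term is a single-site weight that does not depend on $\eta$, which is why the condition will be uniform in $\rho$. The task is to bound the Dobrushin coefficients
\[
C_{ij}:=\sup\bigl\{\|\gamma_i(\cdot\mid\eta)-\gamma_i(\cdot\mid\eta')\|_{\TV} : \eta=\eta' \text{ off } j\bigr\}
\]
and show $\sum_j C_{ij}=2d\sup_{j\sim i}C_{ij}<1$ whenever $\beta<\frac{1}{W(2S)d}$.

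For the coefficient bound, fix $j\sim i$ and two boundary conditions differing only at $j$. The two measures on $s$ are $\nu$ and $\nu'$ with $\frac{d\nu'}{d\nu}(s)\propto e^{-\beta h(s)}$, where $h(s)=W(s-\eta'_j)-W(s-\eta_j)$. Because $W$ is even and convex with $W(0)=0$, it satisfies $0\le W\le W(2S)$ on $\{-2S,\dots,2S\}$. Hence $h$ takes values in an interval of length at most $W(2S)$; more precisely $|h|\le W(2S)$, but the oscillation is what matters. For two probability measures whose log-density ratio has oscillation $\delta=\beta W(2S)$, the total variation distance is at most $\tanh(\delta/4)$, or more crudely at most $1-e^{-\delta}\le\delta$. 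The crude bound gives $C_{ij}\le \beta W(2S)$. Summing over the $2d$ neighbours gives $\sum_j C_{ij}\le 2d\beta W(2S)$.

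This is off by a factor $2$ from the claimed threshold, so the main obstacle is sharpening the single-coefficient estimate to $C_{ij}\le \beta W(2S)/2$. The first thing we would try is the $\tanh(\delta/4)\le\delta/4$ bound. Set $a=\min h$ and $b=\max h$, so $b-a\le\delta/\beta$. Writing $\nu'=\nu e^{-\beta h}/Z$ with $e^{-\beta b}\le Z\le e^{-\beta a}$, the total variation distance $\frac12\nu|e^{-\beta h}/Z-1|$ is maximised by two-point distributions. A direct computation then gives $\tanh(\delta/4)$, hence $C_{ij}\le \beta W(2S)/4$. With this bound $\sum_j C_{ij}\le d\beta W(2S)/2<1$ under the hypothesis, which is even stronger than needed.

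If that estimate turns out to be delicate, an alternative is to use the version of Dobrushin's criterion with the coupling metric $|\sigma-\sigma'|$ (Wasserstein-type influence). An alternative second proof would couple the heat-bath dynamics via disagreement percolation: the probability that a site's update depends on its neighbours is at most $1-e^{-\beta\,\mathrm{osc}}$, and we compare this with the site-percolation threshold $1/(2d-1)$. Either route gives uniqueness and exponential decay with $c_1,c_2$ depending on $d,\beta,W,S$.
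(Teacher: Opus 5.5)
Your route is the paper's second proof: Dobrushin's criterion, with exponential decay taken from the Dobrushin/F\"ollmer theory. However, the key single-site estimate contains an error.

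\textbf{The oscillation of $h$ is $2W(2S)$, not $W(2S)$.} You claim that $h(s)=W(s-\eta'_j)-W(s-\eta_j)$ takes values in an interval of length at most $W(2S)$. From $0\le W\le W(2S)$ you only get $h\in[-W(2S),W(2S)]$, and the full oscillation $2W(2S)$ is attained. For $\eta_j=S$ and $\eta'_j=-S$ one has $h(S)=W(2S)-W(0)=W(2S)$ and $h(-S)=W(0)-W(2S)=-W(2S)$. Hence your bound $C_{ij}\le\tanh(\beta W(2S)/4)$ fails in general.

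A concrete counterexample: take $d=2$, $S=1$, $W(x)=|x|$, $\rho=0$, and put the other three neighbours of $i$ at $1,-1,0$. The two conditional laws on $(-1,0,1)$ are proportional to $(e^{-5\beta},e^{-3\beta},e^{-3\beta})$ and $(e^{-3\beta},e^{-3\beta},e^{-5\beta})$. Their total variation distance is $(1-e^{-2\beta})/(2+e^{-2\beta})\approx 2\beta/3$. For small $\beta$ this exceeds your claimed $\tanh(\beta W(2)/4)=\tanh(\beta/2)\approx\beta/2$.

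\textbf{The fix is immediate, but it removes the slack you claimed.} The log-density ratio has oscillation $2\beta W(2S)$. Your two-point extremal computation then gives $C_{ij}\le\tanh(\beta W(2S)/2)\le\frac{\beta}{2}W(2S)$. This is exactly the single-site bound the paper takes from Simon's estimate. Summing over the $2d$ neighbours yields $\sum_j C_{ij}\le d\beta W(2S)$, which is $<1$ precisely when $\beta<\frac{1}{dW(2S)}$.

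So the hypothesis is used with no room to spare, not ``even stronger than needed''. Likewise, the crude bound $1-e^{-\delta}$ is off by a factor $4$, not $2$. With this correction your argument coincides with the paper's Dobrushin proof.

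The fallback routes you list are then unnecessary. With the natural bounds, disagreement percolation would only give a threshold of order $\beta W(2S)\lesssim d^{-2}$, since the relevant single-site discrepancy involves all $2d$ neighbours simultaneously.
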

\begin{remark}
    In fact, as the proof using Dobrushin's uniqueness criterion (cf.~\zcref{sec:high_temperatures}) shows,~\zcref{thm:main_high_temperature} continues to hold for a general interaction $W:\Z\to\R$ (without assuming that it is even or convex) if one replaces $W(2S)$ by $\max_{s,t \in \set{0, \dots, 2S}} \abs{W(s) - W(t)}$. Also, note that the constant $\rho$ did not enter in any of the estimates.
\end{remark}

Second, Pirogov--Sinai methods can be used to justify uniqueness and exponential decay at low temperatures, due to entropic repulsion. We provide an alternative proof, which relies on a comparison to a model with a magnetic field, yielding the following result.
\begin{theorem}[Low temperature]\label{thm:main_low_temperature}
    Let $d \geq 2$ and assume that $W$ is even, convex and satisfies $0=W(0)<W(1)$. For any $\beta > \frac{4(\ln d+\ln 8)}{W(1) d}$, there exists $\rho_*(\beta) > 0$ such that for every $\rho \in (-\rho_*, \infty)$ we have $\mathscr G(\beta,\rho) = \set{\mu_{\beta,\rho}}$ and  $\mu_{\beta,\rho}$ has exponential decay of correlations with constants $c_1 \equiv c_1(S) > 0$ and $c_2 \equiv c_2(d, \beta, W) > 0$.
\end{theorem}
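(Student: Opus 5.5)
Since the paper hints that the proof ``relies on a comparison to a model with a magnetic field'', we plan to dominate the model by one where spins are pinned near $0$. By FKG (convex, even $W$), any Gibbs measure in $\mathscr G(\beta,\rho)$ is sandwiched between the maximal measure $\mu^+$ (limit with boundary $+S$) and the minimal measure $\mu^-$ (boundary $-S$). By the symmetry $\sigma\mapsto-\sigma$, $\mu^-$ is the reflection of $\mu^+$. So uniqueness reduces to showing $\mu^+(\sigma_0)\le 0$, or more usefully $\mu^+(\sigma_0\ge 1)$ small plus a coupling argument. Our plan is to replace the field $\rho\sigma^2$ (which is symmetric and cannot by itself be compared monotonically) with a one-sided comparison. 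For configurations with boundary condition $+S$, we consider the auxiliary variable $\tau_i=\max(\sigma_i,0)$, or alternatively compare $\mu^+$ to a measure on $\{0,\dots,S\}$ with a negative linear magnetic field $-h\sum\sigma_i$ that dominates it stochastically. The heuristic is that at low temperature, each site sitting at height $\sigma_i\ge1$ while neighbours sit at $\sigma_i$ has no benefit: the middle layer $0$ has $2$ neighbouring heights ($\pm1$) for excitations, while height $S$ has one. This entropic gain is worth roughly $e^{-\beta W(1)}$ per site.

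Concretely, the key steps are as follows.
\begin{itemize}
\item[(i)] A Peierls/disagreement-percolation estimate showing that under $\mu^+$ at large $\beta$, with $\beta W(1)d/4>\ln(8d)$, the set of sites where $\sigma_i\neq\sigma_j$ for some neighbour forms only small finite components. This is the counting step: a connected set of $n$ broken edges costs $e^{-\beta W(1) n}$, and there are at most $(8d)^{n}$-ish such sets, which is where the constant $\frac{4(\ln d+\ln 8)}{W(1)d}$ should enter.
\item[(ii)] Hence $\mu^+$ is a small perturbation of a constant configuration $\sigma\equiv k$ for some $k$. Convexity and a flip/shift map show that $k=0$ is strictly favoured. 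Shifting a large flat region from height $k\ge1$ down by one preserves all gradient energies inside, and strictly increases the number of available low-energy single-site excitations whenever $k=S$. This requires a comparison of free energies, which we would phrase as a magnetic-field model: add a field $h\sum\sigma_i$ with small $h$ and check that the flat phases $k\neq0$ have strictly larger free energy. The point is that their excitation density is smaller by order $e^{-2d\beta W(1)}$. The flexibility $\rho\in(-\rho_*,\infty)$ appears here: positive $\rho$ only helps, and negative $\rho$ small relative to the entropic gain is absorbed.
\item[(iii)] Conclude $\mu^+=\mu^-$ via FKG and symmetry. Derive exponential decay from the finite-cluster structure in (i): disagreement percolation between two copies, coupled monotonically, is subcritical. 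This gives \eqref{eq:exponential_decay_local_functions_main} with $c_1$ depending on $S$ via the number of spin values.
\end{itemize}

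The main obstacle is step (ii): rigorously showing that the height-$S$ boundary condition cannot sustain a phase at height $\ge1$. That is, the entropic advantage of $0$ must beat boundary effects uniformly in volume, without full Pirogov--Sinai contour expansions. We would first try a Peierls-type map. Given a configuration under $+S$ boundary in which the origin lies in a large region at height $\ge1$ surrounded by a contour, lower the interior by one. Energy along the contour does not increase, by convexity, when moving toward the outer level. Then count the extra entropy: the number of sites in the interior that can be independently lowered/raised by one. This gives the gain of order $e^{-\beta W(1)2d}\cdot|\text{interior}|$ against the contour cost, which closes once the volume term dominates.
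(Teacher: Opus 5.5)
As you suspect, step (ii) is the crux. It is in fact the whole content of the theorem, and the Peierls map you sketch for it does not close.

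\textbf{Where (i)--(iii) break down.}
\begin{itemize}
\item At $\rho=0$ all $2S+1$ flat configurations are zero-energy ground states, so the preference for layer $0$ is purely entropic. Lowering a region by one changes the phase inside it. Your comparison therefore needs the ratio of the partition functions of that region in two different phases, i.e.\ a free-energy difference. The sites that ``can be independently lowered/raised'' are not independent, and controlling these ratios uniformly over all contours is exactly what Pirogov--Sinai theory (cluster expansions, truncated free energies) is for.
\item With $+S$ boundary values, the height-$\ge1$ region containing the origin is typically the outer region itself, attached to $\partial\Lambda$. There is no contour with a lower exterior around it. Pushing it down means creating a contour against the exterior level $S$, which costs energy rather than being free by convexity.
\item For heights $1\le k\le S-1$ the claimed volume gain $e^{-2d\beta W(1)}\abs{\text{interior}}$ does not exist. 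Single-site excitations to $k\pm1$ are available exactly as at height $0$. The gap between layers $k$ and $k-1$ comes, at leading order, only from excitations of height $S-k+1$, of weight roughly $e^{-2d\beta W(S-k+1)}$.
\item Step (i) cannot precede (ii). A bound $e^{-\beta W(1)n}$ on a contour of broken edges requires erasing it by shifting its interior. That again changes the interior phase, and its weight, by a factor exponential in the volume. Indeed, in large boxes with $+S$ boundary values there are contours of size comparable to $\abs{\partial\Lambda}$ separating the boundary layer from the bulk.
\item Your counting does not give the stated threshold. $(8d)^n$ connected sets against $e^{-\beta W(1)n}$ yields $\beta W(1)>\ln(8d)$, with no $\frac{\ln d}{d}$ scaling. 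That scaling needs the count $\mathcal S(k)\le e^{2(\ln d+\ln 8)k/d}$ of connected sets with connected complement, indexed by their number $k$ of boundary edges (Balister--Bollob\'as).
\item In (iii), knowing that $\mu^+$ is close to $\sigma\equiv 0$ does not give $\mu^+(\sigma_0)=0$ exactly, which is what uniqueness amounts to.
\end{itemize}

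\textbf{The missing idea.} It is the magnetic-field domination you allude to but never construct, and its precise form is what removes every phase comparison.

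The paper deletes the single bottom value: $\mu^\eta_{\Lambda;\beta,\rho}\le_{st}\nu^\eta_{\Lambda;\beta,\rho,\lambda_0H}$. Here $\nu$ lives on $\set{-S+1,\dots,S}$, $H(k)=\I_{k\ge1}$, and $\lambda_0=-\ln\mu^+_{\set{0};\beta,\rho}(\sigma_0\ge-S+1)>0$. This follows from a monotone single-site Glauber coupling. Deleting $-S$ raises every single-site conditional tail probability by a factor at least $e^{\lambda_0}$, and the field lowers it by a factor at most $e^{\lambda_0}$.

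The point is that the new spin space is symmetric about $\frac12$. With $q(\sigma_i,\sigma_j)=1-e^{\beta W(\sigma_i-\sigma_j)-\beta W(\abs{\sigma_i-\frac12}+\abs{\sigma_j-\frac12})}$, the reflection $\sigma\mapsto1-\sigma$ on any open cluster not touching $\Lambda^c$ preserves all interaction weights. It also exchanges $\set{\sigma\ge1}$ with $\set{\sigma\le0}$. The only remaining asymmetry is the field together with $\rho\sigma^2$, which strictly favours spins $\le0$ as long as $\lambda_0(\rho)+\rho(2S-1)>0$; this is where $\rho_*$ comes from. So in $\nu$ the two competing ``phases'' are related by an exact symmetry, and a Peierls argument needs no free-energy comparison.

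The argument then runs as follows.
\begin{itemize}
\item One gets exponential decay of open clusters of spins $\ge1$ (Lemma~\ref{thm:>12_cluster_exponential_decay}), which controls the boundary.
\item One then runs a Peierls bound on the hole-filled union $\Delta$ of the open clusters meeting the $+$-component of the origin. Every edge of $\partial_{\exterior}\Delta$ is closed, and each is closed with conditional probability at most $e^{-\beta W(1)/2}$. The reason is that, under a field favouring $\le0$, opposite-sign assignments of the two clusters meeting at that edge weigh no more than equal-sign ones.
\item Summing $e^{-\beta W(1)k/2}\mathcal S(k)$ gives exactly $\beta>\frac{4(\ln d+\ln 8)}{W(1)d}$, independently of $S$.
\item Finally, $\bar\mu^+(0\overset{\text{open}}{\longleftrightarrow}\partial B_N)\le2\mu^+(0\overset{+}{\longleftrightarrow}\partial B_N)\le2\nu^+(0\overset{+}{\longleftrightarrow}\partial B_N)$. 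The reflection $\sigma\mapsto-\sigma$ on finite open clusters of the original percolation (Theorem~\ref{thm:equivalence_no_percolation_uniqueness}, Lemma~\ref{thm:two_point_correlation_relation_percolation}, Proposition~\ref{thm:two_point_correlations_dominate}) then yields exact uniqueness and exponential decay.
\end{itemize}
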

\begin{remark} 
Our proof also yields progress on the problem of ``percolation of finite clusters''~\cite{Grimmett:PercolationFiniteClusters2014a,Bock:PercolationFiniteClusters2020}, as we describe in \zcref{rem:percolation of finite clusters}.
\end{remark}

Given the previous two results, our main interest is whether uniqueness and exponential decay in fact hold at \emph{all} temperatures. Our next result shows that this is always the case in two dimensions.
\begin{theorem}[Two dimensions, all temperatures]\label{thm:main_two_dimensions}
    Let $d=2$ and assume that $W$ is even and convex. 
    For any $\beta > 0$ 
    there exists $\rho_*(\beta) > 0$ such that for every $\rho \in (-\rho_*, \infty)$ we have $\mathscr G(\beta,\rho) = \set{\mu_{\beta,\rho}}$ and  $\mu_{\beta,\rho}$ has exponential decay of correlations with constants 
    $c_1 \equiv c_1(S) > 0$ and
    $c_2 \equiv c_2(S, \beta, \rho, W, d) > 0$.
\end{theorem}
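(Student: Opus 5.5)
We will obtain uniqueness from the FKG structure, and exponential decay from a finite-volume criterion that is available in two dimensions. Since $W$ is even and convex, the specification is monotone. So there exist extremal Gibbs measures $\mu^+$ and $\mu^-$, obtained as limits with boundary values $+S$ and $-S$, with $\mu^-\preceq\mu\preceq\mu^+$ for every $\mu\in\mathscr G(\beta,\rho)$. The global spin flip $\sigma\mapsto-\sigma$ preserves the Hamiltonian~\eqref{eq:def_hamiltonian}, so $\mu^-$ is the image of $\mu^+$ under the flip. Hence uniqueness is equivalent to $\mu^+(\sigma_0)=\mu^-(\sigma_0)$, that is, to $\mu^+(\sigma_0)=0$. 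This equivalence needs a one-site reduction: each $\sigma_i$ takes finitely many values, so agreement of means of the monotone observables $\I\{\sigma_0\ge k\}$ suffices. We therefore work with these for all $k$, or with the increasing coupling $\mu^-\preceq\mu^+$ directly.

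The main step is to rule out $\mu^+\neq\mu^-$ in $d=2$. We use a planar self-dual crossing argument in the spirit of Zhang's argument, or of the Sheffield / Duminil-Copin--Raoufi--Tassion style proofs for two-dimensional monotone models. Suppose $\mu^+\ne\mu^-$. Then $\mu^+$ has $\mu^+(\sigma_0\ge 1)>\mu^+(\sigma_0\le -1)$. By ergodicity of the translation-invariant extremal measure $\mu^+$, the level set $\{\sigma\ge 1\}$ or its $*$-connected analog percolates with positive density.

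The entropic repulsion input is a comparison that shows positive and negative excursions cannot coexist in a symmetric way. We would use the magnetization bound announced in the abstract (magnetization in $(-\frac12,\frac12)$) in combination with a "shift by one" argument. On a domain bounded by a circuit of $\{\sigma\ge1\}$, the configuration $\sigma-1$ takes values in $\{-S-1,\dots,S-1\}$. Convexity and the hard walls make the measure there stochastically dominated by the measure reflected about height $\frac12$. This contradicts $\{\sigma\ge1\}$ and $\{\sigma\le 0\}$ both having infinite clusters, because in the plane two disjoint infinite clusters of complementary types cannot coexist (Burton--Keane plus Zhang). Hence neither type percolates. The RSW-free alternative is to show that the probability of a circuit of $\{\sigma\le 0\}$ around the origin under $+S$ boundary tends to one. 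This forces $\mu^+(\sigma_0\ge1)\le\mu^+(\sigma_0\le 0)$ after applying the flip twice. This step, extracting the planar contradiction with integer layers, is the main obstacle. The first thing to try is the cluster-swap/reflection about height $\frac12$ inside circuits.

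For the exponential decay and the stability in $\rho\in(-\rho_*,\infty)$, we use a finite-size criterion. Once uniqueness holds at $(\beta,0)$, and hence for $\rho\ge 0$ by monotonicity in $\rho$ of $\mu^+(\sigma_0^2)$ and comparison, the quantity $\mu^+_{\Lambda_n}(\sigma_0\ge1)-\mu^-_{\Lambda_n}(\sigma_0\ge 1)$ tends to zero. A disagreement-percolation/coupling argument in two dimensions then upgrades this to a finite-volume condition: in $\Lambda_n$, the probability that the monotone coupling of $\pm S$ boundaries has disagreement connecting to the center is below a threshold $\eps(S)$. That finite-volume condition implies exponential decay, as in a Dobrushin--Shlosman complete-analyticity criterion valid for $d=2$. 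Because the condition involves a finite box, it is continuous in $\rho$, so it persists for $\rho>-\rho_*$ and yields uniqueness together with~\eqref{eq:exponential_decay_local_functions_main}. The factor $c_1^{|\supp f|+|\supp g|}$ with $c_1=c_1(S)$ comes from writing local functions as combinations of indicator products over supports.
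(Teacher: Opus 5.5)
There is a genuine gap, and you flag its location yourself: the planar step is never carried out, and the route you sketch does not work as stated.

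Zhang's argument needs, in a single ergodic FKG measure, that percolation of $\set{\sigma\ge1}$ forces percolation of $\set{\sigma\le 0}$. Under $\mu^+_{\beta,\rho}$ nothing provides this. The reflection $\sigma\mapsto 1-\sigma$ does not preserve $\set{-S,\dots,S}$ (it sends $-S$ to $S+1$), and it turns the $+S$ boundary into $1-S$. So your claim that the measure is ``stochastically dominated by the measure reflected about height $\frac12$'' is precisely the difficulty, asserted rather than proved. The magnetization bound you invoke is, in the paper, a consequence of the same construction, and as a one-point statement it gives no comparison of percolation events. Likewise, your ``RSW-free alternative'' would only yield an inequality such as $\mu^+(\sigma_0\ge1)\le\mu^+(\sigma_0\le0)$, not uniqueness. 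Your step ``$\mu^+\neq\mu^-$ implies $\set{\sigma\ge1}$ percolates'' also lacks a mechanism: a circuit of $\set{\sigma\le 0}$ carries nonzero boundary values, so it does not make its interior flip-symmetric.

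There are two missing ideas.
\begin{itemize}
\item \emph{Percolation in a random environment.} Take $p(a,b)=1-e^{\beta W(a-b)-\beta W(\abs{a}+\abs{b})}$. Its closed edges act as zero boundary conditions, so uniqueness is equivalent to non-percolation of open edges (\zcref{thm:equivalence_no_percolation_uniqueness}). That connection probability is in turn bounded by $2\mu^+(0\overset{+}{\longleftrightarrow}\partial B_N)$.
\item \emph{Domination by a symmetric model with a field.} The paper proves $\mu^\eta_{\Lambda;\beta,\rho}\le_{st}\nu^\eta_{\Lambda;\beta,\rho,\lambda_0H}$ (\zcref{thm:dominate_M_by_M12}). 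Here $\nu$ lives on $\set{-S+1,\dots,S}$, which \emph{is} symmetric about $\frac12$, and carries the field $\lambda_0\I_{\sigma\ge1}$. The constant $e^{-\lambda_0}=\mu^+_{\set{0};\beta,\rho}(\sigma_0\ge -S+1)<1$ is the uniform single-site cost of deleting the value $-S$.
\end{itemize}
In $\nu$ one flips open clusters about $\frac12$. The field beats the $+S$ boundary and yields $\nu^+(0\overset{+}{\longleftrightarrow}\infty)\le\nu^+(0\overset{\le 0}{\longleftrightarrow}\infty)$. Only then do Burton--Keane and Zhang give $\nu^+(0\overset{+}{\longleftrightarrow}\infty)=0$, which transfers to $\mu^+$ because the event is increasing.

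Your exponential-decay step fails independently, because uniqueness does not imply any finite-volume mixing criterion. The critical planar Ising model is a counterexample. It has a unique Gibbs measure, and the probability of a $\pm$-disagreement at (a fortiori connected to) the center of $\Lambda_n$ tends to zero, yet correlations decay polynomially. Genuine finite-size criteria (Dobrushin--Shlosman, or weak-implies-strong mixing in $d=2$) need quantitative smallness, essentially that the boundary influence beats $\abs{\partial\Lambda_n}$. That is weak mixing, which uniqueness does not supply, so your continuity-in-$\rho$ argument has nothing to start from. Your claim that uniqueness at $\rho=0$ propagates to $\rho\ge 0$ via $\mu^+(\sigma_0^2)$ is also unsupported: $\sigma_0^2$ is not monotone, and FKG gives no comparison across $\rho$.

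The paper uses a sharpness argument instead. It applies the OSSS/Duminil-Copin--Raoufi--Tassion differential inequality (\zcref{thm:exponential_decay_differential_inequality}) to $\theta_N(\lambda)=\hat\nu^+_{B_{2N};\beta,\rho,(2\lambda_0-\lambda)H}(0\overset{+}{\longleftrightarrow}\partial B_N)$, with the field strength as the parameter. The planar argument rules out percolation for every field with $h(s)-h(1-s)+\rho(2s-1)>0$. Hence $\lambda=\lambda_0$ lies strictly inside the non-percolating regime, and therefore in the exponentially decaying one. This bound passes to $\bar\mu^+(0\overset{\text{open}}{\longleftrightarrow}\partial B_N)$ and then to correlations with $c_1=c_1(S)$. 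The range $\rho>-\rho_*$ comes from requiring $\lambda_0(\rho)+\rho(2S-1)>0$, using continuity of $\lambda_0$ in $\rho$.
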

From the above, and the entropic repulsion intuition, it is natural to speculate that uniqueness and exponential decay hold at all temperatures also in dimensions $d\ge 3$. Interestingly, we discover that this is not necessarily the case. Our results are described in the following two theorems, the first of which gives sufficient conditions for uniqueness and exponential decay to hold at all temperatures and the second of which describes examples of non-uniqueness.

If we restrict the class of allowed interactions $W$, \zcref{thm:main_two_dimensions} extends to all dimensions. To this end we introduce the following definitions: First, following Lammers--Ott~\cite{Lammers:DelocalisationAbsolutevalueFKGSolidonsolid2024}, we say that $W$ is \emph{super-Gaussian} if  $W^{(2)}$
is non-increasing for integer $n\ge 0$ (with $W^{(2)}$ defined in~\eqref{eq:second discrete derivative of W}). Second, we call $W$ a \emph{mixture of Gaussians} if for every $\beta>0$ there exists a finite non-negative Borel measure $\psi_\beta$ on $[0,\infty)$ such that for every $n \in \Z$,
    \begin{equation}
        \expa{-\beta W(n)} = \int_{[0, \infty)} \expa{-\frac12 a n^2} \d \psi_\beta(a).
    \end{equation}
As an important example, the function $W(n) = |n|^p$ for $0<p\le 2$ is both super-Gaussian and a mixture of Gaussians~\cite[Section 9.4]{Aizenman:DepinningIntegerrestrictedGaussian2022}, and in the smaller range $1\le p\le 2$ it is also convex and thus satisfies the assumptions of the next result.

\begin{theorem}[Mixtures of Gaussians, all temperatures]\label{thm:main_mixtures_of_gaussians}
    Let $d \geq 2$ and assume that $W$ is even, convex, super-Gaussian and a mixture of Gaussians. For every $\beta > 0$ the following holds:
    
    There exists $\rho_*(\beta) > 0$ such that for every $\rho \in (-\rho_*, \infty)$ we have $\mathscr G(\beta,\rho) = \set{\mu_{\beta,\rho}}$. If in addition $W$ is strictly convex or if $W(x)=\abs{x}$, then $\mu_{\beta,\rho}$ has exponential decay of correlations with constants 
    $c_1 \equiv c_1(S) > 0$ and
    $c_2 \equiv c_2(S, \beta, \rho, W, d) > 0$.
\end{theorem}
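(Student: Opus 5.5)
The plan is to run the standard FKG-based argument: uniqueness is equivalent to the measures with maximal and minimal boundary conditions coinciding. Since $W$ is even and convex, the finite-volume specifications are monotone in the boundary values, and every Gibbs measure is sandwiched between $\mu^+$, the limit with boundary condition $\equiv S$, and $\mu^-$, the limit with boundary condition $\equiv -S$. The spin-flip symmetry $\sigma\mapsto-\sigma$ maps $\mu^+$ to $\mu^-$. Since $\mu^+$ stochastically dominates $\mu^-$, it therefore suffices to show $\mu^+(\sigma_0)\le 0$, equivalently $\mu^+(\sigma_0)=0$; indeed the same must be checked for every level set $\{\sigma_0\ge k\}$, which by symmetry and domination reduces to $\mu^+(\sigma_0\ge k)=\mu^-(\sigma_0\ge k)$.

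The core step is to use the mixture-of-Gaussians hypothesis to write the integer-valued measure as a mixture, over random conductances $a_e$ drawn from the $\psi_\beta$ structure, of quenched measures with Gaussian interaction $\tfrac12\sum_e a_e(\sigma_i-\sigma_j)^2+\rho\sum_i\sigma_i^2$. For these Gaussian-type measures I will set up a two-replica comparison. Take $\sigma^+$ with boundary condition $S$ and an independent $\sigma^-$ with boundary condition $-S$, sharing the same environment or coupled through the conditional law of $a$. For the quadratic interaction, the sum and difference variables $\sigma^+\pm\sigma^-$ decouple up to parity and the hard-wall constraint. The aim is a reflection or swap argument on the set where $\sigma^+_i-\sigma^-_i\in\{0,\pm1\}$, in the spirit of the Lammers--Ott absolute-value FKG. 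This is where the super-Gaussian condition enters: it gives FKG for $|\sigma|$-type observables. Using it, I would show that the disagreement set $\{\sigma^+\ne\sigma^-\}$ cannot percolate from the boundary, with probability tending to one as the volume grows. The parameter $\rho>-\rho_*$ is handled by perturbing from $\rho=0$. For this I first prove a strict inequality, $\mu^+(\sigma_0)<c$ for some margin, which survives a small negative $\rho$ by continuity of finite-volume quantities uniformly in volume. Monotonicity in $\rho$ then covers $\rho\ge0$.

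For exponential decay under strict convexity, or for $W=|x|$, I will aim for a finite-size criterion. Once uniqueness holds, the probability that $\mu^+_\Lambda$ and $\mu^-_\Lambda$ disagree at the origin in a box of side $L$ tends to $0$. Strict convexity gives a positive-energy gap, and hence finite energy for the coupling, so a renormalization or disagreement-percolation argument converts a small disagreement probability at scale $L$ into exponential decay of the disagreement cluster. The monotone coupling then yields the covariance bound~\eqref{eq:exponential_decay_local_functions_main}, with $c_1$ depending only on $S$ through the number of levels.

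The main obstacle is the middle step: producing an inequality that forces $\mu^+$ to be centered from the Gaussian-mixture representation. The difficulty is that integer constraints and hard walls break exact Gaussian symmetries. My first attempt would be to condition on the conductances and on $|\sigma^+-\sigma^-|$ modulo parity, then apply the Lammers--Ott absolute-value FKG in each quenched environment.
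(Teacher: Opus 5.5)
Your reduction to $\mu^+_{\beta,\rho}(\sigma_0)=0$ is correct. The extra check on level sets is redundant: the monotone coupling of $\mu^-_{\beta,\rho}\le_{st}\mu^+_{\beta,\rho}$, together with translation invariance, already forces equality once the means agree. However, the core step for uniqueness is missing, and the route you sketch does not address the actual difficulty.

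\begin{itemize}
\item Disintegrating $e^{-\beta W}$ into Gaussians only produces an integer-valued, hard-wall-confined quadratic model with random conductances, which is no easier than the original.
\item The decoupling of $\sigma^+\pm\sigma^-$ is destroyed precisely by the parity constraint and by $\abs{\sigma^\pm}\le S$, and that constraint is where the entropic repulsion lives.
\item The claim that the disagreement set does not percolate from the boundary is a restatement of uniqueness, not a step towards it.
\end{itemize}
Even convex $W$ that are not super-Gaussian can have several Gibbs measures in high dimension (the paper constructs such examples). So both extra hypotheses must be used quantitatively, and your sketch never says how.

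The missing idea is to trade the hard wall for a mass, and then win a volume-versus-surface competition.
\begin{itemize}
\item Conditionally on $\abs{\sigma}=\xi$, the signs form a ferromagnetic Ising model with couplings $\tfrac12\bigl(W(\xi_i+\xi_j)-W(\xi_i-\xi_j)\bigr)$, which are increasing in $\xi$.
\item Griffiths' second inequality plus the absolute-value FKG inequality (this is where super-Gaussianity enters) then let one replace, site by site, the single-site measure $\sum_{\abs{k}\le S}e^{-\rho k^2}\delta_k$ by $\sum_{k\in\Z}e^{-Rk^2}\delta_k$. This gives $\mu^+_{\Lambda;\beta,\rho}(\sigma_x)\le\varphi^S_{\Lambda;\beta,R}(\sigma_x)$ for the unbounded integer-valued field with some mass $R>\max(0,\rho)$.
\item For that field with zero boundary values, the mixture-of-Gaussians hypothesis bounds exponential moments by those of the real-valued field (Lemma D.1 of Aizenman--Harel--Peled--Shapiro). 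Brascamp--Lieb then shows that the box average $A_N$ exceeds $t$ with probability at most $e^{-ct^2N^d}$.
\item Switching the boundary values from $0$ to $S$ costs only a factor $e^{O(N^{d-1})}$. Hence $\mu^+_{\beta,\rho}(\sigma_0)\lesssim\varphi^S_{B_N;\beta,R}(\abs{A_N})\to0$.
\end{itemize}
The slack $R>0$, which persists for slightly negative $\rho$, is what produces $\rho_*$. Your alternative, continuity in $\rho$ uniformly in the volume, is not available, since infinite-volume magnetizations can jump. A bound $\mu^+_{\beta,\rho}(\sigma_0)<c$ with a margin would not give uniqueness anyway.

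For exponential decay the gap is of a different kind. Uniqueness does not imply exponential decay: the critical two-dimensional Ising model has a unique Gibbs measure and power-law correlations. So ``small disagreement at scale $L$ plus finite energy, then renormalize'' has no finite-size criterion behind it. Finite energy holds for any finite spin space irrespective of convexity. Your account also gives no reason why $W(x)=\abs{x}$, which is not strictly convex, should be covered.

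What is needed is a sharpness argument along a monotone one-parameter family, fed by uniqueness on an \emph{open} set of parameters. The paper does this as follows.
\begin{itemize}
\item It embeds $\bar\mu^+_{\beta,\rho}$ into a generalized dilute random-cluster measure on $\set{0,\dots,S}^{\Z^d}\times\set{0,1}^{\E(\Z^d)}$, and tilts both its edge and its vertex weights by a parameter $t\ge0$.
\item It proves weak monotonicity and an OSSS inequality for such measures, obtaining $\theta_N'\ge cN\theta_N/\sum_{k<N}\theta_k$ for the connectivity probabilities $\theta_N(t)$.
\item It establishes subcriticality for small $t>0$ by dominating $\theta_N(t)$ by the open-connectivity probability of $\bar\mu^+$ at a larger inverse temperature $\bar\beta(t)>\beta$ and a smaller parameter $R_t<\rho$.
\end{itemize}
This last step is exactly why uniqueness for \emph{all} $\beta$ and for slightly negative $\rho$ (with $\rho_*$ continuous in $\beta$) is needed. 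Strict convexity is used only to make the FKG lattice condition for the vertex marginal hold with strict inequality at $t=0$, so that it survives for $t\in[0,T)$. For $W(x)=\abs{x}$ this condition is checked directly for all $t$.
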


\begin{theorem}[Examples of non-uniqueness] \label{thm:non-uniqueness}
    Let $\beta > 0$, $\rho \in \R$ and let $W$ be even. Suppose that there exists $1 \leq k \leq 2S-1$ such that $W(l) = 0$ for all $0 \leq l \leq k$ and $W(l)>0$ for all $k+1 \leq l \leq 2S$. If $\rho=0$, we assume that $k=2S-1$. If $\rho > 0$, we assume that $k$ is odd. Then, there exists $d_0=d_0(W, \beta,\rho)$ such that $\abs{\mathscr G(d, \beta,\rho)} \geq 2$ in every dimension $d\ge d_0$. 
\end{theorem}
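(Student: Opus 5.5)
We plan to derive \zcref{thm:non-uniqueness} from the general long-range order theorem for discrete spin systems in high dimensions, \cite[Theorem 2.2]{Peled:LongrangeOrderDiscrete2020}. After the reweighting of \cite[Section 3.4.1]{Peled:LongrangeOrderDiscrete2020}, that theorem applies to nearest-neighbour models with a finite spin space, single-site weights $\lambda_s=e^{-\rho s^2}$, and pair weights $e^{-\beta W(s-t)}$. In that framework the relevant objects are \emph{dominant patterns}: pairs $(A,B)$ of subsets of $\{-S,\dots,S\}$ assigned to the even and odd sublattices, such that every pair $(a,b)\in A\times B$ has interaction weight $1$, i.e.\ $|a-b|\le k$. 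The pattern weight is $\lambda(A)\lambda(B)$ with $\lambda(A)=\sum_{a\in A}e^{-\rho a^2}$. The theorem yields, for $d\ge d_0(W,\beta,\rho)$, one distinct Gibbs measure per dominant pattern modulo the lattice symmetries. Here the positive temperature enters through $\min_{l>k}\beta W(l)>0$, which gives a fixed penalty for ``bad'' edges. So the proof reduces to a finite combinatorial optimization: show that the maximal-weight patterns form at least two classes that are not identified by the sublattice swap, and check the quantitative nondegeneracy hypotheses.

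\textbf{Step 1 (structure of admissible patterns).} The admissibility constraint only involves $\max A-\min B\le k$ and $\max B-\min A\le k$. Enlarging $A$ or $B$ to the full integer interval between its extremes preserves admissibility and increases the weight. Hence we may take $A=[a_1,a_2]$ and $B=[b_1,b_2]$ to be integer intervals with $a_2-b_1\le k$ and $b_2-a_1\le k$, and maximize $\lambda(A)\lambda(B)$ over such pairs.

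\textbf{Step 2 (case analysis).} For $\rho=0$ and $k=2S-1$, the weight is $|A||B|$, and the only forbidden pair is $\{S,-S\}$. Using $|A|+|B|\le 2k+2-(\text{overlap slack})$, we get that $|A||B|$ is maximized by $A=B=\{-S+1,\dots,S\}$ or $A=B=\{-S,\dots,S-1\}$, with value $4S^2$. Any pattern containing both $\pm S$ on one side has weight at most $(2S+1)(2S-1)<4S^2$. These two optimal patterns are exchanged by $\sigma\mapsto-\sigma$ but \emph{not} by the sublattice swap, so they give two distinct Gibbs measures. They are distinguished by the sign of the magnetization, which we check is nonzero.

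For $\rho>0$ and $k$ odd, the function $s\mapsto e^{-\rho s^2}$ is symmetric and strictly log-concave. For two intervals of admissible lengths, the product is maximized when both are the same interval of length $k+1$ placed as centrally as possible. Since $k+1$ is even, there are exactly two such placements, $\{-\tfrac{k+1}{2},\dots,\tfrac{k-1}{2}\}$ and its negative. We must check that unequal pairs $A\ne B$ lose strictly. The idea is that shifting $A$ and $B$ towards each other while respecting the constraints does not decrease the weight, by the log-concavity and rearrangement of the summands. For $\rho<0$ the same argument, now with weights increasing in $|s|$, pushes the optimal intervals to $[S-k,S]$ and $[-S,-S+k]$. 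Note that $k\le 2S-1$ guarantees these are distinct.

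\textbf{Step 3 (verifying hypotheses).} We need the quantitative conditions of \cite[Theorem 2.2, condition (18)]{Peled:LongrangeOrderDiscrete2020}. These require a gap between the dominant weight and every non-dominant pattern weight, and they require $d$ large relative to $\log$ of the inverse of the bad-edge penalty $1-e^{-\beta\min_{l>k}W(l)}$ and of that gap. Both quantities are positive constants depending only on $(W,\beta,\rho)$, which produces $d_0(W,\beta,\rho)$.

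\textbf{Main obstacle.} I expect the main difficulty to be Step 2 for $\rho\neq0$: proving rigorously that \emph{no} asymmetric pair of intervals ties with the symmetric optimum. A tie would merge phases or add extra ones that might be swap-related. My first attempt would be a direct two-variable discrete optimization over the endpoints $(a_1,b_1)$ using log-concavity of $\lambda$ restricted to intervals. A secondary task is confirming that the two optimal patterns are genuinely not identified by any symmetry used in the cited theorem.
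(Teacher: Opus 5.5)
Your overall route is the paper's. You apply the Peled--Spinka theorem, reduce to interval patterns (your Step 1 is exactly the paper's first step), and identify the dominant patterns as $(A,A)$ and $(-A,-A)$ for an interval $A$ of length $k+1$. Your treatment of $\rho=0$, $k=2S-1$ is fine.

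The gap is the step you flag yourself. For $\rho\neq0$ you never show that a pattern with $A\neq B$ cannot be dominant. Your proposed mechanism (log-concavity of $s\mapsto e^{-\rho s^2}$ plus shifting the intervals towards each other) is not an argument for $\rho<0$, where these weights are log-convex. It is not obviously one for $\rho>0$ either.

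This step is the heart of the proof, not a technicality. What the Peled--Spinka theorem needs, and what the paper verifies, is that \emph{all} dominant patterns are equivalent under a symmetry of the model (here $\sigma\mapsto-\sigma$). It does not need several classes modulo the sublattice swap. A tie with a non-equivalent pattern would not ``add phases''; it would put you outside the theorem. That is why the statement assumes $k=2S-1$ when $\rho=0$: for smaller $k$, at least three intervals of length $k+1$ tie, and they are not all equivalent. It is also why it assumes $k$ odd when $\rho>0$: for $k$ even the centred interval is the unique dominant pattern.

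The missing idea is elementary and uses only positivity of the single-site weights.
\begin{itemize}
\item For interval patterns, $\mathrm{diam}\,A+\mathrm{diam}\,B=(a_{\max}-b_{\min})+(b_{\max}-a_{\min})\le 2k$. So if $|A|\ge|B|$, then $|B|\le k+1$.
\item Any $b_0\in B\setminus A$ is then within distance $k$ of all of $B$. So $(A\cup\{b_0\},B)$ is a strictly heavier pattern, and hence $B\subset A$.
\item If $A\neq B$, write $A=A_1\cup B\cup A_2$, with $A_1,A_2$ the parts of $A$ to the left and right of $B$. Let their weights be $x,y,z$, where $x+z>0$.
\item The sets $C_1=A_1\cup B$ and $C_2=B\cup A_2$ have diameters $b_{\max}-a_{\min}\le k$ and $a_{\max}-b_{\min}\le k$. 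So $(C_1,C_1)$ and $(C_2,C_2)$ are patterns, and
\[
(x+y)^2-(x+y+z)y=x^2+y(x-z),\qquad (y+z)^2-(x+y+z)y=z^2+y(z-x).
\]
The sum of these is $x^2+z^2>0$, so one of $(C_1,C_1)$, $(C_2,C_2)$ is strictly heavier than $(A,B)$.
\end{itemize}

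Hence every dominant pattern is $(A,A)$ with $A$ an interval of length $k+1$, and only a one-variable check remains. Replacing $[c,c+k]$ by $[c+1,c+k+1]$ changes the weight by an amount with the sign of $-\rho(2c+k+1)$. This yields exactly two maximizers in each case:
\begin{itemize}
\item $\pm\{-\frac{k-1}{2},\dots,\frac{k+1}{2}\}$ for $\rho>0$ with $k$ odd;
\item $\pm\{-S,\dots,-S+k\}$ for $\rho<0$;
\item the two intervals of length $2S$ when $\rho=0$.
\end{itemize}
These two patterns are exchanged by $\sigma\mapsto-\sigma$, which preserves both $W(\sigma-\bar\sigma)$ and $e^{-\rho\sigma^2}$. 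So the equivalence hypothesis holds, and your Step 3 then goes through as in the paper.
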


The proof is a application of a general result of ~\cite{Peled:LongrangeOrderDiscrete2020}.
The case $S=1$, $W(0)=W(1)=0$ and $W(2)=\infty$ is known as the Widom--Rowlinson model; see~\cite[Section 3.1.3]{Peled:LongrangeOrderDiscrete2020} for a discussion of its history and its equivalence to the hard-core model on $\Z^d\times\{0,1\}$ (for the latest results on non-uniqueness in the hard-core model see~\cite{hadas2026critical}). For the Widom--Rowlinson model with, say, $\rho=0$ in high dimensions, there will be a Gibbs measure in which configurations take the values $0$ and $1$ with densities at least $\frac{1}{2}-e^{-cd}$ for some $c>0$ (and, by symmetry, another measure with such densities for $-1$ and $0$). 

We believe that further examples of non-uniqueness may be obtained with suitable extensions of the proof technique in~\cite{Peled:LongrangeOrderDiscrete2020}. For instance, we believe that for $W(n)=|n|^p$ with $p$ and $d$ suitably large, there will be an \emph{intermediate} temperature range where uniqueness fails. We plan to explore this in a forthcoming paper.

While the above result shows that entropic repulsion need not always lead to a unique Gibbs measure, it turns out that it still severely constrains the set of possible Gibbs measures. Our next result shows that the average value taken by any such Gibbs measure must lie in $(-1/2, 1/2)$ and that the chance that the spin at the origin is positive (or the chance that it is negative) is less than $1/2$, irrespective of the value of $S$.

\begin{theorem}\label{thm:constrained Gibbs measures}
    Let $d\ge2$ and assume that $W$ is even and convex.    
    For any $\beta > 0$ there exists $\rho_*(\beta) > 0$ such that for every $\rho \in (-\rho_*, \infty)$, every Gibbs measure $\mu\in \mathscr G(\beta,\rho)$ satisfies
    \begin{equation}
        |\mu(\sigma_0)|<\frac{1}{2}
    \end{equation}
    and
    \begin{equation}
        \text{$\mu(\sigma_0>0)< \frac{1}{2}$\quad and\quad $\mu(\sigma_0<0)< \frac{1}{2}$}.
    \end{equation}
\end{theorem}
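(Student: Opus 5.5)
By FKG and the symmetry $\sigma\mapsto-\sigma$, it suffices to prove the bounds for the maximal Gibbs measure $\mu^+$. The model is a spin model with convex pair interaction $W(\sigma_i-\sigma_j)$ and single-site weight $e^{-\rho\sigma_i^2}$, so it satisfies the FKG lattice condition and is monotone in boundary conditions. Hence there are extremal measures $\mu^+$ (boundary values $S$) and $\mu^-$ (boundary values $-S$), and every $\mu\in\mathscr G(\beta,\rho)$ satisfies $\mu^-\preceq\mu\preceq\mu^+$ stochastically. Since $\sigma_0$ and $\I\{\sigma_0>0\}$ are increasing, $\mu(\sigma_0)\le\mu^+(\sigma_0)$ and $\mu(\sigma_0>0)\le\mu^+(\sigma_0>0)$. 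Applying the global flip $\sigma\mapsto-\sigma$ (which maps $\mu^+$ to $\mu^-$, since $W$ is even) gives the reverse inequalities. So I only need $\mu^+(\sigma_0)<\tfrac12$ and $\mu^+(\sigma_0>0)<\tfrac12$. The first follows from the second together with $\mu^+(\sigma_0\ge0)\ge\tfrac12$. In fact $\mu^+(\sigma_0)<\tfrac12$ should follow more directly from a shift comparison, described next.

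The key step is a comparison between $\mu^+$ and its shift by one. Let $\tau$ be the shifted configuration $\tau=\sigma-1$. It is a spin field with the same gradient interaction, valued in $\{-S-1,\dots,S-1\}$, with single-site weight $e^{-\rho(\tau+1)^2}$. The plan is to show that the shift of $\mu^+$ down by one is stochastically dominated by the reflection $-\sigma$ under $\mu^+$, which has the law of $\mu^-$. Concretely, I will try to prove the stronger statement that $\mu^+$ is dominated by $1+\mu^-$ ($\sigma-1\preceq\sigma'$ where $\sigma'\sim\mu^-$). I would attack this in finite volume $\Lambda$ with boundary conditions $S$ on one side and $-S$ on the other. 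The reflected-and-shifted model $1-\sigma$ takes values in $\{1-S,\dots,S+1\}$, so the comparison is between two models differing only in their allowed ranges, $\{-S,\dots,S\}$ versus $\{-S+1,\dots,S+1\}$. Because the range constraints are increasing-type truncations, a Holley-type argument goes through. A single-site weight ratio that is increasing allows monotone coupling, provided $\rho$ is not too negative; the factor $e^{-\rho\sigma^2}$ versus $e^{-\rho(1-\sigma)^2}$ has ratio $e^{\rho(1-2\sigma)}$, which is monotone in $\sigma$. The upshot I expect is $\mu^+\preceq 1-\mu^+$ (in law), i.e. $\mu^+(\sigma_0)\le 1-\mu^+(\sigma_0)$, which gives $\mu^+(\sigma_0)\le\tfrac12$. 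Similarly, $\mu^+(\sigma_0\ge1)\le\mu^+(1-\sigma_0\ge1)=\mu^+(\sigma_0\le0)$, so $\mu^+(\sigma_0>0)\le\tfrac12$.

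The main obstacle is upgrading $\le$ to strict inequality, uniformly in infinite volume. I would do this by showing the domination is strictly non-sharp at the origin. I will condition on the spins off the origin and exhibit a local configuration of positive probability: for instance, neighbors of the origin at value $0$ with the origin at $0$ or $1$, which is allowed for one model and distorted for the other. Finite energy, using $W$ finite on $\{-2S,\dots,2S\}$, gives a gap $\mu^+(\sigma_0)\le\tfrac12-\delta$ with $\delta>0$ depending on $\beta,W,S,d$. The parameter $\rho_*$ enters here. Strictness for $\rho=0$ comes from the entropic range asymmetry, with the shifted model allowed the value $S+1$ but not $-S$. A small negative $\rho$ perturbs the weight ratio $e^{\rho(1-2\sigma)}$ continuously, so the gap persists for $\rho>-\rho_*(\beta)$. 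The delicate point is making the Holley comparison rigorous with the negative-$\rho$ ratio. If it fails, I would first prove the claim at $\rho=0$ and then absorb $\rho\in(-\rho_*,0)$ by continuity of local probabilities in $\rho$, uniformly over the finite-volume $+$ measures.
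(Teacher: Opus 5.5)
\textbf{There is a genuine gap.} Your reduction to $\mu^+$ by FKG and the global flip is correct, and the paper finishes the same way. The heart of your proposal, however, is the domination $\mu^+\le_{st}1-\mu^+$ (equivalently $\mu^+\le_{st}1+\mu^-$), and it is not established. The Holley argument you sketch cannot establish it.

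The two finite-volume measures you would compare are:
\begin{itemize}
\item $\mu^+_{\Lambda;\beta,\rho}$, with boundary values $S$;
\item the law of $1-\sigma'$ for $\sigma'\sim\mu^+_{\Lambda;\beta,\rho}$, with boundary values $1-S<S$.
\end{itemize}
So they do not ``differ only in their allowed ranges''. Their boundary conditions are ordered the wrong way, Holley's criterion fails, and the finite-volume domination is false in general. A counterexample already exists for $\Lambda=\{0\}$: take $S=1$, $W(1)>0$, $\rho=0$ and $\beta$ large. Then $\mu^+_{\{0\};\beta,0}(\sigma_0=1)>\frac12>\mu^+_{\{0\};\beta,0}(\sigma_0\le0)$. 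Domination would instead require $\mu^+_{\{0\}}(\sigma_0\ge1)\le\mu^+_{\{0\}}(1-\sigma_0\ge1)$.

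Letting $\Lambda\uparrow\Z^d$ does not rescue this. $\mu^+$ is approximated by $\mu^+_\Lambda$ only from above, and $1+\mu^-$ by $1+\mu^-_\Lambda$ only from below. What you need is a mechanism by which a bulk effect at the origin beats the boundary value $S$, whose influence is of order $|\partial\Lambda|$. Your proposal contains none, so the strictness step and the extension to $\rho<0$ have nothing to rest on.

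Two side remarks:
\begin{itemize}
\item The $\rho$-extension itself would be fine given a strict bound at $\rho=0$. The map $\rho\mapsto\mu^+_{\beta,\rho}(\sigma_0)=\inf_N\mu^+_{B_N;\beta,\rho}(\sigma_0)$ is upper semicontinuous; uniform continuity is not needed.
\item For $S\ge2$, the bound $\mu^+(\sigma_0)<\frac12$ does not follow from $\mu^+(\sigma_0>0)<\frac12$, since $\sigma_0$ can be as large as $S$.
\end{itemize}

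\textbf{What the paper does instead.} It uses a comparison that keeps the \emph{same} boundary value $S$. A monotone Glauber coupling gives $\mu^+_{\Lambda;\beta,\rho}\le_{st}\nu^+_{\Lambda;\beta,\rho,\lambda_0H}$. Here $\nu^+$ is a model on $\{-S+1,\dots,S\}$, which is symmetric about $\frac12$, with field $\lambda_0\I_{\sigma\ge1}$. Deleting the value $-S$ removes at least a fixed fraction $1-\gamma$ of every single-site conditional mass, and this pays for the field $\lambda_0=-\ln\gamma$.

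This field is the bulk effect that beats the boundary. With edge probabilities $q$ adapted to the reflection $\sigma\mapsto1-\sigma$, open clusters of spins $\ge1$ reaching distance $N$ are exponentially unlikely. Flipping such a cluster $T$ about $\frac12$ gains $e^{-c|T|}$ from the field. At a well-chosen radius $L_*$, closing the boundary edges of $T$ outside $B_L$ costs only $e^{O(|T|/\sqrt N)}$. On the complementary event, flipping the origin's cluster gains $e^{-\lambda_0}$.

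This yields $\nu^+_{B_N}(\sigma_0\ge1)\le(1+e^{\lambda_0})^{-1}+c_1e^{-c_2N}$, and the analogous bound for $\nu^+_{B_N}(\sigma_0)$. Both are strict uniformly in $N$. Negative $\rho$ is allowed as long as $\lambda_0(\rho)+\rho(2S-1)>0$.
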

Lastly, we point out a consequence of our results for integer-valued $\nabla\varphi$ height functions constrained to lie above a floor. By this, we mean the model on $\sigma:\Z^d\to\{0,1,\ldots\}$ with the formal Hamiltonian~\eqref{eq:formal hamiltonian without rho} (we do not include the parameter $\rho$ in this case).
\begin{corollary}[Delocalization above a floor, entropic repulsion]
\label{cor:delocalization}
    Let $d\ge 2$ and let $W:\Z\to\R$ be even and convex. Let $\beta > 0$. The model with Hamiltonian~\eqref{eq:formal hamiltonian without rho} and spins $\sigma:\Z^d\to\{0,1,\ldots\}$ has no Gibbs measures.
\end{corollary}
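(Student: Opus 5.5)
Suppose, for contradiction, that $\mu$ is a Gibbs measure for the model with spins $\sigma:\Z^d\to\{0,1,\ldots\}$ and Hamiltonian~\eqref{eq:formal hamiltonian without rho}. The plan is to use $\mu$ to build a Gibbs measure of a confined spin-$S$ model that violates \zcref{thm:constrained Gibbs measures}.

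\emph{Reduction to a bounded spin space.} I would first use convexity and evenness of $W$ (the FKG structure) to replace $\mu$ by an extremal measure. The aim is that, for a suitable shift $m$, the measure $\mu$ conditioned on the event $\{\sigma_0\le 2S\}$, or more usefully a version with truncated heights, is essentially a Gibbs measure of the two-wall model with values in $\{0,\dots,2S\}$. The cleanest route is monotone coupling. Let $\nu_S$ be the Gibbs measure obtained by restricting heights to $\{0,\dots,2S\}$, so that after shifting by $-S$ it becomes a measure on $\{-S,\dots,S\}^{\Z^d}$. I would compare finite-volume specifications of the floor model with those of the two-wall model. Adding a ceiling at $2S$ stochastically lowers the configuration, by FKG/Holley, since the constraint $\sigma\le 2S$ is a decreasing event and the Hamiltonian is submodular by convexity of $W$.

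\emph{Key monotonicity step.} Let $\mu$ be a Gibbs measure above the floor. I would argue that $\mu$ is stochastically dominated by no measure that is tight, and derive a contradiction from the bound in \zcref{thm:constrained Gibbs measures}. Concretely, take $S$ large and consider the finite-volume two-wall measures on a box $\Lambda$ with boundary condition $\min(\xi,2S)$, where $\xi\sim\mu$. By monotonicity, the two-wall measure with the top boundary condition $2S$ dominates all of these. Its infinite-volume limit is a Gibbs measure $\mu^{+}_S$ of the shifted $\{-S,\dots,S\}$ model with $\rho=0$. By \zcref{thm:constrained Gibbs measures} it satisfies $\mu^+_S(\sigma_0> S)<\tfrac12$, which in unshifted heights reads $\mu^+_S(\sigma_0>S)<\tfrac12$. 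Conversely, I would use the domination $\mu\succeq$ the measure obtained by imposing the ceiling. More precisely, on the event that $\xi\le 2S$ on $\partial\Lambda$, conditioning $\mu$ further on $\sigma\le 2S$ inside $\Lambda$ lowers it, so
\[
\mu(\sigma_0>S)\ \ge\ \mu(\sigma_0>S\mid \sigma|_\Lambda\le 2S)\ \ge\ \text{(two-wall probability)}.
\]
This alone does not contradict anything, so the roles must be reversed. The right comparison is with the bottom boundary condition $0$ (height $-S$ after shifting). Since the floor model's specification equals the two-wall specification conditioned on no ceiling, I would show that $\mu(\sigma_0\ge S)\ge \mu^{-}_S(\sigma_0\ge S)$, where $\mu^-_S$ is the minimal two-wall Gibbs measure. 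By the symmetry $\sigma\mapsto 2S-\sigma$ together with \zcref{thm:constrained Gibbs measures}, $\mu^-_S(\sigma_0\ge S)=1-\mu^+_S(\sigma_0> S)>\tfrac12$. Hence $\mu(\sigma_0\ge S)\ge\tfrac12$ for every $S$. Letting $S\to\infty$ gives $\mu(\sigma_0=\infty)\ge\tfrac12$, contradicting that $\mu$ is a probability measure on $\{0,1,\dots\}^{\Z^d}$.

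\emph{Main obstacle.} The hard step is justifying $\mu\succeq\mu^-_S$ on increasing events. Removing the ceiling raises the finite-volume measure (Holley's criterion, using convexity of $W$). Also, any boundary condition $\xi\ge0$ dominates the boundary condition $0$. So for each box, $\mu(\cdot\mid\mathcal F_{\Lambda^c})$ dominates the two-wall measure on $\Lambda$ with boundary $0$, and averaging over $\xi$ and letting $\Lambda\uparrow\Z^d$ gives the claim. Here I need the two-wall boundary-$0$ measures to converge to the minimal Gibbs measure; this follows by monotonicity. I also need the symmetry identity, and the strict inequality of \zcref{thm:constrained Gibbs measures} applied at $\rho=0$, which lies in $(-\rho_*,\infty)$.
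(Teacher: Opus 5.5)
Your final argument, in the last paragraph together with the symmetry step, is correct and is the paper's own proof. The floor-model specification with any boundary condition $\xi\ge 0$ dominates the two-wall specification with boundary $0$, so $\mu$ dominates the shifted minimal measure $\mu^-_S$. By \zcref{thm:constrained Gibbs measures}, $\mu^-_S(\sigma_0\ge S)>\tfrac12$, and this cannot hold for every $S$ under a probability measure.

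The earlier detour through the top boundary condition and the conditioned display can be deleted. That display is not justified for a general Gibbs measure, which need not itself be FKG. You can also skip the reflection $\sigma\mapsto 2S-\sigma$ by applying the theorem directly to $\mu^-_S$, which gives $\mu^-_S(\sigma_0<S)<\tfrac12$.
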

\begin{proof}
    This follows from \zcref{thm:constrained Gibbs measures} and the FKG inequality (see \zcref{sec:prelim}). Indeed, for each integer $S\ge 1$, let $\mu^{S,-}$ be the minimal Gibbs measure for the model with configurations $\sigma:\Z^d\to\{-S,\ldots, S\}$ and the Hamiltonian~\eqref{eq:formal hamiltonian without rho} with the given $W$ and $\beta$. By minimal, we mean the measure obtained as an infinite-volume limit with $-S$ boundary conditions. \zcref{thm:constrained Gibbs measures} implies that $\mu^{S,-}(\sigma_0\ge 0)>\frac{1}{2}$. The FKG inequality implies that any Gibbs measure $\mu$ of the model with spins $\sigma\ge 0$ must stochastically dominate $S+\sigma$ when $\sigma$ is drawn from $\mu^{S,-}$. In particular, $\mu(\sigma_0\ge S)>\frac{1}{2}$. As $S$ is arbitrary, this implies that $\mu$ cannot exist.
\end{proof}

As far as we are aware, even the integer-valued Gaussian free field (the case $W(x)=x^2$) constrained to lie above a floor was not previously known to be delocalized at all temperatures (as \zcref{cor:delocalization} shows). It was only known to be delocalized at low temperatures by Pirogov--Sinai methods~\cite[Section 4.2]{Bricmont:RandomSurfacesStatistical1986a},~\cite
{lubetzky2016harmonic}, and whenever the unconstrained model is delocalized (high temperatures in two dimensions)~\cite[equation (4.7)]{Bricmont:RandomSurfacesStatistical1986a}.

\subsubsection*{Entropic repulsion for functions with values in $\Z+\frac{1}{2}$ or in the reals}

Moving away from our main theme, we briefly consider the entropic repulsion effect for height functions taking values either in $\{-S+\frac{1}{2},\ldots,S-\frac{1}{2}\}$ (an \emph{even} number of states) for some integer $S\ge 1$, or in the real interval $[-1,1]$. As these are not our main focus, we only note that a minor adaptation of our proof of \zcref{thm:constrained Gibbs measures} establishes entropic repulsion also for these cases. In the discrete case (with an even number of states), while non-uniqueness at low temperatures is common (e.g., for the Ising model, obtained when $S=1$ and $W(1)>W(0)$), our methods still apply to show that the magnetization in every Gibbs state is not far from $0$. In contrast, in the real-valued case, our methods show that uniqueness \emph{always} happens.

\begin{theorem}[Entropic repulsion with an even number of states]
\label{thm:constrained Gibbs measures even}
     Let $S\in \Z_{>0}$ and consider the model on $\varphi:\Z^d\to \set{-S+\frac12, \dots, S-\frac12}$ with Hamiltonian~\eqref{eq:def_hamiltonian}, where $W\colon \set{-S+\frac12, \dots, S-\frac12} \to \R$ is an even and convex function, $\beta>0$ the interaction strength and $\rho\in\R$ a parameter. 
    Let $d\ge2$ and assume that $W$ is even and convex. 
    For any $\beta > 0$ there exists $\rho_*(\beta) > 0$ such that for every $\rho \in (-\rho_*, \infty)$, every Gibbs measure $\mu\in \mathscr G(\beta,\rho)$ satisfies
    \begin{equation}
        |\mu(\sigma_0)|< 1
    \end{equation}
    and
    \begin{equation}
        \text{$\mu(\sigma_0> 1)< \frac{1}{2}$\quad and\quad $\mu(\sigma_0< -1)< \frac{1}{2}$}.
    \end{equation}
\end{theorem}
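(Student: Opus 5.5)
The plan is to follow the proof of \zcref{thm:constrained Gibbs measures} almost verbatim, with the spin set $\set{-S,\dots,S}$ (symmetric about the \emph{site} $0$) replaced by $\Lambda_S:=\set{-S+\frac12,\dots,S-\frac12}$ (symmetric about the \emph{midpoint} $0$ between the two middle values $\pm\frac12$). This shift of the symmetry centre by $\frac12$ is what turns the thresholds $\frac12$ into $1$.

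\textbf{Reduction to the maximal measure.} The model on $\Lambda_S$ with even convex $W$ and the term $\rho\sum\sigma_i^2$ satisfies the FKG lattice condition, because convexity of $W$ gives supermodularity of $-W(\sigma_i-\sigma_j)$. Hence there are a maximal Gibbs measure $\mu^+$, the limit with boundary value $S-\frac12$, and a minimal one $\mu^-$. Every $\mu\in\mathscr G(\beta,\rho)$ satisfies $\mu^-\preceq\mu\preceq\mu^+$. The map $\sigma\mapsto-\sigma$ preserves the Hamiltonian, so $\mu^-$ is the image of $\mu^+$. It therefore suffices to prove
\[
\mu^+(\sigma_0)<1 \quad\text{and}\quad \mu^+(\sigma_0>1)<\tfrac12 .
\]
The statements for $\sigma_0<-1$ and the lower bound on the magnetization then follow by symmetry and monotonicity. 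The measure $\mu^+$ is translation invariant and extremal, hence ergodic.

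\textbf{Comparison step.} I would compare $\mu^+$ with the model on the shifted window $\Lambda_S+1$, which is symmetric about $1$, exactly as the integer proof compares $\set{-S,\dots,S}$ with a window symmetric about $\frac12$. Concretely:
\begin{itemize}
\item Condition $\mu^+$ (in finite volume, with boundary value $S-\frac12$) on the increasing event that all spins are at least $-S+\frac32$. By FKG this can only raise the measure.
\item The conditioned model lives on $\set{-S+\frac32,\dots,S-\frac12}$, which after subtracting $1$ is symmetric about $-\frac12$. Writing $\rho\sigma^2=\rho(\sigma-1)^2+2\rho\sigma-\rho$ shows that the extra term is a linear field of sign $\sign\rho$. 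For $\rho\ge 0$ it pushes downwards, so it only helps. For $\rho\in(-\rho_*,0)$ it must be absorbed by a continuity and strict-inequality argument, which is where $\rho_*(\beta)$ comes from.
\item The resulting sandwich expresses $\mu^+$ as lying below a measure that is symmetric about a point at distance $1$ from $0$. This yields the non-strict bounds $\mu^+(\sigma_0)\le1$ and $\mu^+(\sigma_0>1)\le\frac12$.
\end{itemize}

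\textbf{Strictness.} To upgrade to strict inequalities, I would show that the domination is strict at the origin. The ceiling value $S-\frac12$ (respectively the floor) has strictly positive $\mu^+$-probability at each site by finite energy. A single-site resampling, using that $W$ is convex and not all of its increments vanish, then produces a positive-probability discrepancy in the monotone coupling. For $\rho$ slightly negative, the resulting strict gap is uniform enough to survive a small perturbation. Each local specification depends continuously on $\rho$, and by monotonicity in $\rho$ the maximal measure is right/left continuous in the appropriate sense, which defines $\rho_*(\beta)$.

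I expect the main obstacle to be the comparison step itself. One must make sure that the symmetric reference measure does not have its own symmetry-broken maximal state that spoils the bound. That is, one must use a symmetric \emph{boundary condition} (e.g.\ alternating or free, with the reflection-invariant finite-volume measure) rather than the reference model's maximal state, and still obtain the domination by FKG. My first attempt would be to take free boundary conditions in a box for the reference model, since these are invariant under the reflection about $1$. I would then dominate $\mu^+$ in the box by coupling the plus boundary through the ceiling, which is already the maximal value, and pass to the limit along boxes using ergodicity of $\mu^+$.
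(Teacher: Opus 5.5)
Your reduction to $\mu^+$ is fine, but the comparison step does not work, and all the difficulty lies there.

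\textbf{The comparison step.} Conditioning $\mu^+_\Lambda$ on $\set{\sigma_i\ge -S+\frac32 \text{ for all } i}$ is a legitimate FKG step. However, the resulting measure lives on $\set{-S+\frac32,\dots,S-\frac12}$, which is symmetric about $\frac12$, not $1$. More importantly, it still carries the boundary value $S-\frac12$, which destroys any reflection symmetry. Push your sandwich through for $\rho\ge0$: drop the downward linear term, and remove the conditioning by viewing it as conditioning on a decreasing event in the window $\Lambda_S+1$. What you get is $\mu^+_\Lambda(\sigma_0)\le 1+\mu^{S-\frac32}_\Lambda(\sigma_0)$, where the right-hand side is the original model with boundary value $S-\frac32$. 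Since $\mu^{S-\frac32}_\Lambda\leq_{st}\mu^+_\Lambda$, this gives nothing.

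\textbf{The free-boundary remedy.} The free-boundary reference measure you propose next is not supplied by FKG. FKG only compares boundary conditions within one model, and there it gives $\mu^{\mathrm{free}}_\Lambda\leq_{st}\mu^+_\Lambda$, the wrong direction. A domination of $\mu^+_\Lambda$ by the reflection-invariant free measure shifted up by one unit fails at sites next to $\partial\Lambda$ whenever the boundary raises them by more than one unit (e.g.\ for $W(x)=x^2$ and large $S$). Restricted to the origin in the limit, it is simply a restatement of the theorem.

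\textbf{Strictness and $\rho_*$.} This paragraph has the same defect. A discrepancy in each finite volume is lost as $\Lambda\uparrow\Z^d$ unless it is uniform. Moreover, the tilt $e^{-\delta\sigma_i^2}$ is not monotone in $\sigma_i$, so FKG gives no monotonicity of $\mu^+_{\beta,\rho}$ in $\rho$. Hence there is no one-sided continuity from which to extract $\rho_*$.

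\textbf{The missing idea.} You need a mechanism that beats the plus boundary with a gain that is uniform in the volume. In the paper this is the magnetic field of \zcref{thm:dominate_M_by_M12}; note that already in the integer case the window symmetric about $\frac12$ is obtained by deleting $-S$, not by shifting. Removing the lowest values leaves slack: the single-site probability of the reduced window is at most some $\gamma<1$, uniformly in the neighbours. A monotone Glauber coupling converts this slack into a downward field $\lambda_0H$ with $\lambda_0=-\ln\gamma$. Plain FKG conditioning, as in your proposal, throws this slack away.

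For the even case one removes the two lowest values. The dominating model then lives on $\set{-S+\frac52,\dots,S-\frac12}$, which is symmetric about $1$, with field $\lambda_0\I_{\sigma>1}$. The argument then proceeds as follows.
\begin{itemize}
\item Attach Bernoulli edge variables centred at $1$.
\item When the open cluster of the origin stays inside $B_N$, flip $\sigma\mapsto 2-\sigma$ on it. This gains a factor $e^{-\lambda_0}$ and gives the strict inequalities uniformly in $N$.
\item Bound the clusters of spins $>1$ reaching distance $N$ by the analogue of \zcref{thm:>12_cluster_exponential_decay}. This is exactly where the field defeats the boundary.
\item Obtain $\rho_*$ from continuity of the single-site quantity $\lambda_0(\rho)$ in the flip condition $\lambda_0(\rho)+4\rho(s-1)>0$ for $s>1$.
\end{itemize}

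\textbf{A cheaper repair of your own step.} Your conditioning step can be completed without new machinery once the conditioned measure is identified correctly. Writing $\sigma=\frac12+\tau$, it is the integer model of \zcref{thm:constrained Gibbs measures} with $S-1$ in place of $S$, plus the single-site term $\rho\tau_i$. This term is a decreasing tilt when $\rho\ge0$. Hence for $\rho\ge0$ and $S\ge2$ that theorem directly gives $\mu^+(\sigma_0)<\frac12+\frac12$ and $\mu^+(\sigma_0>1)=\mu^+(\sigma_0\ge\frac32)<\frac12$. The case $S=1$ is trivial. Small negative $\rho$ still requires rerunning the integer proof with the extra linear field.
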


\begin{theorem}\label{thm:real_valued_entropic_repulsion}(Entropic repulsion with real values)
    Consider the model on $\varphi:\Z^d\to [-1,1]$ with Hamiltonian~\eqref{eq:def_hamiltonian}, where $W\colon [-2,2]\to\R$ is an even and convex function, $\beta>0$ the interaction strength and $\rho\in\R$ a parameter. Let $d\ge 2$. For any $\beta > 0$ 
    there exists $\rho_*(\beta) > 0$ such that for every $\rho \in (-\rho_*, \infty)$ the model has a \emph{unique} Gibbs measure.
\end{theorem}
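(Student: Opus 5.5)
We reduce uniqueness to a comparison between the two extremal Gibbs measures. By FKG (monotonicity in boundary conditions, which holds for any even convex $W$ on the interval $[-1,1]$ since the single-site reference measure is a product measure on a totally ordered space), there are a maximal and a minimal Gibbs measure $\mu^+$ and $\mu^-$, obtained as infinite-volume limits with boundary values $+1$ and $-1$. Every Gibbs measure is sandwiched between them. Both are translation invariant, and by the symmetry $\varphi\mapsto-\varphi$ (using that $W$ is even and the $\rho\varphi^2$ term is symmetric), $\mu^-$ is the image of $\mu^+$. Since $\mu^+\succeq\mu^-$ stochastically, a standard argument shows $\mu^+=\mu^-$ as soon as the one-site marginals agree: couple monotonically and observe that $\mu^+(\varphi_0)=\mu^-(\varphi_0)$ forces the coupled configurations to agree at $0$ almost surely, hence everywhere by translation invariance. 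So the whole task reduces to showing $\mu^+(\varphi_0)=0$, or equivalently that the law of $\varphi_0$ under $\mu^+$ is symmetric.

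To prove $\mu^+(\varphi_0)\le 0$, we adapt the argument the paper uses for \zcref{thm:constrained Gibbs measures}, which we take as the template. I expect that argument compares the model confined to $\{-S,\dots,S\}$ with shifted versions: it views $\mu^+$ restricted to the ``upper'' half as a model confined to a smaller window and uses FKG to compare it with a shifted centered model, giving a strict drop in the magnetization. In the discrete case the lattice spacing of $1$ is exactly what produces the slack $\tfrac12$ (or $1$ in \zcref{thm:constrained Gibbs measures even}).

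In the continuum this slack should shrink to zero. Concretely, for each $t\in(0,1)$, consider the map $\varphi\mapsto$ (reflection of $\varphi$ about the level $t$, where it is allowed). Equivalently, compare $\mu^+$ with the measure on $[-1+2t,1]$ obtained by shifting. The goal is an inequality of the form
\[
\mu^+(\varphi_0>t)\le \mu^+(\varphi_0<t)
\]
for every $t>0$; letting $t\downarrow0$ then forces $\mu^+(\varphi_0>0)\le\mu^+(\varphi_0<0)$. Combined with the corresponding statement in the other direction, this gives symmetry.

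The main obstacle is making the reflection/shift comparison rigorous in infinite volume. The model on the shrunken interval $[-1+2t,1]$ is shifted-centered at $t$, but its interval is shorter, so its entropy differs. Here I would use the role of $\rho$: the window restriction can be compared via convexity of $W$ with a model carrying a modified quadratic term, and this is the reason the statement allows $\rho\in(-\rho_*,\infty)$. If instead the paper's discrete proof gives only a strict inequality with a positive gap depending on the lattice spacing, I would discretize $[-1,1]$ with spacing $1/n$, apply \zcref{thm:constrained Gibbs measures even} or \zcref{thm:constrained Gibbs measures} after rescaling $W$ to obtain $|\mu_n(\varphi_0)|<1/n$-type bounds, and pass to the limit $n\to\infty$. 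This requires checking that extremal measures of the discretized models converge to $\mu^+$, which follows from weak convergence of finite-volume specifications with compact spins together with monotonicity. This limiting route is what I would try first.
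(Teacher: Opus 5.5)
Your reduction to $\mu^+(\varphi_0)=0$ via a monotone coupling and translation invariance is fine, and it is how the paper starts. The gap is in the core step.

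\textbf{The limiting route does not close.} Apply the discrete theorems to the $\frac1n$-discretization (spins $\sigma=n\varphi\in\{-n,\dots,n\}$, interaction $W(\cdot/n)$, parameter $\rho/n^2$). This gives $|\mu_n^+(\varphi_0)|<\frac1{2n}$ for the maximal discrete Gibbs measure. To transfer it you need $\mu^+(\varphi_0)\le\liminf_n\mu_n^+(\varphi_0)$, i.e.\ you must interchange $\Lambda\uparrow\Z^d$ and $n\to\infty$. Weak convergence of the finite-volume specifications plus monotonicity gives only the opposite inequality. For fixed $\Lambda$ one has $\mu_n^+(\varphi_0)\le\mu^+_{\Lambda,n}(\varphi_0)\to\mu^+_\Lambda(\varphi_0)$, hence $\limsup_n\mu_n^+(\varphi_0)\le\mu^+(\varphi_0)$: every subsequential limit of $\mu_n^+$ is a continuum Gibbs measure and is therefore dominated by $\mu^+$. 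That is useless here. Absent uniqueness, which is what you are proving, nothing forces $\mu_n^+\to\mu^+$. You would need a finite-volume bound $\mu^+_{\Lambda,n}(\varphi_0)\le\frac1{2n}+o_\Lambda(1)$ uniform in $n$, and the discrete argument does not supply one. In these units its field is $\lambda_0=-\ln\mu^+_{\{0\}}(\sigma_0\ge-n+1)\asymp\frac1n$. So the cluster estimate of \zcref{thm:>12_cluster_exponential_decay} only becomes effective for $N\gtrsim n^2$, and it decays at a rate of order $\frac1n$. The admissible range $\rho/n^2>-\rho_*$ would also have to be tracked uniformly in $n$.

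\textbf{The reflection route is not a proof either.} You do not identify the mechanism. Moreover, the target $\mu^+(\varphi_0>t)\le\mu^+(\varphi_0<t)$ only says the median is at most $t$; it neither bounds the mean nor yields symmetry.

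\textbf{The missing idea} is to discretize only where the limit is harmless: in a finite-volume stochastic domination whose constants are uniform in the mesh. The paper proves $\mu^+_{\Lambda;\beta,\rho}\leq_{st}\nu^{+,\eps}_{\Lambda;\beta,\rho,\lambda_\eps H_\eps}$. The dominating model lives on $[-1+2\eps,1]$ and carries the magnetic field $\lambda_\eps\I_{x>\eps}$. On the $\frac1S$-grid, the heat-bath coupling of \zcref{thm:dominate_M_by_M12} works for every $\lambda\le-\ln\gamma_\eps(S,\rho)$. Since $\gamma_\eps(S,\rho)\to\mu^+_{\{0\};\beta,\rho}(\sigma_0\ge-1+2\eps)<1$, the choice $\lambda_\eps=-\frac12\ln\mu^+_{\{0\};\beta,\rho}(\sigma_0\ge-1+2\eps)$ works for all large $S$, and one lets $S\to\infty$ at fixed $\Lambda$.

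So the cost of cutting off $[-1,-1+2\eps)$ is paid by a magnetic field, not by a modified quadratic term. The parameter $\rho$ enters only because the reflection $s\mapsto2\eps-s$ changes $\rho s^2$ by $-4\rho(\eps s-\eps^2)$. The field $\lambda_\eps$ must dominate this change, and that is what determines $\rho_*$.

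With the field in hand one works directly in the continuum, using three ingredients: edge variables with probabilities $q_\eps$; the analogue of \zcref{thm:>12_cluster_exponential_decay}, uniform in $\Lambda$; and flipping the open cluster of the origin by $\sigma\mapsto2\eps-\sigma$. Together they show that, on the event that this cluster stays inside $B_N$ and given $|\sigma_0-\eps|$, the sign of $\sigma_0-\eps$ is biased downward. This is what controls the mean and gives $\nu^{+,\eps}_{B_N}(\sigma_0)\le\eps+c_1(\eps)e^{-c_2(\eps)N}$. Letting $N\to\infty$ at fixed $\eps$, and then $\eps\downarrow0$, yields $\mu^+(\varphi_0)=0$.
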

For the real-valued case, as far as we are aware, uniqueness was only proved for the confined Gaussian free field ($W(x)=x^2$) model~\cite{McBryan:DecayCorrelationsSymmetric1977},~\cite[Proposition A1]{Bricmont:RandomSurfacesStatistical1986a},~\cite[Section 8]{DAlimonte:FreeEnergyAnalyticity2026}, with proofs that seemed to rely on features of this specific interaction energy. \zcref{thm:real_valued_entropic_repulsion} shows that uniqueness is a generic phenomenon for real-valued height functions, holding for all even and convex interaction energies (unlike the integer-valued case where we have shown that additional assumptions are required). Moreover, uniqueness is even shown to hold in an interval of \emph{negative} $\rho$ (with the interval depending on $\beta,W$ and $d$).

One may also deduce from these results and the FKG inequality the delocalization above a floor, with a similar proof to that of \zcref{cor:delocalization}.

As our focus in this paper is on integer-valued height functions, we leave for further study the question of exponential decay of correlations in the unique Gibbs measure of the confined real-valued height function.

\subsection{Organization of the paper and proof ideas}

We start with some preliminaries and notation in \zcref{sec:prelim} and present a sufficient condition for uniqueness of the infinite volume Gibbs measure and for exponential decay of correlations in \zcref{sec:uniqueness}. 

We make the following observation: If one extends the model by a percolation measure on the edges, with edge-probabilities appropriately chosen, then the spins in any open cluster are symmetric under a spin flip around $0$. This implies that $\mu^+(\sigma_0 \mid \mathcal A)=0$, where $\mathcal A$ is the event that open edges do not percolate. We make this argument fully rigorous in \zcref{sec:Bernoulli_representation}, by showing that there is uniqueness if and only if the open edges do not percolate, and that there is exponential decay of correlations, whenever the probability of one-arm events decays exponentially fast. As a direct application of this result, we give the proof of uniqueness at high temperatures (\zcref{thm:main_high_temperature}) in \zcref{sec:high_temperatures} (and an alternative using Dobrushin's uniqueness criterion).

An important innovation in our proof is the following \emph{stochastic domination}: our model (which takes values in $\set{-S, \dots, S}$) with $+S$ boundary conditions is dominated by a model with the same interaction, also with $+S$ boundary conditions, but taking values in the smaller space $\set{-S+1, \dots, S}$ and additionally having a \emph{magnetic field} which favors spins $\leq 0$. The advantage of the latter model is the symmetry of its state space around $1/2$, which, making use of the magnetic field, allows to deduce that its magnetization decays to a limit strictly less than $1/2$, exponentially fast in the side length of the domain (at any temperature). This is already enough to prove \zcref{thm:constrained Gibbs measures}, which says that any Gibbs measure, even in the non-uniqueness regime, has magnetization in $(-1/2, 1/2)$. It also suffices, by adapting the argument, for the corresponding results in the case of an even number of spins and for real valued systems (\zcref{thm:constrained Gibbs measures even,thm:real_valued_entropic_repulsion}). In two dimensions, using planarity, we can upgrade the magnetization decay of the dominating model to the exponential decay of clusters of positive spins. This yields \zcref{thm:main_two_dimensions} (see \zcref{sec:minority_percolation_2d}) using the percolation representation of~\zcref{sec:Bernoulli_representation}. In higher dimensions, exponential decay of clusters of positive spins is proven at low temperature (yielding \zcref{thm:main_low_temperature}, cf. \zcref{sec:minority_percolation_low_temperature}).

To prove uniqueness at all temperatures in all dimensions (\zcref{thm:main_mixtures_of_gaussians}) for the restricted class of $W$, we dominate our model differently, by a \emph{massive} integer valued field with the same interaction and $+S$ boundary conditions (\zcref{sec:uniqueness_mixture_of_gaussians}). This domination is proved following an argument of \cite{DAlimonte:FreeEnergyAnalyticity2026} (which uses the super-Gaussianity), where the authors establish such a domination for real valued fields with Gaussian interaction $W(x)=x^2$. We then wish to control the magnetization of the massive integer-valued field by a comparison to its real-valued version (which is itself controlled via the Brascamp--Lieb inequality), but this comparison is valid only for zero boundary conditions (the comparison uses the mixture of Gaussians assumption~\cite[Lemma D.1]{Aizenman:DepinningIntegerrestrictedGaussian2022}). To proceed, we apply the comparison in a box of side length $L$ with zero boundary conditions, to deduce that the probability that the \emph{average spin} exceeds $\epsilon>0$ is $\mathcal O(\exp(-c(\eps)L^d))$. Then, we argue that replacing the zero boundary condition by a $+S$ boundary condition only contributes $\mathcal O(L^{d-1})$ energy to the system and hence cannot change this probability significantly. Via the domination, these arguments imply that the magnetization of the confined model tends to $0$ in the infinite-volume limit, establishing the uniqueness.

We wish to derive the exponential decay of correlations from the uniqueness result, by adapting to our setup the ``sharpness of the phase transition'' technique of~\cite{Duminil-Copin:SharpPhaseTransition2019}, relying on establishing a differential inequality via the OSSS inequality. To this end, we need to embed our model inside a one-parameter family of ``sub-critical models'', with the family chosen so that the OSSS inequality provides control on the derivative of a suitable connectivity event (for the edge percolation of~\zcref{sec:Bernoulli_representation}) with respect to the family's parameter. It turns out that neither the natural temperature parameter nor our parameter $\rho$ yield suitable parameterizations. Instead, we extend the edge percolation measure to a generalization of the dilute random cluster (DRC) measure (introduced in~\cite{Graham:RandomClusterRepresentationBlume2006} to study the Blume--Capel model). In a similar spirit to the analysis of models with a quadratic interaction~\cite{Gunaratnam:ExistenceTricriticalPoint2024a,Panagiotis:SubcriticalSharpnessRealvalued2026}, we find in this generalized DRC measure a parameter suitable for the differential inequality technique. To prove ``sub-criticality'' of this parameterized family, we show that the connectivity observable in the DRC measure is dominated by the magnetization of the original spin model, with a lower temperature and smaller value of $\rho$ (and it is thus useful that our uniqueness result applies at all temperatures and even for small negative $\rho$). The last obstacle is the fact that the DRC representation is not monotone in the sense needed for the OSSS inequality. However,~\cite{Gunaratnam:ExistenceTricriticalPoint2024a} showed that the OSSS inequality applies also for measures on $\set{0, 1}^V \times \set{0,1}^E$ (where $G=(V,E)$ is a finite graph) which are \emph{weakly} monotone in a suitable sense. We further extend this to measures on $\set{0, \dots, S}^V \times \set{0,1}^E$ (by a reduction to the $S=1$; see~\zcref{appendix:OSSS}) and prove that our measure satisfies the required weak monotonicity (an alternative generalization of the OSSS inequality is in~\cite{Panagiotis:SubcriticalSharpnessRealvalued2026}). 

The main part of the article concludes with a proof of \zcref{thm:non-uniqueness} in \zcref{sec:non-uniqueness}, by showing that the interaction energies of that theorem lead to multiple (equivalent) dominant patterns in the sense of~\cite{Peled:LongrangeOrderDiscrete2020}, from which it follows that multiple Gibbs measures arise in high dimensions.

\section{Preliminaries and notation}
\label{sec:prelim}

\subsection{Gibbs measures}\label{sec:gibbs_measures}

To formally define the notion of Gibbs measures, we need to first explicitly define the model in finite volume. Fix an integer $S\ge 1$ and an interaction energy $W$ satisfying the assumptions in \zcref{sec:main_results}.
For a finite subset $\Lambda \Subset \Z^d$, the finite-volume version of the Hamiltonian~\eqref{eq:def_hamiltonian} is
\begin{equation}\label{eq:def_finite_volume_hamiltonian}
    \mathscr H_{\Lambda; \beta, \rho}(\sigma) := \beta \sum_{\set{i,j} \in \mathcal E^b(\Lambda)} W\left(\sigma_i - \sigma_j\right) + \rho \sum_{i \in \Lambda} \sigma_i^2, \quad \sigma \in \Omega,
\end{equation}
where
\begin{equation}
    \mathcal E^b(\Lambda) := \set{\set{i, j} \subset \Z^d \mid \abs{i-j}_1=1, \,i \in \Lambda \text{ or } j \in \Lambda}
\end{equation}
is the set of nearest neighbor edges with at least one endpoint in $\Lambda$.

Given a boundary condition $\eta \in \Omega$ we let
\begin{equation}
    \Omega_\Lambda^\eta := \set{\sigma \in \Omega \mid \sigma_i = \eta_i \text{ for } i \in \Lambda^c}
\end{equation}
be the set of configurations which coincide with $\eta$ outside of $\Lambda$. For $\eta \in \Omega$ we define a probability measure on $(\Omega, \mathcal F)$, where $\mathcal F$ is the $\sigma$-algebra generated by events depending on a finite number of sites, by
\begin{equation}\label{eq:def_P_Lambda}
    \mu_{\Lambda;\beta,\rho}^\eta(\sigma) := \begin{cases}
        \left(Z_{\Lambda;\beta, \rho}^\eta\right)^{-1} e^{-{\mathscr H}_{\Lambda;\beta,\rho}(\sigma)} & \text{if } \sigma \in \Omega_\Lambda^\eta, \\
        0 & \text{otherwise,}
    \end{cases}
\end{equation}
with ${Z}_{\Lambda;\beta,\rho}^\eta$ the partition function (or normalization factor). Note that $\mu_{\Lambda;\beta,\rho}^\eta$ is well defined, in the sense that $0<{Z}_{\Lambda;\beta,\rho}^\eta<\infty$, by our assumptions. Given a function $f \colon \Omega \to \R$, we write $\mu_{\Lambda;\beta,\rho}^\eta(f)$ for the associated expectation.

Since there is a natural order on $\Omega$ and our model satisfies the FKG inequality (cf.~\zcref{sec:prelim}), we will often work with the extremal measures, which are the limits of finite volume measures with $S$ and $-S$ boundary conditions. We will call them the $+$ and $-$ measure respectively, writen as $\mu^+_{\beta,\rho}$ and $\mu^-_{\beta,\rho}$. Accordingly, we set in finite volume
\begin{equation}
    \Omega^\pm_\Lambda := \set{\sigma \in \Omega \mid \sigma_i = \pm S \text{ for } i \in \Lambda^c}
\end{equation}
and shorthand $\mu_{\Lambda;\beta,\rho}^+ := \mu_{\Lambda;\beta,\rho}^S$ and $\mu_{\Lambda;\beta,\rho}^- := \mu_{\Lambda;\beta,\rho}^{-S}$.

Finally, an (infinite-volume) \emph{Gibbs measure} associated with the Hamiltonian~\eqref{eq:def_hamiltonian} is a probability measure $\mu$ on $(\Omega, \mathcal F)$, which satisfies the following DLR condition for every finite $\Lambda \Subset \Z^d$:
\begin{equation}\label{eq:DLR}
    \mu (A) = \int_{\Omega} \mu_{\Lambda;\beta,\rho}^\eta(A) \d\mu(\eta), \quad \text{for all } A \in \mathcal F.
\end{equation}

We remark that though our main results on exponential decay stated in Section~\ref{sec:main_results} are phrased for (infinite-volume) Gibbs measures, our proofs will actually establish exponential decay in finite volume, for the measures $\mu_{\Lambda;\beta,\rho}^\pm$ in cube domains.

\subsection{Preliminaries}

\begin{theorem}[FKG condition]\label{thm:FKG}
    Let $\nu$ be a probability measure on $\Z^{\Z^d}$ which satisfies the \emph{FKG lattice condition}
    \begin{equation}\label{eq:FKG_lattice_condition}
        \nu(\psi \vee \varphi) \nu(\psi \wedge \varphi) \geq\nu(\psi)\nu(\varphi) 
        \quad
        \text{for all } \psi,\varphi \in \Z^{\Z^d}.
    \end{equation}
    Then
    \begin{equation}
        \nu(f g) \geq \nu(f)\nu(g)
    \end{equation}
    for any two non-decreasing functions $f,g \colon \Z^{\Z^d} \to \R$.

    Similarly, if $\nu$ satisfies the \emph{absolute value FKG lattice condition}~\cite{Lammers:DelocalisationAbsolutevalueFKGSolidonsolid2024} 
    \begin{equation}
        \nu(\abs{\sigma} = \psi \vee \varphi) \nu(\abs{\sigma} = \psi \wedge \varphi) \geq\nu(\abs{\sigma} = \psi)\nu(\abs{\sigma} = \varphi) 
        \quad
        \text{for all } \psi,\varphi \in \set{0, 1, \dots}^{\Z^d},
    \end{equation}
    then
    \begin{equation}
        \nu(f(\abs\sigma) g(\abs\sigma)) \geq \nu(f(\abs\sigma))\nu(g(\abs\sigma))
    \end{equation}
    for any two non-decreasing functions $f,g \colon \set{0,1,\dots}^{\Z^d} \to \R$.
\end{theorem}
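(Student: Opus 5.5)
The statement to prove is the FKG theorem: the classical Holley/FKG result together with its absolute-value analogue. I would reduce to finite-dimensional marginals and use the standard finite lattice argument.

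\emph{Reduction to finite dimension.} As stated, $\nu$ is a measure on an infinite product, and the lattice condition only makes sense as a condition on finite-dimensional or conditional distributions. In the application the functions $f,g$ are local and bounded. So I will prove the statement for measures on $\Z^\Lambda$ with $\Lambda$ finite and with bounded increasing $f,g$. In fact it suffices to take the support $\prod_{i\in\Lambda}\{-S,\dots,S\}$. For Gibbs measures, the infinite-volume statement then follows from finite-volume specifications by taking weak limits, since for local monotone functions the inequality is preserved under limits.

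\emph{Finite case, by induction on $|\Lambda|$.} This is the Harris--FKG argument.
\begin{itemize}
\item For $|\Lambda|=1$ the state space is totally ordered. The inequality is Chebyshev's sum inequality: $\sum_{x,y}\nu(x)\nu(y)(f(x)-f(y))(g(x)-g(y))\ge0$, because each term is a product of two numbers of the same sign.
\item For the inductive step, fix a site $i_0$ and write $\Lambda'=\Lambda\setminus\{i_0\}$. The lattice condition passes to the conditional law of $\sigma_{\Lambda'}$ given $\sigma_{i_0}=a$; this is immediate, since it is a restriction to a sublattice. So by induction, $\nu(fg\mid\sigma_{i_0}=a)\ge\nu(f\mid a)\,\nu(g\mid a)$.
\item Next I need $a\mapsto\nu(f\mid\sigma_{i_0}=a)$ to be non-decreasing. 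This is the main step. I would get it from Holley's inequality: if two measures $\nu_1,\nu_2$ satisfy $\nu_2(\psi\vee\varphi)\,\nu_1(\psi\wedge\varphi)\ge\nu_2(\psi)\,\nu_1(\varphi)$, then $\nu_2$ stochastically dominates $\nu_1$. Apply it to the conditional laws at $a<b$; the required cross inequality is again an instance of the lattice condition for $\nu$.
\item Holley's inequality itself I would prove by the standard coupled Markov chain (heat-bath dynamics) on finite products of chains, or by citing it. Strict positivity is needed for the conditionals, so I would handle zero-weight configurations by a small perturbation and a limit.
\item Finally, conclude by Chebyshev in the variable $\sigma_{i_0}$:
\[
\nu(fg)\ge\nu\bigl(\nu(f\mid\sigma_{i_0})\,\nu(g\mid\sigma_{i_0})\bigr)\ge\nu(f)\,\nu(g).
\]
\end{itemize}

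\emph{Absolute-value version.} Let $\tilde\nu$ be the law of $|\sigma|$ under $\nu$, a measure on $\{0,1,\dots\}^{\Z^d}$. The hypothesis says exactly that $\tilde\nu$ satisfies the ordinary FKG lattice condition on this product of totally ordered sets. The first part then gives $\tilde\nu(fg)\ge\tilde\nu(f)\,\tilde\nu(g)$ for increasing $f,g$, and this is the claim.

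The expected obstacle is only the monotonicity of conditional expectations, which is Holley's inequality. I would cite it (Holley 1974, or standard texts such as Friedli--Velenik) rather than reprove it.
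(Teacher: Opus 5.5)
The paper gives no proof of this statement. It quotes it as the classical FKG/Holley theorem, attributes the absolute-value version to Lammers--Ott, and invokes the Ahlswede--Daykin four functions theorem elsewhere. So your argument has to stand on its own, and its skeleton is sound:
\begin{itemize}
\item Reading the statement in finite volume is the right interpretation. For a non-atomic $\nu$ the hypothesis is vacuous, and in the paper's uses $\nu$ is a finite-volume measure with fixed boundary values.
\item The slice $\{\sigma_{i_0}=a\}$ is a sublattice, so the conditional law inherits the lattice condition.
\item For $a<b$, Holley's cross condition is exactly the lattice condition applied to $(\psi,b)$ and $(\varphi,a)$. Together with monotonicity of $f$ in $\sigma_{i_0}$, this makes $a\mapsto\nu(f\mid\sigma_{i_0}=a)$ non-decreasing, and Chebyshev then closes the induction.
\item The absolute-value part is indeed just the first part applied to the law of $|\sigma|$.
\end{itemize}

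Three points need tightening.

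\emph{The perturbation.} Your ``small perturbation'' is not innocuous: adding $\epsilon$, or mixing with a positive product measure, does not preserve log-supermodularity. For instance, on $\{0,1\}^2$ take $\nu(00)=\nu(11)=2$, $\nu(01)=1$, $\nu(10)=4$. This satisfies the condition with equality, but $\nu+\epsilon$ violates it. Tensoring with $\delta_0$ in a third coordinate gives an example that genuinely has zeros. A perturbation that works is $\nu_\epsilon(x):=\max_y\nu(y)\,\epsilon^{\|x-y\|_1}$ with $0<\epsilon<1$. It is strictly positive and tends to $\nu$. It satisfies the lattice condition because $\|x\vee x'-y\vee y'\|_1+\|x\wedge x'-y\wedge y'\|_1\le\|x-y\|_1+\|x'-y'\|_1$ (sorting is an $\ell^1$-contraction). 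Alternatively, avoid positivity altogether: Holley's inequality for nonnegative measures follows from the four functions theorem. In the paper's own applications the finite-volume measures are strictly positive on their product state spaces, so this only concerns the general statement.

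\emph{The induction is redundant once you cite Holley.} Take $\mu_1=\nu$ and $\mu_2=g\nu/\nu(g)$ for $g>0$ non-decreasing (shift $g$ by a constant if needed). Holley's hypothesis is then the lattice condition multiplied by $g(\psi\vee\varphi)\ge g(\psi)$, and $\mu_1\le_{st}\mu_2$ is precisely $\nu(fg)\ge\nu(f)\nu(g)$. So the step you delegate already contains the whole theorem. A self-contained proof should prove Holley, for example by the coupled heat-bath dynamics used in the proof of Proposition~\ref{thm:dominate_M_by_M12} applied to the positive perturbation. Or it should use the four functions theorem directly.

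\emph{Unbounded spins.} The paper applies the theorem, including its absolute-value form, to $\Z$-valued fields in Section~\ref{sec:uniqueness_mixture_of_gaussians}. So ``it suffices to take $\{-S,\dots,S\}^\Lambda$'' needs a truncation step. Condition on the sublattice $\{-M,\dots,M\}^\Lambda$, apply the finite result, and let $M\to\infty$ for bounded (or suitably integrable) $f,g$.
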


\begin{remark}\label{remark:FKG_for_convex_W}
    Let $\Lambda \Subset \Z^d$ and $\psi \in \Z^{\Z^d}$, and assume that 
    \begin{equation}
        \nu(\sigma) \propto \begin{cases}
            e^{-\beta \sum_{i,j \in \Lambda} W(\sigma_i - \sigma_j) - \sum_{i \in \Lambda} V_i(\sigma_i)} & \text{if } \sigma_i = \psi_i \text{ for all } i \notin \Lambda\\
            0 & \text{else},
        \end{cases}
    \end{equation}
    with $W, V_i \colon \Z \to \R$. If $W$ is even and convex, then $\nu$ satisfies the FKG lattice condition. Moreover, if $W$ is super-Gaussian, then $\nu$ satisfies the absolute value FKG lattice condition~\cite[Theorem 2.8]{Lammers:DelocalisationAbsolutevalueFKGSolidonsolid2024}. This remains true if one restricts the spin space to a finite interval of $\Z^d$.
\end{remark}

\begin{remark}
    Since $\mathscr H_{\Lambda;\beta,\rho}$ defines a quasilocal specification (cf.~\cite[Section 6.4]{Friedli:StatisticalMechanicsLattice2017}), for any boundary condition $\eta \in \Omega$ there exists a measure $\mu_{\beta;\rho}^\eta \in \mathscr G(\beta,\rho)$ such that $\mu_{\Lambda;\beta,\rho}^\eta \Rightarrow \mu_{\beta,\rho}^\eta$ as $\Lambda \uparrow \Z^d$ for some subsequence, which shows that $\mathscr G(\beta,\rho) \neq \emptyset$. 
\end{remark}

A consequence of the $FKG$ inequality is the existence of the thermodynamic limit for $+$ and $-$ boundary conditions for every sequence $\Lambda \uparrow \Z^d$.
\begin{theorem}[Adaptation of {\cite[Thm 3.17]{Friedli:StatisticalMechanicsLattice2017}}]
    Let $\beta > 0$ and $\rho \in \R$. For any sequence $\Lambda_n \uparrow \Z^d$ we have
    \begin{equation}
        \mu_{\Lambda_n;\beta,\rho}^\pm \Rightarrow \mu_{\beta,\rho}^\pm \text{ as } n\to \infty
    \end{equation}
    in the sense of weak convergence of measures, independent of the sequence $\Lambda_n$. In particular, for any local $f \colon \Omega \to \R$ (i.e., it depends only on the values of finitely many spins, which we call the support $\supp f$) we have
    \begin{equation}
        \mu_{\Lambda_n;\beta,\rho}^\pm\left(f\right) \to \mu_{\beta,\rho}^\pm\left(f\right)
    \end{equation}
\end{theorem}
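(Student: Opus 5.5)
The plan is the standard monotonicity argument of \cite[Thm 3.17]{Friedli:StatisticalMechanicsLattice2017}, adapted to spins in $\set{-S,\dots,S}$ with the FKG property of \zcref{remark:FKG_for_convex_W}. We treat the $+$ case; the $-$ case is symmetric (or follows from the spin flip $\sigma\mapsto-\sigma$, which preserves the Hamiltonian since $W$ is even and $\rho\sigma_i^2$ is flip invariant).

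\emph{Step 1: monotonicity in the domain.} We claim that for $\Lambda\subset\Lambda'$ finite and $f$ non-decreasing, $\mu^+_{\Lambda';\beta,\rho}(f)\le \mu^+_{\Lambda;\beta,\rho}(f)$. By the DLR/Markov property, $\mu^+_{\Lambda';\beta,\rho}$ conditioned on the spins in $\Lambda'\setminus\Lambda$ equals $\mu^\eta_{\Lambda;\beta,\rho}$, where $\eta$ agrees with the sampled spins on $\Lambda'\setminus\Lambda$ and equals $S$ outside $\Lambda'$. Since $\eta\le S$ pointwise, it suffices to know that $\eta\mapsto\mu^\eta_{\Lambda;\beta,\rho}(f)$ is non-decreasing. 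This follows from FKG. Write $\mu^{\eta'}_\Lambda(f)=\mu^\eta_\Lambda(f g)/\mu^\eta_\Lambda(g)$ with $g(\sigma)=\exp(-\beta\sum_{i\in\Lambda,j\notin\Lambda}[W(\sigma_i-\eta'_j)-W(\sigma_i-\eta_j)])$. For $\eta\le\eta'$, convexity of $W$ makes $x\mapsto W(x-a)-W(x-b)$ non-increasing for $a\ge b$, so $g$ is non-decreasing in $\sigma$. Then \zcref{thm:FKG} applies: $\mu^\eta_\Lambda$ satisfies the lattice condition by \zcref{remark:FKG_for_convex_W}, restricted to the finite spin interval, with $V_i(\sigma_i)=\rho\sigma_i^2$ plus boundary terms.

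\emph{Step 2: convergence for monotone local functions.} For $f$ local and non-decreasing, let $\Lambda_n\uparrow\Z^d$. Any two exhausting sequences can be interleaved into a common increasing sequence. Along an increasing sequence, $\mu^+_{\Lambda_n}(f)$ is non-increasing and bounded by $\norm{f}_\infty$, so it converges. Given two sequences $\Lambda_n,\Lambda'_m$, each $\Lambda_n$ is contained in some $\Lambda'_m$ and vice versa. Step 1 then gives that the two limits dominate each other, so they are equal. Hence $\lim_n\mu^+_{\Lambda_n}(f)$ exists and is independent of the sequence. (For non-nested sequences, Step 1 is only needed for comparisons $\Lambda\subset\Lambda'$, and the sandwich argument still works.)

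\emph{Step 3: extension to all local functions.} This is the main technical point. Any function of finitely many spins taking values in a finite set is a linear combination of indicators of increasing cylinder events: on the finite poset $\set{-S,\dots,S}^\Delta$, the indicators $\I\set{\sigma_\Delta\ge\xi}$ for $\xi\in\set{-S,\dots,S}^\Delta$ span all functions, by M\"obius inversion. So every local $f$ is a finite linear combination of non-decreasing local functions, and the limit of $\mu^+_{\Lambda_n}(f)$ exists and is sequence-independent. By compactness of $\Omega$, a probability measure $\mu^+_{\beta,\rho}$ with these cylinder marginals exists (Kolmogorov consistency holds because the limits are consistent), and weak convergence follows since local functions are dense in $C(\Omega)$. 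Finally, $\mu^+_{\beta,\rho}\in\mathscr G(\beta,\rho)$ by the standard quasilocality argument in the remark above: for $\Lambda\subset\Lambda_n$, $\mu^+_{\Lambda_n}$ satisfies the DLR equation in $\Lambda$, and $\eta\mapsto\mu^\eta_\Lambda(f)$ is local for local $f$, so we can pass to the limit.
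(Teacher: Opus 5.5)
Your proof is correct and is essentially the standard argument of \cite[Thm 3.17]{Friedli:StatisticalMechanicsLattice2017}, which the paper cites without reproducing: FKG gives monotonicity in the volume for non-decreasing local functions, sandwiching gives independence of the sequence, and a linear decomposition into non-decreasing local functions handles general local $f$. The differences are only cosmetic: you decompose using the cylinder indicators $\I\set{\sigma_\Delta\ge\xi}$ rather than the non-decreasing monomials $n^K$ of \zcref{thm:decomposition_of_local_functions,thm:non_decreasing_quantities}, and you derive volume monotonicity from monotonicity in the boundary condition; both choices are, if anything, slightly simpler here.
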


Finally, we remark the following standard fact.

\begin{theorem}\label{thm:mu+_mu-_maximal}
    For all $\beta > 0$ and $\rho \in \R$ the measures $\mu_{\beta,\rho}^\pm$ are translation invariant and ergodic. Moreover, for any Gibbs measure $\mu \in \mathscr G(\beta,\rho)$ it holds $\mu_{\beta,\rho}^- \leq_{st} \mu \leq_{st} \mu_{\beta,\rho}^+$, where $\leq_{st}$ denotes stochastic domination of measures.
\end{theorem}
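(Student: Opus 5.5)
We prove the two assertions of \zcref{thm:mu+_mu-_maximal} via the FKG inequality, in the standard way of \cite[Chapter 3]{Friedli:StatisticalMechanicsLattice2017}. Throughout we use that, by \zcref{remark:FKG_for_convex_W}, every finite-volume measure $\mu^\eta_{\Lambda;\beta,\rho}$ satisfies the FKG lattice condition, since $W$ is even and convex and $\rho\sigma_i^2$ is a single-site potential.

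\emph{Step 1: monotonicity in the boundary condition.} The first consequence to establish is that $\eta\le\eta'$ pointwise implies $\mu^\eta_{\Lambda}\le_{st}\mu^{\eta'}_{\Lambda}$. I would obtain this through the Holley criterion. The relevant quantity is the joint lattice condition $\mu^{\eta'}_\Lambda(\psi\vee\varphi)\,\mu^\eta_\Lambda(\psi\wedge\varphi)\ge \mu^{\eta'}_\Lambda(\psi)\,\mu^\eta_\Lambda(\varphi)$. It reduces to submodularity of $(a,b)\mapsto W(a-b)$, which is exactly convexity of $W$, applied on boundary edges where the boundary spins $\eta_j\le\eta'_j$ enter.

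\emph{Step 2: sandwiching and monotonicity in the volume.} Since every $\eta\in\Omega$ satisfies $-S\le\eta\le S$, Step 1 gives $\mu^-_\Lambda\le_{st}\mu^\eta_\Lambda\le_{st}\mu^+_\Lambda$. Now let $\mu$ be a Gibbs measure. By the DLR condition \eqref{eq:DLR}, for a non-decreasing local $f$ we have $\mu(f)=\int\mu^\eta_\Lambda(f)\,d\mu(\eta)\le \mu^+_\Lambda(f)$. Letting $\Lambda\uparrow\Z^d$ and using the convergence theorem stated above yields $\mu(f)\le\mu^+(f)$. Monotone local functions determine stochastic order, so $\mu\le_{st}\mu^+$; the bound $\mu^-\le_{st}\mu$ is symmetric.

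\emph{Step 3: translation invariance.} For a translation $\tau$, we have $\mu^+_{\Lambda}\circ\tau^{-1}=\mu^+_{\tau\Lambda}$, since the Hamiltonian is translation invariant and the boundary condition is constant. Because the limit $\mu^+$ is independent of the sequence $\Lambda_n\uparrow\Z^d$, applying it along $\tau\Lambda_n$ gives $\mu^+\circ\tau^{-1}=\mu^+$.

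\emph{Step 4: ergodicity.} I expect this to be the step needing the most care. The plan is to show that $\mu^+$ is extremal in $\mathscr G(\beta,\rho)$. Suppose $\mu^+=\lambda\nu_1+(1-\lambda)\nu_2$ with $\nu_i\in\mathscr G$ and $0<\lambda<1$. By Step 2, $\nu_i\le_{st}\mu^+$. Equality of the mixture then forces $\nu_i(f)=\mu^+(f)$ for every non-decreasing local $f$: if $\nu_1(f)<\mu^+(f)$ for some such $f$, then since also $\nu_2(f)\le\mu^+(f)$, the mixture would satisfy $\lambda\nu_1(f)+(1-\lambda)\nu_2(f)<\mu^+(f)$, a contradiction. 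Non-decreasing local functions (for instance indicators of increasing cylinder events) determine the measure, so $\nu_i=\mu^+$. Hence $\mu^+$ is extremal, and therefore tail-trivial by \cite[Chapter 6]{Friedli:StatisticalMechanicsLattice2017}. It remains to pass from tail-triviality to ergodicity. One route is that translation-invariant events agree $\mu^+$-a.s.\ with tail events, which holds for a translation-invariant measure. Alternatively, and more simply, $\mu^+$ is extremal in $\mathscr G$ and translation invariant, hence extremal among translation-invariant Gibbs measures. Since invariant Gibbs measures decompose into ergodic invariant Gibbs measures, $\mu^+$ is ergodic. The argument for $\mu^-$ is identical.
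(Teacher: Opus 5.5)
Your proposal is correct. The Holley comparison in the boundary condition reduces, as you say, to submodularity of $(a,b)\mapsto W(a-b)$, i.e.\ convexity of $W$. The DLR sandwich, translation invariance via sequence-independence of the $\pm$ limits, and the chain extremality $\Rightarrow$ tail triviality $\Rightarrow$ ergodicity are all sound.

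The paper records this theorem as a standard fact without giving a proof. Your argument is the standard one from \cite{Friedli:StatisticalMechanicsLattice2017} that the paper implicitly relies on.
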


\subsection{Notation}

We denote the edges of the \emph{nearest neighbor} lattice $\Z^d$ by $\E(\Z^d)$.
For a graph $G = (V,E)$ such that $V \subset \Z^d$ and $E \subset \E(\Z^d)$, we define the set of exterior edges by
\begin{equation}
    \partial_{\exterior} G := \set{\set{i,j} \in \E(\Z^d) \setminus E \mid \set{i,j} \cap V \neq \emptyset}.
\end{equation}
Further, we write $\Lambda \subset \Z^d$ for a finite subset of $\Z^d$ and it as a subgraph of $\Z^d$ with edges
\begin{equation}
    \mathcal E(\Lambda) := \set{\set{u,v} \subset \Lambda \mid \abs{u-v} = 1} \subset \E(\Z^d).
\end{equation}
For $u,v \in \Z^d$ with $\set{u,v} \in \mathcal E(\Z^d)$ we will write $u \sim v$. We further set
\begin{equation}
    \mathcal E^b(\Lambda) := \mathcal E(\Lambda) \cup \partial_{\exterior} \Lambda
\end{equation}
and moreover define the \emph{boundary} and the \emph{external boundary} of $\Lambda$ by
\begin{equation}
    \partial \Lambda := \set{u \in \Lambda \mid \exists v \in \Z^d \colon \set{u,v} \in \partial_{\exterior} \Lambda}
    \text{ and }
    \partial_e \Lambda := \set{u \in \Lambda^c \mid \exists v \in \Lambda \colon \set{u,v} \in \partial_{\exterior} \Lambda}.
\end{equation}
Finally, for $x \in \Z^d$ and $N \in \N$ let
\begin{equation}
    B_N(x) := \set{y \in \Z^d \mid \abs{x-y}_\infty \leq N}
    \text{ and } B_N := B_N(0)
\end{equation}
be the box of size $N$ centered at $x$.
Further, for $\mu$ and $\mu'$ two probability measures, we write
\begin{equation}
    \mu \leq_{st} \mu'
\end{equation}
if $\mu'$ stochastically dominates $\mu$.

\section{Uniqueness and exponential decay of correlations}
\label{sec:uniqueness}

In this section we give an equivalent condition for uniqueness of the infinite volume limit, namely that the magnetization in the $+$ measure is zero. Moreover, we show that for exponential decay of correlations follows from exponential decay of the two-point functions. We remark that similar results have been used before for other FKG models (see e.g.~\cite[Theorem 3.28]{Friedli:StatisticalMechanicsLattice2017} and~\cite{Lebowitz:BoundsCorrelationsAnalyticity1972} for the Ising case).

It will be convenient to introduce the following local functions on $\Omega$.
\begin{equation}
    \quad \sigma^K := \prod_{i \in A} \sigma_i^{k_i}, \quad n_j := \frac1{2S} (\sigma_j + S), \quad n^K := \prod_{i \in A} n_i^{k_i}, \quad N^K := \sum_{i \in A} k_i n_i,
\end{equation}
where $j \in \Z^d$, $A \Subset \Z^d$ and $K \in \set{0,1,\ldots}^A$.
\begin{lemma}\label{thm:decomposition_of_local_functions}
    For any local function $f \colon \Omega \to \R$, there exist real coefficients $\hat f_K$, $K \in \set{0, \dots, 2S}^{\supp f}$, such that
    \begin{equation}
        f = \sum_{K \in \set{0, \dots, 2S}^{\supp f}} \hat f_K  n^K.
    \end{equation}
    Moreover, there exists $c_0 \equiv c_0(S)$ such that
    \begin{equation}\label{eq:coefficients_interpolating_polynomial}
        \abs{\hat f_K} \leq c_0^{\abs{\supp f}} \norm{f}_\infty
    \end{equation}
    for every  $K \in \set{0, \dots, 2S}^{\supp f}$.
\end{lemma}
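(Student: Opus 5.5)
This is multivariate Lagrange interpolation on the grid $\{0,\frac{1}{2S},\dots,1\}^{\supp f}$, built as a tensor product of one-dimensional interpolation. Set $A:=\supp f$, so that $f(\sigma)=F((n_i)_{i\in A})$ for a function $F$ defined on $G^A$, where $G:=\{0,\frac1{2S},\dots,1\}$ has $2S+1$ points.

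The one-dimensional case comes first. For each $m\in\{0,\dots,2S\}$, let $\ell_m(x)=\sum_{k=0}^{2S}a_{m,k}x^k$ be the Lagrange basis polynomial on $G$. It has degree at most $2S$, satisfies $\ell_m(m/(2S))=1$, and vanishes at the other nodes. Its coefficients are the entries of the inverse of the $(2S+1)\times(2S+1)$ Vandermonde matrix $V$ with $V_{m,k}=(m/2S)^k$. This matrix is invertible because the nodes are distinct. Define $a_0:=\max_{m,k}|a_{m,k}|$, a finite constant depending only on $S$.

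The tensor product then gives the expansion. For $\sigma$ with $n_i=m_i/(2S)$ we have $\prod_{i\in A}\ell_{m_i'}(n_i)=\I[m'=m]$, and therefore
\begin{equation}
f=\sum_{m\in\{0,\dots,2S\}^A}F(m/2S)\prod_{i\in A}\ell_{m_i}(n_i).
\end{equation}
Expanding each product into monomials gives
\begin{equation}
f=\sum_{K\in\{0,\dots,2S\}^A}\hat f_K\, n^K,\qquad \hat f_K=\sum_{m}F(m/2S)\prod_{i\in A}a_{m_i,K_i}.
\end{equation}
This identity holds pointwise on $\Omega$, since only the values $n_i\in G$ ever occur.

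For the bound, $|F(m/2S)|\le\norm f_\infty$, and the sum over $m$ has $(2S+1)^{|A|}$ terms, each bounded in absolute value by $a_0^{|A|}\norm f_\infty$. Hence
\begin{equation}
|\hat f_K|\le\big((2S+1)a_0\big)^{|A|}\norm f_\infty,
\end{equation}
so \eqref{eq:coefficients_interpolating_polynomial} holds with $c_0:=\max\{1,(2S+1)a_0\}$.

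No step is a real obstacle. The only point needing care is that the constant is independent of $A$ and of $f$, and the tensor-product structure ensures this. If $\supp f=\emptyset$, then $f$ is constant and the statement is trivial.
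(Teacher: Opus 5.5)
Your proof is correct and follows the paper's own argument: tensor-product Lagrange interpolation on the grid $\{0,\frac{1}{2S},\dots,1\}^{\supp f}$, then multiplying out the products of one-dimensional basis polynomials and grouping by monomial. Your explicit count of the $(2S+1)^{|\supp f|}$ terms, together with the inverse-Vandermonde bound $a_0$, spells out the constant $c_0(S)$ that the paper only describes as depending on the one-dimensional Lagrange polynomials.
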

\begin{proof}
    We write $f$ as a sum of Lagrange polynomials. For $\sigma \in \Omega$,
    \begin{equation}\label{eq:repr_lagrange_polynomials}
        f(\sigma) = \tilde f(n) =\sum_{\bar n \in \set{0, \frac1{2S}, \ldots, 1}^{\supp f}} \tilde f(\bar n) \prod_{x \in \supp f}\ell_{\bar n_x}(n_x),
    \end{equation}
    where $\tilde f$ is obtained from $f$ after change of coordinates from $\sigma$ to $n$, and where
    \begin{equation}
        \ell_{b}(a) := \prod_{\omega \in \set{0, \frac1{2S}, \ldots, 1} \setminus \set{b}} \frac{a - \omega}{b - \omega},
        \qquad
        a,b \in \set{0, \tfrac1{2S}, \tfrac2{2S}, \ldots, 1}.
    \end{equation}
    These polynomials have the property
    \begin{equation}
        \ell_{b}(a) = \I_{a = b}.
    \end{equation}
    The $\hat f_K$ are obtained by multiplying out the product in \eqref{eq:repr_lagrange_polynomials} and grouping together terms with the same monomial. 
    Further, it follows that each coefficient $\hat f_K$ is bounded by
    \begin{equation}
        \abs{\hat f_K} \leq \norm{f}_{\infty} c_0^{\abs{\supp f}},
    \end{equation}
    where $c_0$ depends only on the one dimensional Lagrange polynomials $\ell_{b}$ and thus only on $S$.
\end{proof}

\begin{lemma}\label{thm:non_decreasing_quantities}
    Let $\sigma \in \Omega$ and $i \in \Z^d$, $A \Subset \Z^d$ and $K \in \set{0,1,\ldots}^A$. Then $\sigma_i$, $n_i$, $N^K$, $n^K$ and $N^K - n^K$ are non-decreasing.
\end{lemma}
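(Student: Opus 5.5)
The plan is to prove each claim directly from the definitions, where ``non-decreasing'' refers to the coordinatewise partial order on $\Omega$: $\sigma\le\sigma'$ if and only if $\sigma_i\le\sigma'_i$ for all $i$.

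First, $\sigma\mapsto\sigma_i$ is a coordinate projection, so it is trivially non-decreasing. Next, $n_i=\frac1{2S}(\sigma_i+S)$ is an increasing affine map of $\sigma_i$, so it is non-decreasing. It also takes values in $[0,1]$, which is the key fact for the rest.

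For $N^K=\sum_{i\in A}k_in_i$, we note that it is a nonnegative linear combination (since $k_i\ge0$) of non-decreasing functions, hence non-decreasing.

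For $n^K=\prod_{i\in A}n_i^{k_i}$, each factor $n_i^{k_i}$ is a non-decreasing function of $n_i\in[0,1]$. This uses $k_i\ge0$ and $t\mapsto t^{k}$ being non-decreasing on $[0,\infty)$, with the convention $0^0=1$. A product of nonnegative non-decreasing functions is non-decreasing, so $n^K$ is non-decreasing.

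The only nontrivial claim is $N^K-n^K$. We show it is non-decreasing in each coordinate separately and then chain single-coordinate changes. Fix $j\in A$ with $k_j\ge1$ (if $k_j=0$, neither term depends on $n_j$). Write $n^K=C\,n_j^{k_j}$, where $C=\prod_{i\ne j}n_i^{k_i}\in[0,1]$ does not depend on $n_j$. As a function of $t=n_j\in[0,1]$, the expression is
\[
k_jt-Ct^{k_j}+\text{const}.
\]
For $0\le t\le t'\le 1$ the mean value bound gives
\[
t'^{k_j}-t^{k_j}\le k_j(t'-t),
\]
since the derivative $k_js^{k_j-1}\le k_j$ on $[0,1]$. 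Using $C\le1$, we get
\[
C\bigl(t'^{k_j}-t^{k_j}\bigr)\le k_j(t'-t),
\]
so the increment of $N^K-n^K$ is nonnegative.

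Finally, given $\sigma\le\sigma'$ differing on the finite set $A$, we raise the coordinates one at a time. Each step keeps all values in $[0,1]$ and does not decrease $N^K-n^K$, which concludes the argument. The only point needing care is this last claim, namely the bound $C\le1$ coming from $n_i\in[0,1]$; everything else is routine.
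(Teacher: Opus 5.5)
Your proof is correct and takes essentially the same route as the paper. Both reduce to raising a single coordinate $x\in A$ with $k_x\ge 1$, write $N^K-n^K$ as a constant plus $F(t)=k_x t-\alpha t^{k_x}$ with $\alpha=\prod_{i\ne x}n_i^{k_i}\in[0,1]$, and observe that $F$ is non-decreasing on $[0,1]$. Your mean-value bound $(t')^{k}-t^{k}\le k(t'-t)$ is exactly the paper's observation that $F'\ge 0$ there.
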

\begin{proof}
    It is clear for $\sigma_i$, $n_i$ and $N^K$. Let $\sigma \leq \sigma'$ and let $n^K := n^K(\sigma)$, $N^K := N^K(\sigma)$, $\bar n^K := n^K(\sigma')$ and $\bar N^K := N^K(\sigma')$. We need to show
    \begin{equation}
        n^K \leq \bar n^K \text{ and } N^K - n^K \leq \bar N^K - \bar n^K.
    \end{equation}
    Observe that it suffices to assume that $\sigma$ and $\sigma'$ differ at a single vertex $x \in A$ for which $k_x \geq 1$, since $A$ is finite and the general case thus follows inductively. We have
    \begin{equation}
        n^K = n_x^{k_x} \prod_{i \in A \setminus\set{x}} n_i^{k_i} = n_x^{k_x}  \prod_{i \in A \setminus \set{x}} (n_i')^{k_i} \leq (n_x')^{k_x}  \prod_{i \in A \setminus\set{x}} (n_i')^{k_i} = \bar n^K,
    \end{equation}        
    because $n_i \geq 0$. Moreover,
    \begin{equation}\begin{split}
        N^K - n^K
        &= \sum_{i \in A \setminus \set{x}} k_i n_i + k_x n_x - n_x^{k_x}  \prod_{i \in A \setminus\set{x}} n_i^{k_i} \\
        &= \sum_{i \in A \setminus \set{x}} k_i n_i' + k_x n_x - n_x^{k_x}  \prod_{i \in A \setminus\set{x}} (n_i')^{k_i} 
        = \sum_{i \in A \setminus \set{x}} k_i n_i' + F(n_x),
    \end{split}\end{equation}
    where $F(y) := k_x y - \alpha y^{k_x}$ with $\alpha :=  \prod_{i \in A \setminus \set{x}} (n_i')^{k_i}$. We have that $0 \leq \alpha \leq 1$, thus $F' \geq 0$ on $[0,1]$. Therefore $F(n_x) \leq F(n_x')$. Hence,
    \begin{equation}
        N^K - n^K \leq \sum_{i \in A \setminus \set{x}} k_i n_i' + F(n_x') = \bar N^K - \bar n^K
    \end{equation}
    as claimed.
\end{proof}

\begin{theorem}\label{thm:unique_gibbs_measure_if_magnetization_zero}
    Let $\beta > 0$ and $\rho \in \R$. The following two statements are equivalent:
    \begin{enumerate}
        \item There is a unique infinite volume Gibbs measure on $(\Omega, \mathcal F)$ for $\mathscr H_{\Lambda;\beta,\rho}$, i.e., $\mathscr G(\beta,\rho) = \set{\mu_{\beta,\rho}}$.
        \item It holds that $\mu_{\beta,\rho}^+\left(\sigma_0\right) = 0$.
    \end{enumerate}
\end{theorem}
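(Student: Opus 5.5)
The direction (1)$\Rightarrow$(2) is a symmetry argument. The spin-flip $\sigma\mapsto-\sigma$ preserves the Hamiltonian, because $W$ is even and $\rho\sigma_i^2$ is invariant. It therefore maps $\mu^+_{\Lambda;\beta,\rho}$ to $\mu^-_{\Lambda;\beta,\rho}$, and in the limit $\mu^+_{\beta,\rho}$ to $\mu^-_{\beta,\rho}$. In particular $\mu^-_{\beta,\rho}(\sigma_0)=-\mu^+_{\beta,\rho}(\sigma_0)$. If $\mathscr G(\beta,\rho)$ is a singleton, then $\mu^+_{\beta,\rho}=\mu^-_{\beta,\rho}$, so $\mu^+_{\beta,\rho}(\sigma_0)=0$.

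For (2)$\Rightarrow$(1), I will use \zcref{thm:mu+_mu-_maximal}. Every $\mu\in\mathscr G(\beta,\rho)$ is sandwiched as $\mu^-\le_{st}\mu\le_{st}\mu^+$, so it suffices to show $\mu^+=\mu^-$. Since $\mathcal F$ is generated by local events, it is enough to show $\mu^+(f)=\mu^-(f)$ for every local $f$. By \zcref{thm:decomposition_of_local_functions} each local $f$ is a finite linear combination of monomials $n^K$, so it suffices to prove $\mu^+(n^K)=\mu^-(n^K)$ for every finite $A$ and $K\in\{0,\dots,2S\}^A$.

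I compare the two measures through a monotone coupling, or equivalently through the monotone functions of \zcref{thm:non_decreasing_quantities}. Both $n^K$ and $N^K-n^K$ are non-decreasing, so stochastic domination gives
\begin{equation}
    \mu^-(n^K)\le\mu^+(n^K)
    \quad\text{and}\quad
    \mu^-(N^K-n^K)\le\mu^+(N^K-n^K).
\end{equation}
Adding the two inequalities,
\begin{equation}
    0\le \mu^+(n^K)-\mu^-(n^K)\le \mu^+(N^K)-\mu^-(N^K)=\sum_{i\in A}k_i\bigl(\mu^+(n_i)-\mu^-(n_i)\bigr).
\end{equation}
By translation invariance of $\mu^\pm$ (again \zcref{thm:mu+_mu-_maximal}), $\mu^\pm(n_i)=\mu^\pm(n_0)=\frac{1}{2S}\bigl(\mu^\pm(\sigma_0)+S\bigr)$. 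The flip symmetry gives $\mu^-(\sigma_0)=-\mu^+(\sigma_0)=0$ under (2), so each summand vanishes. This forces $\mu^+(n^K)=\mu^-(n^K)$, and hence $\mu^+=\mu^-$.

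There is no genuine obstacle here; the key trick is the monotonicity of $N^K-n^K$, which converts equality of one-point functions into equality of all correlations. The only points to check with care are these: that the flip symmetry genuinely carries the $+S$ boundary condition measures to the $-S$ ones in the limit, which follows from the finite-volume identity and weak convergence along the same sequence of boxes; and that equality on all $n^K$ determines the measure, which follows from the Lagrange decomposition together with the fact that local functions determine measures on $\mathcal F$.
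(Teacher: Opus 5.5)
Your proof is correct. For (2)$\Rightarrow$(1) it takes a different route from the paper.

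The paper takes a monotone (Strassen) coupling $\P$ of $\mu^+_{\beta,\rho}$ and $\mu^-_{\beta,\rho}$ with $\sigma^+\ge\sigma^-$ almost surely. It then observes, by translation invariance, that $\mathbbm E\abs{\sigma^+_i-\sigma^-_i}=\mu^+_{\beta,\rho}(\sigma_0)-\mu^-_{\beta,\rho}(\sigma_0)=0$, so the two coupled configurations coincide almost surely. You instead work only with expectations of monotone local functions. You use the monotonicity of $n^K$ and $N^K-n^K$ together with the Lagrange decomposition of \zcref{thm:decomposition_of_local_functions}, which is the same machinery as in the proof of \zcref{thm:two_point_correlations_dominate}.

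\textbf{What the paper's route buys.} The coupling argument is shorter and uses no algebraic structure of the single-spin space. It therefore transfers verbatim to the $[-1,1]$-valued model, and this is exactly how the paper reuses it in the proof of \zcref{thm:real_valued_entropic_repulsion}.

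\textbf{What your route buys.} You avoid invoking a coupling theorem on the infinite product space. You also get the quantitative bound
$0\le\mu^+_{\beta,\rho}(n^K)-\mu^-_{\beta,\rho}(n^K)\le \frac{1}{2S}\bigl(\sum_{i\in A}k_i\bigr)\bigl(\mu^+_{\beta,\rho}(\sigma_0)-\mu^-_{\beta,\rho}(\sigma_0)\bigr)$,
which controls every local observable by the magnetization gap. As written, however, your argument relies on the finite state space $\set{-S,\dots,S}$. For continuous spins you would need a density step, such as Stone--Weierstrass on $[0,1]^A$, in place of Lagrange interpolation.

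\textbf{On (1)$\Rightarrow$(2).} The difference is cosmetic. The paper passes to a subsequential limit of the symmetric zero-boundary measures, while you use that the global flip carries $\mu^+_{\beta,\rho}$ to $\mu^-_{\beta,\rho}$. Your version has the side benefit of making explicit the identity $\mu^-_{\beta,\rho}(\sigma_0)=-\mu^+_{\beta,\rho}(\sigma_0)$, which the paper's proof of (2)$\Rightarrow$(1) uses tacitly.
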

\begin{proof}
    Assume that $\mathscr G(\beta,\rho) = \set{\mu_{\beta,\rho}}$. Then
    \begin{equation}
        \mu_{\beta,\rho}^+(\sigma_0) = \mu_{\beta,\rho}(\sigma_0) = \lim_{n \to \infty} \mu_{B_n;\beta,\rho}^0 (\sigma_0) = 0,
    \end{equation}
    where 
    $\mu_{B_n;\beta,\rho}^0$ is the measure with zero boundary conditions in $B_n$,
    which is symmetric around $0$. 
    The second equality is due to the fact that $\mu_{B_n;\beta,\rho}^0 \Rightarrow \mu_{\beta,\rho}^0 =\mu_{\beta,\rho}$ along a subsequence and the uniqueness of the infinite volume measure.

    Since the opposite direction is a classical result, we only give a sketch. By \zcref{thm:mu+_mu-_maximal} it suffices to show that $\mu_{\beta,\rho}^+ = \mu_{\beta,\rho}^-$. Since the $+$ measure stochastically dominates the $-$ measure, there exists a coupling $\P$ on $(\Omega \times \Omega, \mathcal F \otimes \mathcal F)$ of the two measures such that for any $A \in \mathcal F$ we have $\P({A} \times \Omega)=\mu_{\beta,\rho}^+(A)$ and $\P(\Omega \times A)=\mu_{\beta,\rho}^-(A)$, and moreover $\P(\set{(x,y) \in \Omega \times \Omega \mid x \geq y})=1$. Let $i \in \Z^d$. We write an element in $\Omega \times \Omega$ as $(\sigma^+, \sigma^-)$. By translation invariance 
    \begin{equation}
        \mathbbm E(\abs{\sigma_i^+ - \sigma_i^-}) 
        = \mathbbm E(\sigma_i^+ - \sigma_i^-)
        = \mu_{\beta,\rho}^+(\sigma_0) - \mu_{\beta,\rho}^-(\sigma_0) = 0.
    \end{equation}
    Thus, $\P$ almost surely $\sigma_i^+ = \sigma_i^-$ for all $i \in \Z^d$ and thus $\mu_{\beta,\rho}^+=\mu_{\beta,\rho}^-$.
\end{proof}

A second result which goes back to a result of Lebowitz~\cite{Lebowitz:BoundsCorrelationsAnalyticity1972}, who proved the analogous statement for the spin-$\frac12$ Ising model, shows that the two-point correlation function dominates all higher correlations.

\begin{proposition}[Correlations are dominated by the two-point function]\label{thm:two_point_correlations_dominate}
    Let $f, g \colon \Omega \to \R$ local and $\eta \in \Omega$. Then, there exists a constant $c \equiv c(S) > 0$ such that for all $\Lambda \Subset \Z^d$ and $\beta > 0$,
    \begin{equation}
    \begin{aligned}
        &\phntm\abs{\mu_{\Lambda;\beta,\rho}^\eta\left(fg\right) - \mu_{\Lambda;\beta,\rho}^\eta\left(f\right)\mu_{\Lambda;\beta,\rho}^\eta\left(g\right)}\\
        &\leq \norm{f}_\infty \norm{g}_\infty c^{\abs{\supp f}+ \abs{\supp g}} \sum_{\substack{i \in \supp f\\j \in \supp g}} 
            \left[\mu_{\Lambda;\beta,\rho}^\eta\left(\sigma_i \sigma_j\right) - \mu_{\Lambda;\beta,\rho}^\eta\left(\sigma_i\right)\mu_{\Lambda;\beta,\rho}^\eta\left(\sigma_j\right)\right].
    \end{aligned}
    \end{equation}
\end{proposition}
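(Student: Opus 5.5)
Apply \zcref{thm:decomposition_of_local_functions} to $f$ and $g$ and use bilinearity of the covariance. This gives
\begin{equation}
    \mu(fg)-\mu(f)\mu(g) = \sum_{K,L} \hat f_K \hat g_L \bigl[\mu(n^K n^L)-\mu(n^K)\mu(n^L)\bigr],
\end{equation}
where $\mu=\mu^\eta_{\Lambda;\beta,\rho}$, $K\in\{0,\dots,2S\}^{\supp f}$ and $L\in\{0,\dots,2S\}^{\supp g}$. There are at most $(2S+1)^{\abs{\supp f}+\abs{\supp g}}$ pairs, and each coefficient product is bounded by $c_0^{\abs{\supp f}+\abs{\supp g}}\norm{f}_\infty\norm{g}_\infty$. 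It therefore suffices to bound each monomial covariance $\abs{\Cov(n^K,n^L)}$ by a constant $C(S)^{\abs{\supp f}+\abs{\supp g}}$ times $\sum_{i\in\supp f,\,j\in\supp g}\Cov(\sigma_i,\sigma_j)$. Note that $\Cov(\sigma_i,\sigma_j)=4S^2\Cov(n_i,n_j)$.

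The key step is the Lebowitz-type monotonicity argument. By \zcref{remark:FKG_for_convex_W}, $\mu$ satisfies the FKG lattice condition, so by \zcref{thm:FKG} any two non-decreasing functions are positively correlated. By \zcref{thm:non_decreasing_quantities}, the functions $n^K$, $N^K-n^K$, $n^L$ and $N^L-n^L$ are all non-decreasing. FKG applied to the pairs $(n^K,n^L)$, $(N^K-n^K,n^L)$ and $(N^K,N^L-n^L)$ gives
\begin{equation}
    0\le \Cov(n^K,n^L)\le \Cov(N^K,n^L)\le \Cov(N^K,N^L).
\end{equation}
Here the middle inequality holds because $\Cov(N^K,n^L)-\Cov(n^K,n^L)=\Cov(N^K-n^K,n^L)\ge0$, and the last holds because $N^K$ is non-decreasing, so the same subtraction applies on the $L$ side. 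By linearity,
\begin{equation}
    \Cov(N^K,N^L)=\sum_{i,j}k_i l_j \Cov(n_i,n_j)\le (2S)^2\sum_{i\in\supp f,\,j\in\supp g}\Cov(n_i,n_j).
\end{equation}
Since $\Cov(n_i,n_j)\ge0$ by FKG, the weights $k_il_j\le(2S)^2$ can be bounded term by term.

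Combining these bounds, every monomial covariance is controlled by $(2S)^2(4S^2)^{-1}\sum_{i,j}\Cov(\sigma_i,\sigma_j)=\sum_{i,j}\Cov(\sigma_i,\sigma_j)$. Summing over the $(K,L)$ pairs then gives the claim with $c=(2S+1)c_0$, which depends only on $S$.

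The main subtlety is that FKG must apply to the finite-volume measure with an arbitrary boundary condition $\eta$ and for any sign of $\rho$. This follows from \zcref{remark:FKG_for_convex_W}: the $\rho\sigma_i^2$ term is a single-site potential $V_i$, and the boundary edges contribute single-site terms given $\eta$. Monomials with $K=0$ or $L=0$ are constants, contribute zero covariance, and can be dropped.
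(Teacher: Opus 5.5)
Your proposal is correct and follows the paper's proof essentially step for step. Both arguments use the Lagrange-polynomial expansion of $f$ and $g$ with coefficients bounded by $c_0^{\abs{\supp f}}\norm{f}_\infty$. Both then apply the Lebowitz-type FKG chain $0\le\Cov(n^K,n^L)\le\Cov(N^K,n^L)\le\Cov(N^K,N^L)$, which rests on the monotonicity of $n^K$ and $N^K-n^K$, and the termwise bound $k_il_j\Cov(n_i,n_j)\le\Cov(\sigma_i,\sigma_j)$, arriving at the same constant $c=(2S+1)c_0$. Your pairing $(N^K-n^K,n^L)$ is exactly the one the middle inequality needs; the paper lists $(n^K,N^{K'}-n^{K'})$ there instead, which by symmetry amounts to the same argument.
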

\begin{proof}
    To lighten the notation we write $\sprod{\cdot}{} = \mu_{\Lambda;\beta,\rho}^\eta(\cdot)$ for the rest of this proof. Let $A,B \Subset \Z^d$ be finite sets and let $K \in \set{0,1,\dots}^A$ and $K'\in \set{0,1,\dots}^B$.
    By \zcref{thm:non_decreasing_quantities}, the functions $n^K$ and $N^K - n^K$ are non-decreasing. Thus, by the FKG inequality, we immediately obtain the following:
    \begin{align}
        \sprod{n^K n^{K'}}{} - \sprod{n^K}{}\sprod{n^{K'}}{} &\geq 0, \\
        \sprod{n^K (N_{K'} - n^{K'})}{} - \sprod{n^K}{} \sprod{N_{K'} - n^{K'}}{} &\geq 0, \text{ and}\\
        \sprod{N^K (N_{K'} - n^{K'})}{} - \sprod{N^K}{} \sprod{N_{K'} - n^{K'}}{} &\geq 0. 
    \end{align}
    This implies
    \begin{equation}\begin{split}\label{eq:bound_nk_correlations_by_simga_i}
        0 \leq \sprod{n^K n^{K'}}{} - \sprod{n^K}{} \sprod{n^{K'}}{} \nonumber
        &\leq \sprod{N^K n^{K'}}{} - \sprod{N^K}{} \sprod{n^{K'}}{} \nonumber\\
        &\leq  \sprod{N^K N_{K'}}{} - \sprod{N^K}{} \sprod{N_{K'}}{} \nonumber\\
        &= \sum_{i \in A} \sum_{j \in B} k_i (k')_j \left(\sprod{n_i n_j}{} - \sprod{n_i}{} \sprod{n_j}{}\right) \nonumber\\
        &\leq \sum_{i \in A} \sum_{j \in B} \left(\sprod{\sigma_i \sigma_j}{} - \sprod{\sigma_i}{} \sprod{\sigma_j}{}\right),
    \end{split}\end{equation}
    where we used $k_i, (k')_j \leq 2S$.
   By \zcref{thm:decomposition_of_local_functions}, there exist coefficients $\hat f_K$,  $K \in \set{0,\dots, 2S}^{\supp f}$, and $\hat g_{K'}$, $K' \in \set{0,\dots, 2S}^{\supp g}$, such that
   \begin{equation}
       f = \sum_{K \in \set{0, \dots, 2S}^{\supp f}} \hat f_K n^K
   \end{equation}
   and 
   \begin{equation}
       g= \sum_{K' \in \set{0, \dots, 2S}^{\supp g}} \hat g_{K'} n^{K'},
   \end{equation}
   which satisfy
   \begin{equation}
       \abs{f_K} \leq c_0^{\abs{\supp f}} \norm{f}_\infty
       \quad\text{and}\quad
       \abs{g_K'} \leq c_0^{\abs{\supp g}} \norm{g}_\infty,
   \end{equation}
   for a constant $c_0 \equiv c_0(S) > 0$.
   Thus,
   \begin{equation}
       \abs{\sprod{f g}{} - \sprod{f}{}\sprod{g}{}}
       \leq \norm{f}_\infty \norm{g}_{\infty} ((2S+1) c_0)^{\abs{\supp f} + \abs{\supp g}} \sum_{i \in A} \sum_{j \in B} \left(\sprod{\sigma_i \sigma_j}{} - \sprod{\sigma_i}{} \sprod{\sigma_j}{}\right)
   \end{equation}
   concluding the proof.
\end{proof}

\section{Extensions to percolation models}
\label{sec:percolation_model}

To prove uniqueness and exponential decay in our model, it will be helpful express it differently. In this section we develop two percolation representations which are reminiscent of the random cluster representation of the Ising model. The first one, developed in \zcref{sec:Bernoulli_representation}, can be seen as a Bernoulli percolation model in a random environment (see~\cite[Section 2]{cohen2020rarity} for a discussion of such ``reflection transformations''). The second one, which we present in \zcref{sec:dilute_random_cluster}, is a generalization of the dilute random cluster representation introduced in~\cite{Graham:RandomClusterRepresentationBlume2006} to study the Blume-Capel model, and which was used in~\cite{Gunaratnam:ExistenceTricriticalPoint2024a} to prove properties of its phase diagram. 

\subsection{Bernoulli percolation in a random environment}
\label{sec:Bernoulli_representation}

Given $\sigma \in \Omega$ and $\set{i,j} \in \mathcal E\left(\Z^d\right)$ define
\begin{equation}
    p(\sigma_i, \sigma_j) := 1 - e^{\beta W(\sigma_i - \sigma_j) - \beta W(\abs{\sigma_i} + \abs{\sigma_j})}.
\end{equation}
Since $W$ is increasing on $\Z_{\geq 0}$, we have that $p(a,b) \in (0,1]$.

Let $\mathcal G$ be the $\sigma$-algebra generated by the cylinder sets of $\set{0,1}^{\mathcal E(\Z^d)}$. Further, set $\bar \Omega := \Omega \times \set{0,1}^{\mathcal E(\Z^d)}$. We define a probability measure on $(\bar \Omega, {\mathcal F} \otimes\mathcal G)$ by
\begin{equation}
    \bar \mu_{\Lambda;\beta,\rho}^\eta (\sigma, \omega) := 
\mu_{\Lambda;\beta,\rho}^\eta(\sigma) \prod_{\set{i,j} \in \mathcal E^b(\Lambda)} p(\sigma_i, \sigma_j)^{\omega_{ij}} \left(1-p(\sigma_i, \sigma_j)\right)^{1-\omega_{ij}} 
\end{equation}
if $\omega_e = 1$ for all $e \in \mathcal E^b(\Lambda)^c$ and $\bar \mu_{\Lambda;\beta,\rho}^\eta (\sigma, \omega)=0$ otherwise.
We denote the expectation with respect to $\bar \mu_{\Lambda;\beta,\rho}^\eta$ by $\bar \mu_{\Lambda;\beta,\rho}^\eta(\cdot)$.
We call an edge $e \in \mathcal E(\Z^d)$ \emph{open} if $\omega_e=1$ and \emph{closed} otherwise.

\begin{remark}
    Let us briefly motivate this extension. Consider a closed edge $\set{i,j} \in \mathcal E^b(\Lambda)$. The contribution to the measure is then given by
    \begin{equation}\label{eq:contribution_closed_edge}
        e^{-\beta W(\sigma_i - \sigma_j)} (1-p(\sigma_i, \sigma_j))
        = e^{-\beta W(\abs{\sigma_i} + \abs{\sigma_j})}.
    \end{equation}
    This is invariant if one of the two spins is flipped, e.g., $\sigma_i \mapsto -\sigma_i$, and we can thus think of a closed edge as an analog of zero boundary conditions. For example, if $\Delta \Subset \Z^d$ and all edges in $\partial_{\exterior} \Delta$ are closed, then we can flip the spins inside $\Delta$ while leaving the ones in $\Delta^c$ the same, while preserving the measure. In this case the expectation of any antisymmetric local function supported in $\Delta$ would vanish. We make this argument preice in the proof of \zcref{thm:equivalence_no_percolation_uniqueness}.

    When $W(x)=x^2$, we have that $1-p(a,b) = e^{-2\beta \abs{ab}-2\beta ab}$. This equals the probability that a Brownian bridge from $a$ to $b$ with variance $\sigma^2=\frac1{2\beta}$ in the time interval $[0,1]$ crosses the level zero. In this sense, one can view the spin-flip invariance mentioned above as a domain Markov property of the model when extended to the cable graph.
\end{remark}

\begin{remark}
    We remark that that $\mu_{\Lambda;\beta,\rho}^\eta$ is the $\sigma$-marginal of $\bar \mu_{\Lambda;\beta,\rho}^\eta$. In particular, if $f \colon \Omega \to \R$ is a function which does not depend on the values of $\omega$, then
    \begin{equation}
        \bar \mu_{\Lambda;\beta,\rho}^\eta(f) = \mu_{\Lambda;\beta,\rho}^\eta(f).
    \end{equation}
\end{remark}

\begin{remark}
    It is possible to extend the $\bar\mu_{\Lambda;\beta,\rho}^+$ measure to infinite volume in the following way. 
    Given a configuration $\sigma \in \Omega$ define the product measure of independent Bernoulli variables on $(\set{0,1}^{\mathcal E(\Z^d)}, \mathcal G)$ by
    \begin{equation}
        \mathbf P_{p(\sigma)} := \prod_{\set{i,j} \in \mathcal E(\Z^d)} \lambda_{ij}^\sigma,
    \end{equation}
    where $\lambda_{ij}^\sigma$ is the Bernoulli measure on $\set{0,1}$ given by
    \begin{equation}
        \lambda_{ij}^\sigma(\omega_{ij} = 1) = p(\sigma_i, \sigma_j) \qquad \text{and} \qquad \lambda_{ij}^\sigma(\omega_{ij} = 0) = 1 -p(\sigma_i, \sigma_j).
    \end{equation}
    We define a measure on $(\bar \Omega, \mathcal F \otimes \mathcal G)$ by
    \begin{equation}\label{eq:def_barmu_inf_volume}
        \bar\mu_{\beta,\rho}^+(A \times B) := \int_A \mathbf P_{p(\sigma)}(B) \d\mu_{\beta,\rho}^+(\sigma), \qquad A \in \mathcal F, ~ B \in \mathcal G.
    \end{equation}
    This definition suggest that we can view $\bar\mu_{\beta,\rho}^+$ as the annealed measure of percolation in a random environment, where the environment governs the probability of edges to be open or closed.
    In particular the definition implies that if $E \Subset \mathcal E(\Z^d)$ is finite, $\mathcal B \subset \set{0,1}^E$ and 
    \begin{equation}
        B := 
        \set{\omega \in \set{0,1}^{\E(\Z^d)} \mid \omega_E \in \mathcal B} \in \mathcal G
    \end{equation}
    is an event which depends only on finitely many edges ($\omega_E$ is the restriction of $\omega$ to $E$), then
    \begin{equation}\label{eq:limit_bar_mu_n}
        \bar\mu_{\beta,\rho}^+(A \times B) 
        = \int_A f_B(\sigma) \d\mu_{\beta,\rho}^+(\sigma) 
        = \mu_{\beta,\rho}^+(\I_A f_B)
        = \lim_{n \to \infty} \mu_{B_N;\beta,\rho}^+(\I_A f_B)
        = \lim_{n \to \infty} \bar\mu_{B_N;\beta,\rho}^+(A \times B),
    \end{equation}
    where we defined
    \begin{equation}
        f_B(\sigma) 
        := \mathbf P_{p(\sigma)}(B)
        =\sum_{\omega_E \in \mathcal B} \prod_{\set{i,j} \in E} p(\sigma_i, \sigma_j)^{\omega_{ij}}(1-p(\sigma_i, \sigma_j))^{1-\omega_{ij}},
    \end{equation}
    which is a local function (in $\sigma$).
\end{remark}

\begin{definition}\label{def:connections}
    For $x, y \in \Z^d$ we write 
    \begin{equation}\begin{split}
            \phantom{{}={}}
            \set{x \overset{+}{\longleftrightarrow} y}
            := \{&\sigma \in \Omega \mid \exists k \in \N, \, z_0, \dots, z_k \in \Z^d \colon \\
            &z_0 = x, \,z_k=y, \,\set{z_i, z_{i+1}} \in \E(\Z^d),  \,\sigma_{z_i} \geq 1 \text{ for all } i \}
    \end{split}\end{equation}
    for the configurations in which $x$ is connected to $y$ by a path of spins $\geq 1$. The event $\set{x \overset{-}{\longleftrightarrow} y}$ is defined analogously, replacing the condition $\sigma_{z_i} \geq 1$ by $\sigma_{z_i} \leq -1$.
    Moreover, define
    \begin{equation}\begin{split}
            \phantom{{}={}}
            \set{x \overset{\text{open}}{\longleftrightarrow} y}
            := \{& \omega \in \set{0,1}^{\mathcal E(\Z^d)} \mid \exists k \in \N, \, z_0, \dots z_k, \in \Z^d \colon \\
            &z_0 = x, \,z_k=y, \,\set{z_i, z_{i+1}} \in \E(\Z^d),  \,\omega_{z_i z_{i+1}} = 1 \text{ for all } i\}
    \end{split}\end{equation}
    for the edge configurations in which $x$ is connected to $y$ by a path of open edges. 
    For $A \subset \Z^d$ we define
    \begin{equation}
        \set{x \overset{+}{\longleftrightarrow} A} := \bigcup_{y \in A} \set{ x \overset{+}{\longleftrightarrow} y},
    \end{equation}
    and
    \begin{equation}
        \set{x \overset{\text{open}}{\longleftrightarrow} A} := \bigcup_{y \in A} \set{ x \overset{\text{open}}{\longleftrightarrow} y}.
    \end{equation}
    Given $\omega \in \set{0,1}^{\mathcal{E}(\Z^d)}$ let $\mathscr C(\omega)$ be the collection of the maximal connected components of $\Z^d$ after removing the closed edges of $\omega$. We call the elements of $\mathscr C(\omega)$ the \emph{open clusters of $\omega$}. For $x \in \Z^d$ let $C_x(\omega) \in \mathscr C(\omega)$ be the unique cluster which contains $x$ and for $\Delta \subset \Z^d$ let $\mathscr C_{\Delta}(\omega) := \set{C \in \mathscr C(\omega) \mid C \cap \Delta \neq \emptyset}$ be the clusters of $\omega$ intersecting $\Delta$.
    
\end{definition}

With those definitions at hand, we can draw a connection between questions of the spin system to a type of question common in percolation.

\begin{theorem}\label{thm:equivalence_no_percolation_uniqueness}
    Let $\beta > 0$ and $\rho \in \R$. The following two statements are equivalent
    \begin{enumerate}
        \item There is a unique infinite volume Gibbs measure on $(\Omega, \mathcal F)$ for $\mathscr H_{\Lambda;\beta,\rho}$, i.e., $\mathscr G(\beta,\rho) =\set{\mu_{\beta,\rho}}$.
        \item In the $\bar \mu_{\beta,\rho}^+$ measure open edges do not percolate, i.e.,
        \begin{equation}
            \bar \mu_{\beta,\rho}^+\left(0 \overset{\text{open}}{\longleftrightarrow} \infty\right) = 0.
        \end{equation}
    \end{enumerate}
\end{theorem}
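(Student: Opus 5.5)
The proof runs through a finite-volume identity linking the magnetization in $B_N$ to the probability that $0$ is joined by open edges to the boundary. Both directions then reduce, via \zcref{thm:unique_gibbs_measure_if_magnetization_zero}, to comparing $\mu^+_{\beta,\rho}(\sigma_0)$ with $\bar\mu^+_{\beta,\rho}(0 \overset{\text{open}}{\longleftrightarrow} \infty)$.

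\textbf{Step 1: two structural facts in finite volume.} The first is a sign rule for open edges. If $\sigma_i\sigma_j\le 0$ then $|\sigma_i-\sigma_j|=|\sigma_i|+|\sigma_j|$, so $p(\sigma_i,\sigma_j)=0$. Hence open edges only join vertices whose spins are nonzero and of the same sign. Since every edge outside $\mathcal E^b(B_N)$ is open and the boundary spins equal $+S$, on the event $\{0 \overset{\text{open}}{\longleftrightarrow} B_N^c\}$ we get $1\le \sigma_0\le S$.

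The second is a flip symmetry. By \eqref{eq:contribution_closed_edge}, a closed edge contributes $e^{-\beta W(|\sigma_i|+|\sigma_j|)}$. An open edge contributes $e^{-\beta W(\sigma_i-\sigma_j)}-e^{-\beta W(|\sigma_i|+|\sigma_j|)}$, which is invariant under flipping both endpoints. The mass term $\rho\sigma_i^2$ is invariant under flips as well. So, on $\{0 \overset{\text{open}}{\not\longleftrightarrow} B_N^c\}$, the map $\sigma\mapsto\sigma$ with the signs reversed on $C_0(\omega)$, and $\omega$ unchanged, preserves $\bar\mu^+_{B_N;\beta,\rho}$.

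Combining the two facts gives, with $P_N:=\bar\mu^+_{B_N;\beta,\rho}(0 \overset{\text{open}}{\longleftrightarrow} B_N^c)$,
\begin{equation}
P_N\le \mu^+_{B_N;\beta,\rho}(\sigma_0)=\bar\mu^+_{B_N;\beta,\rho}\big(\sigma_0\,\I\{0 \overset{\text{open}}{\longleftrightarrow} B_N^c\}\big)\le S\,P_N .
\end{equation}
Letting $N\to\infty$, we conclude that $\mu^+_{\beta,\rho}(\sigma_0)=0$ if and only if $P_N\to 0$.

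\textbf{Step 2: (2)$\Rightarrow$(1).} For $n\le N$ we have $P_N\le \bar\mu^+_{B_N;\beta,\rho}(0 \overset{\text{open}}{\longleftrightarrow} \partial B_n)$. The event on the right depends on finitely many edges, so by \eqref{eq:limit_bar_mu_n} it converges as $N\to\infty$ to $\bar\mu^+_{\beta,\rho}(0 \overset{\text{open}}{\longleftrightarrow} \partial B_n)$. Sending $n\to\infty$ gives $\limsup_N P_N\le \bar\mu^+_{\beta,\rho}(0 \overset{\text{open}}{\longleftrightarrow}\infty)=0$. Step 1 then yields $\mu^+(\sigma_0)=0$, and \zcref{thm:unique_gibbs_measure_if_magnetization_zero} gives uniqueness.

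\textbf{Step 3: (1)$\Rightarrow$(2), the main obstacle.} Uniqueness gives $\mu^+(\sigma_0)=0$, so $P_N\to 0$ by Step 1. What remains is the reverse inequality $\theta:=\bar\mu^+(0 \overset{\text{open}}{\longleftrightarrow}\infty)\le\liminf_N P_N$, which is not automatic because connection events are not monotone in $N$.

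I would split $\theta=\theta_++\theta_-$ according to the sign of $\sigma_0$. Both summands are limits over $n$ of local events, and the sign rule holds almost surely in infinite volume. Applying the finite-volume flip identity to $\{0 \overset{\text{open}}{\longleftrightarrow}\partial B_n,\ \sigma_0\le -1\}$ and passing to the limit gives $\lim_N P_N=\theta_+-\theta_-$. This is consistent but not yet conclusive, since uniqueness forces global flip symmetry of $\bar\mu^+$ and hence $\theta_+=\theta_-$.

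To close the gap I would try two routes. The first is to show directly that $\theta_-=0$, i.e.\ that an infinite open cluster in $\bar\mu^+$ must carry positive spins. The idea is to couple $\bar\mu^+_{B_N}$ across different $N$ through the environment $\sigma$, using the FKG monotonicity of $\mu^+_{B_N}$ in $N$ and the fact that $p$ is non-decreasing on same-sign pairs. The second is an ergodicity argument in the spirit of Burton--Keane, using translation invariance and ergodicity of $\mu^+$ (\zcref{thm:mu+_mu-_maximal}), to rule out coexistence of infinite positive and negative open clusters under a flip-symmetric measure. This step is where I expect the real difficulty to lie.
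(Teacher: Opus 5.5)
Your Steps 1 and 2 are correct. Step 1 is the paper's cluster-flip identity, and Step 2 is essentially its proof of (2)$\Rightarrow$(1). The gap is that (1)$\Rightarrow$(2) is not proved. Step 3 names the right obstacle, a lower bound on $\liminf_N P_N$ in terms of $\theta$, but then only lists two possible routes.

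The missing ingredient is an FKG comparison between $\mu^+_{B_N;\beta,\rho}$ and $\mu^+_{\beta,\rho}$, applied to the annealed local function
\[
f_m(\sigma):=\I_{\sigma_0\ge 1}\,\mathbf P_{p(\sigma)}\bigl(0 \overset{\text{open}}{\longleftrightarrow} \partial B_m\bigr).
\]
This function is non-decreasing in $\sigma$, for three reasons:
\begin{itemize}
\item under $\mathbf P_{p(\sigma)}$, an open path from a positive origin almost surely visits only positive sites;
\item raising $\sigma$ can only enlarge the set of positive sites;
\item for positive $a,b$, the quantity $p(a,b)=1-e^{-\beta(W(a+b)-W(a-b))}$ is non-decreasing in $a$ by convexity of $W$.
\end{itemize}
The event $\{0 \overset{\text{open}}{\longleftrightarrow} B_N^c\}=\{0 \overset{\text{open}}{\longleftrightarrow} \partial B_{N+1}\}$ involves only edges of $\mathcal E^b(B_N)$ and forces $\sigma_0\ge 1$. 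Hence $P_N=\mu^+_{B_N;\beta,\rho}(f_{N+1})$, and $\mu^+_{\beta,\rho}\leq_{st}\mu^+_{B_N;\beta,\rho}$ gives
\[
P_N\ \ge\ \mu^+_{\beta,\rho}(f_{N+1})=\bar\mu^+_{\beta,\rho}\bigl(\sigma_0\ge 1,\ 0 \overset{\text{open}}{\longleftrightarrow}\partial B_{N+1}\bigr)\ \ge\ \theta_+ .
\]
The paper then closes with exactly the symmetry you called inconclusive. Uniqueness gives $\mu^+_{\beta,\rho}=\mu^-_{\beta,\rho}$, so a global flip yields $\theta_+=\theta_-$. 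Since $\theta=\theta_++\theta_-$, this gives $\theta\le 2P_N\le 2\mu^+_{B_N;\beta,\rho}(\sigma_0)\to 0$.

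Your identity $\lim_N P_N=\theta_+-\theta_-$ is correct. Combined with $P_N\ge\theta_+$, it gives $\theta_-=0$ outright, which is precisely the goal of your first route. It also gives $\lim_N P_N=\theta$ without using the flip symmetry at all. So your first route pointed at the right tools, namely monotonicity of $\mu^+_{B_N;\beta,\rho}$ in $N$ and monotonicity of $p$ on same-sign pairs. As written, though, it is only a plan; the concrete step that makes it work is the comparison through $f_m$ above.

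The Burton--Keane alternative is considerably heavier and not obviously available. The edge marginal alone is not insertion tolerant, because of the hard constraint $p=0$ across sign changes.
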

\begin{proof}
    Let $\Lambda \Subset \Z^d$ and $N \in \N$. 
    Define the event $\mathcal A_N := \set{0 \overset{\text{open}}{\longleftrightarrow} \partial B_N} \in \mathcal G$. Then,
    \begin{equation}
        \mu_{\Lambda;\beta,\rho}^+\left(\sigma_0\right)
        = 
        \bar\mu_{\Lambda;\beta,\rho}^+\left(\sigma_0 \I_{\mathcal A_{N+1}}\right) + \bar\mu_{\Lambda;\beta,\rho}^+\left(\sigma_0 \I_{\mathcal A_{N+1}^c}\right).
    \end{equation}
    Before proving the equivalence claimed in the proposition, we will show that if $B_{N} \subset \Lambda$, then
    \begin{equation}\label{eq:expectation_origin_in_closed_circuit_vanishes}
        \bar\mu_{\Lambda;\beta,\rho}^+\left(\sigma_0 \I_{\mathcal A_{N+1}^c}\right) = 0.
    \end{equation}
    Let $C_0(\omega) \in \mathscr C(\omega)$ be the open cluster which contains the origin. On the event $ \mathcal A_{N+1}^c$ we must have $C_0 \subset B_{N}$.
    Thus,
    \begin{equation}
        \bar\mu_{\Lambda;\beta,\rho}^+\left(\sigma_0 \I_{\mathcal A_{N+1}^c}\right)
        = \sum_{G} \bar\mu_{\Lambda;\beta,\rho}^+\left(\sigma_0 \I_{C_0 = G}\right),
    \end{equation}
    where the sum is over all connected subgraphs $G \subset B_{N}$ containing the origin. For a fixed $G=(V,E) \subset B_{N}$ with $0 \in V$ we have
    \begin{equation}\label{eq:spin_flip_mu}
        \begin{aligned}
            &\phntm Z_{\Lambda;\beta}^+ \bar \mu_{\Lambda;\beta,\rho}^+\left(\sigma_0 \I_{C_0 = G}\right) \\
            &= \sum_{\sigma \in \Omega_\Lambda^+} \sigma_0 \prod_{\substack{\set{i,j} \in \mathcal E^b(\Lambda) \\ i,j \notin G}} e^{-\beta W(\sigma_i - \sigma_j)} \prod_{\set{i,j} \in E} e^{-\beta W(\sigma_i - \sigma_j)} p(\sigma_i, \sigma_j) \prod_{\set{i,j} \in \partial_{\exterior} G} e^{-\beta W(\abs{\sigma_i} + \abs{\sigma_j})} \prod_{i \in \Lambda} e^{-\rho \sigma_i^2}  \\
            &=-Z_{\Lambda;\beta}^+ \bar \mu_{\Lambda;\beta,\rho}^+\left(\sigma_0 \I_{C_0 = G}\right),
        \end{aligned}
    \end{equation}
    where we used \eqref{eq:contribution_closed_edge}. The second equality holds after performing the spin flip
    \begin{equation}
        \sigma_i \mapsto \begin{cases}
            -\sigma_i & \text{if } i \in G,\\
            \sigma_i & \text{else},
        \end{cases}
    \end{equation}
    using that $W$ is even. This shows \eqref{eq:expectation_origin_in_closed_circuit_vanishes}.
    
    Let us first show that uniqueness implies the no percolation statement ($1. \Rightarrow 2.$). 
    As before, we decompose
    \begin{equation}
        \mu_{B_N;\beta,\rho}^+\left(\sigma_0\right)
        = \bar\mu_{B_N;\beta,\rho}^+\left(\sigma_0 \I_{\mathcal A_{N+1}}\right) + \bar\mu_{B_N;\beta,\rho}^+\left(\sigma_0 \I_{\mathcal A_{N+1}^c}\right) \\
        = \bar\mu_{B_N;\beta,\rho}^+\left(\sigma_0 \I_{\mathcal A_{N+1}}\right),
    \end{equation}
    where we used \eqref{eq:expectation_origin_in_closed_circuit_vanishes}.  
    Since $p(\sigma_i, \sigma_j)=0$ if $\sigma_i$ and $\sigma_j$, the sign of the spins on $C_0$ is constant. As we are considering $+$ boundary conditions, it follows
    \begin{equation}
        \bar\mu_{B_N;\beta,\rho}^+\left(\sigma_0\right)
        \geq \bar\mu_{B_N;\beta,\rho}^+\left(\sigma_0 \geq 1 \text{ and } 0 \overset{\text{open}}{\longleftrightarrow} \partial B_{N+1}\right)
        = \mu_{B_N;\beta,\rho}^+\left(f_{N+1}\right),
    \end{equation}
    where we defined $f_m(\sigma) := \I_{\sigma_0 \geq 1} \mathbf P_{p(\sigma)}\left(0 \overset{\text{open}}{\longleftrightarrow} \partial B_{m}\right)$, $m \in \N$. The function $f_m$ depends only on the spins in $B_m$ and is non-decreasing in $\sigma$, because increasing $\sigma$ can only grow connected clusters of spins $\geq 1$. By the FKG inequality we therefore obtain
    \begin{equation}
        \mu_{B_N;\beta,\rho}^+\left(\sigma_0\right)
        \geq \mu_{B_N;\beta,\rho}^+(f_{N+1})
        \geq \mu_{\beta,\rho}^+(f_{N+1})
        = \bar\mu_{\beta,\rho}^+\left(\sigma_0 \geq 1 \text{ and } 0 \overset{\text{open}}{\longleftrightarrow} \partial B_{N+1}\right),
    \end{equation} 
    where we used \eqref{eq:def_barmu_inf_volume}.
    By assumption $\mathscr G(\beta) = \set{\mu_{\beta,\rho}}$ and therefore $\mu_{\beta,\rho}^+ = \mu_{\beta,\rho}^-$. Since $\mathbf P_{p(\sigma)} = \mathbf P_{p(-\sigma)}$ by choice of $p(\cdot, \cdot)$ and symmetry of the potential $W$, a global spin flip implies
    \begin{equation}
        \bar\mu_{\beta,\rho}^+\left(\sigma_0 \geq 1 \text{ and } 0 \overset{\text{open}}{\longleftrightarrow} \partial B_{N+1}\right)
        = \bar\mu_{\beta,\rho}^+\left(\sigma_0 \leq -1 \text{ and } 0 \overset{\text{open}}{\longleftrightarrow} \partial B_{N+1}\right).
    \end{equation}
    Further, note that the event $\set{\sigma_0 = 0 \text{ and }0 \overset{\text{open}}{\longleftrightarrow} \partial B_{N+1}}$ is a $\bar\mu_{\beta,\rho}^+$-null set, because if $\sigma_0=0$ then all edges adjacent to the origin are closed with probability one. Therefore,
    \begin{equation}
        2\mu_{B_N;\beta,\rho}^+\left(\sigma_0\right)
        \geq \bar\mu_{\beta,\rho}^+\left(0 \overset{\text{open}}{\longleftrightarrow} \partial B_{N+1}\right).
    \end{equation}
    We take the limit $N \to \infty$ on both sides. By \zcref{thm:unique_gibbs_measure_if_magnetization_zero} we have that 
    $\mu_{\beta,\rho}^+\left(\sigma_0\right) = 0$
    so that
    \begin{equation}
        \bar\mu_{\beta,\rho}^+\left(0 \overset{\text{open}}{\longleftrightarrow} \infty\right)
        = \lim_{N \to \infty} \bar\mu_{\beta,\rho}^+\left(0 \overset{\text{open}}{\longleftrightarrow} \partial B_{N+1}\right)
        = 0
    \end{equation}
    as claimed.

    To show that the no percolation statement implies uniqueness of the infinite volume Gibbs measure ($2. \Rightarrow 1.$), we will show that 
    $\mu_{\beta,\rho}^+(\sigma_0) = 0$,
    which is sufficient by \zcref{thm:unique_gibbs_measure_if_magnetization_zero}.
    Let $\Lambda \subset \Z^d$ and $N \in \N$ such that $B_{N} \subset \Lambda$. Then
    \begin{equation}
        \abs{\mu_{\Lambda;\beta,\rho}^+(\sigma_0)}
        \leq S \bar\mu_{\Lambda;\beta,\rho}^+\left(\mathcal A_N\right) + \bar\mu_{\Lambda;\beta,\rho}^+\left(\sigma_0 \I_{\mathcal A_N^c}\right)
        = S \bar\mu_{\Lambda;\beta,\rho}^+\left(\mathcal A_N\right)
        = S \mu_{\Lambda;\beta,\rho}^+\left(\mathbf P_{p(\sigma)}(\mathcal A_N)\right)
    \end{equation}
    by \eqref{eq:expectation_origin_in_closed_circuit_vanishes}. Since $\mathbf P_{p(\sigma)}(\mathcal A_N)$ is a local function of $\sigma$ (it depends only on the spins inside $B_{N}$), we can take the limit $\Lambda \uparrow \Z^d$ on both sides of the inequality and obtain
    \begin{equation}
        \abs{\mu_{\beta,\rho}^+(\sigma_0)}
        \leq S \mu_{\beta,\rho}^+\left(\mathbf P_{p(\sigma)}(\mathcal A_N)\right)
        = S \bar\mu_{\beta,\rho}^+\left(0 \overset{\text{open}}{\longleftrightarrow} \partial B_N\right)
    \end{equation}
    By assumption, the right hand side converges to zero as $N \to \infty$.
\end{proof}

\begin{lemma}\label{thm:two_point_correlation_relation_percolation}
    Let $\beta > 0$ and $\eta \in \Omega$.
    Moreover, let $N \in \N$ and $\Lambda \Subset \Z^d$ such that $B_N \subset \Lambda$. For all $x \in \Z^d$ such that $\abs{x} > N$ it holds that
    \begin{equation}        
        \abs{\mu_{\Lambda;\beta,\rho}^\eta(\sigma_0 \sigma_x)}
        \leq S^2 \bar\mu_{\Lambda;\beta,\rho}^\eta\left(0 \overset{\text{open}}{\longleftrightarrow} \partial B_N\right).
    \end{equation}
\end{lemma}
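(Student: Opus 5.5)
The plan is to mirror the proof of \eqref{eq:expectation_origin_in_closed_circuit_vanishes} in \zcref{thm:equivalence_no_percolation_uniqueness}, now with the observable $\sigma_0\sigma_x$ and a general boundary condition $\eta$. Write $\mathcal A_N := \set{0 \overset{\text{open}}{\longleftrightarrow} \partial B_N}$. Since $\mu_{\Lambda;\beta,\rho}^\eta$ is the $\sigma$-marginal of $\bar\mu_{\Lambda;\beta,\rho}^\eta$, we decompose
\begin{equation}
    \mu_{\Lambda;\beta,\rho}^\eta(\sigma_0\sigma_x) = \bar\mu_{\Lambda;\beta,\rho}^\eta\left(\sigma_0\sigma_x \I_{\mathcal A_N}\right) + \bar\mu_{\Lambda;\beta,\rho}^\eta\left(\sigma_0\sigma_x \I_{\mathcal A_N^c}\right).
\end{equation}
The first term is bounded in absolute value by $S^2\bar\mu_{\Lambda;\beta,\rho}^\eta(\mathcal A_N)$ because $|\sigma_0\sigma_x|\le S^2$. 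It therefore suffices to show that the second term vanishes.

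On $\mathcal A_N^c$ the open cluster $C_0$ does not meet $\partial B_N$. Since it is connected and contains $0$, it lies in $B_{N-1}\subset B_N\subset\Lambda$, and in particular $x\notin C_0$ because $\abs{x}>N$. We split the second term as $\sum_G \bar\mu_{\Lambda;\beta,\rho}^\eta(\sigma_0\sigma_x\I_{C_0=G})$, where $G=(V,E)$ ranges over connected subgraphs of $B_{N-1}$ containing $0$. For fixed $G$ we write out the weight as in \eqref{eq:spin_flip_mu}:
\begin{itemize}
    \item edges of $\mathcal E^b(\Lambda)$ away from $G$ carry $e^{-\beta W(\sigma_i-\sigma_j)}$;
    \item edges in $E$ carry $e^{-\beta W(\sigma_i-\sigma_j)}p(\sigma_i,\sigma_j)$;
    \item edges in $\partial_{\exterior}G$ are closed and, by \eqref{eq:contribution_closed_edge}, carry $e^{-\beta W(\abs{\sigma_i}+\abs{\sigma_j})}$;
    \item sites carry $e^{-\rho\sigma_i^2}$.
\end{itemize}
Now apply the bijection of $\Omega_\Lambda^\eta$ that flips $\sigma_i\mapsto-\sigma_i$ for $i\in V$ and fixes all other spins. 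It preserves the boundary condition, since $V\subset\Lambda$. Each factor above is invariant: $W$ is even, $p(a,b)=p(-a,-b)$, closed boundary edges depend only on absolute values, and the site terms are even. Since $0\in V$ and $x\notin V$, the observable $\sigma_0\sigma_x$ changes sign. Hence each summand equals its own negative and vanishes, which gives the claim.

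The only point requiring care is the geometric step. One must check that on $\mathcal A_N^c$ the cluster $C_0$ stays strictly inside $\Lambda$, so that the flip does not touch the boundary condition $\eta$, and that it excludes $x$. Both follow from $B_N\subset\Lambda$ and $\abs{x}>N$, reading $\abs{x}$ in the $\ell^\infty$ norm used for boxes. No FKG or monotonicity input is needed, which is why the statement holds for arbitrary $\eta$.
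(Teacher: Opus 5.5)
Your proposal is correct and follows the same argument as the paper. You split according to $\mathcal A_N$ and bound the first term by $S^2\bar\mu^\eta_{\Lambda;\beta,\rho}(\mathcal A_N)$. You then kill the second term by flipping the spins on $C_0=G$, which leaves the weight invariant and reverses the sign of $\sigma_0\sigma_x$. Your observation that $C_0\subset B_{N-1}$ on $\mathcal A_N^c$, with $\abs{x}$ read in the $\ell^\infty$ norm, is a slightly more careful version of the paper's geometric step and guarantees $x\notin G$.
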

\begin{proof}
    Since the proof is very similar to the one of $\zcref{thm:equivalence_no_percolation_uniqueness}$ we only give a sketch.
    
    Let $\Lambda \Subset \Z^d$ such that $B_N \subset \Lambda$. As before, define the event $\mathcal A_N := \set{0 \overset{\text{open}}{\longleftrightarrow} \partial B_N} \in \mathcal G$. Then,
    \begin{equation}
        \mu_{\Lambda;\beta,\rho}^\eta(\sigma_0 \sigma_x)
        = 
        \bar\mu_{\Lambda;\beta,\rho}^\eta\left(\sigma_0 \sigma_x \I_{\mathcal A_N}\right) + \bar\mu_{\Lambda;\beta,\rho}^\eta\left(\sigma_0 \sigma_x \I_{\mathcal A_N^c}\right).
    \end{equation}
    The first term is bounded by
    \begin{equation}\label{eq:two_spins_AN}
        \bar\mu_{\Lambda;\beta,\rho}^\eta\left(\sigma_0 \sigma_x \I_{\mathcal A_N}\right)
        \leq S^2  \bar\mu_{\Lambda;\beta,\rho}^\eta\left(0 \overset{\text{open}}{\longleftrightarrow} \partial B_N\right)
    \end{equation}
    whereas the second term vanishes, because we can again write
    \begin{equation}
        \bar\mu_{\Lambda;\beta,\rho}^\eta\left(\sigma_0 \sigma_x \I_{\mathcal A_N^c}\right)
        = \sum_{G} \bar\mu_{\Lambda;\beta,\rho}^\eta\left(\sigma_0 \sigma_x \I_{C_0=G}\right),
    \end{equation}
    summing over all subgraphs $G \subset B_N$ containing the origin. Since we assume $\abs{x} \geq N$, we have that $x \notin G$. Therefore flipping all the spins in $G$ leaves the measure and $\sigma_x$ invariant, while $\sigma_0 \mapsto - \sigma_0$. Hence
    \begin{equation}
        \bar\mu_{\Lambda;\beta,\rho}^\eta\left(\sigma_0 \sigma_x \I_{C_0=G}\right)
        = -\bar\mu_{\Lambda;\beta,\rho}^\eta\left(\sigma_0 \sigma_x \I_{C_0=G}\right),
    \end{equation}
    and therefore
    \begin{equation}\label{eq:two_spins_ANc_vanish}
        \bar\mu_{\Lambda;\beta,\rho}^\eta\left(\sigma_0 \sigma_x \I_{\mathcal A_N^c}\right) = 0.
    \end{equation}
    Combining \eqref{eq:two_spins_AN} and \eqref{eq:two_spins_ANc_vanish} we obtain
    \begin{equation}
        \mu_{\Lambda;\beta,\rho}^\eta(\sigma_0 \sigma_x) 
        \leq S^2  \bar\mu_{\Lambda;\beta,\rho}^\eta\left(0 \overset{\text{open}}{\longleftrightarrow} \partial B_N\right)
    \end{equation}
    concluding the proof.
\end{proof}

Since edges between spins of different sign are $\overline\mu_{\beta,\rho}$ almost surely closed (by our choice that $p(a,b)=0$ if $ab \leq 0$), open the probability of percolating using open edges is smaller than percolating using spins $\geq 1$. This remark will be useful in dimension two and in the general case at low temperatures, where we show that spins $\geq 1$ do not percolate in $\mu_{\beta,\rho}^+$, from which we can then deduce uniqueness.

\begin{lemma}\label{thm:bound_clustersize_open_by_plus}
    Let $\beta > 0$, $\rho \in \R$ and $\Lambda \Subset \Z^d$. Then,
    \begin{equation}
        \bar\mu_{\Lambda;\beta,\rho}^+ \left(0 \overset{\text{open}}{\longleftrightarrow} \partial B_N\right) 
        \leq 2\mu_{\Lambda;\beta,\rho}^+ \left(0 \overset{+}{\longleftrightarrow} \partial B_N\right).
    \end{equation}
\end{lemma}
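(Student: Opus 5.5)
We split the event according to the sign of the spin at the origin and use a spin-flip symmetry to handle the negative part. The underlying observation is that $p(a,b)=0$ whenever $ab\le 0$. Indeed, for $ab\le0$ we have $|a-b|=|a|+|b|$, so the exponent in the definition of $p$ vanishes.

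\emph{Step 1: open clusters have constant sign.} Under $\bar\mu^+_{\Lambda;\beta,\rho}$, almost surely every open edge $\{i,j\}$ satisfies $\sigma_i\sigma_j>0$. Hence every open cluster other than a singleton consists of spins of a single strict sign. In particular $\sigma_0=0$ forces $C_0=\{0\}$, and the event $\{0\overset{\text{open}}{\longleftrightarrow}\partial B_N\}$ is then null (assume $N\ge1$, the degenerate case being trivial). Consequently
\begin{equation}
\bar\mu^+_{\Lambda}\big(0\overset{\text{open}}{\longleftrightarrow}\partial B_N\big)=\bar\mu^+_{\Lambda}\big(\sigma_0\ge1,\ 0\overset{\text{open}}{\longleftrightarrow}\partial B_N\big)+\bar\mu^+_{\Lambda}\big(\sigma_0\le-1,\ 0\overset{\text{open}}{\longleftrightarrow}\partial B_N\big).
\end{equation}
In the first term the open path from $0$ to $\partial B_N$ consists of spins $\ge1$. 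This term is therefore at most $\mu^+_{\Lambda}(0\overset{+}{\longleftrightarrow}\partial B_N)$, using that the $\sigma$-marginal of $\bar\mu^+_\Lambda$ is $\mu^+_\Lambda$.

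\emph{Step 2: the negative term.} We want to show
\begin{equation}
\bar\mu^+_{\Lambda}\big(\sigma_0\le-1,\ 0\overset{\text{open}}{\longleftrightarrow}\partial B_N\big)\le\bar\mu^+_{\Lambda}\big(\sigma_0\ge1,\ 0\overset{\text{open}}{\longleftrightarrow}\partial B_N\big),
\end{equation}
and then apply Step 1 again to get the factor $2$. I would try the cluster-flip argument of \zcref{thm:equivalence_no_percolation_uniqueness}. Decompose according to $C_0=G$. If $G$ does not meet the boundary condition region (i.e., $G\subset\Lambda$), then flipping the spins on $G$ is measure preserving, by \eqref{eq:contribution_closed_edge} and the evenness of $W$. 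This flip maps $\{\sigma_0\le -1, C_0=G\}$ bijectively onto $\{\sigma_0\ge1,C_0=G\}$, so on such $G$ the two terms are equal. If $G$ meets $\Lambda^c$, then $G$ contains a site with spin $+S$, so by Step 1 all of $G$ has positive sign and $\sigma_0\le-1$ is impossible. Summing over $G$ gives the displayed inequality, in fact with the negative term equal to the contribution of the $G\subset\Lambda$ part of the positive one.

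\emph{Main obstacle.} The delicate point is checking that the flip restricted to $G$ preserves the weight when edges in $\partial_{\exterior}G$ join sites of $\Lambda$ to the boundary. This holds because such edges are closed and contribute $e^{-\beta W(|\sigma_i|+|\sigma_j|)}$. The boundary spins are not flipped, the $\rho\sigma_i^2$ terms are invariant, and edges inside $G$ keep the same $p$ since $p(a,b)=p(-a,-b)$. Once this bookkeeping is in place, combining Steps 1 and 2 yields the claimed bound with constant $2$.
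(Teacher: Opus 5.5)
Your proof is correct, and Step 1 is the same sign decomposition the paper uses. The two arguments differ in how the negative term is controlled.

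The paper drops the open-connection requirement in the negative term and bounds it by $\mu^+_{\Lambda;\beta,\rho}(0\overset{-}{\longleftrightarrow}\partial B_N)$. A \emph{global} spin flip then rewrites this as $\mu^-_{\Lambda;\beta,\rho}(0\overset{+}{\longleftrightarrow}\partial B_N)$. FKG brings it back to $+$ boundary conditions, because $\{0\overset{+}{\longleftrightarrow}\partial B_N\}$ is increasing.

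You keep the open-connection event and use the \emph{local} flip on $C_0$ from the proof of \zcref{thm:equivalence_no_percolation_uniqueness}. That flip never touches $\Lambda^c$, so the boundary condition is unchanged and no FKG is needed. Clusters reaching $\Lambda^c$ are handled by the sign constraint instead.

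What your route buys:
\begin{itemize}
\item It is self-contained and uses only evenness of $W$ and its monotonicity on $\Z_{\ge 0}$ (needed for $p\in[0,1]$), not convexity.
\item It gives the slightly sharper intermediate bound $\bar\mu^+_{\Lambda;\beta,\rho}(0\overset{\text{open}}{\longleftrightarrow}\partial B_N)\le 2\,\bar\mu^+_{\Lambda;\beta,\rho}(\sigma_0\ge 1,\ 0\overset{\text{open}}{\longleftrightarrow}\partial B_N)$.
\end{itemize}
The paper's route is shorter once FKG is available. As a byproduct it gives the pure spin-model comparison $\mu^+_{\Lambda;\beta,\rho}(0\overset{-}{\longleftrightarrow}\partial B_N)\le\mu^+_{\Lambda;\beta,\rho}(0\overset{+}{\longleftrightarrow}\partial B_N)$.

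One small correction: the case $N=0$ is not trivial. If length-zero paths are allowed, the left side equals $1$, while the right side is $2\mu^+_{\Lambda;\beta,\rho}(\sigma_0\ge 1)$, which is below $1$ for large boxes. So you should simply assume $N\ge 1$, as the paper implicitly does.
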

\begin{proof}
    We can decompose
    \begin{equation}
    \begin{aligned}
        &\phntm \bar\mu_{\Lambda;\beta,\rho}^+ \left(0 \overset{\text{open}}{\longleftrightarrow} \partial B_N\right) \\
        &= \bar\mu_{\Lambda;\beta,\rho}^+ \left(0 \overset{+}{\longleftrightarrow} \partial B_N \text{ and }0 \overset{\text{open}}{\longleftrightarrow} \partial B_N\right) 
        + \bar\mu_{\Lambda;\beta,\rho}^+ \left(0 \overset{-}{\longleftrightarrow} \partial B_N \text{ and }0 \overset{\text{open}}{\longleftrightarrow} \partial B_N\right) .
    \end{aligned}
    \end{equation}
    Indeed, let $(\sigma, \omega) \in \set{0 \overset{\text{open}}{\longleftrightarrow} \partial B_N}$. Then we find $z_0 \cdots z_k$, $z_0=0$, $z_k \in \partial B_N$ an open path from $0$ to $\partial B_N$. If there exists $j \in \set{1, \dots, k}$ such that $\sigma_{z_j}\sigma_{z_{j-1}}\leq0$, then $p(\sigma_{z_j},\sigma_{z_{j-1}})=0$ and therefore $\bar\mu_{\Lambda;\beta,\rho}^+(\sigma,\omega)=0$. On the other hand, if $\bar\mu_{\Lambda;\beta,\rho}^+(\sigma,\omega)>0$, it follows that 
    \begin{equation}
        \sigma \in \set{0 \overset{+}{\longleftrightarrow} \partial B_N} \,\dot{\cup}\, \set{0 \overset{-}{\longleftrightarrow} \partial B_N}.
    \end{equation}
    Thus,
    \begin{equation} 
    \begin{aligned}
        \bar\mu_{\Lambda;\beta,\rho}^+ \left(0 \overset{\text{open}}{\longleftrightarrow} \partial B_N\right)
        &\leq \mu_{\Lambda;\beta,\rho}^+ \left(0 \overset{+}{\longleftrightarrow} \partial B_N\right) 
        + \mu_{\Lambda;\beta,\rho}^+ \left(0 \overset{-}{\longleftrightarrow} \partial B_N\right)\\
        &= \mu_{\Lambda;\beta,\rho}^+ \left(0 \overset{+}{\longleftrightarrow} \partial B_N\right) 
        + \mu_{\Lambda;\beta,\rho}^- \left(0 \overset{+}{\longleftrightarrow} \partial B_N\right)\\
        &\leq 2 \mu_{\Lambda;\beta,\rho}^+ \left(0 \overset{+}{\longleftrightarrow} \partial B_N\right) 
    \end{aligned}       
    \end{equation}
    where the equality in the second line is by a global spin flip and the second inequality is by FKG.
\end{proof}

\subsection{Uniqueness at high temperatures}
\label{sec:high_temperatures}

We give two different proofs of uniqueness and exponential decay at high temperatures (\zcref{thm:main_high_temperature}). The first showcases the percolation representation developed in \zcref{sec:Bernoulli_representation}, while the second is a standard application of Dobrushin's uniqueness criterion. We give both proofs, since the first is more enlightening with respect to our techniques, while the second produces better constants (of the same order as in the first proof).

\begin{proof}[Proof of the high temperature case \zcref{thm:main_high_temperature} by comparison with Bernoulli edge percolation]
    
    By \zcref{thm:equivalence_no_percolation_uniqueness,thm:two_point_correlation_relation_percolation} it suffices to show that 
    \begin{equation}
        \bar\mu_{\beta,\rho}^+\left(0 \overset{\text{open}}{\longleftrightarrow} \partial B_N\right) \leq c_1 e^{-c_2 N}.
    \end{equation}
    holds for positive constants independent of $N$. Let 
    \begin{equation}
        P_p := \prod_{\set{i,j} \in \mathcal E(\Z^d)} \lambda_{ij}
    \end{equation}
    be the product measure of independent Bernoulli random variables on $\mathcal E(\Z^d)$ with parameter $p$, where $\lambda_{ij}$ is a Bernoulli measure given by
    \begin{equation}
        \lambda_{ij}(\omega_{ij} = 1) = p \qquad \text{and} \qquad \lambda_{ij}(\omega_{ij} = 0) = 1 - p.
    \end{equation}
    It is well know (cf.~\cite{Grimmett:Percolation1999}) that for all $p < p_c(d)$ 
    \begin{equation}
        P_p\left(0 \overset{\text{open}}{\longleftrightarrow} \partial B_N\right) \leq c_1 e^{-c_2 N},
    \end{equation}
    where $p_c(d)$ is the critical probability for Bernoulli bond percolation on the lattice $\Z^d$. Moreover, the value of ${P_p\left(0 \overset{\text{open}}{\longleftrightarrow} \partial B_N\right)}$ is increasing in $p$. Therefore, if $p(\sigma_i, \sigma_j) \leq p_0 < p_c$, then
    \begin{equation}
        \bar\mu_{\beta,\rho}^+\left(0 \overset{\text{open}}{\longleftrightarrow} \partial B_N\right)
        = \mu_{\beta,\rho}^+\left(\mathbf P_{p(\sigma)}\left(0 \overset{\text{open}}{\longleftrightarrow} \partial B_N\right)\right)
        \leq 
        P_{p_0}\left(0 \overset{\text{open}}{\longleftrightarrow} \partial B_N\right) 
        \leq  c_1 e^{-c_2 N}.
    \end{equation}
    By convexity of $W$ we have that
    \begin{equation}
        p(\sigma_i, \sigma_j) \leq 1- e^{-\beta W(2S)},
    \end{equation}
    so that $p(\sigma_i, \sigma_j) < p_c$ whenever $\beta W(2S) < - \ln(1- p_c(d))$. 
\end{proof}

\begin{proof}[Proof of the high temperature case \zcref{thm:main_high_temperature} with Dobrushin uniqueness]
    The high temperature has a quick proof using Dobrushin's uniqueness criterion.

    Let
    \begin{equation}
        \d_{ij} 
        := \frac 12 \sup \set{\norm{\mu^\eta_{\set{j};\beta,\rho} - \mu^{\bar\eta}_{\set{j};\beta,\rho}}_{{\TV}} \mid \eta, \bar\eta \in \Omega \colon \eta_k =\bar\eta_k \text{ for all } k \neq i},
    \end{equation}
    where
    \begin{equation}
        \norm{\mu^\eta_{\set{j};\beta,\rho} - \mu^{\bar\eta}_{\set{j};\beta,\rho}}_{{\TV}}
        := 2\sup_{A \subset \set{-S, \dots, S}}\abs{\mu^\eta_{\set{j};\beta,\rho}(A) - \mu^{\bar\eta}_{\set{j};\beta,\rho}(A)}
    \end{equation}
    is the total variation distance.
    Dobrushin's uniqueness theorem~\cite[Theorem V.1.3]{Simon:StatisticalMechanicsLattice1993} states that if
    \begin{equation}
        \alpha := \sup_{j \in \Z^d} \sum_{i \in \Z^d} \d_{ij} < 1,
    \end{equation}
    then $\abs{\mathscr G(d,\beta,\rho)} = 1$. As observed in~\cite{Simon:RemarkOnDobrushin1979}, see also~\cite[Theorem V.1.1]{Simon:StatisticalMechanicsLattice1993}, we have
    \begin{equation}
        \d_{ij} \leq \frac\beta2 \sup_{t,t' \in \set{-S, \dots, S}} \sup_{s \in \set{-S, \dots, S}} \abs{W(s - t) - W(s - t')} \leq \frac\beta2 W(2S). 
    \end{equation}
    by convexity of $W$. Since we consider only nearest neighbor interactions, we thus obtain
    \begin{equation}
        \alpha \leq d \beta W(2S).
    \end{equation}
    A result by Föllmer~\cite[Theorem V.2.4]{Simon:StatisticalMechanicsLattice1993} implies the exponential decay of correlations of the unique Gibbs measure for all $\alpha < 1$.
\end{proof}

\begin{remark}
    Note that in high dimensions $p_c(d) \sim \frac1{2d}$. Therefore the bound on $\beta$ we obtain by the comparison with Bernoulli percolation is worse by a factor of two.
\end{remark}

\subsection{A generalized dilute random cluster model}
\label{sec:dilute_random_cluster}

Let us now introduce the generalized dilute random cluster measure which extends the works~\cite{Graham:RandomClusterRepresentationBlume2006,Gunaratnam:ExistenceTricriticalPoint2024a} on the Blume-Capel model. We will make use of this measure in \zcref{sec:three_dimensions} to prove exponential decay of correlations for $\mu_{\beta,\rho}$ in $d\geq3$ for a class of potentials $W$.

\begin{definition}
    Let $p, \lambda \colon \N \times \N \to [0,\infty)$ by symmetric in the sense that $p(a,b)=p(b,a)$ (and the same for $\lambda$) and assume that $p \in [0,1)$. 
    We define the generalized dilute random cluster measure $\pi^+_{\Lambda;p, \lambda}$ on $\set{0, \dots, S}^{\Z^d} \times \set{0,1}^{\E(\Z^d)}$ as follows:
    \begin{equation}\label{eq:def_DRC}
        \pi^+_{\Lambda;p, \lambda}(\varphi, \omega) :=
            \frac{2^{K(\omega)-\abs{\set{\varphi=0}}}}{Z_{\Lambda;p, \lambda}} \prod_{\set{i,j} \in \E^b(\Lambda)} \left(\frac{p(\varphi_i, \varphi_j)}{1-p(\varphi_i, \varphi_j)}\right)^{\omega_{ij}}
        \lambda(\varphi_i, \varphi_j)
    \end{equation}
    if $\varphi_i=S$ for all $i \notin \Lambda$ and $\omega_e=1$ for all $e \notin \E^b(\Lambda)$ and $\pi^+_{\Lambda;p, \lambda}(\varphi, \omega) :=0$ otherwise.
    Here, $K(\omega)=\abs{\mathscr C(\omega)}$ is the number of open clusters of $\omega$.
\end{definition}

Let us first draw a connection to the measure $\bar\mu_{\Lambda;\beta,\rho}$ defined in the previous section.
\begin{theorem}\label{thm:RC_as_spin_model}
    Let $\beta > 0$, $\rho \in \R$ and $W \colon \Z \to [0, \infty)$ be even and increasing on $\Z_{\geq 0}$. Define for $a,b \in \set{0,\dots,S}$
    \begin{equation}
        p(a,b) := 1 - e^{-\beta (W(a+ b)-W(a-b))} \in [0,1)
        \qquad\text{and}\qquad
        \lambda_\rho(a,b) := e^{-\rho(a^2 + b^2)-\beta W(a+b)}.
    \end{equation}
    Then
    \begin{equation}\label{eq:equality_pi_barmu}
        \pi^+_{\Lambda;p,\lambda_\rho}(\varphi, \omega)
        = \overline\mu_{\Lambda;\beta,2d\rho}^+\left(\set{\sigma \in \Omega \mid \abs{\sigma}=\varphi} \times \set{\omega}\right)
    \end{equation}
    for all $(\varphi, \omega) \in \set{0, \dots, S}^{\Z^d} \times \set{0,1}^{\E(\Z^d)}$.
\end{theorem}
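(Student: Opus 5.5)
The plan is to compute the right-hand side of \eqref{eq:equality_pi_barmu} directly. I would sum $\overline\mu_{\Lambda;\beta,2d\rho}^+(\sigma,\omega)$ over all sign assignments $\sigma$ with $\abs{\sigma}=\varphi$, and compare the resulting unnormalized weight with the unnormalized weight in \eqref{eq:def_DRC}. I will show the two weights are proportional, with a constant depending only on $\Lambda$, $\beta$, $\rho$ and $S$. Since both sides are probability measures supported on configurations with $\varphi\equiv S$ off $\Lambda$ and $\omega\equiv 1$ off $\mathcal E^b(\Lambda)$, this proportionality forces equality.

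First I would do an edge-by-edge computation. For $\set{i,j}\in\mathcal E^b(\Lambda)$, set $a=\abs{\sigma_i}$ and $b=\abs{\sigma_j}$. The joint weight of the edge in $\overline\mu$ is $e^{-\beta W(\sigma_i-\sigma_j)}\,p_{\mathrm{old}}^{\omega_{ij}}(1-p_{\mathrm{old}})^{1-\omega_{ij}}$.
\begin{itemize}
\item If $\omega_{ij}=0$, this equals $e^{-\beta W(a+b)}$ by \eqref{eq:contribution_closed_edge}.
\item If $\omega_{ij}=1$, it equals $e^{-\beta W(\sigma_i-\sigma_j)}-e^{-\beta W(a+b)}$.
\begin{itemize}
\item When $\sigma_i\sigma_j>0$, this is $e^{-\beta W(a-b)}-e^{-\beta W(a+b)}=e^{-\beta W(a+b)}\,\frac{p(a,b)}{1-p(a,b)}$, with $p$ the new edge weight from the statement.
\item When $\sigma_i\sigma_j<0$ or $ab=0$, it vanishes (using that $W$ is even).
\end{itemize}
\end{itemize}
So in all cases the edge contributes $e^{-\beta W(a+b)}\left(\frac{p(a,b)}{1-p(a,b)}\right)^{\omega_{ij}}$, multiplied by the indicator that $\sigma_i,\sigma_j$ have the same nonzero sign whenever $\omega_{ij}=1$. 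The new $p$ also vanishes when $ab=0$, since then $W(a+b)=W(a-b)$, so zero sites have all incident edges closed on both sides. The single-site factors $e^{-2d\rho\sigma_i^2}$ for $i\in\Lambda$ are then redistributed along edges. Each $i\in\Lambda$ has exactly $2d$ incident edges in $\mathcal E^b(\Lambda)$, so $\prod_{i\in\Lambda}e^{-2d\rho\varphi_i^2}$ equals $\prod_{\set{i,j}\in\mathcal E^b(\Lambda)}e^{-\rho(\varphi_i^2+\varphi_j^2)}$. The exception is the factors at exterior endpoints, which are $e^{-\rho S^2}$ and hence a constant. Combining these gives exactly $\prod\lambda_\rho(\varphi_i,\varphi_j)$; this is where the parameter $2d\rho$ comes from.

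Second, I would sum over signs. Given $(\varphi,\omega)$, the weight computed above does not depend on the signs apart from the constraint that every open cluster has constant nonzero sign. The remaining weight is therefore the number of admissible sign assignments, which I count as follows:
\begin{itemize}
\item each site with $\varphi_i=0$ is an isolated cluster with exactly one choice;
\item each finite open cluster of nonzero sites has two choices;
\item the cluster containing $\Lambda^c$, all of whose edges are open, is pinned to $+$.
\end{itemize}
This gives $2^{K(\omega)-\abs{\set{\varphi=0}}}$ up to a fixed power of $2$ accounting for the pinned boundary cluster, depending on the convention for counting it in $K$. That power of $2$ is absorbed into the normalization, as is the factor $Z^+_{\Lambda;\beta,2d\rho}$.

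I expect the main obstacle to be the bookkeeping at the boundary rather than anything conceptual. One has to check that the boundary cluster and the exterior sites contribute only $(\varphi,\omega)$-independent constants. This holds because $\varphi\equiv S$ and $\omega\equiv1$ outside $\mathcal E^b(\Lambda)$, and because $\Lambda$ finite makes the infinite cluster unique. One must also confirm that the degree count $2d$ uses precisely the edge set $\mathcal E^b(\Lambda)$. After that, normalizing both sides gives \eqref{eq:equality_pi_barmu}.
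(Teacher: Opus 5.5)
Your proposal is correct and is essentially the paper's proof: the same per-edge identity $e^{-\beta W(a+b)}\bigl(\tfrac{p(a,b)}{1-p(a,b)}\bigr)^{\omega_{ij}}$ (with $a=\abs{\sigma_i}$, $b=\abs{\sigma_j}$) times a same-nonzero-sign indicator, the same redistribution of the $2d\rho$ site factors over $\mathcal E^b(\Lambda)$ up to the constant from exterior endpoints, and the same count $2^{K(\omega)-\abs{\set{\varphi=0}}-1}$ of admissible sign assignments. One caveat, which the paper's own factor $\tfrac12$ ``due to the choice of $+$ boundary conditions'' shares: your justification does not settle it, because finiteness of $\Lambda$ only makes the infinite cluster unique, whereas the count needs exactly one open cluster to meet $\Lambda^c$ — this holds when $\Lambda^c$ is connected (as for the boxes $B_{2N}$ where the result is used), but a finite component of $\Lambda^c$ enclosed by $\Lambda$ can form its own sign-pinned cluster, and then the power of $2$ depends on $\omega$.
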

\begin{proof}
    We only need to consider $\omega \in \set{0,1}^{\E(\Z^d)}$ such that $\omega_e=1$ for all $e \notin \E^b(\Lambda)$.

    First, consider $(\varphi, \omega) \in \set{0, \dots, S}^{\Z^d} \times \set{0,1}^{\E(\Z^d)}$ such that there exist $u,v \in \Z$ such that $\set{u,v} \in \E(\Z^d)$ and $\varphi_u \varphi_v = 0$ and $\omega_{uv}=1$. In this case $p(\varphi_u, \varphi_v)=0$ and therefore
    \begin{equation}
        \pi^+_{\Lambda;p,\lambda_\rho}(\varphi, \omega) = 0.
    \end{equation}
    Similarly, if $\sigma \in \Omega$ with $\abs{\sigma}=\varphi$, then also
    \begin{equation}
        \overline\mu_{\Lambda;\beta,2d\rho}\left(\sigma, \omega\right) = 0.
    \end{equation}
    Thus, we assume for the rest of the proof that $\omega_{uv}=0$ if $\varphi_u\varphi_v=0$.
    We have,
    \begin{equation}
    \begin{aligned}
        &\phntm\sum_{\substack{\sigma \in \Omega_\Lambda^+,\\\abs{\sigma}=\varphi}} \prod_{i \in \Lambda} e^{-2d\rho \sigma_i^2} \prod_{\set{i,j} \in \E^b(\Lambda)} e^{-\beta W(\sigma_i-\sigma_j)} p(\sigma_i, \sigma_j)^{\omega_{ij}} (1-p(\sigma_i, \sigma_j))^{1-\omega_{ij}} \\
        &=\prod_{i \in \Lambda} e^{-2d\rho{\varphi_i^2}}\sum_{\substack{\sigma \in \Omega_\Lambda^+,\\\abs{\sigma}=\varphi}} \prod_{\set{i,j} \in \E^b(\Lambda)} e^{-\beta W(\abs{\sigma_i}+\abs{\sigma_j})} \left(\frac{p(\sigma_i, \sigma_j)}{1-p(\sigma_i, \sigma_j)}\right)^{\omega_{ij}} \\
        &=\alpha\prod_{\set{i,j} \in \E^b(\Lambda)} e^{-\rho(\varphi_i^2 + \varphi_j^2)-\beta W(\varphi_i + \varphi_j)} \sum_{\substack{\sigma \in \Omega_\Lambda^+,\\\abs{\sigma}=\varphi}} \prod_{\set{i,j} \in \E^b(\Lambda)} \left(\frac{p(\sigma_i, \sigma_j)}{1-p(\sigma_i, \sigma_j)} \I_{\sigma_i\sigma_j \geq1}\right)^{\omega_{ij}}  \\
        &=\alpha\prod_{\set{i,j} \in \E^b(\Lambda)} e^{-\rho(\varphi_i^2 + \varphi_j^2)-\beta W(\varphi_i+ \varphi_j)} \left(\frac{p(\varphi_i, \varphi_j)}{1-p(\varphi_i, \varphi_j)}\right)^{\omega_{ij}} \sum_{\substack{\sigma \in \Omega_\Lambda^+,\\\abs{\sigma}=\varphi}} \prod_{\omega_{ij} = 1} \I_{\sigma_i\sigma_j \geq1},
        \end{aligned}
    \end{equation}
    where we used that if $\sigma_i\sigma_j \geq 1$, then 
    \begin{equation}
        \frac{p(\sigma_i, \sigma_j)}{1-p(\sigma_i, \sigma_j)} 
        = \frac{p(\abs{\sigma_i}, \abs{\sigma_j})}{1-p(\abs{\sigma_i}, \abs{\sigma_j})}
    \end{equation}
    and where we defined the constant
    \begin{equation}
        \alpha = \prod_{j \in \Lambda^c} \prod_{i \sim j, i \in \Lambda} e^{\rho S^2}.
    \end{equation}
    Further, using that $\omega_{ij}=0$ whenever $\varphi_i\varphi_j=0$,
    \begin{equation}
        \sum_{\substack{\sigma \in \Omega_\Lambda^+,\\\abs{\sigma}=\varphi}}  \prod_{\omega_{ij}=1} \I_{\sigma_i \sigma_j \geq 1}
        = 2^{-\abs{\varphi = 0}}\sum_{{a \in \set{\pm}^{\Z^d} \colon ~a_i=1 \text{ if } i \in \Lambda^c}} \prod_{\omega_{ij}=1} \I_{a_i=a_j}
        = 2^{K(\omega)-\abs{\varphi = 0}-1},
    \end{equation}
    where a factor of one-half is due to the choice of $+$ boundary conditions. 
    Thus,
    \begin{equation}\begin{aligned}
        &\phntm \frac2\alpha\sum_{\substack{\sigma \in \Omega_\Lambda^+,\\\abs{\sigma}=\varphi}} \prod_{i \in \Lambda} e^{-2d\rho \sigma_i^2} \prod_{\set{i,j} \in \E^b(\Lambda)}e^{-\beta W(\sigma_i - \sigma_j)} p(\sigma_i, \sigma_j)^{\omega_{ij}} (1-p(\sigma_i, \sigma_j))^{1-\omega_{ij}} \\
       &= 2^{K(\omega)-\abs{\set{\varphi=0}}}\prod_{\set{i,j} \in \E^b(\Lambda)} \left(\frac{p(\varphi_i, \varphi_j)}{1-p(\varphi_i, \varphi_j)}\right)^{\omega_{ij}} \lambda_\rho(\varphi_i, \varphi_j).
        \end{aligned}
    \end{equation}
    Summing over $\varphi$ and $\omega$ yields
    \begin{equation}
        Z_{\Lambda;p, \lambda_\rho} = \frac2\alpha Z^+_{\Lambda;\beta, 2d\rho},
    \end{equation}
    where $Z^+_{\Lambda;\beta, 2d\rho}$ is the partition function of $\mu^+_{\Lambda;\beta,2d\rho}$. This implies the claim \eqref{eq:equality_pi_barmu}.
\end{proof}

It will be useful to decompose expectations with respect to $\pi^+_{\Lambda;p,\lambda}$ into two successive expectations.
\begin{definition}\label{def:Psi_Phi_successive_expectations}
    Let $E \subset \E^b(\Lambda)$ and $\xi \in \set{0,1}^E$. For $\varphi \in \set{0,\dots,S}^{\Z^d}$ define the probability measure
    \begin{equation}
        \Psi_{\Lambda;p,\lambda}^\xi(\varphi) 
        := \begin{cases}
            \left(Z^\xi_{\Lambda;p, \lambda}\right)^{-1} 2^{-\abs{\set{\varphi=0}}}\prod_{\set{i,j} \in \E^b(\Lambda) \setminus E} \lambda(\varphi_i, \varphi_j) Z^{\xi}_{\Lambda;p(\varphi)} &\text{if $\varphi_i=S$ for all $i \notin \Lambda$},\\
            0 & \text{else}.
        \end{cases}
    \end{equation}
    Here ${\set{\varphi=0}} := \set{i \in \Z^d \mid \varphi_i = 0}$ is the zero set of $\varphi$ and $Z^\xi_{\Lambda;p, \lambda}$ is a normalization constant, while $Z^{\xi}_{\Lambda;p(\varphi)}$ is the partition function of the random cluster measure on 
    \begin{equation}
        \Sigma^\xi_\Lambda := \set{\omega \in \set{0,1}^{\E(\Z^d)} \mid \omega_e = 1 \text{ for all } e \in \E(\Z^d) \setminus \E^b(\Lambda) \text{ and } \omega_e = \xi_e \text{ for all $e \in E$}}
    \end{equation}
    given by
    \begin{equation}
        \phi^{\xi}_{\Lambda;p(\varphi)}(\omega) := 
        \begin{cases}
            \left(Z^{\xi}_{\Lambda;p(\varphi)}\right)^{-1} 2^{K(\omega)} \prod_{\set{i,j} \in \E^b(\Lambda) \setminus E} \left(\frac{p(\varphi_i,\varphi_j)}{1-p(\varphi_i,\varphi_j)}\right)^{\omega_{ij}} & \text{if } \omega \in \Sigma^\xi_\Lambda,\\
            0 & \text{else}.
        \end{cases}
    \end{equation}
    Note that in these definitions we allow $E\equiv \emptyset$. In that case we will write $\Psi_{\Lambda;p,\lambda}^\emptyset$ and $\phi^{\emptyset}_{\Lambda;p(\varphi)}$.
\end{definition}

\begin{lemma}\label{thm:mu_conditioned_as_iterated_measures}
    Let $X \subset \Z^d$ and $E \subset \E^b(\Lambda)$ such that if $\set{i,j} \in E$ then $i,j \in X$. Let $\eta \in \set{0, \dots, S}^X$ with $\eta_i = S$ if $i \in \Lambda^c$, and $\xi \in \set{0,1}^{E}$ such that
    \begin{equation}
        \pi^+_{\Lambda;p,\lambda}(\varphi_X = \eta \text{ and } \omega_E = \xi) > 0,
    \end{equation}
    where $\varphi_X$ is the restriction of $\varphi$ to $X$ and $\omega_E$ is defined analogously. 
    Then, for any function $f \colon \set{0, \dots, S}^{\Z^d} \times \set{0,1}^{\E(\Z^d)} \to \R$ the following holds:
    \begin{equation}
        \pi^+_{\Lambda;p,\lambda}(f \mid \varphi_X = \eta \text{ and } \omega_E = \xi)
        = \Psi_{\Lambda;p,\lambda}^\xi(\phi^{\xi}_{\Lambda;p(\varphi)}(f(\varphi, \cdot)) \mid \varphi_X=\eta).
    \end{equation}
\end{lemma}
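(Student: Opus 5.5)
This is a direct unfolding of the definitions: factor the weight \eqref{eq:def_DRC} into a $\varphi$-part and a conditional $\omega$-part, then read off both sides of the claimed identity. Throughout we restrict to configurations with $\varphi_i=S$ for $i\notin\Lambda$ and $\omega_e=1$ for $e\notin\E^b(\Lambda)$; all others have zero weight under both $\pi^+_{\Lambda;p,\lambda}$ and $\Psi^\xi_{\Lambda;p,\lambda}$, $\phi^\xi_{\Lambda;p(\varphi)}$.

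\textbf{Step 1 (factorization).} On the event $\omega_E=\xi$, split the product in \eqref{eq:def_DRC} into edges in $E$ and edges in $\E^b(\Lambda)\setminus E$. The edges in $E$ contribute
\begin{equation}
    c_E(\varphi) := \prod_{\set{i,j}\in E}\left(\frac{p(\varphi_i,\varphi_j)}{1-p(\varphi_i,\varphi_j)}\right)^{\xi_{ij}}\lambda(\varphi_i,\varphi_j).
\end{equation}
Since every edge of $E$ has both endpoints in $X$, on $\{\varphi_X=\eta\}$ this factor equals $c_E(\eta)$, a constant. The remaining weight is
\begin{equation}
    2^{-\abs{\set{\varphi=0}}}\prod_{\E^b(\Lambda)\setminus E}\lambda(\varphi_i,\varphi_j)\cdot 2^{K(\omega)}\prod_{\E^b(\Lambda)\setminus E}\left(\frac{p(\varphi_i,\varphi_j)}{1-p(\varphi_i,\varphi_j)}\right)^{\omega_{ij}},
\end{equation}
and the second factor equals $Z^\xi_{\Lambda;p(\varphi)}\,\phi^\xi_{\Lambda;p(\varphi)}(\omega)$ for $\omega\in\Sigma^\xi_\Lambda$. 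Summing over $\omega\in\Sigma^\xi_\Lambda$ therefore gives
\begin{equation}
    \sum_{\omega\in\Sigma^\xi_\Lambda}\pi^+_{\Lambda;p,\lambda}(\varphi,\omega)f(\varphi,\omega)
    = \frac{c_E(\eta)\,Z^\xi_{\Lambda;p,\lambda}}{Z_{\Lambda;p,\lambda}}\,\Psi^\xi_{\Lambda;p,\lambda}(\varphi)\,\phi^\xi_{\Lambda;p(\varphi)}\bigl(f(\varphi,\cdot)\bigr)
\end{equation}
for every $\varphi$ with $\varphi_X=\eta$.

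\textbf{Step 2 (conditioning).} Sum the identity of Step 1 over $\varphi$ with $\varphi_X=\eta$, once for $f$ and once for $f\equiv1$, and take the ratio. The prefactor $c_E(\eta)Z^\xi_{\Lambda;p,\lambda}/Z_{\Lambda;p,\lambda}$ does not depend on $\varphi$ and cancels. What remains is exactly $\Psi^\xi_{\Lambda;p,\lambda}\bigl(\phi^\xi_{\Lambda;p(\varphi)}(f(\varphi,\cdot))\,\I_{\varphi_X=\eta}\bigr)/\Psi^\xi_{\Lambda;p,\lambda}(\varphi_X=\eta)$, which is the claimed identity.

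\textbf{Well-definedness.} The positivity hypothesis gives $c_E(\eta)>0$ and $\Psi^\xi_{\Lambda;p,\lambda}(\varphi_X=\eta)>0$, so every conditional expectation above is defined. The only points needing care are that $K(\omega)$ counts clusters of the full configuration $\omega$, including the edges in $E$ fixed to $\xi$; this is consistent with the definition of $\phi^\xi_{\Lambda;p(\varphi)}$, which uses the same $K(\omega)$ on $\Sigma^\xi_\Lambda$. Also, $\phi^\xi_{\Lambda;p(\varphi)}$ may be undefined (zero partition function) only for $\varphi$ with $\Psi^\xi_{\Lambda;p,\lambda}(\varphi)=0$, and such $\varphi$ contribute nothing on either side.
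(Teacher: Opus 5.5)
Your proof is correct and follows essentially the same route as the paper. Like the paper, you factor off the contribution of the edges in $E$, which is constant because both endpoints lie in $X$, and identify the remaining $\omega$-weight as $Z^\xi_{\Lambda;p(\varphi)}\phi^\xi_{\Lambda;p(\varphi)}(\omega)$; you then take the ratio over $\{\varphi_X=\eta\}$, and your Step 1 writes this out more carefully than the paper's display. Your final caveat is unnecessary: $Z^\xi_{\Lambda;p(\varphi)}$ is always positive, since the configuration closing every edge of $\E^b(\Lambda)\setminus E$ has weight $2^{K(\omega)}>0$.
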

\begin{proof}
    We have that
    \begin{equation}
    \begin{aligned}
        &\phntm\pi^+_{\Lambda;p,\lambda}(\varphi_X = \eta \text{ and } \omega_E = \xi)\pi^+_{\Lambda;p,\lambda}(f \mid \varphi_X = \eta \text{ and } \omega_E = \xi) \\
        &= \sum_{\substack{\varphi_X=\eta \\\omega \in \Sigma^\xi_\Lambda}} Z^\xi_{\Lambda;p(\varphi)} 2^{-\abs{\set{\varphi=0}}} f(\varphi,\omega) \prod_{\set{i,j} \in \E^b(\Lambda) \setminus E} \lambda(\varphi_i, \varphi_j) 
        \prod_{\set{i,j} \in E} \lambda(\eta_i, \eta_j) \left(\frac{p(\eta_i,\eta_j)}{1-p(\eta_i,\eta_j)}\right)^{\xi_{ij}}
    \end{aligned}
    \end{equation}
    which implies
    \begin{equation}
    \begin{aligned}
        &\phntm\pi^+_{\Lambda;p,\lambda}(f \mid \varphi_X = \eta \text{ and } \omega_E = \xi)\\
        &= \frac{\sum_{{\varphi_X=\eta}\sum_{\omega \in \Sigma^\xi_\Lambda}} Z^{\xi}_{\Lambda,p(\varphi)} 2^{-\abs{\set{\varphi=0}}} f(\varphi,\omega) \prod_{\set{i,j} \in \E^b(\Lambda) \setminus E} \lambda(\varphi_i, \varphi_j)}{\sum_{{\varphi_X=\eta}} Z^{\xi}_{\Lambda,p(\varphi)} 2^{-\abs{\set{\varphi=0}}}\prod_{\set{i,j} \in \E^b(\Lambda) \setminus E} \lambda(\varphi_i, \varphi_j)},
    \end{aligned}
    \end{equation}
    where we used that
    \begin{equation}
        Z^{\xi}_{\Lambda,p(\varphi)}  = \sum_{\omega \in \Sigma_\Lambda^\xi} 2^{K(\omega)} \prod_{\set{i,j} \in \E^b(\Lambda) \setminus E} \left(\frac{p(\varphi_i,\varphi_j)}{1-p(\varphi_i,\varphi_j)}\right)^{\omega_{ij}}
    \end{equation}
    concluding the proof.
\end{proof}

We have the following stochastic ordering of the measures $\Psi_{\Lambda;p,\lambda}^\xi$.

\begin{proposition}\label{thm:stochastic_ordering_dilute_measures}
    Let $X \subset \Lambda$ and $E \subset \E^b(\Lambda)$, and let $\eta_1,\eta_2 \in \set{0, \dots, S}^X$ and $\xi_1,\xi_2 \in \set{0,1}^E$ such that $(\eta_1,\xi_1)\leq(\eta_2,\xi_2)$. 
    Moreover, let $\lambda_1,\lambda_2,p_1,p_2 \colon \set{0,\dots,S} \times \set{0,\dots S} \to [0, \infty)$ be symmetric and assume that $\lambda_1,\lambda_2 > 0$ and $p_1,p_2 \in [0,1)$. Additionally, assume the following relations
    \begin{enumerate}[label=(\roman*)]
        \item We have $p_1 \leq p_2$ and $a \mapsto {p_k(a,b)}$ is non-decreasing for any $b \in \set{0, \dots, S}$ for $k \in \set{1,2}$, \label{item:p_non_increasing} 
        \item For any $a,b \in \set{0,\dots,S}$, $a \leq S-1$ it holds \label{item:lambda1/lambda2_non_increasing}
        \begin{equation}
            \lambda_1(a+1,b) \lambda_2(a,b) \leq \lambda_1(a,b) \lambda_2(a+1,b)
        \end{equation}
        \item For any $a,b \in \set{0,\dots,S}$, $a \leq S-1$, it holds that \label{item:p1_p2_relation}
        \begin{equation}
            \frac{p_1(a,b)}{1-p_1(a,b)}\frac{1-p_1(a+1,b)}{p_1(a+1,b)}
            \geq 
            \frac{p_2(a,b)}{1-p_2(a,b)}\frac{1-p_2(a+1,b)}{p_2(a+1,b)}.
        \end{equation}
        Here we use the convention that $\frac00=1$. For example, \zcref[noname]{item:p_non_increasing} implies that if $p_k(a+1,b)= 0$, then also $p_k(a,b)=0$ and we treat the corresponding term as 
            $(1-p_k(a+1,b))/({1-p_k(a,b)})=1$.
    \end{enumerate}
    Moreover, assume that $\Psi^{\xi_k}_{\Lambda;p_k,\lambda_k}$ satisfies the lattice FKG condition \eqref{eq:FKG_lattice_condition}, for either $k=1$ or $k=2$. Then,
    \begin{equation}
        \Psi^{\xi_1}_{\Lambda;p_1,\lambda_1}(\cdot \mid \varphi_X = \eta_1) \leq_{st} \Psi^{\xi_2}_{\Lambda;p_2,\lambda_2}(\cdot \mid \varphi_X = \eta_2).
    \end{equation}
\end{proposition}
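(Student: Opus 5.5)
We use Holley's criterion: if $\nu_1,\nu_2$ are positive measures on a finite distributive lattice, at least one of them satisfies the FKG lattice condition, and
\begin{equation}
    \nu_1(\varphi\wedge\psi)\,\nu_2(\varphi\vee\psi)\ \ge\ \nu_1(\varphi)\,\nu_2(\psi)\qquad\text{for all }\varphi,\psi,
\end{equation}
then $\nu_1\le_{st}\nu_2$. We apply it to $\nu_k=\Psi^{\xi_k}_{\Lambda;p_k,\lambda_k}(\cdot\mid\varphi_X=\eta_k)$ on $\set{0,\dots,S}^{\Lambda\setminus X}$. The boundary values $\eta_1\le\eta_2$ are absorbed by viewing $\varphi$ on $X$ as fixed with $\varphi\wedge\psi$ taking $\eta_1$ there and $\varphi\vee\psi$ taking $\eta_2$. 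The weight of $\Psi^{\xi}_{\Lambda;p,\lambda}$ factors into three parts: the site term $2^{-\abs{\set{\varphi=0}}}$, the product of edge terms $\lambda(\varphi_i,\varphi_j)$, and the random-cluster partition function $Z^{\xi}_{\Lambda;p(\varphi)}$.

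\textbf{Site factor.} Since $\abs{\set{\varphi\wedge\psi=0}}+\abs{\set{\varphi\vee\psi=0}}=\abs{\set{\varphi=0}}+\abs{\set{\psi=0}}$, this factor contributes equally to both sides of Holley's inequality.

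\textbf{$\lambda$ factor.} The required inequality is edgewise. By \zcref[noname]{item:lambda1/lambda2_non_increasing}, $\lambda_2/\lambda_1$ is non-decreasing in each argument. Together with a supermodularity-type comparison, this lets us exchange $\min$ and $\max$ at each edge. Condition \zcref[noname]{item:lambda1/lambda2_non_increasing} is exactly the monotone likelihood ratio in one coordinate. Because the FKG lattice condition holds for one of the two measures, Holley's criterion only needs to be verified for single-coordinate increases. Specifically, it suffices to check the local form: $\nu_2(\psi^{x,+})/\nu_2(\psi)\ge \nu_1(\varphi^{x,+})/\nu_1(\varphi)$ whenever $\varphi\le\psi$ and the two agree at $x$. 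This reduction to single-site ratios is the standard one. Each edge ratio at $x$ then splits as
\begin{equation}
    \frac{\lambda_2(a+1,b')}{\lambda_2(a,b')}\ \text{ versus }\ \frac{\lambda_1(a+1,b)}{\lambda_1(a,b)},\qquad b\le b'.
\end{equation}
Going from $\lambda_1$ to $\lambda_2$ uses \zcref[noname]{item:lambda1/lambda2_non_increasing}. The monotonicity in $b$ comes from the FKG condition for whichever $\Psi$ satisfies it. This is the one point to handle with care: we will route the comparison through the measure that is FKG.

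\textbf{Random-cluster factor (main obstacle).} We need $Z^{\xi_2}_{\Lambda;p_2(\psi^{x,+})}/Z^{\xi_2}_{\Lambda;p_2(\psi)}\ge Z^{\xi_1}_{\Lambda;p_1(\varphi^{x,+})}/Z^{\xi_1}_{\Lambda;p_1(\varphi)}$. Write $q=p/(1-p)$ and raise the spin at $x$ one step at a time, so that only the edges at $x$ change weight. Each ratio of partition functions is then an expectation $\phi(\prod_{y\sim x} r_{xy}^{\omega_{xy}})$ under the random-cluster measure with parameters $p(\varphi)$ or $p(\psi)$, where $r=q(a+1,\cdot)/q(a,\cdot)\ge1$. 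We make the comparison in three steps.
\begin{enumerate}
    \item By \zcref[noname]{item:p1_p2_relation}, the factors $r^{(2)}$ dominate the factors $r^{(1)}$ edgewise.
    \item By \zcref[noname]{item:p_non_increasing}, the measure $\phi^{\xi_2}_{p_2(\psi)}$ stochastically dominates $\phi^{\xi_1}_{p_1(\varphi)}$. This uses monotonicity of random-cluster measures with cluster weight $2\ge1$ in the edge parameters and in the boundary condition $\xi_1\le\xi_2$.
    \item The observable $\prod_y r_{xy}^{\omega_{xy}}$ is increasing in $\omega$.
\end{enumerate}
Combining these gives the desired inequality. The $0/0$ conventions cover edges where $p=0$. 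Multiplying the three factors completes the verification of Holley's criterion.
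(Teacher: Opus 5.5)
Your reduction is the paper's: Holley's criterion with one FKG measure, cut down to single-site ratios, is what the paper does via $\Psi_2(f)=\Psi_1(fr)$ with $r=\Psi_2/\Psi_1$ increasing. The gap is in how you verify the local condition $\nu_2(\psi^{x,+})/\nu_2(\psi)\ge\nu_1(\varphi^{x,+})/\nu_1(\varphi)$ for $\varphi\le\psi$. You split it into site, $\lambda$ and random-cluster factors and try to pass from $\varphi$ to $\psi$ inside each factor. None of the hypotheses controls how an individual factor depends on the neighbouring spins.

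For the $\lambda$ factor you would need $b\mapsto\lambda_k(a+1,b)/\lambda_k(a,b)$ to be non-decreasing. This is false exactly where the proposition is used: for $\lambda(a,b)=e^{-\rho(a^2+b^2)-\beta W(a+b)}$ with $W$ convex, this ratio is non-increasing in $b$ (strictly so if $W$ is strictly convex). So with $\lambda_1=\lambda_2$, $p_1=p_2$ and $(\eta_1,\xi_1)\le(\eta_2,\xi_2)$, which is the case needed for weak monotonicity, your edgewise inequality fails. Invoking the FKG condition does not rescue this factor by factor. FKG gives monotonicity in the neighbours only for the full ratio $\Psi(\cdot^{x,+})/\Psi(\cdot)$, where the log-submodular $\lambda$-part is compensated by the partition-function part; that is the content of the condition in Lemma~\ref{thm:vertex_marginal_FKG_general}. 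The same problem sits in your random-cluster step. Item~1 compares $r^{(2)}(a,\psi_y)$ with $r^{(1)}(a,\varphi_y)$, but hypothesis (iii) only compares $p_1,p_2$ at the same pair $(a,b)$, and monotonicity of $b\mapsto q_2(a+1,b)/q_2(a,b)$ is not assumed.

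\textbf{Fix.} Use FKG once, for the whole measure, and only then factor. Suppose $\Psi^{\xi_1}_{\Lambda;p_1,\lambda_1}$ is FKG. Apply the lattice condition to $\varphi^{x,+}\times\eta_1$ and $\psi\times\eta_2$, whose meet is $\varphi\times\eta_1$ and whose join is $\psi^{x,+}\times\eta_2$. This bounds $\nu_1(\varphi^{x,+})/\nu_1(\varphi)$ by the $\Psi^{\xi_1}_{\Lambda;p_1,\lambda_1}$-ratio at $\psi\times\eta_2$. It then remains to compare $\Psi^{\xi_1}_{\Lambda;p_1,\lambda_1}$ with $\Psi^{\xi_2}_{\Lambda;p_2,\lambda_2}$ at the single configuration $\psi\times\eta_2$. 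There the site factors cancel, (ii) handles $\lambda$, and (iii) applies edgewise. For the random-cluster domination only $p_1\le p_2$ and $\xi_1\le\xi_2$ are needed. This is precisely the paper's argument.

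\textbf{Direction of the ratio.} Separately, write the partition-function ratio the other way round. The identity $Z(\psi^{x,+})/Z(\psi)=\phi_{p(\psi)}\bigl(\prod_y r_{xy}^{\omega_{xy}}\bigr)$ is false when $q(a,\psi_y)=0<q(a+1,\psi_y)$. In that case the edge $xy$ is a.s.\ closed under $\phi_{p(\psi)}$, yet configurations with it open contribute to $Z(\psi^{x,+})$. This is not a $0/0$ case, and it occurs in the application, since $p(0,b)=0$ and one must raise $x$ from $0$ to $1$. Instead use
\[
\frac{Z(\psi)}{Z(\psi^{x,+})}=\phi_{p(\psi^{x,+})}\Bigl(\prod_y \bigl(q(a,\psi_y)/q(a+1,\psi_y)\bigr)^{\omega_{xy}}\Bigr).
\]
This is a decreasing observable with factors in $[0,1]$, well defined by (i) and the convention $0/0=1$, and it is the form in which (iii) is stated.
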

\begin{proof}
    Define two measures $\Psi_1$ and $\Psi_2$ for $\varphi \in \set{0,\dots,S}^{\Z^d \setminus X}$ satisfying $\varphi_i=S$ for $i \in \Lambda^c$ by
    \begin{equation}
        \Psi_k(\varphi) 
        := \Psi^{\xi_k}_{\Lambda;p_k,\lambda_k}(\varphi \times \eta_i \mid \varphi_X = \eta_i)
        =\frac{\Psi^{\xi_k}_{\Lambda;p_k,\lambda_k}(\varphi \times \eta_i)}{\sum_{\varphi}\Psi^{\xi_k}_{\Lambda;p_k,\lambda_k}(\varphi \times \eta_i)},
    \end{equation}
    where $\varphi \times \eta$ denotes the configuration on $\Z^d$ which equals $\varphi$ on $\Z^d \setminus X$ and $\eta$ on $X$. It suffices to show that
    \begin{equation}\label{eq:Psi2_Psi1_stoch_domination}
        \Psi_1 \leq_{st} \Psi_2.
    \end{equation}
    Assume that $\Psi^{\xi_1}_{\Lambda;p_1,\lambda_1}$ satisfies the FKG lattice condition, the other case is similar. In that case, also $\Psi_1$ satisfies the lattice FKG inequality.
    Let $f \colon \set{0,\dots,S}^{\Z^d} \to \R$ be an increasing function. Then,
    \begin{equation}
        \Psi_2(f) = \Psi_1(fr),
    \end{equation}
    where
    \begin{equation}
        r(\varphi) := \frac{\Psi_2(\varphi)}{\Psi_1(\varphi)},
    \end{equation}
    which is well defined because $\lambda_1 > 0$ implies that $\Psi_1 > 0$. We claim that $r$ is also an increasing function. In that case, the FKG inequality for $\Psi_1$ implies
    \begin{equation}
        \Psi_2(f) \geq \Psi_1(f) \Psi_1(r) = \Psi_1(f)
    \end{equation}
    concluding the proof of \eqref{eq:Psi2_Psi1_stoch_domination}.
    
    Fix $\varphi \in \set{0,\dots,S}^{\Z^d \setminus X}$. Let $a \in \set{0,\dots, S-1}$ and $a' := a+1$.  
    Further, let $x \in \Lambda$ and define
    \begin{equation}\label{eq:def_phia_ab}
        \varphi_i^a := \begin{cases}
            a & \text{if } i = x,\\
            \varphi_i & \text{else}
        \end{cases}
    \end{equation}
    To show that $r$ is indeed increasing, by induction it suffices to prove
    \begin{equation}\label{eq:Psi_stoch_dom_second_condition}
        \Psi_2(\varphi^{a'})\Psi_1(\varphi^{a}) \geq \Psi_2(\varphi^{a})\Psi_1(\varphi^{a'}).
    \end{equation}
    The rest of the proof is devoted to proving this inequality.
    By the FKG inequality for $\Psi^{\xi_1}_{\Lambda;p_1,\lambda_1}$
    \begin{equation}
        {\Psi_1(\varphi^{a'})}{\Psi^{\xi_1}_{\Lambda;p_1,\lambda_1}(\varphi^{a} \times \eta_2)} 
        \leq {\Psi_1(\varphi^{a})}\Psi^{\xi_1}_{\Lambda;p_1,\lambda_1}(\varphi^{a'} \times \eta_2).
    \end{equation}
    Multiplying both sides by $\Psi_2(\varphi^a)$, it remains to show that
    \begin{equation}
        \Psi_2(\varphi^{a}) \Psi^{\xi_1}_{\Lambda;p_1,\lambda_1}(\varphi^{a'} \times \eta_2)
        \leq \Psi_2(\varphi^{a'}) \Psi^{\xi_1}_{\Lambda;p_1,\lambda_1}(\varphi^{a} \times \eta_2).
    \end{equation}
    Inserting the definitions, we need to show that
    \begin{equation}
    \begin{aligned}
        &\phntm Z^{\xi_2}_{\Lambda,p_2(\varphi^{a}\times \eta_2)} Z^{\xi_1}_{\Lambda,p_1(\varphi^{a'}\times \eta_2)} \prod_{\set{j, x} \in \E^b(\Lambda)} \lambda_2(a, (\varphi^{a} \times \eta_2)_j) \lambda_1(a', (\varphi^{a'} \times \eta_2)_j)  \\
        &\leq Z^{\xi_2}_{\Lambda,p_2(\varphi^{a'}\times \eta_2)} Z^{\xi_1}_{\Lambda,p_1(\varphi^{a}\times \eta_2)}\prod_{\set{j, x} \in \E^b(\Lambda)} \lambda_2(a', (\varphi^{a'} \times \eta_2)_j) \lambda_1(a, (\varphi^{a} \times \eta_2)_j) .
    \end{aligned}
    \end{equation}
    Let $\tilde \varphi := \varphi^{a'}\times \eta_2$. For all $j \neq x$ we have $\tilde \varphi_j = (\varphi^{a}\times \eta_2)_j$. Therefore, by assumption \zcref[noname]{item:lambda1/lambda2_non_increasing}
    \begin{equation}
        \prod_{\set{j, x} \in \E^b(\Lambda)} \lambda_2(a, \tilde\varphi_j) \lambda_1(a', \tilde\varphi_j) 
        \leq \prod_{\set{j, x} \in \E^b(\Lambda)} \lambda_2(a', \tilde\varphi_j) \lambda_1(a, \tilde\varphi_j) 
    \end{equation}
    and
    \begin{equation}
        \frac{Z^{\xi_1}_{\Lambda,p_1(\varphi^{a}\times \eta_2)}}{Z^{\xi_1}_{\Lambda,p_1(\varphi^{a'}\times \eta_2)}}
        = \phi_{\Lambda,p_1(\tilde\varphi)}^{\xi_1}\left(\prod_{j \sim x} \left(\frac{p_1(a,\tilde\varphi_j)}{1-p_1(a,\tilde\varphi_j)}\frac{1-p_1(a+1,\tilde\varphi_j)}{p_1(a+1,\tilde\varphi_j)}\right)^{\omega_{jx}}\right)
    \end{equation}
    again using the convention that $\frac00=1$. Since we assumed $a\mapsto {p_1(a,b)}$ to be non-decreasing, this is the expectation of a non-increasing function of $\omega$. Thus changing the boundary condition from $\xi_1$ to $\xi_2$ decreases the expectation and similarly does changing the probability from $p_1$ to $p_2$, since we assumed \zcref[noname]{item:p_non_increasing} that $p_1 \leq p_2$ (using standard monotonicity properties for the reandom cluster measure, cf.~\cite[Theorem 3.21]{Grimmett:RandomClusterModel2006}). Hence,
    \begin{equation}
        \frac{Z^{\xi_1}_{\Lambda,p_1(\varphi^{a}\times \eta_2)}}{Z^{\xi_1}_{\Lambda,p_1(\varphi^{a'}\times \eta_2)}}
        \geq \phi_{\Lambda,p_2(\tilde\varphi)}^{\xi_2} \left(\prod_{j \sim x} \left(\frac{p_1(a,\tilde\varphi_j)}{1-p_1(a,\tilde\varphi_j)}\frac{1-p_1(a+1,\tilde\varphi_j)}{p_1(a+1,\tilde\varphi_j)}\right)^{\omega_{jx}}\right)
    \end{equation}
    Applying assumption \zcref[noname]{item:p1_p2_relation} we obtain
    \begin{equation}
        \frac{Z^{\xi_1}_{\Lambda,p_1(\varphi^{a}\times \eta_2)}}{Z^{\xi_1}_{\Lambda,p_1(\varphi^{a'}\times \eta_2)}}
        \geq \phi_{\Lambda,p_2(\tilde\varphi)}^{\xi_2} \left(\prod_{j \sim x} \left(\frac{p_2(a,\tilde\varphi_j)}{1-p_2(a,\tilde\varphi_j)}\frac{1-p_2(a+1,\tilde\varphi_j)}{p_2(a+1,\tilde\varphi_j)}\right)^{\omega_{jx}}\right) 
        = \frac{Z^{\xi_2}_{\Lambda,p_2(\varphi^{a}\times \eta_2)}}{Z^{\xi_2}_{\Lambda,p_2(\varphi^{a'}\times \eta_2)}}
    \end{equation}
    which concludes the proof.
\end{proof}

Before concluding this section, we give a general condition which implies that $\Psi^\xi_{\Lambda;p,\lambda}$ satisfies the FKG inequality.
\begin{lemma}\label{thm:vertex_marginal_FKG_general}
    Let $E \subset \E^b(\Lambda)$ and $\xi \in \set{0,1}^E$.
    Let $\lambda,p \colon \set{0,\dots,S} \times \set{0,\dots S} \to [0, \infty)$ be symmetric and assume that $p \in [0,1)$ and $\lambda > 0$. Further, assume that $\set{0, \dots, S} \ni a \mapsto {p(a,b)}$ is non-decreasing for any $b \in \set{0, \dots, S}$, and that, setting $a'=a+1$ and $b'=b+1$,
    \begin{equation}\label{eq:FKG_condition_p_am}
        \frac{\lambda(a',b')\lambda(a,b)}{\lambda(a',b) \lambda(a,b')}
        \geq \frac{1-p(a',b')}{1-p(a',b)} \frac{1-p(a,b)}{1-p(a,b')} \max\left(1,  \frac{2-p(a',b)}{2-p(a',b')}\frac{2-p(a,b')}{2-p(a,b)}\right).
    \end{equation}
    Then $\Psi^\xi_{\Lambda;p, \lambda}$ satisfies the lattice FKG condition.
\end{lemma}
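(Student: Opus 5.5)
The plan is to reduce the lattice FKG condition for $\Psi^\xi_{\Lambda;p,\lambda}$ to a two-site inequality. $\Psi^\xi_{\Lambda;p,\lambda}$ is strictly positive on its support, because $\lambda>0$ and $Z^\xi_{\Lambda;p(\varphi)}>0$. Its state space is a product of the totally ordered sets $\set{0,\dots,S}$. It is therefore classical that~\eqref{eq:FKG_lattice_condition} follows from the local condition
\begin{equation}
\Psi(\varphi^{a'b'})\Psi(\varphi^{ab}) \geq \Psi(\varphi^{a'b})\Psi(\varphi^{ab'}).
\end{equation}
Here $x\neq y$ are sites of $\Lambda$, $a<a'=a+1$ and $b<b'=b+1$, and $\varphi^{ab}$ denotes $\varphi$ with $\varphi_x=a$ and $\varphi_y=b$. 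The factor $2^{-\abs{\set{\varphi=0}}}$ is a product of single-site terms, so it cancels in this cross ratio. The same holds for every $\lambda$-factor except possibly $\lambda(\varphi_x,\varphi_y)$ when $x\sim y$. So everything reduces to a log-supermodularity statement for $\varphi\mapsto Z^\xi_{\Lambda;p(\varphi)}$, corrected by the single edge term $\lambda(\varphi_x,\varphi_y)$ in the adjacent case.

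\emph{Non-adjacent case.} We write
\begin{equation}
\frac{Z(\varphi^{a'b})}{Z(\varphi^{ab})} = \phi_{p(\varphi^{ab})}\Bigl(\prod_{j\sim x}\bigl(\tfrac{p(a',\varphi_j)}{1-p(a',\varphi_j)}\tfrac{1-p(a,\varphi_j)}{p(a,\varphi_j)}\bigr)^{\omega_{xj}}\Bigr),
\end{equation}
exactly as in the proof of \zcref{thm:stochastic_ordering_dilute_measures}, with the $\frac00=1$ convention. Since $p$ is non-decreasing in each argument, each factor is at least $1$, so the integrand is increasing in $\omega$. The integrand does not involve $\varphi_y$, because $y\not\sim x$. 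Raising $\varphi_y$ from $b$ to $b'$ increases the edge parameters at $y$. By the monotonicity of the random cluster measure in $p$ \cite[Theorem 3.21]{Grimmett:RandomClusterModel2006}, this stochastically increases $\phi$. Hence the ratio at $\varphi_y=b'$ is at least the ratio at $\varphi_y=b$, which is the desired inequality.

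\emph{Adjacent case $x\sim y$ (main obstacle).} First I sum out $\omega_{xy}$ given the other edges. Let $q=p(\varphi_x,\varphi_y)$. If $x,y$ are already joined off $xy$, the edge contributes $1+\frac{q}{1-q}=\frac1{1-q}$. Otherwise it contributes $1+\frac{q}{2(1-q)}=\frac{1}{1-q}\cdot\frac{2-q}{2}$, since opening the edge then merges two clusters and loses a factor $2$. Thus
\begin{equation}
Z(\varphi)=\frac{1}{1-q}\sum_{\omega'} w_{\varphi}(\omega')\,h_q(\omega'),
\end{equation}
where $\omega'$ ranges over configurations off the edge $xy$, $w_\varphi$ is the random cluster weight with $xy$ removed, and $h_q=1$ on $\set{x\leftrightarrow y}$ and $h_q=\frac{2-q}{2}$ off it.

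The prefactors $(1-q)^{-1}$, together with $\lambda(\varphi_x,\varphi_y)$, produce precisely the $\lambda$-ratio and the $(1-p)$-ratio appearing in~\eqref{eq:FKG_condition_p_am}. For the remaining sums $Z'(\varphi)=\sum_{\omega'} w_\varphi h_q$, I would compare each $Z'$ with the reduced partition function $\tilde Z(\varphi)=\sum_{\omega'} w_\varphi(\omega')\,\frac{2-q}{2}$, which gives $\frac{2-q}{2}\le Z'/\tilde Z\le 1$. The reduced sums $\tilde Z$ are log-supermodular by the non-adjacent argument, applied on the graph without the edge $xy$. Comparing $Z'$ with $\tilde Z$ costs at worst the factor $\frac{2-p(a',b)}{2-p(a',b')}\frac{2-p(a,b')}{2-p(a,b)}$, or $1$ if the connection event decouples favorably. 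This explains the $\max(1,\cdot)$ in~\eqref{eq:FKG_condition_p_am}.

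Making this last comparison rigorous is the delicate step. The two events $\set{x\leftrightarrow y}$ and its complement have to be tracked under four different measures, and I expect to handle them by a careful case split according to which event carries the dominant weight. The hypothesis~\eqref{eq:FKG_condition_p_am} is exactly what absorbs the resulting ratio, and multiplying the pieces together yields the local inequality.
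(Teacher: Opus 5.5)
The adjacent case is the real content of the lemma, and your proposal leaves it unproved. The mechanism you sketch for it also cannot give the sharp constant. Splitting off the $\lambda$- and $(1-p)$-ratios is correct. What then remains to show is
\[
\frac{Z'(\varphi^{a'b})\,Z'(\varphi^{ab'})}{Z'(\varphi^{a'b'})\,Z'(\varphi^{ab})}\le \max(1,M),
\qquad
M:=\frac{2-p(a',b)}{2-p(a',b')}\,\frac{2-p(a,b')}{2-p(a,b)}.
\]
Three problems block your route to this.
\begin{itemize}
\item Your sandwich is inverted. Since $h_q\in[\frac{2-q}{2},1]$, the correct bound is $1\le Z'/\tilde Z\le \frac{2}{2-q}$.
\item The non-adjacent argument does not cover $\tilde Z=\frac{2-q}{2}\hat Z$, where $\hat Z$ is the partition function on the graph without $xy$. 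The factor $\frac{2-q}{2}$ depends jointly on $(\varphi_x,\varphi_y)$, and its cross ratio is $1/M$, which can be $<1$.
\item More fundamentally, comparing the four $Z'$ one at a time with reduced sums decouples the four connection probabilities of $x,y$. Two-sided bounds plus log-supermodularity of $\hat Z$ then give only the constant $\frac{4}{(2-p(a',b'))(2-p(a,b))}$. In general this is strictly larger than $\max(1,M)$.
\end{itemize}
There is no slack to absorb that loss. For $W(x)=\abs{x}$ and $a>b$, all three ratios in the hypothesis equal $1$ (see the appendix computation), so the hypothesis holds with equality.

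The missing idea is to write two of the ratios as expectations of comparable integrands, under measures that differ only at $x$, and then use a single stochastic domination. Let $\phi^{a'b'}$ and $\phi^{ab'}$ denote the full random cluster measures with parameters $p(\varphi^{a'b'})$ and $p(\varphi^{ab'})$. Set
\[
F(\omega'):=\prod_{j\sim y,\,j\neq x}\Bigl(\frac{p(b,\varphi_j)(1-p(b',\varphi_j))}{(1-p(b,\varphi_j))\,p(b',\varphi_j)}\Bigr)^{\omega_{yj}}\le 1,
\]
which is non-increasing. Then $w_{\varphi^{a'b}}=w_{\varphi^{a'b'}}F$ and $w_{\varphi^{ab}}=w_{\varphi^{ab'}}F$ with the same $F$. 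Also, $w_\varphi h_{q(\varphi)}$ is proportional to the $\omega'$-marginal of the full random cluster measure at $\varphi$. Hence
\[
\frac{Z'(\varphi^{a'b})}{Z'(\varphi^{a'b'})}=\phi^{a'b'}(F\,r_{a'}),
\qquad
\frac{Z'(\varphi^{ab})}{Z'(\varphi^{ab'})}=\phi^{ab'}(F\,r_{a}).
\]
Here $r_c:=h_{p(c,b)}/h_{p(c,b')}$ equals $1$ if $x,y$ are joined off $xy$, and equals $\frac{2-p(c,b)}{2-p(c,b')}\ge 1$ otherwise. Pointwise, $r_{a'}\le\max(1,M)\,r_a$. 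Moreover $F r_a$ is non-increasing, and $\phi^{ab'}\le_{st}\phi^{a'b'}$ because only the edge parameters at $x$ increase. Therefore
\[
\phi^{a'b'}(F r_{a'})\le \max(1,M)\,\phi^{a'b'}(F r_a)\le \max(1,M)\,\phi^{ab'}(F r_a),
\]
which is the required bound. This is the paper's argument. There it is phrased by conditioning on $\omega_{\langle xy\rangle}$ and computing the conditional expectation of $h(\cdot,b)^{\omega_{xy}}$ separately in the connected and disconnected cases.

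There is also a smaller slip in your non-adjacent case. You should expand $Z(\varphi^{ab})/Z(\varphi^{a'b})$ under $\phi_{p(\varphi^{a'b})}$, where the integrand is non-increasing and at most $1$. Your direction fails when $p(a,\varphi_j)=0<p(a',\varphi_j)$. This is exactly the case $a=0$ in the paper's application, where $p(0,\cdot)\equiv 0$.
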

\begin{proof}
    Let $\varphi \in \set{0,\dots,S}^{\Z^d}$ such that $\varphi_i=S$ for all $i \in \Lambda^c$ and for $x,y\in \Lambda$ define
    $\varphi^{ab}$ by
    \begin{equation}
        \varphi_i^{ab} := \begin{cases}
            a & \text{if } i = x,\\
            b & \text{if } i = y,\\
            \varphi_i & \text{else}.\\
        \end{cases}
    \end{equation}
    For $a,b \in \set{0, \dots, S-1}$, $a' := a+1$, $b' := b+1$, it suffices to check (cf.~\cite[Section 6.3]{Lammers:DelocalisationAbsolutevalueFKGSolidonsolid2024})
    \begin{equation}
        \Psi^\xi_{\Lambda;p, \lambda}(\varphi^{a'b'})\Psi^\xi_{\Lambda;p, \lambda}(\varphi^{ab})
        \geq \Psi^\xi_{\Lambda;p, \lambda}(\varphi^{a'b})\Psi^\xi_{\Lambda;p, \lambda}(\varphi^{ab'}).
    \end{equation}
    We will assume that the right hand side is positive, otherwise the claim is trivial. Inserting the definitions, the condition is equivalent to
    \begin{equation}\label{eq:equivalent_condition_FKG_Psi}
    \begin{aligned}
        &\phntm \left({\lambda(a',b')\lambda(a,b)}\right)^{\I_{\set{x,y} \in \E^b(\Lambda)}} {Z_{\Lambda;p(\varphi^{a'b'})}^{\xi} Z_{\Lambda;p(\varphi^{ab})}^{\xi}}\\
        &\geq \left({\lambda(a',b) \lambda(a,b')}\right)^{\I_{\set{x,y} \in \E^b(\Lambda)}} {Z_{\Lambda;p(\varphi^{a'b})}^{\xi}Z_{\Lambda;p(\varphi^{ab'})}^{\xi}},
    \end{aligned}
    \end{equation}
    where we used that
    \begin{equation}
        \abs{\set{\varphi^{a'b'}=0}} + \abs{\set{\varphi^{ab}=0}} = \abs{\set{\varphi^{a'b}=0}} + \abs{\set{\varphi^{ab'}=0}}.
    \end{equation}
    To showcase the argument, assume that $\set{x,y} \in \E^b(\Lambda)$. The other case is similar and slightly easier. The claim \eqref{eq:equivalent_condition_FKG_Psi} follows from \eqref{eq:FKG_condition_p_am} if we show that
    \begin{equation}\label{eq:ZZ/ZZ_leq_p_cond}
        \frac{1-p(a',b')}{1-p(a',b)} \frac{1-p(a,b)}{1-p(a,b')} \max\left(1,  \frac{2-p(a',b)}{2-p(a',b')}\frac{2-p(a,b')}{2-p(a,b)}\right)
        \geq \frac{{Z_{\Lambda;p(\varphi^{a'b})}^{\xi}}}{Z_{\Lambda;p(\varphi^{a'b'})}^{\xi}} \frac{{Z_{\Lambda;p(\varphi^{ab'})}^{\xi}}}{Z_{\Lambda;p(\varphi^{ab})}^{\xi}}.
    \end{equation}
    To simplify the notation, we will denote the random cluster measure which which has $Z_{\Lambda;p(\varphi^{a'b'})}^{\xi}$ as its partition function by $\phi^{a'b'}$, thereby dropping the dependence on $\Lambda$ and $\xi$.
    Denoting the restriction of $\omega$ to $\E(\Z^d) \setminus \set{\set{x,y}}$ by $\omega_{\langle xy\rangle}$, we obtain
    \begin{equation}\label{eq:rewrite_ZZ_by_RC}
        \frac{Z_{\Lambda;p(\varphi^{a'b})}^{\xi}}{Z_{\Lambda;p(\varphi^{a'b'})}^{\xi}}
        = \phi^{{a'b'}}\left(h(a',b)^{\omega_{xy}} F(\omega_{\langle xy\rangle})\right)
        \quad\text{and}\quad
        \frac{Z_{\Lambda;p(\varphi^{ab})}^{\xi}}{Z_{\Lambda;p(\varphi^{ab'})}^{\xi}}
        = \phi^{{ab'}}\left(h(a,b)^{\omega_{xy}} F(\omega_{\langle xy\rangle})\right),
    \end{equation}
    where
    \begin{equation}
        h(c,d):= \frac{p(c,d)(1-p(c,d+1)}{(1-p(c,d))p(c,d+1)}
    \end{equation}
    and
    \begin{equation}
        F(\omega_{\langle xy\rangle}) 
        := \prod_{\set{j,y} \in \E^b(\Lambda), j \neq x} \left(\frac{p(b, \varphi_j)(1-p(b', \varphi_j))}{(1-p(b, \varphi_j))p(b', \varphi_j)}\right)^{\omega_{yj}}.
    \end{equation}
    Further,
    \begin{equation}
        \phi^{{a'b'}}\left(h(a',b)^{\omega_{xy}} F(\omega_{\langle xy\rangle})\right)
        = \phi^{{a'b'}}\left(\phi^{{a'b'}}\left(h(a',b)^{\omega_{xy}} \mid \omega_{\langle xy \rangle}\right) F(\omega_{\langle xy\rangle})\right)
    \end{equation}
    and
    \begin{equation}
        \phi^{{ab'}}\left(h(a,b)^{\omega_{xy}} F(\omega_{\langle xy\rangle})\right)
        = \phi^{{ab'}}\left(\phi^{{ab'}}\left(h(a,b)^{\omega_{xy}} \mid \omega_{\langle xy \rangle}\right) F(\omega_{\langle xy\rangle})\right).
    \end{equation}
    We can compute the inner expectations explicitly using the formulas for conditional expectations of the random cluster measure~\cite[Theorem 3.1]{Grimmett:RandomClusterModel2006}. We consider two cases:

    \emph{Case 1: $x$ and $y$ are connected in $\omega_{\langle xy \rangle}$}. Then
    \begin{equation}
        \begin{aligned}
            \phi^{{a'b'}}\left(h(a',b)^{\omega_{xy}} \mid \omega_{\langle xy \rangle}\right)
            &= \phi^{{a'b'}}\left(\omega_{xy} = 0 \mid \omega_{\langle xy \rangle}\right) + h(a',b) \phi^{{a'b'}}\left({\omega_{xy}=1} \mid \omega_{\langle xy \rangle}\right)\\
            &= 1-p(a',b')+ h(a',b) p(a',b') \\
            &= \frac{1-p(a',b')}{1-p(a',b)}
        \end{aligned}
    \end{equation}
    and similarly,
    \begin{equation}
        \phi^{{ab'}}\left(h(a,b)^{\omega_{xy}} \mid \omega_{\langle xy \rangle}\right)
        = \frac{1-p(a,b')}{1-p(a,b)}.
    \end{equation}

    \emph{Case 2: $x$ and $y$ are not connected in $\omega_{\langle xy \rangle}$}. Then
    \begin{equation}
        \begin{aligned}
            \phi^{{a'b'}}\left(h(a',b)^{\omega_{xy}} \mid \omega_{\langle xy \rangle}\right)
            &= \frac{1-p(a',b')}{1-p(a',b)} \frac{2-p(a',b)}{2-p(a',b')}
        \end{aligned}
    \end{equation}
    and similarly,
    \begin{equation}
        \phi^{{ab'}}\left(h(a,b)^{\omega_{xy}} \mid \omega_{\langle xy \rangle}\right)
        = \frac{1-p(a,b')}{1-p(a,b)} \frac{2-p(a,b)}{2-p(a,b')}.
    \end{equation}
    In particular, these computations imply that
    \begin{equation}\label{eq:phi_a'b'_c}
        \phi^{{a'b'}}\left(h(a',b)^{\omega_{xy}} F(\omega_{\langle xy\rangle})\right)
        \leq c(a,b) 
        \phi^{{a'b'}}\left(\phi^{{ab'}}\left(h(a,b)^{\omega_{xy}} \mid \omega_{\langle xy \rangle}\right) F(\omega_{\langle xy\rangle})\right)
    \end{equation}
    with
    \begin{equation}
        c(a,b) := \frac{1-p(a',b')}{1-p(a',b)} \frac{1-p(a,b)}{1-p(a,b')} \max\left(1,  \frac{2-p(a',b)}{2-p(a',b')}\frac{2-p(a,b')}{2-p(a,b)}\right).
    \end{equation}
    Since $h \in [0,1]$, the monotonicity of the random cluster measure in its boundary conditions \parencite[Lemma 4.14]{Grimmett:RandomClusterModel2006} implies that $\phi^{{ab'}}\left(h(a,b)^{\omega_{xy}} \mid \omega_{\langle xy \rangle}\right)$ is non-increasing in $\omega_{\langle xy \rangle}$. Similarly, $F$ is non-increasing. Thus, the stochastic domination when increasing the edge probabilities implies
    \begin{equation}\label{eq:phi_a'b'_final_monotonicity}
        \begin{aligned}
        \phi^{{a'b'}}\left(\phi^{{ab'}}\left(h(a,b)^{\omega_{xy}} \mid \omega_{\langle xy \rangle}\right) F(\omega_{\langle xy\rangle})\right)
        &\leq \phi^{{ab'}}\left(\phi^{{ab'}}\left(h(a,b)^{\omega_{xy}} \mid \omega_{\langle xy \rangle}\right) F(\omega_{\langle xy\rangle})\right)\\
        &= \phi^{{ab'}}\left(h(a,b)^{\omega_{xy}} F(\omega_{\langle xy\rangle})\right).
        \end{aligned}
    \end{equation}
    Combining \eqref{eq:phi_a'b'_c} with \eqref{eq:phi_a'b'_final_monotonicity} and \eqref{eq:rewrite_ZZ_by_RC} yields \eqref{eq:ZZ/ZZ_leq_p_cond}
\end{proof}

\section{Stochastic domination by a model with a magnetic field}
\label{sec:domination_Ising_half_integer}

In this section we will establish \zcref{thm:main_two_dimensions,thm:main_low_temperature}, which state that uniqueness and exponential decay holds in two dimensions for any convex, even interaction energy, or at sufficiently low temperature independent of $S$, provided one additionally assumes $W(1)>W(0)$. Further, we will prove \zcref{thm:constrained Gibbs measures},  which states that for all convex, even $W$ and at all temperatures, the magnetization of any Gibbs measure is in $(-\frac12, \frac12)$, provided $\rho > -\rho_*$.

In fact, we show
\begin{equation}\label{eq:two_dimensions_or_low_temp_plus_cluster_size}
    \mu_{\beta,\rho}^+ \left(0 \overset{+}{\longleftrightarrow} \partial B_N\right) \leq e^{-c_\beta N}
\end{equation}
under the assumptions of the respective theorems.
We then conclude by \zcref{thm:equivalence_no_percolation_uniqueness} and \zcref{thm:two_point_correlation_relation_percolation}.
In dimensions $d \geq 3$, we expect that there is an intermediate range of temperatures where both the $+$ and $-$ clusters percolate, so that the quantity
\eqref{eq:two_dimensions_or_low_temp_plus_cluster_size} does not decay in that regime. This is known for the Ising model in sufficiently high dimensions (\cite{Aizenman:PercolationMinoritySpins1987}), which is why we treat the case of $d \geq 3$ with different methods in \zcref{sec:three_dimensions}.

The main tool is a domination of our model by a model with the same interaction on a different spins space, which additionally has a magnetic field that favors spins $\leq 0$.
We extend the Hamiltonian \eqref{eq:def_hamiltonian} through the addition of a magnetic field term $h \colon \Z \to \R$
\begin{equation}\label{eq:def_hamiltonian_with_h}
    \mathscr H_{\Lambda; \beta,\rho,h}(\sigma) := \beta \sum_{\set{i,j} \in \mathcal E^b(\Lambda)} W\left(\sigma_i - \sigma_j\right) + \rho \sum_{i \in \Lambda} \sigma_i^2 + \sum_{i \in \Lambda} h(\sigma_i), \quad \sigma \in \Omega.
\end{equation}
We define the restricted spin space
\begin{equation}
    \Sigma :=  
    \set{-S+1, \dots, S}^{\Z^d} \subset \Omega.
\end{equation}
For $\eta \in \Sigma$, the set of configurations with boundary condition $\eta$ outside $\Lambda$ is denoted as
\begin{equation}
    \Sigma_\Lambda^\eta := \set{\sigma \in \Sigma \mid \sigma_i = \eta_i \text{ for } i \in \Lambda^c} \subset \Omega_\Lambda^\eta
\end{equation}
and we consider a new model on $(\Omega, \mathcal F)$, defined by
\begin{equation}\label{eq:def_measure_spinM-12}
    \nu_{\Lambda;\beta,\rho,h}^\eta(\sigma) := \begin{cases}
        \left(\mathcal{Z}_{\Lambda;\beta,\rho,h}^\eta\right)^{-1} e^{-{\mathscr H}_{\Lambda;\beta,\rho, h}(\sigma)} & \text{if } \sigma \in \Sigma_\Lambda^\eta,\\
        0 & \text{else.}
    \end{cases}
\end{equation}
As before, $\mathcal{Z}_{\Lambda;\beta,\rho,h}^\eta$ is a constant which normalizes $\nu_{\Lambda;\beta,\rho,h}^\eta$ to be a probability measure. 
For a function $f \colon \Omega \to \R$ we denote the associated expectation by
$\nu_{\Lambda;\beta,\rho,h}^\eta(f)$. Analogous to before, we write $\nu_{\Lambda;\beta,\rho,h}^+ \equiv \nu_{\Lambda;\beta,\rho,h}^{S}$. 

In this case the infinite volume limit for $+$ boundary conditions also exists and can be obtained as a weak limit
\begin{equation}
    \nu_{B_N;\beta,\rho,h}^+ \Rightarrow \nu_{\beta,\rho,h}^+ \quad\text{as } N \to \infty.
\end{equation}
The key tool we establish is the following stochastic domination.

\begin{proposition}\label{thm:dominate_M_by_M12}
    Let $\beta > 0$ and $\rho \in \R$. There exists $\lambda_0 \equiv \lambda_0(S, \beta, \rho, W, d) > 0$ such that for any $\Lambda \Subset \Z^d$ and $\eta \in \Sigma$
    \begin{equation}\label{eq:dominate_M_by_M12}
        \mu_{\Lambda;\beta,\rho}^\eta \leq_{st} \nu_{\Lambda;\beta,\rho, \lambda_0 H}^{\eta},
    \end{equation}
    where $H \colon \Z \to \R$ is defined by $H(k) := \I_{k \geq 1}$.
    The dependence of $\lambda_0$ on $\rho$ is continuous.
\end{proposition}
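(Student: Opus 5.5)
The plan is to prove \eqref{eq:dominate_M_by_M12} with Holley's criterion, reduced to a single-site comparison. Both measures are of the form in \zcref{remark:FKG_for_convex_W}: $\mu^\eta_{\Lambda;\beta,\rho}$ is one directly. For $\nu^\eta_{\Lambda;\beta,\rho,\lambda_0H}$, the restriction to $\Sigma$ means restricting the spin space to the interval $\set{-S+1,\dots,S}$, and $\lambda_0 H$ is an extra single-site potential $V_i$. Hence both satisfy the FKG lattice condition, since $W$ is even and convex. Their single-site conditional laws are therefore stochastically monotone in the neighbouring spins.

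I would then use the standard heat-bath (Glauber) coupling. Run both single-site dynamics on $\Lambda$ with the same site choices and the same uniform variables. Start the $\mu$-chain from $\eta$ and the $\nu$-chain from $\eta$ as well; this is admissible since $\eta\in\Sigma$, and the two initial configurations are ordered. The order is preserved provided the following holds. For every site $x$ and neighbour configurations $\xi\le\xi'$ with $\xi$ valued in $\set{-S,\dots,S}$ and $\xi'$ valued in $\set{-S+1,\dots,S}$, the conditional law of $\sigma_x$ under $\mu$ given $\xi$ must be dominated by the conditional law of $\sigma_x$ under $\nu$ given $\xi'$. Both chains are ergodic, so taking the time to infinity gives \eqref{eq:dominate_M_by_M12}. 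By the monotonicity of the $\mu$ single-site law in $\xi$, the condition further reduces to the case $\xi=\xi'$ with $\xi'$ valued in $\set{-S+1,\dots,S}$.

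For this one-site comparison, fix the neighbour values and set $w(k):=\exp\bigl(-\beta\sum_{j\sim x}W(k-\xi'_j)-\rho k^2\bigr)>0$ for $k\in\set{-S,\dots,S}$. The $\mu$-law is proportional to $w(k)$ on $\set{-S,\dots,S}$. The $\nu$-law is proportional to $w(k)e^{-\lambda_0\I_{k\ge1}}$ on $\set{-S+1,\dots,S}$. Domination means $P_\nu(\sigma_x\ge t)\ge P_\mu(\sigma_x\ge t)$ for every $t$. This is trivial for $t\le -S+1$, because then $P_\nu(\sigma_x\ge t)=1$.

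For $t\ge -S+2$, write $e=e^{-\lambda_0}$. After clearing denominators, each of these inequalities reads $F_t(e)\ge0$ for an explicit function $F_t$ that is continuous in $e$. At $e=1$ the $\nu$-law is just the $\mu$-law conditioned on $\set{k\ge -S+1}$. A direct computation then gives the strict inequality $F_t(1)=w(-S)\sum_{k\ge t}w(k)\cdot(\text{positive normalizer})>0$. Hence the inequality persists for $e$ in a neighbourhood of $1$.

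There are only finitely many neighbour configurations, namely $(2S)^{2d}$, and finitely many thresholds $t$. So one can take $\lambda_0>0$ to be the minimum over them of the admissible values, and this depends only on $S,\beta,\rho,W,d$. For $t\ge1$ the condition can even be solved explicitly: it is $e^{-\lambda_0}\ge B/(B+w(-S))$ with $B=\sum_{k=-S+1}^{0}w(k)$.

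Continuity in $\rho$ follows because each $w$, and hence each strict margin $F_t$, depends continuously on $\rho$. A minimum of finitely many continuous positive thresholds is continuous; if needed, one can make this rigorous by choosing $\lambda_0(\rho)$ as an explicit continuous function bounded by these thresholds. I expect the main point needing care to be not this computation but the justification of the coupling step. One must check that the mixed comparison, with a different state space and a field for $\nu$, is legitimate in the Holley framework, using the FKG monotonicity of \zcref{remark:FKG_for_convex_W} for the restricted, field-modified specification.
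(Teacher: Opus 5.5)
Your proposal is correct and follows the paper's proof essentially step by step. It uses the same heat-bath coupling started from $\eta$, the same FKG reduction to identical neighbour values, and a single-site comparison over finitely many neighbour configurations. The only difference is that the paper replaces your perturbation-at-$e=1$ and finite-minimum step with a bound that is uniform in $s$ and in the neighbours:
\[
\mu^{Y}_{\set{x}}(\sigma_x\ge s)\le\gamma\,\nu^{Y}_{\set{x};0}(\sigma_x\ge s)\le\gamma e^{\lambda}\nu^{Y}_{\set{x};\lambda H}(\sigma_x\ge s),\qquad \gamma=\mu^{+}_{\set{0};\beta,\rho}(\sigma_0\ge -S+1)<1,
\]
which yields the explicit choice $\lambda_0=-\ln\gamma$, whose continuity in $\rho$ is immediate.
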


\begin{proof}
    We couple the two measures using Glauber dynamics. We define a Markov chain $(X_n, Y_n)$ on $(\Omega \times \Omega, \mathcal F \otimes \mathcal F)$ in the following way. Let $X_0 = Y_0 = \eta$. Assume now that the chain is defined up to step $n$. Sample uniformly from all other sources of randomness $i_{n+1} \sim \Unif (\Lambda)$ and $U_{n+1} \sim \Unif ([0,1))$.

    Define for $j \in \Z^d$ and $s \in \set{-S, \dots, S}$
    \begin{equation}
        (X_{n+1})_j := \begin{cases}
            s & \text{if $i_{n+1} = j$ and } U_{n+1} \in \left[\mu_{\set{j};\beta,\rho}^{X_n}(\sigma_j > s), ~\mu_{\set{j};\beta,\rho}^{X_n}(\sigma_j \geq s) \right), \\
            (X_{n})_j & \text{else},
        \end{cases}
    \end{equation}
    and for $\lambda \geq 0$ and $s \in \set {-S + 1, \dots, S}$ 
    \begin{equation}
        (Y_{n+1})_j := \begin{cases}
            s & \text{if $i_{n+1} = j$ and } U_{n+1} \in \left[\nu_{\set{j};\beta,\rho,\lambda_0 H}^{Y_n}(\sigma_j > s), ~\nu_{\set{j};\beta,\rho,\lambda_0 H}^{Y_n}(\sigma_j \geq s) \right), \\
            (Y_{n})_j & \text{else}.
        \end{cases}
    \end{equation}
    $X_n$ and $Y_n$ define two Markov chains and we denote their joint law by $\P_\eta$. Both are irreducible (because one can go to any other configuration with positive probability by changing one vertex at a time) and aperiodic (since $\P_\eta(X_n=X_{n+1})>0$), c.f also~\cite[Section 3.10.3]{Friedli:StatisticalMechanicsLattice2017}. Thus their law converges weakly to the stationary distribution, which we claim are $\mu_{\Lambda;\beta,\rho}^\eta$ and $\nu_{\Lambda;\beta,\rho,h}^\eta$ respectively.

    Indeed, if we consider two configurations $\sigma, \sigma'$ which differ only at $j \in \Lambda$, then the transition probability from $\sigma$ to $\sigma'$ for $X_n$ is
    \begin{equation}
        \frac1{\abs{\Lambda}} \frac{\mu_{\Lambda;\beta,\rho}^\eta(\sigma')}{\sum_{s=-S}^S\mu_{\Lambda;\beta,\rho}^\eta(\hat\sigma_s)}
    \end{equation}
    where $\hat\sigma_s(j)=s$ and $\hat\sigma_s(i)=\sigma(i)=\sigma'(i)$ for all $i \neq j$. Hence, one sees that $X_n$ is reversible with respect to $\mu_{\Lambda;\beta,\rho}^\eta$. A similar argument shows that $Y_n$ is reversible with respect to $\nu_{\Lambda;\beta,\rho,\lambda_0 H}^\eta$.
    
    In particular, for all $\sigma \in \Omega$, $\bar\sigma \in \Sigma$
    \begin{equation}\label{eq:weak_convergence_Xn_Yn}
        \mu_{\Lambda;\beta,\rho}^\eta\left(\sigma\right) = \lim_{n \to \infty} \P_\eta\left(X_n=\sigma\right)
        \quad\text{and} \quad
        \nu_{\Lambda;\beta,\rho,\lambda_0 H}^\eta\left(\bar\sigma\right) = \lim_{n \to \infty} \P_\eta\left(Y_n=\bar\sigma\right).
    \end{equation}
    We claim that there exists $\lambda_0 > 0$ such that $X_n \leq Y_n$ for all $n \in \N$. We prove the statement inductively. By definition it is true for $n=0$. Assume thus that $X_n \leq Y_n$. Then we must show that for any $s \in \set{-S, \dots, S}$ we have $(Y_{n+1})_{i_{n+1}} \geq s$ if $(X_{n+1})_{i_{n+1}} \geq s$. 
    By definition of the chains, this is equivalent to proving
    \begin{equation}\label{eq:nu_dominates_mu_pointwise}
        \mu_{\set{i_{n+1}};\beta,\rho}^{X_n}(\sigma_{i_{n+1}} \geq s) 
        \leq \nu_{\set{{i_{n+1}}};\beta,\rho,\lambda_0 H}^{Y_n}(\sigma_{i_{n+1}} \geq s)
    \end{equation}
    for any $s \in \set{-S, \dots, S}$. Since it is obvious for $s =  -S$, assume that $s \geq -S +1$.
    By the FKG inequality,
    \begin{equation}
        \mu_{\set{{i_{n+1}}};\beta,\rho}^{X_n}(\sigma_j \geq s) 
        \leq \mu_{\set{{i_{n+1}}};\beta,\rho}^{Y_n}(\sigma_{i_{n+1}} \geq s) 
        = \frac{\sum_{\sigma \in \set{s, \dots, S}}\expa{-\beta\sum_{j\sim {i_{n+1}}} W(\sigma-(Y_n)_j)-\rho\sigma^2}}{\sum_{\sigma \in \set{-S, \dots, S}}\expa{-\beta\sum_{j\sim {i_{n+1}}} W(\sigma-(Y_n)_j)-\rho\sigma^2}}.
    \end{equation}
    Using translation invariance of the lattice,
    \begin{equation}
        \frac{\sum_{\sigma \in \set{s, \dots, S}}\expa{-\beta\sum_{j\sim {i_{n+1}}} W(\sigma-(Y_n)_j)-\rho\sigma^2}}{\sum_{\sigma \in \set{-S, \dots, S}}\expa{-\beta\sum_{j\sim {i_{n+1}}} W(\sigma-(Y_n)_j)-\rho \sigma^2}} \\
        \leq \gamma \, \nu_{\set{{i_{n+1}}};\beta,\rho,0}^{Y_n}(\sigma_{i_{n+1}} \geq s),
    \end{equation}
    where
    \begin{equation}
        \begin{aligned}
            \gamma
            :=&\sup_{\eta \in \Sigma} \frac{\sum_{\sigma \in \set{-S+1, \dots, S}}\expa{-\beta\sum_{j\sim {0}} W(\sigma-\eta_j)-\rho\sigma^2}}{\sum_{\sigma \in \set{-S, \dots, S}}\expa{-\beta\sum_{j\sim {0}} W(\sigma-\eta_j)-\rho\sigma^2}} \\
            &= \mu_{\set{0};\beta,\rho}^+\left(\sigma_0 \geq -S+1\right) < 1,
        \end{aligned}
    \end{equation}
    where the equality follows again by FKG. Moreover, for all $\lambda > 0$ and $s \in \set{-S, \dots, S}$ it holds $e^{-\lambda H(s)} \geq e^{-\lambda}$ and thus
    \begin{equation}
    \begin{aligned}
        \nu_{\set{{i_{n+1}}};\beta,\rho,0}^{Y_n}(\sigma_{i_{n+1}} \geq s)
        &= \frac{\nu_{\set{{i_{n+1}}};\beta,\rho,0}^{Y_n}(\sigma_{i_{n+1}} \geq s)}{\nu_{\set{{i_{n+1}}};\beta,\rho,\lambda H}^{Y_n}(\sigma_{i_{n+1}} \geq s)} \nu_{\set{{i_{n+1}}};\beta,\rho,\lambda H}^{Y_n}(\sigma_{i_{n+1}} \geq s)\\
        &\leq e^\lambda \nu_{\set{{i_{n+1}}};\beta,\rho,\lambda H}^{Y_n}(\sigma_{i_{n+1}} \geq s).
    \end{aligned}
    \end{equation}
    Thus, defining $\lambda_0 := -\ln\gamma >0$ we obtain \eqref{eq:nu_dominates_mu_pointwise}.
    
    To prove \eqref{eq:dominate_M_by_M12}, note that by the weak convergence of the laws of $X_n$ and $Y_n$ \eqref{eq:weak_convergence_Xn_Yn} and because $X_n \leq Y_n$ $\P_\eta$-almost-surely, we have
    \begin{equation}
            \nu_{\Lambda;\beta,\rho,\lambda_0 H}^\eta(f) - \mu_{\Lambda;\beta,\rho}^\eta(f)
            = \lim_{n \to \infty} \sum_{{\sigma \in \Omega, \bar \sigma \in \Sigma}} \left(f(\bar\sigma)-f(\sigma)\right)\I_{\bar\sigma \geq \sigma} \P_\eta\left(X_n=\sigma, ~Y_n = \bar\sigma\right)
            \geq 0
    \end{equation}
    for any $f\colon \Omega \to \R$ non-decreasing.
\end{proof}

As for the measure $\mu_{\Lambda;\beta,\rho}^\eta$, it will be useful to consider a percolation model in an environment determined by $\nu_{\Lambda;\beta,\rho,h}^\eta$. Given $\sigma \in \Sigma$ and $\set{i,j} \in \mathcal E\left(\Z^d\right)$ define
\begin{equation}
    q(\sigma_i, \sigma_j) := 1 - e^{\beta W(\sigma_i - \sigma_j) - \beta W(\abs{\sigma_i - \frac12} + \abs{\sigma_j - \frac12})}.
\end{equation}
Recall that $\bar\Omega := \Omega \times \set{0,1}^{\E(\Z^d)}$. We define a new probability measure on $(\bar\Omega, \mathcal F \otimes \mathcal G)$ (where $\mathcal F$ and $\mathcal G$ are the cylinder $\sigma$-algebras on $\Omega$ and $\set{0,1}^{\Z^d}$ respectively) by
\begin{equation}
    \bar \nu_{\Lambda;\beta,\rho,h}^\eta (\sigma, \omega) := 
        \nu_{\Lambda;\beta,\rho,h}^\eta(\sigma) \prod_{\set{i,j} \in \mathcal E^b(\Lambda)} q(\sigma_i, \sigma_j)^{\omega_{ij}} \left(1-q(\sigma_i, \sigma_j)\right)^{1-\omega_{ij}}
\end{equation}
if $\omega_e = 1$ for all $e \in \mathcal E^b(\Lambda)^c$ and $\bar \nu_{\Lambda;\beta,\rho,h}^\eta (\sigma, \omega) := 0$ otherwise.
We denote the expectation with respect to $\bar \nu_{\Lambda; \beta;h}^\eta$ by $\bar \nu_{\Lambda;\beta;h}^\eta(\cdot)$.

\begin{lemma}\label{thm:>12_cluster_exponential_decay}
    Let $\beta > 0$, $\rho \in \R$ and $h \colon \set{-S+1,\dots,S} \to \R$.
    Assume that for all $s \in \set{1, \dots, S}$
    \begin{equation}
        h(s)-h(1-s)+\rho(2s - 1) > 0.
    \end{equation}
    There exist constants $c_1 \equiv c_1(S, h, \rho, W, d, \beta) > 0$ and $c_2 \equiv c_2(S, h, \rho) > 0$ such that
    for all $\Lambda \Subset \Z^d$, $x \in \Lambda$ and $N \in \N$ with $B_N(x) \subset \Lambda$ it holds
    \begin{equation}\label{eq:>12_cluster_exponential_decay}
        \bar\nu_{\Lambda;\beta,\rho,h}^+\left(\sigma_x \geq 1 \text{ and } x \overset{\text{open}}{\longleftrightarrow} \partial B_{N}(x)\right)
        \leq c_1 e^{-c_2 N}.
    \end{equation}
\end{lemma}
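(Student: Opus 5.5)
The plan is a reflection argument around the level $\tfrac12$, in the spirit of the proof of \zcref{thm:equivalence_no_percolation_uniqueness}. The spin space $\set{-S+1,\dots,S}$ is invariant under the reflection $R(s):=1-s$, and $R$ maps $\set{1,\dots,S}$ bijectively onto $\set{-S+1,\dots,0}$. The edge weights $q$ are tailored to $R$.

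First I record the local facts. For a closed edge, the combined weight is
\begin{equation}
    e^{-\beta W(\sigma_i-\sigma_j)}\bigl(1-q(\sigma_i,\sigma_j)\bigr)=e^{-\beta W(\abs{\sigma_i-\frac12}+\abs{\sigma_j-\frac12})},
\end{equation}
which is invariant when $R$ is applied to only one endpoint. For an open edge, the weight $e^{-\beta W(\sigma_i-\sigma_j)}q(\sigma_i,\sigma_j)$ is invariant when $R$ is applied to both endpoints, since $W$ is even. Moreover, $q(\sigma_i,\sigma_j)=0$ whenever $\sigma_i\ge1\ge\sigma_j+1$, because then $\abs{\sigma_i-\frac12}+\abs{\sigma_j-\frac12}=\sigma_i-\sigma_j$. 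Hence open clusters lie entirely on one side of $\tfrac12$.

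For the single-site terms, $(1-s)^2-s^2=1-2s$. Applying $R$ to a vertex with spin $s\ge1$ therefore multiplies the weight by
\begin{equation}
    \exp\bigl(h(s)-h(1-s)+\rho(2s-1)\bigr)\ge e^{\delta},
\end{equation}
where
\begin{equation}
    \delta:=\min_{1\le s\le S}\bigl(h(s)-h(1-s)+\rho(2s-1)\bigr)>0
\end{equation}
by the hypothesis. This suggests taking $c_2=\delta$, which depends only on $S,h,\rho$, consistent with the statement.

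The core step is as follows. Fix a finite connected $G=(V,E)$ with $x\in V$ and $V\subset\Lambda$, and consider the event $\set{C_x=G,\ \sigma_x\ge1}$. On this event all spins on $V$ are $\ge1$, all edges of $E$ are open, and all edges of $\partial_{\exterior}G$ are closed. Apply $R$ to the spins on $V$ and leave $\omega$ unchanged. By the invariances above, this is an injection onto $\set{C_x=G,\ \sigma_x\le0}$, and it multiplies the weight by at least $e^{\delta\abs V}$. Hence
\begin{equation}
    \bar\nu^+_{\Lambda;\beta,\rho,h}(C_x=G,\ \sigma_x\ge1)\le e^{-\delta\abs V}\,\bar\nu^+_{\Lambda;\beta,\rho,h}(C_x=G,\ \sigma_x\le0).
\end{equation}
The crucial point is that the events $\set{C_x=G}$ are disjoint. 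Summing over all $G$ with $\abs V\ge N$ therefore gives a bound $e^{-\delta N}$, with no lattice-animal entropy factor, so no Peierls-type counting is needed. This handles every configuration in which $C_x$ reaches $\partial B_N(x)$ but stays inside $\Lambda$.

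The main obstacle is the remaining case, in which $C_x$ reaches $\Lambda^c$. There the boundary spins are frozen at $S$ and the edges outside $\mathcal E^b(\Lambda)$ are forced open, so the flip is not allowed. I would first try truncating the cluster: let $D$ be the open cluster of $x$ using only edges inside $B_{N-1}(x)$, and flip $D$ after conditioning on the configuration outside $B_{N-1}(x)$. The issue is that edges from $D$ to $\partial B_{N-1}(x)$ may be open. My fallback is a first-hitting decomposition: write the event as a union over the vertex $y\in\partial B_{N}(x)$ first reached, and pay a bounded cost to close the exterior edges of the explored piece only near $y$. Such a cost can depend on $d,\beta,W$, which would explain why $c_1$ is allowed to depend on these parameters. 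Alternatively, one can exploit the $+$ boundary via FKG monotonicity of $\nu$ to compare with a setting in which the cluster is absorbed by the boundary, but this needs care, and I expect this boundary interaction to be where most of the work lies.
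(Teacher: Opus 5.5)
\textbf{There is a genuine gap: the case you postpone is the whole content of the lemma.} Your local analysis is correct and is the core of the paper's argument. That covers the reflection $s\mapsto 1-s$, the invariance of the closed and open edge weights, $q=0$ across the level $\tfrac12$, the gain $e^{\delta}$ per flipped site, and disjointness of $\set{C_x=G}$ removing any entropy factor. But the lemma must hold uniformly in $\Lambda\supset B_N(x)$, including $\Lambda=B_N(x)$. That is the case used later, e.g.\ in the proof of \zcref{thm:constrained Gibbs measures}. There, reaching $\partial B_N(x)$ essentially means reaching the frozen $+$ boundary, so your first case covers almost nothing.

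\textbf{Why the fallback fails.} A set can be reflected only if every edge of its exterior boundary is closed, since an open edge with exactly one reflected endpoint gets weight $q=0$. A first-hitting exploration leaves a piece that still has open edges to the rest of the cluster, so it cannot be flipped. Closing edges only near $y$ does not fix this. Suppose instead you flip the truncated cluster $D$, the open cluster of $x$ using only edges of $\mathcal E(B_N(x))$. Then you must close every open edge from $D\cap\partial B_N(x)$ to the outside, each at a cost up to $e^{\beta c_0}$ with $c_0=W(2S-1)-W(0)$. Now $\abs{D\cap\partial B_N(x)}$ can be comparable to $\abs{D}$, for example a path to the sphere plus a large patch on it, and $\beta c_0$ may exceed $\delta$. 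So this cost can swamp the gain $e^{-\delta\abs{D}}$. The FKG idea only reduces matters to $\Lambda=B_N(x)$, which is precisely the hard case.

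\textbf{The missing idea: choose the truncation radius adaptively.} Let $\mathcal T_L$ be the open cluster of $x$ using only edges of $\mathcal E(B_L(x))$. These sets increase in $L$, and $\abs{\mathcal T_L}\ge \abs{\mathcal T_{L-1}}+\abs{\mathcal T_L\cap\partial B_L(x)}$. Meanwhile $\abs{\mathcal T_L}\le\abs{B_N}$ grows only polynomially. So the boundary fraction $\abs{\mathcal T_L\cap\partial B_L}/\abs{\mathcal T_L}$ cannot be large at every $L\in[N/2,N]$, or the sizes would grow geometrically. The paper makes this quantitative with $2d$ blocks of radii and thresholds $N^{(k-1)/2}$. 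This yields a random $L_*\in[N/2,N]$ with $\sqrt N\,\abs{\mathcal T_{L_*}\cap\partial B_{L_*}}\le 4d\abs{\mathcal T_{L_*}}$.

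\textbf{Finishing the argument.} For fixed $L$ and $T=\mathcal T_L$, closing the at most $d\abs{T\cap\partial B_L}$ crossing edges costs at most $e^{4d^2\beta c_0\abs{T}/\sqrt N}$. For large $N$ this is beaten by the reflection gain $e^{-\delta\abs{T}}$. The events $\set{\mathcal T_L=T}$ are disjoint for fixed $L$, so summing over $T$ and over the $O(N)$ values of $L$ costs only a factor $N$. Together with $\abs{T}\ge L\ge N/2$, this gives the exponential bound. The rate comes out as a fraction of $\delta$ rather than $\delta$ itself, and $c_1$ absorbs the small-$N$ regime, which depends on $d,\beta,W$.
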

\begin{proof}
    Without loss of generality assume that $\frac N{4d} \in \N$. Otherwise, repeat the same argument for $4d\lfloor \frac N{4d}\rfloor$ and adapt the constants.

    Let $\omega \in \set{0,1}^{\mathcal E(\Z^d)}$ and $L \leq N$. Consider the configuration $\omega_{L}^x$ obtained from $\omega$ by closing all edges of $\E(\Z^d) \setminus \E(B_N(x))$. We define $\mathcal T_L^x(\omega)$ to be the open cluster of $x$ in $\omega_L^x$.
    Then,
    \begin{equation}
        \bar\nu_{\Lambda;\beta,\rho,h}^+\left(\sigma_x \geq 1 \text{ and } x \overset{\text{open}}{\longleftrightarrow} \partial B_{N}(x)\right)
        = \bar\nu_{\Lambda;\beta,\rho,h}^+\left(\sigma_x \geq 1 \text{ and } \mathcal T_N^x \cap \partial B_N(x) \neq \emptyset\right).
    \end{equation}
    To simplify the notation, we will drop the dependence on $x$ when writing $\mathcal T_N^x$ or $B_N(x)$ (and also when replacing $N$ by some other value) for the rest of this proof.

    The main idea is to perform a spin flip around $\frac12$ of the spins in $\mathcal T_N$. This would gain a factor proportional to $\abs{\mathcal T_N}$, because each spin $\geq 1$ is penalized by $h$. However, at the same time it might incur a penalty of the same order from the boundary edges $\partial_{\exterior}\mathcal T_N \cap \partial_{\exterior} B_N$, because these could be open and would need to be closed before the spin flip. The following argument deals with this issue.
    
    Partition the set $\set{N/2+1, \dots, N}$ into $2d$ intervals $I_1, \dots, I_{2d}$, each containing $\frac{N}{4d}$ points and let $k(L) \in \set{1,\dots, 2d}$ be the unique index such that $L \in I_{k(L)}$. Moreover, let
    \begin{equation}
        L_*(\omega) := \min \set{L \in \set{\tfrac N2 + 1 + \tfrac N{4d}, \dots, N} \mid \abs{\mathcal T_L(\omega) \cap \partial B_L} \leq 3^dd N^{\frac{k(L)-1}2}}.
    \end{equation}
    This is well defined, since $k(N)=2d$ and
    \begin{equation}
        \abs{\mathcal T_N(\omega) \cap \partial B_N}
        \leq \abs{\partial B_N} \leq 2d (2N+1)^{d-1} \leq 3^d d N^{\frac{k(N)-1}2},
    \end{equation}
    so we are taking the minimum over a finite, non-empty set. Further, $k(L_*) \geq 2$. Thus,
    \begin{equation}
        \begin{aligned}
            \abs{\mathcal T_{L_*}}
            = \sum_{L=0}^{L_*} \abs{\mathcal T_{L_*} \cap \partial B_L}
            \geq \sum_{L \colon k(L)=k(L_*)-1} \abs{\mathcal T_{L} \cap \partial B_L}
            \geq \frac{3^{d}d}{4d} N^{\frac{k(L_*)}2} 
            \geq \frac{\sqrt N}{4d} \abs{\mathcal T_{L_*} \cap \partial B_{L_*}},
        \end{aligned}
    \end{equation}
    where we used that for $L < L_*$ we have $\abs{\mathcal T_{L_*} \cap \partial B_L} \geq \abs{\mathcal T_{L} \cap \partial B_L} \geq 3^d d N^{\frac{k(L)-1}2}$. 
    Thus,
    \begin{equation}
        \begin{aligned}\label{eq:T_N_with_fixed_L*_as_sum_over_graphs}
            &\phntm\bar\nu_{\Lambda;\beta,\rho,h}^+\left(\sigma_x \geq 1 \text{ and } \mathcal T_N \cap B_N \neq \emptyset\right)\\
            &\leq \bar\nu_{\Lambda;\beta,\rho,h}^+\left(\sigma_x \geq 1 \text{ and } \abs{\mathcal T_{L_*}} \geq L_* \text{ and } \sqrt N\abs{\mathcal T_{L_*} \cap \partial B_{L_*}} \leq 4d \abs{\mathcal T_{L_*}}\right) \\
            &\leq \sum_L\sum_T \bar\nu_{\Lambda;\beta,\rho,h}^+\left(\sigma_x \geq 1 \text{ and } \mathcal T_L = T\right)
        \end{aligned}
    \end{equation}
    where the first sum is over all $L \in \set{\tfrac N2 + 1 + \tfrac N{4d}, \dots, N}$ and the second sum is over all subgraphs $T \subset B_L$ connected such that $x \in T$, $\abs{T} \geq L$ and $\sqrt N\abs{T \cap \partial B_L} \leq 4d \abs{T}$.
    For such a $T$ we have
    \begin{equation}
    \begin{aligned}
        &\phntm\mathcal Z_{\Lambda;\beta,\rho,h}^+ 
        \bar\nu_{\Lambda;\beta,\rho,h}^+\left(\sigma_x \geq 1 \text{ and } \mathcal T_L = T\right)\\
        &= \sum_{\substack{\sigma \in \Sigma_\Lambda^+\\ \sigma_y \geq 1  \text{ for }y \in T}} 
                \prod_{i \in \Lambda} e^{-h(\sigma_i)-\rho\sigma_i^2}
                \prod_{\substack{\set{i,j} \in \mathcal E^b(\Lambda)\\i,j\notin T}} e^{-\beta W(\sigma_i-\sigma_j)}
                \prod_{\substack{\set{i,j} \in T}} e^{-\beta W(\sigma_i-\sigma_j)}q(\sigma_i, \sigma_j) \\
                &\qquad\times \prod_{\substack{\set{i,j} \in \partial_{\exterior} T \cap \mathcal E\left(B_{L}\right)}} e^{-\beta W(\sigma_i-\sigma_j)} (1-q(\sigma_i, \sigma_j))
                \prod_{\substack{\set{i,j} \in \partial_{\exterior} T \setminus \mathcal E\left(B_{L}\right)}} e^{-\beta W(\sigma_i-\sigma_j)} \\
        &\leq e^{c_0\beta\abs{\partial_{\exterior} T \setminus \mathcal E\left(B_{L}\right))}} \sum_{\substack{\sigma \in \Sigma_\Lambda^+\\ \sigma_y \geq 1 \text{ for } ~y \in T}} 
                \prod_{i \in \Lambda} e^{-h(\sigma_i)-\rho\sigma_i^2}
                \prod_{\substack{\set{i,j} \in \mathcal E^b(\Lambda)\\i,j\notin T}} e^{-\beta W(\sigma_i-\sigma_j)}\\
                &\qquad\times \prod_{\substack{\set{i,j} \in T}} e^{-\beta W(\sigma_i-\sigma_j)}q(\sigma_i, \sigma_j)
                \prod_{\substack{\set{i,j} \in \partial_{\exterior} T}} e^{-\beta W\left(\abs{\sigma_i-\frac12}+\abs{\sigma_j-\frac12}\right)} \\
        &= e^{c_0\beta\abs{\partial_{\exterior} T \setminus \mathcal E\left(B_{L}\right))}} \mathcal Z_{\Lambda;\beta,\rho,h}^+\bar\nu_{\Lambda;\beta,\rho,h}^+\left(\sigma_x \geq 1 \text{ and } \mathcal T_L = T \text{ and } \omega_e = 0 \text{ for all } e \in \partial_{\exterior} T\right)\\
    \end{aligned}
    \end{equation}
    where
    \begin{equation}
        c_0 := W(2S-1) - W(0).
    \end{equation}
    If we now flip all spins in $T$ around $\frac 12$ and leave the spins outside $T$ as they are, then the interaction stays the same for every edge and we gain a factor of order $\abs{T}$ from the magnetic field. Precisely,
    \begin{equation}
    \begin{aligned}
        &\phntm \bar\nu_{\Lambda;\beta,\rho,h}^+\left(\sigma_x \geq 1 \text{ and } \mathcal T_L = T\right)\\
        &\leq e^{c_0\beta\abs{\partial_{\exterior} T \setminus \mathcal E\left(B_{L}\right))}-c_1 \abs{T}}
        \bar\nu_{\Lambda;\beta,\rho,h}^+\left(\sigma_x \leq 0 \text{ and } \mathcal T_L = T \text{ and } \omega_e = 0 \text{ for all } e \in \partial_{\exterior} T\right),
    \end{aligned}
    \end{equation}
    with
    \begin{equation}
        c_1 
        := \min_{s \in \set{1, \dots, S}} h(s)-h(1-s)+\rho(2s - 1) > 0
    \end{equation}
    by assumption. Therefore,
    \begin{equation}
        \bar\nu_{\Lambda;\beta,\rho,h}^+\left(\sigma_x \geq 1 \text{ and } \mathcal T_L = T\right)
        \leq \expa{-\abs{T} \left(c_1 - \frac{4d^2\beta c_0}{\sqrt N}\right) } \bar\nu_{\Lambda;\beta,\rho,h}^+\left(\sigma_x \leq 0 \text{ and } \mathcal T_L = T\right)
    \end{equation}
    where we used that $\abs{\partial_{\exterior} T \setminus \mathcal E\left(B_{L}\right)} \leq d \abs{T \cap \partial B_L} \leq \frac{4d^2}{\sqrt N} \abs{T}$.
    Choosing $N$ large enough and inserting into \eqref{eq:T_N_with_fixed_L*_as_sum_over_graphs} we obtain
    \begin{equation}\label{eq:T_N_final_bound_fixed_L*}
        \bar\nu_{\Lambda;\beta,\rho,h}^+\left(\sigma_x \geq 1 \text{ and } \mathcal T_N \cap B_N \neq \emptyset\right)
        \leq \frac N2\expa{-\frac{N c_1}4}
    \end{equation}
    where we used $\abs T \geq \frac N2$ and
    $\sum_L\sum_T \bar\nu_{\Lambda;\beta,h}^+\left(\mathcal T_L = T\right) \leq \frac N2$.
    Therefore, for all $N \in \N$,
    \begin{equation*}
        \bar\nu_{\Lambda;\beta,\rho,h}^+\left(\sigma_x \geq 1 \text{ and } x \overset{\text{open}}{\longleftrightarrow} \partial B_{N}(x)\right)
        \leq c_2 \expa{-\frac{ c_1}5 N},
    \end{equation*}
    for some constant $c_2(S,h,W,d,\beta)>0$.
\end{proof}

\subsection{The two dimensional case at all temperatures}
\label{sec:minority_percolation_2d}

The two dimensional case is proved by showing that in the measure $\nu_{\beta,\rho,h}^+$, with $h$ as in \zcref{thm:dominate_M_by_M12}, the magnetic field beats the boundary condition in the sense that spins $\leq 0$ are more likely to percolate that spins $\geq 1$. In two dimensions, this is enough to show that spins $\geq 0$ do not percolate at all (Zhang's argument), which by \zcref{thm:equivalence_no_percolation_uniqueness} proves that there is a unique infinite volume Gibbs measure $\mu_{\beta, \rho}$.

We extend \zcref{def:connections} as follows.
\begin{definition}
     For $x, y \in \Z^d$ we write 
    \begin{equation}\begin{split}
        \phantom{{}={}}
        \set{x \overset{\leq 0}{\longleftrightarrow} y}
        := \{&\sigma \in \Omega \mid \exists k \in \N, \, z_0, \dots z_k \in \Z^d \colon \\
        &z_0 = x, \,z_k=y, \,z_i \sim z_{i+1},  \,\sigma_{z_i} \leq 0 \text{ for all } i \}
    \end{split}\end{equation}
    for the configurations in which $x$ is connected to $y$ by a path of spins $\leq 0$. 
    For $A \subset \Z^d$ we define
    \begin{equation}
        \set{x \overset{\leq 0}{\longleftrightarrow} A} := \bigcup_{y \in A} \set{ x \overset{\leq 0}{\longleftrightarrow} y}.
    \end{equation}
\end{definition}

\begin{lemma}\label{thm:s-12-ising_minus_infty_cluster_dominate_plus}
    Let $\beta > 0$, $\rho \in \R$ and $h \colon \set{-S+1,\dots,S} \to \R$ such that for all $s \in \set{1, \dots, S}$
    \begin{equation}
        h(s)-h(1-s)+\rho(2s-1) > 0.
    \end{equation}
    Then,
    \begin{equation}\label{eq:s-12-ising_minus_infty_cluster_dominate_plus}
        \nu_{\beta,\rho,h}^+ \left(0\overset{+}{\longleftrightarrow} \infty\right)
        \leq \nu_{\beta,\rho,h}^+ \left(0\overset{\leq 0}{\longleftrightarrow} \infty\right).
    \end{equation}
\end{lemma}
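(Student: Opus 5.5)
We will prove \eqref{eq:s-12-ising_minus_infty_cluster_dominate_plus} by a spin flip around $\frac12$ applied to the infinite $+$ cluster, combined with \zcref{thm:>12_cluster_exponential_decay}. The idea is that an infinite cluster of spins $\geq 1$ should be paired with an infinite cluster of spins $\leq 0$, and the magnetic field $h$ only makes the latter more likely.

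\textbf{Step 1: reduction to open clusters in finite volume.} Edges between sites on opposite sides of $\frac12$ are a.s.\ closed in $\bar\nu$, since $q(a,b)=0$ whenever $(a-\frac12)(b-\frac12)<0$. We would like to express $\nu^+_{\beta,\rho,h}(0\overset{+}{\longleftrightarrow}\infty)$ through open clusters. The natural decomposition of $\set{0\overset{+}{\longleftrightarrow}\partial B_N}$ is:
\begin{itemize}
    \item the open cluster of some site $y$ on the $+$ path reaches distance comparable to $N$; or
    \item all open clusters on the path are small.
\end{itemize}
This is awkward, so we proceed differently. In $\Lambda=B_M$ with $M\gg N$, let $\mathcal C$ be the $\geq1$-cluster of $0$ inside $B_N$. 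Conditionally on $\mathcal C$ and on the spins outside, we perform the map $\sigma_i\mapsto 1-\sigma_i$ on $\mathcal C$ only. All edges inside $\mathcal C$ keep their weight. Edges on $\partial_{\exterior}\mathcal C$ join a site of $\mathcal C$ (value $\geq1$) to a site outside with value $\leq0$, except at $\partial B_N$; after the flip both endpoints are $\leq 0$. Their weight $W(\sigma_i-\sigma_j)$ changes, so a plain flip is not measure preserving.

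\textbf{Step 2: handling the boundary edges.} We fix the weight problem with the percolation representation. On a closed edge the weight $e^{-\beta W(|\sigma_i-\frac12|+|\sigma_j-\frac12|)}$ is invariant under reflecting one endpoint around $\frac12$. So we flip the open cluster $\mathcal T$ of $0$ in $\bar\nu^+$ rather than the spin cluster.
\begin{itemize}
    \item On the event that $\mathcal T$ does not reach $\partial B_N$, the reflection $\sigma\mapsto 1-\sigma$ on $\mathcal T$ is a bijection. It maps $\set{\sigma_0\geq1}$ to $\set{\sigma_0\leq0}$ and multiplies the weight by $e^{-\sum_{i\in\mathcal T}(h(\sigma_i)-h(1-\sigma_i)+\rho(\sigma_i^2-(1-\sigma_i)^2))}\le1$, because $\sigma_i^2-(1-\sigma_i)^2=2\sigma_i-1$ matches the hypothesis.
    \item The event that $\mathcal T$ reaches $\partial B_N$ with $\sigma_0\ge1$ has probability at most $c_1e^{-c_2N}$ by \zcref{thm:>12_cluster_exponential_decay}.
\end{itemize}
This comparison still concerns single spins; we need to transport connectivity.

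\textbf{Step 3: transporting connectivity.} We apply the reflection to the event $\set{0\overset{+}{\longleftrightarrow}\partial B_N}$ cluster by cluster, using the exploration below.
\begin{itemize}
    \item Explore the $\geq1$ spin cluster $C$ of $0$ in $B_N$. It is a union of open clusters $\mathcal T_1,\dots,\mathcal T_k$ that are mutually separated by closed edges.
    \item Up to an error of $|B_N|c_1e^{-c_2N'}$, each $\mathcal T_j$ meeting $B_{N/2}$ has diameter less than $N'=N/4$, by a union bound with \zcref{thm:>12_cluster_exponential_decay}.
    \item Reflecting all $\mathcal T_j$ contained in $B_{N}$ simultaneously around $\frac12$ (the closed boundary edges make their union's flip weight-preserving up to the $h,\rho$ factor) turns a $\geq1$ path from $0$ to $\partial B_{N/2}$ into a $\leq0$ path. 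This defines an injective-enough map from $\set{0\overset{+}{\longleftrightarrow}\partial B_{N/2}}$ into $\set{0\overset{\le0}{\longleftrightarrow}\partial B_{N/2}}$ that does not increase weight.
\end{itemize}
Thus $\bar\nu^+_{B_M}(0\overset{+}{\leftrightarrow}\partial B_{N/2})\le\bar\nu^+_{B_M}(0\overset{\le0}{\leftrightarrow}\partial B_{N/2})+CN^de^{-cN}$. We then let $M\to\infty$; both events are local. Finally we let $N\to\infty$, using monotone limits.

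\textbf{Main obstacle.} The main obstacle is Step 3. The reflected set must be chosen as a measurable function of the configuration that is invariant under the reflection, so that the map is a bijection between events; the cleanest choice is the union of open clusters intersecting $C$, which is determined by $\omega$. We must also avoid boundary effects where clusters touch $\partial B_N$; these are exactly what the exponential decay bound controls. If the multi-cluster bijection proves delicate, we would instead reflect only the clusters met along a fixed-order exploration path, which gives the same conclusion.
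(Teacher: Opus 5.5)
Your Steps 1 and 2 assemble the right ingredients:
\begin{itemize}
\item closed edges carry weight $e^{-\beta W(\abs{\sigma_i-\frac12}+\abs{\sigma_j-\frac12})}$, which is invariant under reflecting one endpoint about $\frac12$;
\item reflecting a whole open cluster $C_l$ that avoids $\Lambda^c$ changes only the $h,\rho$ factor, by $e^{\pm J_l}$ with $J_l:=\sum_{i\in C_l}J(\abs{\sigma_i-\frac12}+\frac12)>0$;
\item long open clusters of spins $\geq1$ are controlled by \zcref{thm:>12_cluster_exponential_decay}.
\end{itemize}
Step 3, however, has a genuine gap, and it is the one you flag yourself. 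The map $\Phi$ that reflects the open clusters forming the $\geq1$ spin cluster of $0$ is not injective. In the image, the $\leq0$ cluster of $0$ consists of the reflected clusters together with every adjacent cluster that was already $\leq0$, and nothing records which were reflected. Fix an image $y$. Its preimages correspond to families $\mathcal S$ of open clusters that are connected in the cluster adjacency graph, contain $C_0$, realize the crossing, and have only $\leq0$ neighbouring clusters. Hence $\bar\nu(\Phi^{-1}(y))=\bar\nu(y)\sum_{\mathcal S}\prod_{C_l\in\mathcal S}e^{-J_l}$. When open clusters are mostly singletons and $h$ is small, this sum essentially counts lattice animals and can be far larger than $1$, so no inequality follows. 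Your fallback, reflecting only the clusters along an explored path, has the same defect, with the sum now running over paths. The hypothesis gives only $J>0$, not $J$ large, so no ``injective-enough'' partial flip can work.

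The missing idea is to replace the weight-increasing injection by conditional independence plus monotonicity.
\begin{enumerate}
\item Work on the event $\mathcal A$ that every open cluster meeting $B_N$ stays inside $\Lambda$. Its complement is small by \zcref{thm:>12_cluster_exponential_decay}, because $\leq0$ clusters can never reach the $+S$ boundary.
\item Condition on $\omega$, on $\abs{\sigma-\frac12}$ on the union of these clusters, and on $\sigma$ elsewhere. All exterior edges of the clusters are closed, and open edges join only sites on the same side of $\frac12$. So the sign of $\sigma-\frac12$ is constant on each cluster and independent across clusters, and $C_l$ is $\geq1$ with probability $(1+e^{J_l})^{-1}<\frac12$.
\item This is Bernoulli site percolation on the cluster graph with parameters $p_l<\frac12$. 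The two events become an open, respectively closed, connection from $C_0$ to the clusters meeting $\partial B_N$.
\item Then $P_p(\text{open})\leq P_{1-p}(\text{open})=P_p(\text{closed})$. The inequality is monotonicity of increasing events in $p$. The equality is the flip of \emph{all} clusters meeting $B_N$, which is a genuine bijection because this family is $\omega$-measurable and flip-invariant.
\end{enumerate}
Averaging over the conditioning, letting $\Lambda\uparrow\Z^d$ and then $N\to\infty$, gives the lemma. This is the paper's argument.
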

\begin{proof}
    Let $N \in \N$ and let $\Lambda \subset \Z^d$ such that $B_{2N} \subset \Lambda$.  Recall \zcref{def:connections}: $\mathscr C(\omega)$ denotes the open clusters of $\omega$, $C_x$ is the unique open cluster containing $x \in \Z^d$, and $\mathscr C_\Delta(\omega)$ is the set of all open clusters intersecting $\Delta \subset \Z^d$.
    Define the event
    \begin{equation}
        \mathcal A := \set{\forall x \in B_N \colon C_x \subset \Lambda} \in \mathcal G.
    \end{equation}
    With the help of \zcref{thm:>12_cluster_exponential_decay} we can estimate
    \begin{equation}\label{eq:probability_no_12_circuit}
    \begin{aligned}
        \bar\nu_{\Lambda;\beta,\rho,h}^+ \left(\mathcal A^c\right)
        &\leq \sum_{x \in \partial B_{N}} \bar\nu_{\Lambda;\beta,\rho,h}^+\left(C_x \cap \Lambda^c \neq \emptyset, ~\sigma_x \geq 1\right) + \bar\nu_{\Lambda;\beta,h}^+\left(C_x \cap \Lambda^c \neq \emptyset, ~\sigma_x \leq 0\right) \\
        &\leq \sum_{x \in \partial B_{N}} \bar\nu_{\Lambda;\beta,\rho,h}^+\left(x \overset{\text{open}}{\longleftrightarrow} B_{N}(x) \text{ and }\sigma_x \geq 1\right) \\
        &\leq c_1 N^{d-1} e^{-c_2 N},
    \end{aligned}
    \end{equation}
    where we used that $\bar\nu_{\Lambda;\beta,\rho,h}^+\left(C_x \cap \Lambda^c \neq \emptyset, ~\sigma_x \leq 0\right) = 0$, because otherwise there exists an open edge $\set{u,v} \in \mathcal E(C_x(\omega))$ such that $(\sigma_u-\frac12)(\sigma_v-\frac12) \leq 0$. But in that case $q(\sigma_u, \sigma_v) = 0$, therefore such a configuration has probability zero.

    We decompose the event $\mathcal A$ as follows:
	\begin{equation}\label{eq:decomposition_clusters_BN_in_Lambda}
		\mathcal A = \bigcup_{m \in \N} \left(\bigcup_{C_0, \dots, C_m} \set{\mathscr C_{B_N} = \set{C_0, \dots, C_m}}\right),
	\end{equation}
	where $C_0, \dots, C_m \subset \Lambda$ are pairwise (vertex-) disjoint subgraphs of $\Lambda$ satisfying $B_N \subset \bigcup_{l=0}^m C_l =: \bar C$. We always assume the labels are chosen such that $0 \in C_0$.
	
	Fix $\omega \in \mathcal A$ such that $\mathscr C_{B_N}(\omega) = \set{C_0, \dots, C_m}$, $C_l \subset \Lambda$ for all $l \in \set{0,\dots, m}$. Let $\set{i,j} \in \E^b(\Lambda)$. To see how this edge contributes to the measure $\bar\nu_{\Lambda;\beta,\rho,h}^+$, we need to consider the following three possible cases:
	\begin{itemize}
		\item If $i,j \in \Z^d \setminus \bar C$ the contribution is $e^{-W(\sigma_i-\sigma_j)}q(\sigma_i, \sigma_j)^{\omega_{ij}} (1-q(\sigma_i, \sigma_j))^{1-\omega_{ij}}$.
		\item If $i \in C_l$ and $\omega_{ij}=1$ the edge belongs to the cluster $C_l$ and thus also $j \in C_l$. Hence, the contribution is
		\begin{equation}
			e^{-W(\sigma_i-\sigma_j)} q(\sigma_i, \sigma_j) 
			= e^{-W(\abs{\sigma_i - \frac12} - \abs{\sigma_j-\frac12})}\left(1- e^{W(\abs{\sigma_i - \frac12} - \abs{\sigma_j-\frac12})-W(\abs{\sigma_i - \frac12} + \abs{\sigma_j-\frac12})}\right)
		\end{equation}
        if $\sign (\sigma_i - \frac12) = \sign (\sigma_j - \frac12)$ and $0$ otherwise.
		\item If $i \in C_l$ and $\omega_{ij} = 0$ the contribution is
		\begin{equation}
			e^{-W(\sigma_i-\sigma_j)} (1-q(\sigma_i, \sigma_j))
			=e^{-W(\abs{\sigma_i - \frac12} + \abs{\sigma_j-\frac12})}.
		\end{equation}
	\end{itemize}
	In particular, conditioning on  $\mathscr C_{B_N}=\set{C_0, \dots, C_m}$, $\abs{\sigma_i - \frac12}$ for all $i \in \bar C$ and on $\sigma_i$ for all $i \in \Z^d \setminus \bar C$, the signs of $\sigma - \frac12$ on each cluster $C_l$ are almost surely constant and the signs on two different clusters are independent. 
	
	More precisely, let $\kappa \in \set{0,1}^{\bar C}$ and define $\pi(s) = 1$ if $s \geq \frac12$ and $\pi(s)=0$ otherwise. Then, writing $\sigma' = \sigma - \frac12$,
	\begin{equation}
		\begin{aligned}
			&\phntm\bar\nu_{\Lambda;\beta,\rho,h}^+\left(\forall i \in \bar C \colon \pi(\sigma_i) = \kappa_i \mid \mathscr C_{B_N} = \set{C_0, \dots C_m}, \,\abs{\sigma'_{\bar C}}, \, \sigma_{\Z^d \setminus \bar C}\right)\\
			&= \begin{cases}
				\prod_{l=0}^m \left(\left(\sum_{\bar\kappa \in \set{0,1}}\prod_{i \in C_l} e^{-J(\abs{\sigma'_i}+\frac12)\bar\kappa}\right)^{-1} 
				\prod_{i \in C_l} e^{-J(\abs{\sigma'_i}+\frac12)\kappa_i}\right) &\text{$\kappa$ const. on every $C_l$},\\
				0 & \text{else},
			\end{cases}
		\end{aligned}
	\end{equation}
	where for $s \in \set{1, \dots, S}$
	\begin{equation}
		J(s) := h(s) - h(1-s) + \rho(2s-1) > 0,
	\end{equation}
    and where for $\Delta \subset \Z^d$ we write $\sigma_\Delta$ for the configuration restricted to $\Delta$.
    
	Consider a graph structure $G=(V,E)$ on $V=\set{0, \dots, m}$, where $\set{i,j} \in E$ if and only if there exist $x \in C_i$ and $y \in C_j$ such that $\set{x,y} \in \E(\Z^d)$. 
	Let
	\begin{equation}
		p_l(\abs{\sigma'}) := \left(1 + e^{2\sum_{i \in C_l} J\left(\abs{\sigma'_i}+\tfrac12\right)}\right)^{-1} < \frac12.
	\end{equation}
	and let $\P_{p(\abs{\sigma'})}^{G}$ be the Bernoulli site percolation measure on $G$, with a site $l \in \set{0, \dots, m}$ being open with probability $p_l(\abs{\sigma'})$.
	For any $\kappa \in \set{0,1}^m$, we have
	\begin{equation}
			\P_{p(\abs{\sigma'})}^{G}(\kappa)
			= \bar\nu_{\Lambda;\beta,\rho,h}^+\left(\forall l \in \set{0, \dots, m}, \,i \in C_l \colon \pi(\sigma_i) = \kappa_l \mid \mathscr C_{B_N} = \set{C_0, \dots C_m}, \,\abs{\sigma'_{\bar C}}, \, \sigma_{\Z^d \setminus \bar C}\right).
	\end{equation}
	Therefore, an open site in $\P_{p(\abs{\sigma'})}^{G}$ corresponds to an open cluster with spins $\geq 1$ in $\bar\nu_{\Lambda;\beta,\rho,h}^+$, while a closed site corresponds to a cluster with spins $\leq 0$. Let $\Delta := \set{l \in \set{0, \dots,m} \colon C_l \cap\partial B_N \neq \emptyset}$. Then,
	\begin{equation}\label{eq:site_perc_open}
		\begin{aligned}
			\bar\nu_{\Lambda;\beta,\rho,h}^+\left(0 \overset{+}{\longleftrightarrow} \partial B_N \mid \mathscr C_{B_N} = \set{C_0, \dots C_m}, \,\abs{\sigma'_{\bar C}}, \, \sigma_{\Z^d \setminus \bar C}\right)
			= \P^G_{p(\abs{\sigma'})}(0 \overset{\text{open}}{\longleftrightarrow} \Delta). 
		\end{aligned}
	\end{equation}
	Similarly,	
	\begin{equation}\label{eq:site_perc_closed}
		\begin{aligned}
			\bar\nu_{\Lambda;\beta,\rho,h}^+\left(0 \overset{\leq 0}{\longleftrightarrow} \partial B_N \mid \mathscr C_{B_N} = \set{C_0, \dots C_m}, \,\abs{\sigma'_{\bar C}}, \, \sigma_{\Z^d \setminus \bar C}\right)
			= \P^G_{p(\abs{\sigma'})}(0 \overset{\text{closed}}{\longleftrightarrow} \Delta). 
		\end{aligned}
	\end{equation}
	where $0 \overset{\text{open}}{\longleftrightarrow} \Delta$ (respectively $0 \overset{\text{closed}}{\longleftrightarrow} \Delta$) refers to the existence of a path from $0$ to $\Delta$ in $G$, such that all vertices along this path are open (respectively closed; with the convention $0 \overset{\text{open}}{\longleftrightarrow} \Delta$ requiring the start and end point of such a path to be open, and the same for ``closed''). Since $p(\abs{\sigma'}) < \frac12$, monotonicity in $p$ for Bernoulli (site) percolation implies that
	\begin{equation}
		\begin{aligned}
			\P^G_{p(\abs{\sigma'})}(0 \overset{\text{open}}{\longleftrightarrow} \Delta)
			\leq\P^G_{1-p(\abs{\sigma'})}(0 \overset{\text{open}}{\longleftrightarrow} \Delta)
            = \P^G_{p(\abs{\sigma'})}(0 \overset{\text{closed}}{\longleftrightarrow} \Delta).
		\end{aligned}
	\end{equation}
	In particular, summing over all possible $\mathscr C_{B_N}$ from the decomposition \eqref{eq:decomposition_clusters_BN_in_Lambda} and using \eqref{eq:site_perc_open} and \eqref{eq:site_perc_closed}, 
	\begin{equation}
		\bar\nu_{\Lambda;\beta,\rho,h}^+\left(0 \overset{+}{\longleftrightarrow} \partial B_N \mid \mathcal A\right) \\
		\leq \bar\nu_{\Lambda;\beta,\rho,h}^+\left(0 \overset{\leq 0}{\longleftrightarrow} \partial B_N \mid \mathcal A\right). 
	\end{equation}
    Together with \eqref{eq:probability_no_12_circuit} this yields
    \begin{equation}
        \nu_{\Lambda;\beta,\rho,h}^+\left(0 \overset{+}{\longleftrightarrow} \partial B_N\right)
        \leq \nu_{\Lambda;\beta,\rho,h}^+\left(0 \overset{\leq 0}{\longleftrightarrow} \partial B_N\right)
        + c_1 N^{d-1} e^{-c_2 N}.
    \end{equation}
    The claim \eqref{eq:s-12-ising_minus_infty_cluster_dominate_plus} follows by letting $\Lambda \uparrow \Z^d$ first and $N \to \infty$ second. 
\end{proof}

\begin{proposition}\label{thm:s-12-ising_no_minority_percolation}
    Assume that $d=2$. Let $\beta > 0$, $\rho \in \R$ and $h \colon \set{-S+1,\dots,S} \to \R$ such that for all $s \in \set{1, \dots, S}$
    \begin{equation}\label{eq:cond_h_rho}
        h(s)-h(1-s)+\rho(2s-1) > 0.
    \end{equation}
    Then
    \begin{equation}\label{eq:s-12-ising_no_minority_percolation}
        \nu_{\beta,\rho, h}^+ \left(0\overset{+}{\longleftrightarrow} \infty\right) = 0
    \end{equation}
\end{proposition}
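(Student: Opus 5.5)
We argue by contradiction, using a Zhang-type argument in the plane. Suppose $\theta := \nu_{\beta,\rho,h}^+(0\overset{+}{\longleftrightarrow}\infty) > 0$. By \zcref{thm:s-12-ising_minus_infty_cluster_dominate_plus}, spins $\le 0$ then also percolate, with $\nu_{\beta,\rho,h}^+(0\overset{\leq 0}{\longleftrightarrow}\infty)\ge\theta>0$.

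The properties of $\nu^+:=\nu_{\beta,\rho,h}^+$ that we need are the following.
\begin{enumerate}
    \item Translation invariance and ergodicity. These follow as for $\mu^\pm$ in \zcref{thm:mu+_mu-_maximal}, since $\nu^+$ is the limit of the decreasing sequence of $+$-boundary-condition measures for an FKG specification.
    \item The FKG inequality. The single-site term $h(\sigma_i)+\rho\sigma_i^2$ does not spoil the lattice condition, by \zcref{remark:FKG_for_convex_W}.
    \item The finite energy property, which is clear for a nearest-neighbour model with finite spin space.
\end{enumerate}
Under these properties, the Burton--Keane argument shows that the infinite $+$ cluster (spins $\ge1$) is unique almost surely, and likewise the infinite $\le0$ cluster. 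Both are nearest-neighbour site events on $\Z^2$, so the standard Burton--Keane proof applies verbatim.

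Next we run Zhang's argument. Take a box $B_n$ so large that $\nu^+(B_n\overset{+}{\longleftrightarrow}\infty)>1-\epsilon^4$ and $\nu^+(B_n\overset{\le0}{\longleftrightarrow}\infty)>1-\epsilon^4$. Let $A^+_\ell$ be the event that $\partial B_n$ is joined to $\infty$ by a $+$ path in $\Z^2\setminus B_n$ starting from the $\ell$-th side, and define $A^{\le0}_\ell$ similarly.
\begin{enumerate}
    \item Both families of events are monotone: $A^+_\ell$ is increasing and $A^{\le0}_\ell$ is decreasing.
    \item The four rotations of $B_n$ are related by lattice symmetries, and $\nu^+$ is invariant under rotations and reflections of $\Z^2$.
\end{enumerate}
Applying the square-root trick (a consequence of FKG) to each family gives $\nu^+(A^+_\ell)\ge 1-\epsilon$ for every $\ell$, and the same for $A^{\le0}_\ell$. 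We then want the event that the top and bottom sides have disjoint infinite $+$ arms while the left and right sides have infinite $\le0$ arms. This event is an intersection of an increasing and a decreasing event, so FKG does not give its probability directly. Instead we use a union bound: the probability is at least $1-4\epsilon>0$.

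On this event, uniqueness of the infinite $+$ cluster means the two $+$ arms are joined outside $B_n$. In the plane, that connection separates the left $\le0$ arm from the right one, or else encloses one of them in a bounded region. Either way we contradict uniqueness of the infinite $\le0$ cluster. The usual topological care applies here: we can work with the $*$-connectivity version of one of the two spin classes, or simply note that both classes live on the same square lattice and a $+$ path blocks $\le0$ nearest-neighbour paths. Hence $\theta=0$.

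The main obstacle is justifying Burton--Keane uniqueness for $\nu^+$. This requires checking that $\nu^+$ is ergodic and translation invariant as an infinite-volume limit; it is, being the maximal Gibbs measure for a specification satisfying FKG. It also requires checking the insertion tolerance used to merge three infinite clusters into a trifurcation, which we get by resampling finitely many spins, at cost bounded below by a positive constant.

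We also need the planar duality step. Here it matters that $+$ and $\le0$ are complementary site classes on $\Z^2$. A $+$ nearest-neighbour circuit blocks $\le0$ nearest-neighbour paths, but the reverse requires $*$-paths. The arrangement above (a $+$ connection separating two $\le0$ arms) only uses the direction that holds, so Zhang's configuration closes without passing to $*$-connectivity.
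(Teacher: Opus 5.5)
Your route is the paper's: assume the $+$ cluster percolates, get percolation of spins $\le 0$ from the preceding lemma, obtain uniqueness of both infinite clusters by Burton--Keane, and conclude with Zhang's argument. The square-root trick and union bound you add correctly supply the positivity of the four-arm event, which the paper only asserts.

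One step fails as written, though. Uniqueness of the infinite $+$ cluster gives a $+$ path joining the top and bottom arms \emph{somewhere} in $\Z^2$, not outside $B_n$. In fact, on the four-arm event no such path can avoid $B_n$. The reason is that $B_n$, together with the two $\le 0$ arms (which are necessarily disjoint), separates the top arm from the bottom arm in the plane. So the ``separates or encloses'' dichotomy you draw from a connection outside $B_n$ rests on a false premise, and the planar contradiction has to be obtained differently.

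The standard repair, which the paper compresses into ``the clusters would need to intersect'', uses both uniqueness statements at once.
\begin{enumerate}
    \item Join the $+$ arms by a $+$ path $\gamma$ and the $\le 0$ arms by a $\le 0$ path $\delta$.
    \item Take $R$ so large that $B_{R-1}$ contains $\gamma$, $\delta$, and each arm up to its attachment point. Follow each arm until it first hits $\partial B_R$ after that point.
    \item The four resulting arcs are pairwise disjoint. Arcs of the same class are disjoint too, since otherwise they would trap an infinite arm of the other class in a bounded region.
    \item The arcs cross the annulus $B_R\setminus B_n$, so they hit $\partial B_R$ in the same alternating order in which they leave $\partial B_n$.
    \item Hence the $+$ curve in $B_R$ between the two $+$ exit points (through $\gamma$) must meet the $\le 0$ curve between the two $\le 0$ exit points (through $\delta$). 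This is impossible, because nearest-neighbour paths of the planar graph $\Z^2$ on disjoint vertex sets are disjoint curves.
\end{enumerate}

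This also corrects your last paragraph. Blocking is symmetric between the two classes for nearest-neighbour paths: a $\le 0$ nearest-neighbour circuit blocks $+$ nearest-neighbour paths just as well as the reverse. $*$-connectivity would only be needed for the duality statement that the absence of a crossing of one class forces a $*$-crossing of the other, and Zhang's argument never uses that statement.
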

\begin{proof}
    Since the proof is a classic argument in two dimensional percolation, we only give a sketch here.
    First, by ergodicity of the measure $\nu_{\beta,\rho,h}^+$ we have that
    \begin{equation}
        \nu_{\beta,\rho,h}^+\left(\exists \text{infinite cluster of spins $\geq 1$}\right), 
        \,\nu_{\beta,\rho,h}^+\left(\exists \text{infinite cluster of spins $\leq 0$}\right) \in \set{0,1}.
    \end{equation}
    Thus, if we assume that 
    \begin{equation}
        \nu_{\beta,\rho, h}^+ \left(0\overset{+}{\longleftrightarrow} \infty\right) > 0,
    \end{equation}
    there exists an infinite cluster of spins $\geq 1$ $\nu_{\beta,\rho, h}^+$-almost surely. By \zcref{thm:s-12-ising_minus_infty_cluster_dominate_plus} there thus also exists an infinite cluster of spins $\leq 0$ $\nu_{\beta,\rho, h}^+$-almost surely.

    Second, the Burton--Keane argument (cf.~\cite{Burton:DensityUniquenessPercolation1989,Grimmett:Percolation1999}) adapts to our setting and shows that in fact both the $\geq 1$ and $\leq 0$ cluster are unique.

    Third, Zhang's argument (cf.~\cite{Grimmett:Percolation1999, Bollobas:PercolationDualLattices2008}) shows that this is impossible in two dimensions. Roughly, there is $N \in \N$ large enough, such that the event that the unique infinite $\geq 1$ cluster touches the left and right boundaries of the box $B_N$, and the unique infinite $\leq 0$ cluster touches the top and bottom boundaries of $B_N$, has a positive probability. However, this is geometrically impossible in two dimensions, because otherwise the clusters would need to intersect.
\end{proof}

A key lemma to proving exponential decay is the following.
\begin{lemma}[{\cite[Lemma 3.1]{Duminil-Copin:SharpPhaseTransition2019}}]
\label{thm:exponential_decay_differential_inequality}
    Consider a pointwise converging sequence of increasing differentiable functions $f_n \colon  [0, T_0] \to [0, M]$ satisfying
    \begin{equation}
        f_n' \geq \frac{n}{\Sigma_n} f_n
    \end{equation}
    for all $n \geq 1$, where $\Sigma_n := \sum_{k=0}^{n-1} f_k$. Then there exists $T_1 \in [0,T_0]$ such that
    \begin{enumerate}[label=(P\arabic*)]
        \item for any $t < T_1$, there exists $c_t > 0$ such that $f_n(t) \leq M\expa{-c_t n}$ for all $n \in \N$.\label{item:exponential_diff_inequality_exp_decay}
        \item for any $t > T_1$, $f = \lim_{n \to \infty} f_n$ satisfies $f(t) \geq t - T_1$.
    \end{enumerate}
\end{lemma}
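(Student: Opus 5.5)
The plan is to follow the original argument of Duminil-Copin--Raoufi--Tassion. We define the threshold
\begin{equation}
    T_1 := \sup\set{t \in [0,T_0] \mid \limsup_{n\to\infty} \frac{\ln \Sigma_n(t)}{\ln n} < 1 },
\end{equation}
with the convention $\sup\emptyset = 0$. Since each $f_k$ is non-decreasing in $t$, so is each $\Sigma_n$. Hence the set above is an interval containing $[0,T_1)$. The two properties are then proved separately, below and above $T_1$, in both cases by integrating the differential inequality in $t$.

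\emph{Subcritical side (P1).} Fix $t<T_1$ and pick $t<t''<t'<T_1$. At $t'$ there is $\delta>0$ with $\Sigma_n(t')\le n^{1-\delta}$ for all large $n$. By monotonicity the same bound holds on $[t,t']$, so there $f_n' \ge n^{\delta} f_n$. Integrating $(\ln f_n)'\ge n^\delta$ over $[t'',t']$ gives
\begin{equation}
    f_n(t'') \le M \exp\left(-n^\delta (t'-t'')\right),
\end{equation}
which is a stretched-exponential bound. In particular $\sum_k f_k(t'')<\infty$, so $\Sigma_n(s)\le \Sigma:=\sum_k f_k(t'')$ for every $s\le t''$ and every $n$. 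On $[t,t'']$ we then have $f_n'\ge (n/\Sigma) f_n$, and integrating once more yields
\begin{equation}
    f_n(t)\le M\exp\left(-n(t''-t)/\Sigma\right).
\end{equation}
This is (P1) with $c_t=(t''-t)/\Sigma$. Points where $f_n$ vanishes are harmless: there the bound is trivial, and by monotonicity $f_n$ then also vanishes at all smaller times.

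\emph{Supercritical side (P2).} This is the main step. We use the log-weighted Cesàro averages
\begin{equation}
    \mathcal T_n := \frac{1}{\ln n}\sum_{i=1}^n \frac{f_i}{i}.
\end{equation}
Since $f_i\to f$ pointwise, $\mathcal T_n(t)\to f(t)$ for every $t$, because $\sum_{i\le n} 1/i \sim \ln n$. Differentiating and applying the hypothesis termwise gives
\begin{equation}
    \mathcal T_n' \ge \frac{1}{\ln n}\sum_{i=1}^n \frac{f_i}{\Sigma_i}.
\end{equation}
We have $f_i = \Sigma_{i+1}-\Sigma_i$, and $x \ge \ln(1+x)$ gives $f_i/\Sigma_i \ge \ln(\Sigma_{i+1}/\Sigma_i)$, so the sum telescopes:
\begin{equation}
    \mathcal T_n' \ge \frac{\ln \Sigma_{n+1} - \ln \Sigma_1}{\ln n}.
\end{equation}
The index $i=0$ should be discarded, or $\Sigma_1=f_0>0$ assumed; otherwise one starts the sum from the first index with $\Sigma_i>0$. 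For $s>T_1$ the definition of $T_1$ gives $\limsup_n \ln\Sigma_{n+1}(s)/\ln n \ge 1$. To make this a uniform lower bound, use monotonicity again. For any $T_1<s_0<s$, and any $\eps>0$, there are infinitely many $n$ with $\Sigma_{n+1}(s_0)\ge n^{1-\eps}$, and the same bound then holds on all of $[s_0,t]$. Along such $n$, integrating over $[s_0,t]$ gives
\begin{equation}
    \mathcal T_n(t)\ \ge\ \mathcal T_n(t)-\mathcal T_n(s_0)\ \ge\ (t-s_0)\left(1-\eps - \tfrac{\ln\Sigma_1(t)}{\ln n}\right).
\end{equation}
Letting $n\to\infty$ along this subsequence, then $\eps\to0$ and $s_0\downarrow T_1$, yields $f(t)\ge t-T_1$.

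The main obstacle is precisely this last step. The definition of $T_1$ only gives a $\limsup$, so the lower bound on $\Sigma_n$ is available only along a subsequence, and it must hold uniformly in $s$ to be integrated. Both issues are resolved by the monotonicity of $f_k$ in $t$ together with the convergence of $\mathcal T_n(t)$ along the full sequence. The remaining care points are the degenerate cases $\Sigma_i=0$ and the choice $\sup\emptyset=0$.
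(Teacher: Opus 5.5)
Your proposal is correct and is essentially the original Duminil-Copin--Raoufi--Tassion argument. It uses the same threshold defined via $\limsup_n \ln\Sigma_n/\ln n$ and the same two-step bootstrap below $T_1$ (first a stretched-exponential bound, then summability of $f_k$, then an exponential bound). Above $T_1$ it uses the same log-weighted Cesàro averages $\mathcal T_n$ with the telescoping bound $f_i/\Sigma_i \ge \ln(\Sigma_{i+1}/\Sigma_i)$. The paper itself gives no proof of this lemma and simply cites that source.
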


In our application of the lemma above, the functions $f_n$ will be probabilities of the events $0 \overset{+}{\longleftrightarrow} \partial B_n$, where the argument will indicate the strength of the magnetic field. Since we have seen that there is no percolation for any positive magnetic field (cf.~\zcref{thm:s-12-ising_no_minority_percolation}), we are always `subcritical', so that \zcref[noname]{item:exponential_diff_inequality_exp_decay} will imply exponential decay. Next, we define the probability measure we consider for the definition of the $f_n$.

\begin{definition}
    We define the following probability measure on $\set{0,1}^{\Lambda}$:
    \begin{equation}
        \hat\nu_{\Lambda;\beta,\rho,h}^+(\kappa)
        := \nu_{\Lambda;\beta,\rho,h}^+(\pi(\sigma) = \kappa),
        \quad \kappa \in \set{0,1}^{\Lambda},
    \end{equation}
    where for $i \in \Lambda$ we defined $\pi(\sigma)_i = 1$ if $\sigma_i \geq 1$ and $\pi(\sigma)_i=0$ otherwise. 
\end{definition}

\begin{lemma}\label{thm:covariance_bound_decision_tree}
    Let $\Lambda \Subset \Z^d$ connected and $N \in \N$ such that $B_N \subset \Lambda$. 
    Then
    \begin{equation}
        \sum_{x \in B_N} \widehat\Cov_{\Lambda;\beta,\rho,h}^+(\I_{0 \overset{+}{\leftrightarrow}\partial B_N}, \,\kappa_x) 
        \geq \frac{N}{4d Q_N} \hat\nu_{\Lambda;\beta,\rho,h}^+\left(0 \overset{+}{\longleftrightarrow}\partial B_N\right) \left(1- \hat\nu_{\Lambda;\beta,\rho,h}^+\left(0\overset{+}{\longleftrightarrow}\partial B_N\right)\right),
    \end{equation}
    where
    \begin{equation}
        Q_N := \max_{x \in B_N} \sum_{k=0}^{N-1} \hat\nu_{\Lambda;\beta,\rho,h}^+\left(x \overset{+}{\longleftrightarrow}\partial B_k(x)\right),
    \end{equation}
    and where $\widehat\Cov_{\Lambda;\beta,\rho,h}^+$ denotes the covariance with respect to $\hat\nu_{\Lambda;\beta,\rho,h}^+$.
\end{lemma}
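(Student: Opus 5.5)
This is the OSSS step of the Duminil-Copin--Raoufi--Tassion argument. I would apply the OSSS inequality to the measure $\hat\nu^+_{\Lambda;\beta,\rho,h}$ on $\set{0,1}^\Lambda$. The first thing to check is monotonicity. The $\kappa$-marginal of $\nu^+$ is a projection of an FKG measure under the non-decreasing map $\pi$, but that alone does not give the lattice condition. I would therefore use the generalized OSSS inequality for weakly monotonic measures on $\set{0,\dots,S}^V$ (the extension of \cite{Gunaratnam:ExistenceTricriticalPoint2024a} in \zcref{appendix:OSSS}), or apply OSSS directly to $\nu^+$ on the full spin space. In the latter case the revealment concerns spins $\sigma_x$, and the target covariance is with the increasing function $\I_{\sigma_x\ge1}=\kappa_x$. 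Either way, the inequality I aim for is
\begin{equation}
    \Var(f)\le \sum_{x\in\Lambda}\delta_x(T)\,\widehat\Cov^+_{\Lambda;\beta,\rho,h}(f,\kappa_x),
\end{equation}
where $f=\I_{0\overset{+}{\leftrightarrow}\partial B_N}$ and $T$ is any decision tree (randomized algorithm) determining $f$. For the covariance to appear in this form, the tree should reveal at each step whether $\sigma_x\ge1$, i.e.\ the value of $\kappa_x$.

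Next I would choose the standard family of algorithms. For $k\in\set{1,\dots,N}$, the algorithm $T_k$ explores the $+$-clusters (clusters of sites with $\kappa=1$) that intersect $\partial B_k$. It first reveals every site of $\partial B_k$, then repeatedly reveals neighbours of already revealed sites with $\kappa=1$, staying inside $B_N$. Since $0$ and $\partial B_N$ lie on opposite sides of $\partial B_k$, any $+$-path from $0$ to $\partial B_N$ crosses $\partial B_k$, so $T_k$ determines $f$. A site $x\in B_N$ is revealed only if $x\in\partial B_k$, or $x$ is adjacent to a $+$-cluster meeting $\partial B_k$. Up to the neighbour adjustment this is contained in $\set{x\overset{+}{\leftrightarrow}\partial B_{|k-|x|_\infty|}(x)}$, giving
\begin{equation}
    \delta_x(T_k)\le 2d\,\hat\nu^+\left(x\overset{+}{\longleftrightarrow}\partial B_{|k-|x|_\infty|}(x)\right),
\end{equation}
where the factor $2d$ accounts for revealing $x$ via one of its neighbours, whose cluster reaches distance one less. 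Averaging over $k$ uniform in $\set{1,\dots,N}$, each distance $j$ arises from at most two values of $k$, so
\begin{equation}
    \frac1N\sum_k\delta_x(T_k)\le \frac{4d}{N}\,Q_N.
\end{equation}

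Finally, OSSS applied with the averaged revealments gives $f(1-f)\le \frac{4dQ_N}{N}\sum_x\widehat\Cov(f,\kappa_x)$, with the sum restricted to $B_N$ because the algorithm only queries $B_N$. This is the claim.

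I expect the main obstacle to be the first step: justifying OSSS for the non-product measure, specifically the weak monotonicity of $\nu^+$ or of its projection $\hat\nu^+$ under conditioning on the revealed information. Two facts are needed. First, conditioning $\nu^+$ on partial information $\set{\kappa_y=\epsilon_y}$ must still give the monotone conditional distribution for the next queried site; this holds because such conditioning only restricts each spin to an interval, which preserves the FKG lattice condition. Second, the covariances produced by the general OSSS inequality must be the ones with $\kappa_x$. I would handle both by invoking the appendix result and verifying its hypothesis via \zcref{remark:FKG_for_convex_W}.
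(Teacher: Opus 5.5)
Your route is the paper's: OSSS for the $\kappa$-marginal $\hat\nu^+_{\Lambda;\beta,\rho,h}$, with the Duminil-Copin--Raoufi--Tassion exploration from $\partial B_k$, averaged over $k$. Your worry about the projection is unfounded. The map $\pi$ commutes with coordinatewise $\vee$ and $\wedge$, so with $A=\pi^{-1}(\kappa)$ and $B=\pi^{-1}(\kappa')$ the four functions theorem gives $\hat\nu^+(\kappa)\hat\nu^+(\kappa')=\nu^+(A)\nu^+(B)\le\nu^+(A\vee B)\,\nu^+(A\wedge B)\le\hat\nu^+(\kappa\vee\kappa')\,\hat\nu^+(\kappa\wedge\kappa')$. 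This is exactly the paper's one-line argument. Your interval-conditioning argument, plus Holley's inequality to compare different $\epsilon$, gives the same monotonicity, after which the DRT inequality for monotonic measures on $\set{0,1}^\Lambda$ applies directly. The appendix theorem as stated is the wrong tool. Its trees reveal full spin values and it produces $\Cov(f,\sigma_x)$, which is $\ge\widehat\Cov(f,\kappa_x)$ by FKG, so it yields a weaker inequality than the claim.

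The genuine gap is your revealment bound $\delta_x(T_k)\le 2d\,\hat\nu^+(x\overset{+}{\longleftrightarrow}\partial B_{\abs{k-\abs{x}_\infty}}(x))$, which is false.

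\emph{Why it fails.} In site percolation $T_k$ must reveal sites with $\kappa_x=0$: all of $\partial B_k$, and the outer boundary of the explored clusters. The $+$-connection event of $x$ itself says nothing about these sites. Already for $x\in\partial B_k$ you have $\delta_x(T_k)=1$, while your right-hand side equals $2d\,\hat\nu^+(\kappa_x=1)$.

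\emph{The correct estimate.} It goes through neighbours: $\delta_x(T_k)\le\I_{x\in\partial B_k}+\sum_{y\sim x,\,y\in B_N}\hat\nu^+(y\overset{+}{\longleftrightarrow}\partial B_k)$. Averaging over $k$ gives $\frac1N\sum_k\delta_x(T_k)\le\frac{1+4dQ_N}{N}$. So what you can prove is the lemma with $4dQ_N$ replaced by $1+4dQ_N$.

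\emph{The extra $1$ cannot be removed.} Unlike DRT's bond setting, the $k=0$ term of $Q_N$ is $\hat\nu^+(\kappa_x=1)$, not $1$. In fact the inequality with exactly $N/(4dQ_N)$ fails for strong fields. Let $\theta=\hat\nu^+(0\overset{+}{\longleftrightarrow}\partial B_N)>0$. The left side is at most $\abs{B_N}\theta$. Since $Q_N\le N\max_x\hat\nu^+(\kappa_x=1)$, the right side is at least $\theta(1-\theta)/(4d\max_x\hat\nu^+(\kappa_x=1))$. For $h=\lambda H$ with $N$ fixed and $\lambda\to\infty$, finite energy gives $\max_x\hat\nu^+(\kappa_x=1)=O(e^{-\lambda})$, so the right side eventually exceeds the left. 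No argument can give the constant as stated; the paper's ``adapts almost verbatim'' hides this.

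\emph{Why this still suffices for the application.} There $h=(2\lambda_0-\lambda)H$ with $\lambda\in[0,2\lambda_0]$. Finite energy then gives $\hat\nu^+(\kappa_x=1)\ge c>0$ uniformly in $\Lambda$ and $x$, hence $Q_N\ge c$ and $1+4dQ_N\le(4d+c^{-1})Q_N$. This only changes the constant $c_0$ in the differential inequality.
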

\begin{proof}
    The four functions theorem (cf.~\cite{Ahlswede:InequalityWeightsTwo1978}) implies that $\hat\nu_{\Lambda;\beta,\rho,h}^+$ satisfies the lattice FKG condition and is therefore \emph{monotonic} (in the sense of \cite{Duminil-Copin:SharpPhaseTransition2019}). The proof of~\cite[Lemma 3.2]{Duminil-Copin:SharpPhaseTransition2019} adapts almost verbatim to site percolation measures, cf.~also the construction of~\cite[Lemma 8.2]{Gunaratnam:ExistenceTricriticalPoint2024a}.
\end{proof}

\begin{proof}[Proof of \zcref{thm:main_two_dimensions}]
    By \zcref{thm:equivalence_no_percolation_uniqueness,thm:two_point_correlation_relation_percolation}, it suffices to show that 
    \begin{equation}\label{eq:2d_exp_decay_open_cluster_size}
        \bar\mu_{\beta,\rho}^+\left(0 \overset{\text{open}}{\longleftrightarrow} \partial B_N\right) \leq c_1 e^{-c_2 N}.
    \end{equation}
    holds for positive constants independent of $N$. 
    By \zcref{thm:bound_clustersize_open_by_plus} and \zcref{thm:dominate_M_by_M12} there exists $\lambda_0(\rho)>0$ independent of $N$, such that for all $\Lambda \Subset \Z^d$ with $B_{2N} \subset \Lambda$
    \begin{equation}
        \bar\mu_{\Lambda;\beta,\rho}^+\left(0 \overset{\text{open}}{\longleftrightarrow} \partial B_N\right) 
        \leq \nu_{\Lambda;\beta,\rho, \lambda_0 H}^+\left(0 \overset{+}{\longleftrightarrow} \partial B_N\right)
        \leq \nu_{B_{2N};\beta,\rho, \lambda_0 H}^+\left(0 \overset{+}{\longleftrightarrow} \partial B_N\right),
    \end{equation}
    where $H(k) = \I_{k \geq 1}$ and where the second inequality follows by FKG.
    Taking $\Lambda \uparrow \Z^d$ yields
    \begin{equation}\label{eq:bound_connectivity_by_hat_nu}
        \bar\mu_{\beta,\rho}^+\left(0 \overset{\text{open}}{\longleftrightarrow} \partial B_N\right)
        \leq \nu_{B_{2N};\beta,\rho, \lambda_0 H}^+\left(0 \overset{+}{\longleftrightarrow} \partial B_N\right).
    \end{equation}
    Let 
    \begin{equation}\label{eq:defJ}
        J(\rho) := \lambda_0(\rho) + \rho(2S-1)
    \end{equation} 
    We set $\rho_* := \inf \set{t \in (0, \infty) \mid J(-t) \leq 0} > 0$, because $J$ is continuous (since $\lambda_0$ depends continuously on $\rho$) and $J(0) > 0$. 
    We show that the right hand side of \eqref{eq:bound_connectivity_by_hat_nu} is bounded by $e^{-c_{\lambda_0} N}$ for all $\rho \in (-\rho_*, \infty)$.
    
    For $\lambda \in [0, 2\lambda_0]$ define
    \begin{equation}
        \theta_N(\lambda) := \hat\nu_{B_{2N};\beta,\rho,(2\lambda_0-\lambda) H}\left(0 \overset{+}{\longleftrightarrow} \partial B_N\right)
        \quad\text{and}\quad
        \Sigma_N(\lambda) := \sum_{k=0}^{N-1} \theta_k(\lambda).
    \end{equation}%
    A computation shows (for $N \geq 1$)
    \begin{equation}
        \theta_N'(\lambda) 
        = \sum_{x \in B_{2N}} \widehat\Cov_{B_{2N};\beta,\rho,(2\lambda_0-\lambda)H}\left(\I_{0 \overset{+}{\leftrightarrow} \partial B_N}, \kappa_x\right)
        \geq \sum_{x \in B_N} \widehat\Cov_{B_{2N};\beta,\rho,(2\lambda_0-\lambda)H}\left(\I_{0 \overset{+}{\leftrightarrow} \partial B_N}, \kappa_x\right),
    \end{equation}
    by FKG, since $\I_{0 \overset{+}{\longleftrightarrow} \partial B_N}$ and $\kappa_x$ are increasing functions.
    Using \zcref{thm:covariance_bound_decision_tree} we obtain
    \begin{equation}
        \theta_N'(\lambda) 
        \geq \frac{N}{4d \max_{x \in B_N} \sum_{k=0}^{N-1} \hat\nu_{B_{2N};\beta,\rho,(2\lambda_0-\lambda)H}^+\left(x \overset{+}{\leftrightarrow} \partial B_k(x)\right)} \theta_N(\lambda)\left(1- \theta_N(\lambda)\right).
    \end{equation}
    By FKG, we have for $x \in B_N$
    \begin{equation}
    \begin{split}
        &\phntm\sum_{k=0}^{N-1} \hat\nu_{B_{2N};\beta,\rho,(2\lambda_0-\lambda)H}^+\left(x \overset{+}{\longleftrightarrow}\partial B_k(x)\right) \\
        &\leq 2 \sum_{k=0}^{N/2} \hat\nu_{B_{2N};\beta,\rho,(2\lambda_0-\lambda)H}^+\left(x \overset{+}{\longleftrightarrow}\partial B_k(x) \mid \kappa_j=1 \text{ for all } j \in B_{2N} \setminus B_{2k}(x)\right) \\
        &= 2 \sum_{k=0}^{N/2} \hat\nu_{B_{2k};\beta,\rho,(2\lambda_0-\lambda)H}^+\left(0 \overset{+}{\longleftrightarrow}\partial B_k\right) \\
        &\leq 2 \Sigma_N(\lambda).
    \end{split}
    \end{equation}
    Therefore,
    \begin{equation}
        \theta_N'(\lambda) 
        \geq \frac{N}{8d \Sigma_N(\lambda)} \theta_N(\lambda)\left(1- \theta_N(\lambda)\right)
        \geq \frac{c_0 N}{\Sigma_N(\lambda)} \theta_N(\lambda),
    \end{equation}
    where $c_0 = \frac1{8d} (1-\theta_1(2\lambda_0))>0$. Thus, $f_N(\lambda) := \theta_N(\lambda)/c_0$ satisfies the differential inequality of \zcref{thm:exponential_decay_differential_inequality}. 
    
    We claim that there exists $\delta > 0$ such that $f_N(\lambda) \to 0$ as $N \to \infty$ for all $\lambda \in [0,\lambda_0+\delta]$. Then \zcref{thm:exponential_decay_differential_inequality} implies that there exists a constant $c_{\lambda_0} > 0$ such that
    \begin{equation}
        \theta_N(\lambda_0)=c_0 f_N(\lambda_0) \leq e^{-c_{\lambda_0}N},
    \end{equation}
    yielding \eqref{eq:2d_exp_decay_open_cluster_size} via \eqref{eq:bound_connectivity_by_hat_nu}.

    Fix $\lambda \in [0,2\lambda_0]$. Let $\varepsilon > 0$ and $N_\eps \in \N$ such that
    \begin{equation}
        \nu_{\beta,\rho,(2\lambda_0 -\lambda)H}^+\left(0 \overset{+}{\longleftrightarrow}\partial B_{N_\eps}\right)
        \leq \nu_{\beta,\rho,(2\lambda_0 -\lambda)H}^+\left(0 \overset{+}{\longleftrightarrow} \infty\right) + \eps.
    \end{equation}
    Then,
    \begin{equation}
        \lim_{N \to \infty} c_0f_N(\lambda) 
        \leq \lim_{N \to \infty} \nu_{B_{2N};\beta,\rho,(2\lambda_0 -\lambda)H}^+\left(0 \overset{+}{\longleftrightarrow} \partial B_{N_\eps}\right)
        \leq \nu_{\beta,\rho,(2\lambda_0 -\lambda)H}^+\left(0 \overset{+}{\longleftrightarrow} \infty\right) + \eps.
    \end{equation}
    By \zcref{thm:s-12-ising_no_minority_percolation} $\nu_{\beta,\rho,(2\lambda_0 -\lambda)H}^+\left(0 \overset{+}{\longleftrightarrow} \infty\right)=0$ whenever \eqref{eq:cond_h_rho} holds for $h=(2\lambda_0 -\lambda)H$. When $\rho \geq 0$ it holds for all $\lambda \in (0,2\lambda_0)$ so we can set $\delta = \lambda_0$. 
    If $\rho \in (-\rho_*, 0)$, we use that $J(\rho)>0$, where $J$ was defined in \eqref{eq:defJ}. Thus \eqref{eq:cond_h_rho} holds true for some $\delta > 0$ small enough. Therefore indeed $f_N(\lambda) \to 0$ as $N \to \infty$ for all $\lambda \in (0, \lambda_0 + \delta)$.
\end{proof}

\begin{remark}
    We remark that the exponential decay of
    \begin{equation}
        \bar\mu_{B_{2N};\beta,\rho}^+\left(0 \overset{\text{open}}{\longleftrightarrow} \partial B_N\right) \leq e^{-cN}
    \end{equation}
    implies the finite volume estimates
    \begin{equation}
        \mu_{B_{2N};\beta,\rho}^+\left(\sigma_0\right) \leq e^{-cN}
        \qquad\text{and}\qquad
        \abs{\mu_{B_{2N};\beta,\rho}^+\left(\sigma_0 \sigma_x\right) - \mu_{B_{2N};\beta,\rho}^+\left(\sigma_0\right)\mu_{B_{2N};\beta,\rho}^+\left(\sigma_x\right)} \leq e^{-cN}
    \end{equation}
    for all $\abs{x} \geq N$, as can be seen from \zcref{thm:two_point_correlation_relation_percolation} and the proof of \zcref{thm:equivalence_no_percolation_uniqueness}.
\end{remark}

\subsection{The general case at low temperatures}
\label{sec:minority_percolation_low_temperature}

At low temperatures, we can strengthen the statement of \zcref{thm:main_low_temperature} to finite volumes. This follows by proving that minority spins do not percolate in the measure 
$\nu_{\Lambda;\beta,\rho,h}^+$ for low enough temperatures, because even the smallest magnetic field strongly favors spins $\leq 0$. The proof is based on a Peierls-type argument and the main difficulty is controlling the influence of the boundary conditions, which work in the opposite direction of the magnetic field. This is done using \zcref{thm:>12_cluster_exponential_decay}.

\begin{proposition}\label{thm:no_minority_percolation_low_temp}
    Let $\beta > 0$, $\rho \in \R$ and $h \colon \set{-S+1,\dots,S} \to \R$ such that for all $s \in \set{1, \dots, S}$
    \begin{equation}\label{eq:cond_h_rho2}
        h(s)-h(1-s)+\rho(2s-1) > 0.
    \end{equation}
    Moreover, assume $W(1)>W(0)=0$.
    There exist constants
    $c_1 \equiv c_1(S, h, W, d, \beta) > 0$ and
    $c_2 \equiv c_2(S, h) > 0$ 
    such that the following holds:
    
    If $\beta > \frac{4(\ln d+\ln 8)}{W(1) d}$ and $N,M \in \N$, then
    \begin{equation}\label{eq:exp_decay_minority_percolation_low_temperature}
        \nu_{B_{N+2M};\beta,\rho,h}^+ \left(0 \overset{+}{\longleftrightarrow} \partial B_N\right)
        \leq c_1(N+M+1)^{d-1} e^{-c_2 M} + \frac{\lambda(\beta)^N}{1-\lambda(\beta)},
    \end{equation}
    where $\lambda(\beta) := e^{-\frac\beta2 W(1) +  \frac{2(\ln d+\ln 8)}d} < 1$. 
\end{proposition}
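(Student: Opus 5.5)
We split according to whether the open clusters of $\bar\nu^+_{B_{N+2M};\beta,\rho,h}$ meeting $B_{N+M}$ stay inside the big box. Let $\mathcal A$ be the event that every open cluster $C_x$, $x\in B_{N+M}$, with $\sigma_x\ge1$ is contained in $B_{N+2M}$. Exactly as in \eqref{eq:probability_no_12_circuit}, a union bound over $x\in\partial B_{N+M}$ together with \zcref{thm:>12_cluster_exponential_decay} (applied with radius $M$) gives $\bar\nu^+(\mathcal A^c)\le c_1(N+M+1)^{d-1}e^{-c_2M}$. Note that $c_2$ there depends only on $(S,h,\rho)$, matching the claimed dependence. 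This produces the first term of \eqref{eq:exp_decay_minority_percolation_low_temperature}.

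On $\mathcal A$ we run a Peierls argument. If $0\overset{+}{\longleftrightarrow}\partial B_N$, then there is a path $\gamma$ of spins $\ge1$ from $0$ to $\partial B_N$, of length at least $N$. Every vertex of $\gamma$ lies in some open cluster with positive sign contained in the box, and these clusters are finite. Conditionally on the cluster structure and on $|\sigma-\tfrac12|$, as in the proof of \zcref{thm:s-12-ising_minus_infty_cluster_dominate_plus}, the signs of $\sigma-\tfrac12$ on distinct clusters are independent. Each cluster $C_l$ is positive with probability $p_l\le (1+e^{J_{\min}|C_l|})^{-1}$, where $J_{\min}:=\min_s(h(s)-h(1-s)+\rho(2s-1))>0$. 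This bound is not by itself enough at low temperature when $J_{\min}$ is tiny, so the gain must come from $\beta$.

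The intended source of the energy gain is this. Consider the union $U$ of positive clusters met by $\gamma$, and the set of closed edges between $U$ and the vertices of spin $\le0$ (or neighbouring clusters of opposite sign). Each such boundary edge $\{i,j\}$ with $\sigma_i\ge1\ge 1>0\ge\sigma_j$ carries weight $e^{-\beta W(|\sigma_i-\frac12|+|\sigma_j-\frac12|)}\le e^{-\beta W(1)}$. Flipping $U$ around $\tfrac12$ preserves all edge weights and is favoured by $h$. Instead, I would compare with the configuration in which a lattice animal's worth of spins are changed to lower each boundary edge's cost.

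Concretely, I would use a Peierls counting over the edge boundary of the connected set $U\ni0$ reaching $\partial B_N$. Such a set has edge boundary of size at least of order $N$, and on $\mathcal A$ it is enclosed inside the box. The map $\sigma\mapsto$ (flip $U$ around $\tfrac12$, then shift it down) is designed so that each boundary edge gains a factor $e^{-\beta W(1)/2}$, splitting the cost between the two sides. The number of connected sets with $k$ boundary edges containing $0$ is at most $(8d)^{2k/d}$-type, using $|U|\le$ boundary$/d$-type isoperimetry bounds. This yields a geometric series $\sum_{k\ge N}\lambda(\beta)^k=\lambda^N/(1-\lambda)$ with $\lambda(\beta)=e^{-\frac\beta2W(1)+\frac{2(\ln d+\ln 8)}d}$, which is $<1$ exactly under the stated bound on $\beta$.

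The main obstacle is constructing the injective-enough map that realizes a gain $e^{-\beta W(1)/2}$ per boundary edge while keeping the spin space $\{-S+1,\dots,S\}$, the field $h$, and the mass $\rho$ under control. This is where convexity of $W$ and the reflection around $\tfrac12$ must be combined. The contour count with the exponent $2(\ln d+\ln 8)/d$ also has to be justified.
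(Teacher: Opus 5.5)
There is a genuine gap. The estimate that produces the term $\lambda(\beta)^N/(1-\lambda(\beta))$ is never established, and the route you sketch cannot give it.

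\textbf{Why a spin map on $U$ cannot work.} Your $U$ is a union of open clusters, so in $\bar\nu^+$ every edge of $\partial_{\exterior}U$ is closed. Writing $\sigma'=\sigma-\tfrac12$, such an edge carries weight $e^{-\beta W(\abs{\sigma_i'}+\abs{\sigma_j'})}$. This depends only on $\abs{\sigma'}$, and it is already maximal when $\abs{\sigma_i'}=\abs{\sigma_j'}=\tfrac12$. So no map acting on the spins of $U$ (flip, or flip and shift) can gain a uniform factor per boundary edge. The flip around $\tfrac12$ gains the field factor $e^{J_{\min}\abs{U}}$ and nothing from the edges. That is exactly what \zcref{thm:>12_cluster_exponential_decay} exploits, and it cannot give an $h$-independent term. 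An absolute bound $e^{-\beta W(1)}$ on a boundary weight is not a Peierls ratio. In the pure spin measure, flipping a maximal connected component of $\set{\sigma\ge1}$ around $\tfrac12$ does gain $e^{\beta W(1)}$ per boundary edge by convexity, but that component may have holes.

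\textbf{The counting step.} Connected sets $U\ni0$ with $k$ boundary edges are not exponentially many in $k$, because holes can be placed in superexponentially many ways. Also, $\abs{U}\lesssim k/d$ is false. The bound $\mathcal S(k)\le\exp\bigl(2\tfrac{\ln d+\ln 8}{d}k\bigr)$ of Balister--Bollob\'as counts only sets whose complement is connected. Filling the holes fixes the counting but breaks any spin-flip argument, since flipping the filled set is penalized by the field on the holes.

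\textbf{The missing idea: charge the cost to closed edges, not spins.}
\begin{itemize}
\item Let $\Delta$ be the union of the open clusters of the points of $B_N$ joined to $0$ by a path of spins $\ge1$ inside $B_N$, with all holes filled. Every edge of $\partial_{\exterior}\Delta$ is closed, $\Delta$ has connected complement, and on the event $0\overset{+}{\longleftrightarrow}\partial B_N$ one has $\abs{\partial_{\exterior}\Delta}\ge N$.
\item Work on $\mathcal A_1\cap\mathcal A_2$. Here $\mathcal A_1$ says $\Delta\subset B_{N+M}$, and $\mathcal A_2$ says no open cluster meeting $\partial B_{N+M+1}$ reaches $B_{N+2M}^c$. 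Both complements are bounded by \zcref{thm:>12_cluster_exponential_decay}. Your single event $\mathcal A$ does not keep $\Delta$ away from $\partial B_{N+2M}$. That is needed so that the clusters on both sides of a contour edge are not forced positive by the boundary condition.
\item Fix $e=\set{u,v}\in\partial_{\exterior}\Delta$, and condition on $\omega$ off $e$ and on $\abs{\sigma'}$. Opening $e$ multiplies the weight by $\gamma=e^{\beta W(\abs{\sigma_u'}+\abs{\sigma_v'})-\beta W(\abs{\sigma_u'}-\abs{\sigma_v'})}-1$, and $\gamma\ge e^{\beta W(1)}-1$ by convexity.
\item Merging the two clusters, each free to take either sign relative to $\tfrac12$, costs at most a factor $2$ in the sign sum. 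Indeed, let $H_\pm(C)$ denote the field-and-mass weight of a cluster $C$ with sign $\pm$. Then $H_+<H_-$ on both clusters by \eqref{eq:cond_h_rho2}, so $\sum_\kappa H_\kappa(C_u)H_{-\kappa}(C_v)\le\sum_\kappa H_\kappa(C_u)H_\kappa(C_v)$.
\item Hence $e$ is open with conditional probability at least $\gamma/(\gamma+2)\ge1-e^{-\beta W(1)/2}$. Iterating over the contour gives $\bar\nu^+(\Delta=V\mid\mathcal A_1\cap\mathcal A_2)\le e^{-\beta W(1)k/2}$ whenever $\abs{\partial_{\exterior}V}=k$.
\item Since $\Delta$ has connected complement, the Balister--Bollob\'as count now applies legitimately and yields exactly the geometric series.
\end{itemize}
The $\tfrac12$ in $\lambda(\beta)$ comes from this factor $2$, via $2/(e^x+1)\le e^{-x/2}$. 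It does not come from splitting the cost between the two sides of an edge.
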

\begin{remark}
    We note that $\beta \geq \frac{c \ln d}d$ is the best order possible for this theorem. Indeed, there are rigorous results for the Ising model on a two dimensional graph~\cite{Aizenman:PercolationMinoritySpins1987}, which show that in high enough dimensions both spins percolate in the regime $\beta \in (\frac cd, \frac{c\ln d}d)$. Existence of this phenomenon already in $d=3$ is supported e.g. by~\cite{Jiang:PercolationBothSigns2025}, where percolation of both spins is proved on a particular three dimensional lattice.
\end{remark}
\begin{proof}
    To simplify the notation let $\Lambda := B_{N+2M}$. For $\sigma \in \Sigma$ consider the configuration $\sigma^N$ obtained from $\sigma$ by changing $\sigma^N_i = -S+1$ for any $i \in \Z^d \setminus B_N$ and leaving $\sigma^N_i = \sigma_i$ for all $i \in B_N$. Moreover, let
    \begin{equation}
        \mathcal V_N^+(\sigma) := \set{x \in B_N \mid \sigma^N \in \set{0 \overset{+}{\longleftrightarrow}x}}
    \end{equation}
    be all points in $B_N$ which are connected to the origin by a path of spins $\geq 1$ which uses only vertices in $B_N$. Let
    \begin{equation}
        \Delta'(\sigma, \omega) := \bigcup_{x \in \mathcal V^+_N} C_x(\omega)
    \end{equation}
    where the union is to be understood as subset of $Z^d$. In particular, since $V^+_N$ is connected, so is $\Delta'$. 

    If $\omega \in \set{0,1}^{\E(\Z^d)}$ such that $\omega_e = 1$ for all $e \in \E(\Z^d) \setminus \E^b(\Lambda)$, then $\Delta'(\sigma,\omega)$ can be infinite, but there are at most finitely many connected components $C_1(\sigma, \omega), \dots, C_{m(\omega)}(\sigma, \omega)$ of $\Z^d \setminus \Delta'(\sigma, \omega)$, because every vertex of $\Lambda^c$ blongs to the same infinite component of $\omega$. We define
    \begin{equation}
        \Delta(\sigma, \omega) := \Delta'(\sigma, \omega) \cup \left(\bigcup_{k=1}^{m(\omega)} C_k(\sigma,\omega)\right).
    \end{equation}
    Note that $\Delta$ is $\Delta'$ with all ``holes'' filled and that we have the following relation
    \begin{equation}
        \bar\nu_{\Lambda;\beta,\rho,h}^+ \left(0 \overset{+}{\longleftrightarrow} \partial B_N\right)
        = \bar\nu_{\Lambda;\beta,\rho,h}^+ \left(\Delta \cap \partial B_N \neq \emptyset\right).
    \end{equation}
    We define the following two events:
    \begin{equation}
        \mathcal A_1 := \set{\Delta(\sigma, \omega) \subset B_{N+M}} \in  \mathcal F \otimes \mathcal G
    \end{equation}
    and
    \begin{equation}
        \mathcal A_2 := \set{\forall x \in \partial B_{N+M+1} \colon C_x(\omega) \cap \Lambda^c = \emptyset} \in \mathcal G.
    \end{equation}
    Using \zcref{thm:>12_cluster_exponential_decay}, we estimate
    \begin{equation}\label{eq:prob_A1}
        \bar\nu_{\Lambda;\beta,\rho,h}^+(\mathcal A_1^c)
        \leq \bar \nu_{\Lambda;\beta,\rho,h}^+\left(\exists x \in \partial B_N \colon \sigma_x \geq 1 \text{ and } x \overset{\text{open}}{\longleftrightarrow} \partial B_{N+M} \right)
        \leq c_1 N^{d-1} e^{-c_2M}
    \end{equation}
    and
    \begin{equation}\label{eq:prob_A2}
        \bar\nu_{\Lambda;\beta,\rho,h}^+ \left(\mathcal A_2^c\right)
        \leq \bar \nu_{\Lambda;\beta,\rho,h}^+\left(\exists x \in \partial B_{N+M+1} \colon \sigma_x \geq 1 \text{ and } x \overset{\text{open}}{\longleftrightarrow} \partial \Lambda \right)
        \leq c_1 (N+M+1)^{d-1} e^{-c_2 M},
    \end{equation}
    as in the proof of \zcref{thm:s-12-ising_minus_infty_cluster_dominate_plus}.
    On the event $\mathcal A_1$ the set $\Delta^c$ is connected and $\Delta$ has at least $N$ boundary edges. The first claim follows by definition of $\Delta$. To see the second, let $z \in \Delta \cap \partial B_N$ and assume without loss of generality $z_1=N$. Consider the hyperplanes $H_k := \set{y \in \Z^d \colon y_1 = k}$. Since $\mathcal V^+(\sigma) \subset \Delta(\sigma, \omega)$, we must have $\abs{H_k \cap \Delta} \geq 1$. Moreover, since $\Delta$ is finite, there are $u_k, v_k \in H_k$ such that $u_k \in \Delta$, $v_k \notin \Delta$ and $\set{u_k, v_k} \in \E(\Z^d)$. In particular, $\set{u_k, v_k} \in \partial_{\exterior} \Delta$.
    
    We can therefore write
    \begin{equation}
    \begin{aligned}
        \bar\nu_{\Lambda;\beta,\rho,h}^+ \left(\Delta \cap \partial B_N \neq \emptyset \mid \mathcal A_1 \cap \mathcal A_2\right)
        &\leq \sum_{k=N}^\infty \bar\nu_{\Lambda;\beta,\rho,h}^+ \left(\abs{\partial_{\exterior} \Delta} = k \mid \mathcal A_1 \cap \mathcal A_2\right) \\
        &= \sum_{k=N}^\infty \sum_{\substack{\abs{\partial_{\exterior} V} = k}} \bar\nu_{\Lambda;\beta,\rho,h}^+ \left(\Delta = V \mid \mathcal A_1 \cap\mathcal A_2\right),
    \end{aligned}
    \end{equation}
    where the second sum is over all $V \subset B_{N+M}$ connected such that $0 \in V$, $\abs{\partial_{\exterior} V} = k$ and such that $V^c$ is connected. We claim that for such $V$
    \begin{equation}\label{eq:open_contours_exponentially_unlikely}
        \bar\nu_{\Lambda;\beta,\rho,h}^+ \left(\Delta = V \mid \mathcal A_1 \cap \mathcal A_2\right) 
        \leq e^{-\frac\beta2 W(1) k}.
    \end{equation}
    This implies
    \begin{equation}\label{eq:probability_delta_with_fixed_number_of_edges}
        \bar\nu_{\Lambda;\beta,\rho,h}^+ \left(\abs{\partial_{\exterior} \Delta} = k \mid \mathcal A_1 \cap \mathcal A_2\right)
        \leq e^{-\frac\beta2 W(1) k} \mathcal S(k),
    \end{equation}
    where $\mathcal S(k)$ is the number of connected subgraphs of $\Z^d$ which contain the origin, have exactly $k$ boundary edges, and whose complement is connected.
    It is known~\cite[Theorem 6]{Balister:CountingRegionsBounded2007} that $\mathcal S(k) \leq \expa{2\frac{\ln d+\ln 8}dk}$ when $d \geq 2$ (see also~\cite{Lebowitz:ImprovedPeierlsArgument1998}). We remark that the authors use the definition of rooted $d$-complexes with primitive boundary in their theorem. However, each such complex is in unique correspondence with a finite, connected subgraph of $\Z^d$ which contains the origin and whose complement is connected (cf.~\cite[Appendix A]{Bassan:NonconstantGroundConfigurations2025}).

    Together with \eqref{eq:probability_delta_with_fixed_number_of_edges} we obtain
    \begin{equation}\label{eq:+_connectivity_good_event}
        \nu_{\Lambda;\beta,\rho,h}^+ \left(0 \overset{+}{\longleftrightarrow} \partial B_N \mid \mathcal A_1 \cap \mathcal A_2\right)
        \leq \frac{\lambda(\beta)^N}{1-\lambda(\beta)},
    \end{equation}
    where we set $\lambda(\beta) := e^{-\frac\beta2 W(1) + \frac{2(\ln d+\ln8)}d}$. 
    Combined with \eqref{eq:prob_A1} and \eqref{eq:prob_A2} this gives the desired claim \eqref{eq:exp_decay_minority_percolation_low_temperature} whenever $\lambda(\beta) < 1$.

    It remains to show \eqref{eq:open_contours_exponentially_unlikely}. 
    Let $V \subset B_{N+M}$ connected such that $\abs{\partial_{\exterior} V} = k$. Enumerate the edges of $\partial_{\exterior} V \equiv \set{e_1, \dots, e_k} \subset \mathcal E(B_{2N})$. Then,
    \begin{equation}\label{eq:bound_prob_Delta=V_by_k_closed_edges}
        \bar\nu_{\Lambda;\beta,\rho,h}^+ \left(\Delta = V \mid \mathcal A_1 \cap \mathcal A_2\right)
        \leq \bar\nu_{\Lambda;\beta,\rho,h}^+ \left(\omega_{e_i} = 0 \text{ for all $i\in \set{1, \dots, k}$}  \mid \mathcal A_1 \cap \mathcal A_2\right),
    \end{equation}
    where we used that by construction of $\Delta$ any edge in $\partial_{\exterior}\Delta$ is closed.
    For $i \in \Z^d$ define $\sigma'_i := \sigma_i - \frac12$. Then the contribution of an edge $\set{i,j} \in \E^b(\Lambda)$ to the measure $\bar\nu_{\Lambda;\beta,\rho,h}^+$ is 
    \begin{equation}\label{eq:rewrite_single_edge_contribution_extended_ising}
    \begin{aligned}
         e^{-\beta W(\sigma_i - \sigma_j)}q(\sigma_i, \sigma_j)^{\omega_{ij}}(1 - q(\sigma_i, \sigma_j))^{1-\omega_{ij}}
        = e^{-\beta W(\abs{\sigma_i'} + \abs{\sigma_j'})}\left(\gamma(\abs{\sigma'_i}, \abs{\sigma'_j})\I_{\sign \sigma_i' = \sign\sigma_j'} \right)^{\omega_{ij}},
    \end{aligned}
    \end{equation}
    where
    \begin{equation}
        \gamma(a, b) := e^{\beta W(a+b) - \beta W(a-b)}-1.
    \end{equation}
    From \eqref{eq:bound_prob_Delta=V_by_k_closed_edges} we have
    \begin{equation}
    \begin{aligned}
        \bar\nu_{\Lambda;\beta,\rho,h}^+ \left(\Delta = V \mid \mathcal A_1 \cap\mathcal A_2\right)
        \leq \bar\nu_{\Lambda;\beta,\rho,h}^+ \left(\bar\nu_{\Lambda;\beta,\rho,h}^+\left(\omega_{e_1} = 0\mid \omega_{\langle e_1 \rangle}, \,\abs{\sigma'}, \,\mathcal A_1 \cap\mathcal A_2\right) \prod_{i=2}^k \I_{\omega_{e_i} = 0} \mid \mathcal A_1 \cap\mathcal A_2\right),
    \end{aligned}
    \end{equation}
    where we recall that $\omega_{\langle e_1\rangle}$ is the restriction of $\omega$ to $\mathcal E(\Z^d) \setminus \set{e_1}$.    
    We claim that 
    \begin{equation}\label{eq:condition_expectation_closed_edge_exponentially_small}
        \bar\nu_{\Lambda;\beta,\rho,h}^+\left(\omega_{e_1} = 0\mid \omega_{\langle e_1 \rangle},\,\abs{\sigma'}, \mathcal A_1 \cap \mathcal A_2\right)
        \leq e^{-c_0\beta},
    \end{equation}
    which implies the claim \eqref{eq:open_contours_exponentially_unlikely} inductively.
    For $\omega \in \set{0,1}^{\mathcal E(\Z^d)}$ define two configuration $\omega^0$, $\omega^1$ by
    \begin{equation}
        \omega^0_e := \begin{cases}
            0 & \text{if } e=e_1,\\
            \omega_e & \text{else},
        \end{cases}
        \qquad\text{and}\qquad
        \omega^1_e := \begin{cases}
            1 & \text{if } e=e_1,\\
            \omega_e & \text{else}.
        \end{cases}
    \end{equation}
    If $e_1 = \set{u,v}$, then $u$ and $v$ belong to the same cluster in $\omega^1$, i.e., $\mathscr C_{u}(w^1) = \mathscr C_{v}(w^1)$.
    For a set $C \subset \Lambda$, $\kappa \in \set{\pm}$ and $\sigma \in \Sigma$ define
    \begin{equation}
        H_\kappa(\abs{\sigma'}; C) := \prod_{i \in C} e^{-h\left(\kappa \abs{\sigma'_i} + \frac12\right)-\rho\left(\kappa \abs{\sigma'_i} + \frac12\right)^2}.
    \end{equation}
    Since $\abs{\sigma'}$ will always be fixed, we drop it from the notation of $H$ in what follows. We obtain
    \begin{equation}
    \begin{aligned}
        &\phntm\bar\nu_{\Lambda;\beta,\rho,h}^+\left(\omega_{e_1} = 1\mid \omega_{\langle e_1 \rangle}, \,\abs{\sigma'}, \,\mathcal A_1 \cap \mathcal A_2\right)\\
        &= \frac{
            \gamma(\abs{\sigma'_{u}}, \abs{\sigma'_{v}}) 
            \prod_{C \in \mathscr C_{\Lambda}(\omega^1)} \sum_{\substack{\kappa \in \set{\pm}\\\kappa = + \text{ if } C \cap \Lambda^c \neq \emptyset}} H_\kappa(C)
        }{
            \gamma(\abs{\sigma'_{u}}, \abs{\sigma'_{v}}) 
            \prod_{C \in \mathscr C_{\Lambda}(\omega^1)} \sum_{\substack{\kappa \in \set{\pm}\\\kappa = + \text{ if } C \cap \Lambda^c \neq \emptyset}} H_\kappa(C)
            +
            \prod_{C \in \mathscr C_{\Lambda}(\omega^0)} \sum_{\substack{\kappa \in \set{\pm}\\\kappa = + \text{ if } C \cap \Lambda^c \neq \emptyset}} H_\kappa(C)
        }.
    \end{aligned}
    \end{equation}
    If $u$ and $v$ belong to the same cluster in $\omega^0$, then all clusters in the two configurations are equal and the expression simplifies to
    \begin{equation}
        \bar\nu_{\Lambda;\beta,\rho,h}^+\left(\omega_{e_1} = 1\mid \omega_{\langle e_1 \rangle}, \abs{\sigma'}, \,\mathcal A_1 \cap \mathcal A_2\right)
        = \frac{\gamma(\abs{\sigma'_{u}}, \abs{\sigma'_{v}})}{\gamma(\abs{\sigma'_{u}}, \abs{\sigma'_{v}})+1}.
    \end{equation}
    If on the other hand $u$ and $v$ do not belong to the same cluster in $\omega^0$, we still have that all other clusters are the same in $\omega^0$ and $\omega^1$ and that $C_{u}(\omega^1) = C_{u}(\omega^0) \cup C_{v}(\omega^0)$. On the event $\mathcal A_2$ it moreover holds $C_u(\omega^0) \cap \Lambda^c = C_v(\omega^0) \cap \Lambda^c = \emptyset$, because $u,v \in B_{N+M+1}$. Therefore, both clusters can be assigned either spins $\geq 1$ or spins $\leq 0$. In particular,
    \begin{equation}
    \begin{aligned}
        &\phntm\bar\nu_{\Lambda;\beta,\rho,h}^+\left(\omega_{e_1} = 1\mid \omega_{\langle e_1 \rangle}, \,\abs{\sigma'}, \,\mathcal A_1 \cap \mathcal A_2\right)\\
        &= \frac{
            \gamma(\abs{\sigma'_{u}}, \abs{\sigma'_{v}}) 
            \sum_{\kappa \in \set{\pm}} H_\kappa(C_{u}(\omega^1))
        }{
            \gamma(\abs{\sigma'_{u}}, \abs{\sigma'_{v}}) 
            \sum_{\kappa \in \set{\pm}} H_\kappa(C_{u}(\omega^1))
            +
            \sum_{\kappa_u, \kappa_v \in \set{\pm}} H_{\kappa_u}(C_{u}(\omega^0)) H_{\kappa_v}(C_{v}(\omega^0))
        }\\
        &= \frac{
            \gamma(\abs{\sigma'_{u}}, \abs{\sigma'_{v}}) 
            \sum_{\kappa \in \set{\pm}} H_\kappa(C_{u}(\omega^1))
        }{
            \left[\gamma(\abs{\sigma'_{u}}, \abs{\sigma'_{v}})+1\right]
            \sum_{\kappa \in \set{\pm}} H_\kappa(C_{u}(\omega^1))
            +
            \sum_{\kappa \in \set{\pm}} H_{\kappa}(C_{u}(\omega^0)) H_{-\kappa}(C_{v}(\omega^0))
        }.
    \end{aligned}
    \end{equation}
    By assumption on $h$ and $\rho$ we have
    \begin{equation}
    \begin{aligned}  
        \sum_{\kappa \in \set{\pm}} H_{\kappa}(C_{u}(\omega^0)) H_{-\kappa}(C_{v}(\omega^0))
        \leq \sum_{\kappa \in \set{\pm}} H_{\kappa}(C_{u}(\omega^0)) H_{\kappa}(C_{v}(\omega^0))
        = \sum_{\kappa \in \set{\pm}} H_{\kappa}(C_{u}(\omega^1))
    \end{aligned}
    \end{equation}
    and hence
    \begin{equation}
        \bar\nu_{\Lambda;\beta,\rho,h}^+\left(\omega_{e_1} = 1\mid \omega_{\langle e_1 \rangle}, \,\abs{\sigma'}, \, \mathcal A_1 \cap \mathcal A_2\right)
        \geq \frac{\gamma(\abs{\sigma'_{u}}, \abs{\sigma'_{v}})}{\gamma(\abs{\sigma'_{u}}, \abs{\sigma'_{v}})+2}.
    \end{equation}    
    By convexity of $W$ we have that
    \begin{equation}
        \gamma(\abs{\sigma'_{u}}, \abs{\sigma'_{v}}) \geq e^{\beta W(1)} - 1 > 0
    \end{equation}
    and therefore (in both cases considered above)
    \begin{equation}
        \bar\nu_{\Lambda;\beta,\rho,h}^+\left(\omega_{e_1} = 1\mid \omega_{\langle e_1 \rangle}, \,\abs{\sigma'}, \, \mathcal A_1 \cap \mathcal A_2\right)
        \geq \frac{e^{\beta W(1)}-1}{e^{\beta W(1)}+1}
        \geq 1-e^{-\frac\beta2 W(1)},
    \end{equation}
    where we used that $\frac{e^x-1}{e^x+1} \geq 1-e^{-x/2}$. This gives \eqref{eq:condition_expectation_closed_edge_exponentially_small} as claimed. 
\end{proof}

\begin{proof}[Proof of \zcref{thm:main_low_temperature}]
    By \zcref{thm:equivalence_no_percolation_uniqueness,thm:two_point_correlation_relation_percolation}, it suffices to show that 
    \begin{equation}
        \bar\mu_{\beta,\rho}^+\left(0 \overset{\text{open}}{\longleftrightarrow} \partial B_N\right) \leq c_1 e^{-c_2 N}.
    \end{equation}
    holds for positive constants independent of $N$. 
    By \zcref{thm:bound_clustersize_open_by_plus} and \zcref{thm:dominate_M_by_M12} there exist $\lambda_0(\rho) > 0$ such that for all $\Lambda \Subset \Z^d$
    \begin{equation}
        \bar\mu_{\Lambda;\beta,\rho}^+\left(0 \overset{\text{open}}{\longleftrightarrow} \partial B_N\right) 
        \leq \nu_{\Lambda;\beta,\rho, \lambda_0 H}^+\left(0 \overset{+}{\longleftrightarrow} \partial B_N\right),
    \end{equation}
    with $H(k) := \I_{k \geq 1}$. Let 
    \begin{equation}
        J(\rho) := \lambda_0(\rho) + \rho(2S-1)
    \end{equation} 
    We set $\rho_* := \inf \set{t \in (0, \infty) \mid J(-t) \leq 0} > 0$, because $J$ is continuous (since $\lambda_0$ depends continuously on $\rho$) and $J(0) > 0$. 
    
    If $B_{3N} \subset \Lambda$, then by the FKG inequality
    \begin{equation}
        \nu_{\Lambda;\beta,\rho, \lambda_0 H}^+\left(0 \overset{+}{\longleftrightarrow} \partial B_N\right)
        \leq \nu_{B_{3N};\beta,\rho, \lambda_0 H}^+\left(0 \overset{+}{\longleftrightarrow} \partial B_N\right).
    \end{equation}
    If we verify \eqref{eq:cond_h_rho2} for $h = \lambda_0(\rho) H$, then \zcref{thm:no_minority_percolation_low_temp} yields that 
    \begin{equation}
        \bar\mu_{\Lambda;\beta,\rho}^+\left(0 \overset{\text{open}}{\longleftrightarrow} \partial B_N\right) 
        \leq \nu_{B_{3N};\beta,\rho, \lambda_0 H}^+\left(0 \overset{+}{\longleftrightarrow} \partial B_N\right)
        \leq c_1 e^{-c_2N}
    \end{equation}
    whenever $\beta > \frac{8\ln d}{W(1) d}$. Since the event on the left hand side is local, we can let $\Lambda \uparrow \Z^d$ to conclude. Indeed, \eqref{eq:cond_h_rho2} holds whenever $\rho \geq 0$. If $\rho \in (-\rho_*, 0)$, we use that $J(\rho) > 0$ which also implies \eqref{eq:cond_h_rho2}.
\end{proof}

\begin{remark}[Connection with percolation of finite clusters]\label{rem:percolation of finite clusters}

Grimmett--Holroyd--Kozma~\cite{Grimmett:PercolationFiniteClusters2014a} initiated the study of percolation of finite clusters in Bernoulli percolation $\P_p$ on $\Z^d$ with parameter $p$. Let $X$ be the complement of the (unique) infinite cluster, setting $X=\Z^d$ when no such cluster exists. They asked for which parameter $p$ does $X$ contain an infinite connected component (as a subset of $\Z^d$). 
Precisely, they defined the threshold
\begin{equation}
    p_{\text{fin}}(d):=\sup\{p\in[0,1]\colon \P_p(X\text{ has an infinite connected component})>0\}
\end{equation}
Necessarily, $p_{\text{fin}}(d)\ge p_c(d)$ (with $p_c(d)$ the Bernoulli percolation critical probability) and their main interest is whether strict inequality holds. They show this for $d\ge 19$, strengthened to $d\ge 11$ as a consequence of Fitzner--van der Hofstad~\cite{Fitzner:MeanfieldBehaviorNearestneighbor2017} and to $d\ge 10$ by Bock--Damron--Newman--Sidoravicius~\cite{Bock:PercolationFiniteClusters2020}. In~\cite[Remark 1.4]{Bock:PercolationFiniteClusters2020} the question of the asymptotics of $p_{\text{fin}}$ as $d\to\infty$ was raised, and it was shown that
\begin{equation}\label{eq:lower bound on p_fin}
    \liminf_{d\to\infty} \frac{p_{\text{fin}}(d)}{\frac{\ln d}{2d}}\ge 1
\end{equation}
(which also implies the strict inequality $p_{\text{fin}}(d)>p_c(d)$ in high dimensions, as $p_c(d)\sim\frac{1}{2d}$ as $d\to\infty$).
We make two remarks in connection with this.

First, Bernoulli percolation of parameter $p$ is known to be dominated by the random-cluster model with parameter $2p$ and cluster weight $q=2$~\cite[Theorem 3.21]{Grimmett:RandomClusterModel2006}, i.e., by the FK-Ising model with $2p = 1 - e^{-2\beta}$. Aizenman--Bricmont--Lebowitz~\cite{Aizenman:PercolationMinoritySpins1987} showed that there is simultaneous percolation of $+$ and $-$ spins when $\frac{1}{2d}+o(1/d)\le \beta\le \frac{\ln d}{4d} - O(1/d))$ as $d\to\infty$. As it is simple to see that such simultaneous percolation can only occur when the union of the finite clusters (of the random-cluster model) is infinite, this gives an alternative proof, with a worse constant, of~\eqref{eq:lower bound on p_fin}.

Second, the techniques used in the proof of \zcref{thm:no_minority_percolation_low_temp} imply that
\begin{equation}
    p_{\text{fin}}(d)\le \frac{2(\ln d + \ln 8)}{d}
\end{equation}
for $d\ge 2$, making progress on~\cite[Remark 1.4]{Bock:PercolationFiniteClusters2020} by determining the order of magnitude of $p_{\text{fin}}(d)$ as $d\to\infty$ (determining the exact constant in the asymptotics is an interesting open problem). To see this, observe that if the origin lies in an infinite connected component of $X$, then for any $m$ there is a connected subset $A$ of $\Z^d$ containing the origin, whose complement is connected, having $\ge m$ boundary edges, all of which are closed in the percolation process (this subset will be a union of finite clusters of the percolation). The probability of this event tends to $0$ as $m\to\infty$, by a union bound over all such subsets $A$, as the probability of a specific $A$ with $k$ boundary edges is at most $(1-p)^k\le e^{-pk}$ (as its boundary is closed), while $\mathcal S(k)$, the number of such subsets with exactly $k$ boundary edges, satisfies $\mathcal S(k) \leq \expa{2(\frac{\ln d+\ln 8}d)k}$ (when $d \geq 2$) by~\cite[Theorem 6]{Balister:CountingRegionsBounded2007}.
\end{remark}

\subsection{Entropic repulsion for confined systems}
\label{sec:confined_systems}

Using the domination of $\mu_{\beta,\rho}^+$ by $\nu_{\beta,\rho,h}^+$, we can deduce the results on entropic repulsion \zcref{thm:constrained Gibbs measures}, \zcref{thm:constrained Gibbs measures even} and \zcref{thm:real_valued_entropic_repulsion} for arbitrary even and convex $W$.

\begin{proof}[Proof of \zcref{thm:constrained Gibbs measures}]
    Let $f \colon \Omega \to \R$ be a local, non-decreasing function and $N \in \N$. We will later take $f(\sigma) = \sigma_0$ and $f(\sigma) = \I_{\sigma_0 \geq 1}$. By the domination from \zcref{thm:dominate_M_by_M12} there exists $\lambda_0(\rho) > 0$ independent of $f$ and $N$ such that
    \begin{equation}
        \mu_{B_N;\beta,\rho}^+ (f) 
        \leq \nu_{B_N;\beta,\rho,\lambda_0 H}^+ (f)
    \end{equation}
    with $H(k) := \I_{k \geq 1}$. Let 
    \begin{equation}
        J(\rho) := \lambda_0(\rho) + \rho(2S-1)
    \end{equation} 
    We set $\rho_* := \inf \set{t \in (0, \infty) \mid J(-t) \leq 0} > 0$, because $J$ is continuous (since $\lambda_0$ depends continuously on $\rho$) and $J(0) > 0$. From now on, we assume $\rho \in (-\rho_*, \infty)$. 

    Let $\mathcal A_N := \set{0 \overset{\text{open}}{\longleftrightarrow} \partial B_N}$. Then
    \begin{equation}\label{eq:decomposition_prob_sigma0_geq1}
        \nu_{B_N;\beta,\rho,\lambda_0 H}^+(\sigma_0 \geq 1) 
        = \bar\nu_{B_N;\beta,\rho,\lambda_0 H}^+(\sigma_0 \geq 1 \text{ and } \mathcal A_N) + \bar\nu_{B_N;\beta,\rho,\lambda_0 H}^+(\sigma_0 \geq 1 \text{ and } \mathcal A_N^c).
    \end{equation}
    Recall that $C_0(\omega) \in \mathscr C(\omega)$ denotes the open cluster of $\omega$ containing the origin. On $\mathcal A_N^c$ we have $C_0 \subset B_N$. Thus,
    \begin{equation}
        \bar\nu_{B_N;\beta,\rho,\lambda_0 H}^+(\sigma_0 \geq 1 \text{ and } \mathcal A_N^c)
        = \sum_G \bar\nu_{B_N;\beta,\rho,\lambda_0 H}^+(\sigma_0 \geq 1 \text{ and } C_0 = G),
    \end{equation}
    where the sum is over all connected subgraphs $G$ of $B_N$ containing the origin. We perform a spin flip as in \eqref{eq:spin_flip_mu}, this time around $\frac12$, i.e.,
    \begin{equation}
        \sigma_i \mapsto \begin{cases}
            1 - \sigma_i & \text{if } i \in G,\\
            \sigma_i & \text{otherwise}.
        \end{cases}
    \end{equation}
    As in \eqref{eq:spin_flip_mu} this leaves the pairwise interactions invariant, however the magnetic field contributes a factor $e^{-\lambda_0 \abs{G}}$. Since $\abs{G} \geq 1$, we obtain
    \begin{equation}\label{eq:nu_sigma_0_geq1_bd_by_leq0}
        \bar\nu_{B_N;\beta,\rho,\lambda_0 H}^+(\sigma_0 \geq 1 \text{ and } \mathcal A_N^c)
        \leq e^{-\lambda_0} \bar\nu_{B_N;\beta,\rho,\lambda_0 H}^+(\sigma_0 \leq 0 \text{ and } \mathcal A_N^c).
    \end{equation}
    and thus
    \begin{equation}
        \bar\nu_{B_N;\beta,\rho,\lambda_0 H}^+(\sigma_0 \geq 1 \mid \mathcal A_N^c) \leq \frac{1}{1+e^{\lambda_0}}.
    \end{equation}
    Since $\rho > -\rho_*$ we can apply \zcref{thm:>12_cluster_exponential_decay}. Combining this with \eqref{eq:decomposition_prob_sigma0_geq1}, we thus obtain 
    \begin{equation}\label{eq:bo_nu+_prob_sigma0_geq1}
        \nu_{B_N;\beta,\rho,\lambda_0 H}^+(\sigma_0 \geq 1)  \leq \frac{1}{1+e^{\lambda_0}} + c_1 e^{-c_2 N}
    \end{equation}
    for some positive constants $c_1$, $c_2$ independent of $N$.
    Similarly,
    \begin{equation}\label{eq:decomposition_expect_sigma0}
        \nu_{B_N;\beta,\rho,\lambda_0 H}^+(\sigma_0) 
        = \bar\nu_{B_N;\beta,\rho,\lambda_0 H}^+(\sigma_0 \I_{\mathcal A_N}) + \bar\nu_{B_N;\beta,\rho,\lambda_0 H}^+(\sigma_0 \I_{\mathcal A_N^c}).
    \end{equation}
    On one hand,
    \begin{equation}
        \bar\nu_{B_N;\beta,\rho,\lambda_0 H}^+(\sigma_0 \I_{\mathcal A_N})
        \leq S \bar\nu_{B_N;\beta,\rho,\lambda_0 H}^+(\sigma_0 \geq 1 \text{ and }{\mathcal A_N}) \leq Sc_1 e^{-c_2 N}.
    \end{equation}
    On the other hand, by the same spin-flip argument as above,
    \begin{equation}
    \begin{aligned}
        \bar\nu_{B_N;\beta,\rho,\lambda_0 H}^+\left(\sigma_0 \mid {\mathcal A_N^c}\right)
        &= \bar\nu_{B_N;\beta,\rho,\lambda_0 H}^+\left(\abs{\sigma_0-\tfrac12} \I_{\sigma_0 \geq 1} \mid {\mathcal A_N^c}\right)
        - \bar\nu_{B_N;\beta,\rho,\lambda_0 H}^+\left(\abs{\sigma_0-\tfrac12} \I_{\sigma_0 \leq 0} \mid {\mathcal A_N^c}\right) + \tfrac12\\
        &\leq \left(e^{-\lambda_0}-1\right)\bar\nu_{B_N;\beta,\rho,\lambda_0 H}^+\left(\abs{\sigma_0-\tfrac12} \I_{\sigma_0 \leq 0} \mid {\mathcal A_N^c}\right) + \tfrac12\\
        &\leq \frac12\left(1+\tfrac{\left(e^{-\lambda_0}-1\right)}{\left(e^{-\lambda_0}+1\right)}\right)
        = \frac1{1+e^{\lambda_0}},
    \end{aligned}
    \end{equation}
    using that $\bar\nu_{B_N;\beta,\rho,\lambda_0 H}^+\left(\abs{\sigma_0-\tfrac12} \I_{\sigma_0 \leq 0} \mid {\mathcal A_N^c}\right) \geq \frac1{2(e^{-\lambda_0}+1)}$ by \eqref{eq:nu_sigma_0_geq1_bd_by_leq0}. Therefore, inserting into \eqref{eq:decomposition_expect_sigma0}, we obtain
    \begin{equation}
        \nu_{B_N;\beta,\rho,\lambda_0 H}^+(\sigma_0) 
        \leq \frac1{1+e^{\lambda_0}} + Sc_1 e^{-c_2 N}.
    \end{equation}
    Let $\mu \in \mathscr G(\beta,\rho)$ be a Gibbs measure for the Hamiltonian \eqref{eq:def_hamiltonian}. The DLR condition \eqref{eq:DLR} together with FKG imply
    \begin{equation}
        \mu(\sigma_0 \geq 1) 
        = \int_{\Omega} \mu_{B_{N;\beta,\rho}}^\eta (f) \d \mu(\eta)
        \leq \mu_{B_{N;\beta,\rho}}^+ (\sigma_0 \geq1) 
        \leq \nu_{B_{N;\beta,\rho,\lambda_0 H}}^+ (\sigma_0 \geq1).
    \end{equation}
    Using \eqref{eq:bo_nu+_prob_sigma0_geq1} and taking the limit $N \to \infty$ this yields in particular $\mu(\sigma_0 \geq 1) < \frac12$.
    Similarly, 
    \begin{equation}
        \mu(\sigma_0 \leq -1) 
        \leq \mu_{B_{2N;\beta,\rho}}^- (\sigma_0 \leq -1)
        = \mu_{B_{2N;\beta,\rho}}^+ \left(\sigma_0 \geq 1\right),
    \end{equation}
    where the last inequality is due to a global spin flip. This implies $\mu(\sigma_0 \leq -1) < \frac12$ after taking the limit $N \to \infty$. A similar argument shows that $\abs{\mu(\sigma_0)} < \frac12$.
\end{proof}

We continue with the proof of entropic repulsion in the case $\varphi \colon \Z^d \to \set{-S+\frac12, \dots, S-\frac12}$.

\begin{proof}[Proof of \zcref{thm:constrained Gibbs measures even}]
    The proof is analogous to that of \zcref{thm:constrained Gibbs measures}. The only difference is that in the domination \zcref{thm:dominate_M_by_M12} we dominate by a measure with a magnetic field $H(k) := \I_{k > 1}$ supported on $\set{-S+2+\frac12, \dots, S-\frac12}$. The adaptation is straightforward.
\end{proof}

We conclude with the proof of entropic repulsion in the real-valued case.

\begin{proof}[Proof of \zcref{thm:real_valued_entropic_repulsion}]
    Let $\mu$ and $\mu'$ be two Gibbs measures for the Hamiltonian \eqref{eq:def_hamiltonian} with spins taking values in the interval $[-1,1]$ and $W \colon [-2,2] \to \R$ even and convex. Our goal is to show $\mu(\sigma_0)=\mu'(\sigma_0)=0$. Then, arguing as in \zcref{thm:unique_gibbs_measure_if_magnetization_zero}, it follows that $\mu = \mu'$.

    The proof of $\mu(\sigma_0)=0$ is similar to that of \zcref{thm:constrained Gibbs measures}, so we will leave some of the details to the reader.
    Let $\Lambda \Subset \Z^d$ and $\eps > 0$. Moreover, let $\mu_{\Lambda;\beta,\rho}^+$ be the measure on $[-1,1]^\Lambda$ with $+1$ boundary conditions and Hamiltonian \eqref{eq:def_finite_volume_hamiltonian} and $\nu_{\Lambda;\beta,\rho, h}^{+,\eps}$ be the probability measure on $[-1+2\eps, 1]^{\Lambda}$ with Hamiltonian \eqref{eq:def_hamiltonian_with_h} and boundary conditions $+1$. We claim that there exists $\lambda_\eps \equiv \lambda_\eps(\rho)>0$ independent of $\Lambda$, such that 
    \begin{equation}\label{eq:continuous_domination}
        \mu_{\Lambda;\beta,\rho}^+ \leq_{st} \nu_{\Lambda;\beta,\rho, \lambda_\eps H_\eps}^{+,\eps},
    \end{equation}
    with $H_\eps(x) := \I_{x > \eps}$.
    This is the continuous spin analog of \zcref{thm:dominate_M_by_M12}. Assume for the moment that this domination holds. We extend the measure $\nu_{\Lambda;\beta,\rho,\lambda_\eps H_\eps}^{+,\eps}$ with boolean variables on the edges, and denote the new measure by $\bar\nu_{\Lambda;\beta,\rho, h}^{+,\eps}$. As in the discrete case, we chose the edge variables $\omega$ to be independent Bernoulli variables (depending on the environment), and we chose the probability of an edge $\set{i,j}\in \E^b(\Lambda)$ to be open to be
    \begin{equation}
        q_\eps(\sigma_i, \sigma_j) := 1 - e^{-\beta W(\abs{\sigma_i - \eps} + \abs{\sigma_j - \eps}) + \beta W(\sigma_i - \sigma_j)} \in [0,1).
    \end{equation}
    Assume that there exists $\rho_*(\beta) > 0$ such that for all $\rho \in (-\rho_*, \infty)$ it hold $\lambda_\eps(\rho) + 4\rho(\eps s - \eps^2) > 0$ for all $s > \eps$ (clearly, this inequality is satisfied when $\rho \geq 0$, we show below that it extends to small negative $\rho$). Then one can rerun the proof of \zcref{thm:>12_cluster_exponential_decay} and obtain
    \begin{equation}
        \bar\nu_{\Lambda;\beta,\rho, \lambda_\eps H_\eps}^{+,\eps}\left(\sigma_0 > \eps \text{ and } 0 \overset{\text{open}}{\longleftrightarrow} \partial B_N\right) \leq c_1(\eps) e^{-c_2(\eps) N}
    \end{equation}
    for all $N\in \N$ such that $B_N \subset \Lambda$ and some positive constants $c_1(\eps)$, $c_2(\eps)$ depending on $\eps$ but not on $N$ or $\Lambda$. In particular, the spin flip argument from the proof of \zcref{thm:constrained Gibbs measures} (this time letting $\sigma_i \mapsto 2\eps - \sigma_i$ for $i \in C_0$) implies
    \begin{equation}
        \nu_{B_N;\beta,\rho, \lambda_\eps H_\eps}^{+,\eps}(\sigma_0) 
        \leq \eps + c_1(\eps) e^{-c_2(\eps) N}.
    \end{equation}
    Thus, using the stochastic domination and letting $N \to \infty$ one obtains that
    \begin{equation}
        \abs{\mu(\sigma_0)} \leq \eps
    \end{equation}
    using the DLR condition and FKG as in the proof of \zcref{thm:constrained Gibbs measures}. The claim follows, since $\eps > 0$ was arbitrary.

    To prove the domination \eqref{eq:continuous_domination}, we discretize the two measures as follows. Let $\tilde{\mathscr H}_{\Lambda;\beta,\rho,h}$ be the Hamiltonian from \eqref{eq:def_hamiltonian_with_h} with $W$ replaced by $\tilde W := W(\cdot/S)$ and let $\tilde H_\eps = H_\eps(\cdot/S)$. Recall that $\Omega_\Lambda^S = \set{\sigma \in \set {-S,\dots, S}^{\Z^d} \mid \sigma_i = S \text{ for all } i \in \Lambda^c}$. We set for $\sigma \in\Omega_\Lambda^S$
    \begin{equation}
        \mu_{\Lambda;\beta,\rho}^{+,S}(\sigma)
        := \frac{e^{-\tilde{\mathscr H}_{\Lambda;\beta,\rho,0}(\sigma)}}{\sum_{\bar\sigma \in \Omega_\Lambda^S} e^{-\tilde{\mathscr H}_{\Lambda;\beta,\rho,0}(\bar\sigma)}}
        \qquad\text{and}\qquad
        \nu_{\Lambda;\beta,\rho, \lambda \tilde H_\eps}^{+,\eps,S}(\sigma)
        := \frac{\I_{\sigma \geq -S+S_\eps}e^{-\tilde{\mathscr H}_{\Lambda;\beta,\rho, \lambda \tilde H_\eps}(\sigma)}}{\sum_{\bar\sigma \in \Omega_\Lambda^S} \I_{\sigma \geq -S+S_\eps} e^{-\tilde{\mathscr H}_{\Lambda;\beta,\rho,\lambda \tilde H_\eps}(\bar\sigma)}}.
    \end{equation}
    Repeating the argument of \zcref{thm:dominate_M_by_M12}, one obtains that 
    \begin{equation}\label{eq:domination_discretized_measures}
        \mu_{\Lambda;\beta,\rho/S^2}^{+,S} 
        \leq_{st} \nu_{\Lambda;\beta,\rho/S^2, \lambda \tilde H_\eps}^{+,\eps,S}
    \end{equation}
    for all $\lambda \in (0, -\ln \gamma_\eps(S, \rho)]$, where $\tilde H_\eps := H(\cdot /S)$ and
    \begin{equation}
        \gamma_\eps(S, \rho) := \mu_{\set{0};\beta,\rho/S^2}^{+,S} (\sigma_0 \geq -S + S_\eps).
    \end{equation}
    As $S \to \infty$, it holds that
    \begin{equation}
        \gamma_\eps(S, \rho) \to \mu_{\set{0};\beta,\rho}^{+} (\sigma_0 \geq -1+2\eps) < 1.
    \end{equation}
    In particular, we can choose $\lambda_\eps(\rho) = -\frac12 \ln \mu_{\set{0};\beta,\rho}^{+} (\sigma_0 \geq -1+2\eps) > 0$ independent of $S$, such that the domination \eqref{eq:domination_discretized_measures} holds for all $S$ large enough with $\lambda\equiv \lambda_\eps$. Hence, for any $f \colon [-1,1]^\Lambda \to \R$ non-increasing, defining $\tilde f = f(\cdot / S)$, we have (with the equalities justified from the continuity of $W$ and the continuity at all but one point of $H_\eps$)
    \begin{equation}
        \mu_{\Lambda;\beta,\rho}^+(f)
        = \lim_{S \to \infty} \mu_{\Lambda;\beta,\rho/S^2}^{+,S}(\tilde f)
        \leq \lim_{S \to \infty} \nu_{\Lambda;\beta,\rho/S^2, \lambda_\eps \tilde H_\eps}^{+,\eps,S}(\tilde f)
        = \nu_{\Lambda;\beta,\rho, \lambda_\eps H_\eps}^{+,\eps}(f),
    \end{equation}
    which is \eqref{eq:continuous_domination}. It remains to show that we find $\rho_* > 0$ as claimed above. We have seen that the domination holds with $\lambda_\eps(\rho) := -\frac12 \ln \mu_{\set{0};\beta,\rho}^{+} (\sigma_0 \geq -1+2\eps)$. 
    Denoting the density of $\mu_{\set{0};\beta,\rho}^{+}$ by $f_\rho$ and restricting to $\rho \in (-\ln(2), \ln(2))$, we have by a Taylor expansion for some $\xi \in (0,\eps)$
    \begin{equation}
        \lambda_\eps(\rho)
        = -\frac12 \ln 1- 2 f_\rho(-1+2\xi) \eps
        \geq -\frac12 \ln 1 - c(\beta, W, d) \eps
        \geq \frac{c(\beta, W, d) \eps} 2,
    \end{equation}
    where $c(\beta, W, d) =  \left(\int_{-1}^1 f_0(t)\d t\right)^{-1}e^{-2 d \beta W(2)} > 0$. We are only interested in the case when $\eps$ is small, so without loss of generality we assume $\eps \in (0, \frac12)$. Then, 
    \begin{equation}
        \lambda_\eps(\rho) + 4\rho(\eps s - \eps^2) > 0 \quad \text{for all } s \in (\eps, 1]
    \end{equation}
    provided $\rho \in (-c(\beta, W, d)/8, \infty)$ as claimed.
\end{proof}

\section{Three and more dimensions for mixtures of Gaussians}
\label{sec:three_dimensions}

In this section we prove \zcref{thm:main_mixtures_of_gaussians}. To prove uniqueness we adapt the strategy of~\cite[Section 8]{DAlimonte:FreeEnergyAnalyticity2026} to the discrete setting. For exponential decay of correlations we show that $\overline\mu_{\beta,\rho}$ is a subcritical percolation model, similar to the approach of~\cite[Section 8]{Gunaratnam:ExistenceTricriticalPoint2024a}. 

\subsection{Uniqueness of the infinite volume limit}
\label{sec:uniqueness_mixture_of_gaussians}

For $\beta > 0$ and $\rho \geq 0$ we define the following analog of the integer-valued GFF on $\Z^\Lambda$ with boundary conditions $\eta \in \Z^{\Z^d}$:
\begin{equation}\label{eq:def_IVGFF}
    \varphi_{\Lambda;\beta,\rho}^\eta(\sigma)
    \propto \expa{-\beta \sum_{\set{i,j} \in \E^b(\Lambda)} W(\sigma_i - \sigma_j) -  \rho \sum_{i \in \Lambda} \sigma_i^2},
    \qquad \sigma \in \Z^{\Z^d}, ~\sigma_i = \eta_i \text{ for } i \in \Lambda^c.
\end{equation}

\begin{lemma}\label{thm:exponential_brascamp_lieb}
    Let $\beta,\rho>0$. In addition to convexity, assume that $W\colon \R \to \R$ is defined for all real values and is a mixture of Gaussians. Then
    \begin{equation}\label{eq:integer_valued_field_MGF}
        \varphi_{\Lambda;\beta,\rho}^0\left(e^{v \cdot \sigma}\right)
        \leq \expa{\frac{\abs{v}^2}{2\rho}}.
    \end{equation}
\end{lemma}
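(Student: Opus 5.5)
The plan is to reduce to a Gaussian computation using the mixture representation, as in \cite[Lemma D.1]{Aizenman:DepinningIntegerrestrictedGaussian2022}. Since $W$ is a mixture of Gaussians, for each edge $e=\set{i,j}\in\E^b(\Lambda)$ we write $e^{-\beta W(\sigma_i-\sigma_j)}=\int e^{-\frac12 a_e(\sigma_i-\sigma_j)^2}\d\psi_\beta(a_e)$. Expanding the product over edges, the numerator and the denominator of $\varphi^0_{\Lambda;\beta,\rho}(e^{v\cdot\sigma})$ become integrals over $a=(a_e)_e$ against $\prod_e\d\psi_\beta(a_e)$ of integer-valued Gaussian sums
\begin{equation}
    Z_a(v):=\sum_{\sigma\in\Z^\Lambda} \expa{-\tfrac12\langle \sigma, Q_a\sigma\rangle + v\cdot\sigma},
    \qquad Q_a := \Delta_a + 2\rho\,\Id .
\end{equation}
Here $\Delta_a$ is the weighted Laplacian with zero (Dirichlet) boundary values, which is positive semidefinite, so $Q_a\ge 2\rho\,\Id$. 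Thus
\begin{equation}
    \varphi^0_{\Lambda;\beta,\rho}(e^{v\cdot\sigma})=\frac{\int Z_a(v)\,\d\Psi(a)}{\int Z_a(0)\,\d\Psi(a)} .
\end{equation}
It therefore suffices to prove the pointwise bound $Z_a(v)\le e^{\abs{v}^2/(2\rho)}Z_a(0)$ for each fixed $a$, since integrating this against $\d\Psi$ gives the claim.

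For fixed $a$, consider the lattice Gaussian $\sigma\mapsto e^{-\frac12\langle\sigma,Q_a\sigma\rangle}$ on $\Z^\Lambda$. Poisson summation gives
\begin{equation}
    Z_a(v)=\frac{(2\pi)^{|\Lambda|/2}}{\sqrt{\det Q_a}}\sum_{k\in\Z^\Lambda}\expa{\tfrac12\langle v+2\pi i k, Q_a^{-1}(v+2\pi i k)\rangle}.
\end{equation}
Each term has modulus at most $e^{\frac12\langle v,Q_a^{-1}v\rangle}e^{-2\pi^2\langle k,Q_a^{-1}k\rangle}$, while the $v=0$ series consists of positive terms. This yields $Z_a(v)\le e^{\frac12\langle v,Q_a^{-1}v\rangle}Z_a(0)$: the lattice moment generating function is dominated by the continuous Gaussian one. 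This is exactly the content of the cited Lemma D.1, and I would invoke it directly rather than redo the Fourier computation. Finally, $Q_a^{-1}\le(2\rho)^{-1}\Id$ gives $\frac12\langle v,Q_a^{-1}v\rangle\le \abs{v}^2/(4\rho)\le\abs{v}^2/(2\rho)$, which is the bound claimed in \eqref{eq:integer_valued_field_MGF}, with room to spare for normalization conventions.

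The main obstacle is the lattice comparison step. Fubini and the exchange of sum and integral are harmless: all terms are nonnegative, and the $\rho$ term guarantees summability since $\psi_\beta$ is finite. The Poisson summation argument needs convergence and care with complex shifts, which the Gaussian decay handles. A tempting alternative is a direct real-valued Brascamp--Lieb argument, but it does not apply on $\Z$. The mixture representation is what reduces everything to Gaussians, where the integer restriction can only decrease exponential moments at zero boundary conditions.
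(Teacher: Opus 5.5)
Your proof is correct, but it follows a different route from the paper's. The paper first invokes \cite[Lemma D.1]{Aizenman:DepinningIntegerrestrictedGaussian2022} to compare with the \emph{real-valued} field, $\varphi^0_{\Lambda;\beta,\rho}(e^{v\cdot\sigma})\le\hat\varphi^0_{\Lambda;\beta,\rho}(e^{v\cdot\sigma})$. It then controls $\psi(t)=\ln\hat\varphi^0_{\Lambda;\beta,\rho}(e^{tv\cdot\sigma})$ by applying the Brascamp--Lieb inequality to the tilted real-valued measures, whose Hamiltonians have Hessian bounded below thanks to convexity of $W$ and the mass term; integrating $\psi''$ twice gives the claim.

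You never leave $\Z^\Lambda$. You condition on the mixing variables $a$ and prove $Z_a(v)\le e^{\frac12\langle v,Q_a^{-1}v\rangle}Z_a(0)$ for each fixed $a$ by Poisson summation. This computation is complete: the modulus of the $k$-th dual term of $Z_a(v)$ is exactly $e^{\frac12\langle v,Q_a^{-1}v\rangle}$ times the $k$-th, positive, dual term of $Z_a(0)$. You then finish with $Q_a\ge 2\rho\Id$. Because this bound is uniform in $a$, the mixing weights play no role. You therefore never have to compare the integer-valued and real-valued mixtures, whose weights ($Z_a(0)$ versus $(2\pi)^{\abs{\Lambda}/2}(\det Q_a)^{-1/2}$) differ.

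Your argument buys several things. It is self-contained and uses the mixture representation only at integer arguments, as in the paper's definition. It needs neither convexity of $W$ nor its extension to $\R$. It also yields the sharper constant $\abs{v}^2/(4\rho)$. (The paper's Hessian bound $H_t''\ge\rho\Id$ is loose by a factor $2$, since the mass term alone contributes $2\rho\Id$; this is the room you observed.) The paper's route, in turn, isolates the integer-versus-real comparison as the only non-generic input, since its Brascamp--Lieb step bounds the real-valued moment generating function for any convex $W$.
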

\begin{proof}
     Let $\hat\varphi_{\Lambda;\beta,\rho}^\eta$ denote the finite volume Gibbs measure on $\R^\Lambda$ with density proportional to \eqref{eq:def_IVGFF} and with boundary condition $\eta$. Then there is a stochastic domination of moment generating functions
     \begin{equation}
         \varphi_{\Lambda;\beta,\rho}^0\left(e^{v \cdot \sigma}\right)
         \leq \hat\varphi_{\Lambda;\beta,\rho}^0\left(e^{v \cdot \sigma}\right),
     \end{equation}
     proved in~\cite[Lemma D.1, cf.~also the proof of Theorem 1.2]{Aizenman:DepinningIntegerrestrictedGaussian2022}. 

     Let
     \begin{equation}
         \psi(t) := \ln\left(\hat\varphi_{\Lambda;\beta,\rho}^0\left(e^{tv \cdot \sigma}\right)\right).
     \end{equation}
     Then
     \begin{equation}\label{eq:psi''_var}
         \psi''(t) = \Var_t(v \cdot \sigma),
     \end{equation}
     where $\Var_t$ is the variance with respect to the measure
     \begin{equation}
         \Phi_t(A) := \frac{\hat\varphi^0_{\Lambda;\beta,\rho}(\I_A e^{tv \cdot \sigma})}{\hat\varphi^0_{\Lambda;\beta,\rho}(e^{tv \cdot \sigma})}, \quad A \in \mathcal B(\R^\Lambda).
     \end{equation}
     The measure $\Phi_t$ has a density proportional to $e^{-H_t}$, where
     \begin{equation}
         H_t(\sigma) := \beta\sum_{\set{i,j} \in \E^b(\Lambda)} W(\sigma_i - \sigma_j) +  \rho \sum_{i \in \Lambda} \sigma_i^2 - tv \cdot \sigma.
     \end{equation}
     By convexity of $W$ it holds $H_t'' \geq \rho \Id$ as quadratic forms, where $H_t''$ is the Hessian of $H_t$. 
     Thus, applying the Brascamp--Lieb inequality \cite{brascamp1975some,Brascamp:ExtensionsBrunnMinkowskiPrekopaLeindler1976} in \eqref{eq:psi''_var}, we have
     \begin{equation}
         \psi''(t) \leq \frac{\abs{v}^2}{\rho}.
     \end{equation}
     Integrating twice one obtains that
     \begin{equation}
         \psi(1) \leq \frac{\abs{v}^2}{2\rho}
     \end{equation}
     which yields the claim after taking the exponential on both sides.
\end{proof}

The magnetization of the integer valued GFF with a mass in a box of size $N$ decays to zero as $N \to \infty$.
\begin{proposition}\label{thm:decay_magnetization_IVGFF}
    Let $\beta > 0$, $\rho > 0$ and $S \in \N$. In addition to convexity and evenness, assume that $W \colon \R \to \R$ is defined for all real values and is a mixture of Gaussians and super-Gaussian.
    Define the magnetization in the box of size $N$ by $A_N := N^{-d} \sum_{i \in B_N} \sigma_i$. Then
    \begin{equation}
        \varphi_{B_N;\beta,\rho}^S\left(\abs{A_N}\right) \to 0
        \quad\text{as } N \to \infty.
    \end{equation}
    Here the exponent $S$ denotes the constant boundary condition $\eta_i = S$ for all $i \in \Z^d$.
\end{proposition}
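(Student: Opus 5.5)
The plan follows the outline in the introduction. First we get a large deviation bound for the average spin under zero boundary conditions. Then we transfer it to the $S$ boundary condition at a cost of only $\mathcal O(N^{d-1})$ in energy.

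\emph{Step 1 (zero boundary).} Fix $\eps>0$ and take $v=t\I_{B_N}$ with $t>0$ in \zcref{thm:exponential_brascamp_lieb}. Since $\abs{v}^2=t^2\abs{B_N}\le t^2 3^d N^d$, Markov's inequality gives
\[
\varphi^0_{B_N;\beta,\rho}(A_N\ge \eps)\le \expa{-t\eps N^d+\tfrac{t^2 3^d N^d}{2\rho}}.
\]
Optimizing over $t$ yields $\exp(-c(\eps)N^d)$ with $c(\eps)=\rho\eps^2/(2\cdot 3^d)$. By symmetry, the same bound holds for $A_N\le-\eps$.

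\emph{Step 2 (change of boundary condition).} Write the density of $\varphi^S_{B_N}$ with respect to $\varphi^0_{B_N}$. The two Hamiltonians differ only through the edges in $\partial_{\exterior}B_N$, by the amount $\beta\sum_{i\in\partial B_N}\bigl(W(\sigma_i-S)-W(\sigma_i)\bigr)$. This difference is unbounded because the spins take values in $\Z$, so the crude bound fails. Instead we use the mass term to control boundary spins. By convexity, $W(\sigma_i-S)-W(\sigma_i)\ge -C(1+\abs{\sigma_i})$ with $C$ depending on $W,S$; for $\abs{x}\ge S$, for instance, one can use $W(x-S)\ge W(x)-SW'(x)$, together with growth control from the super-Gaussian assumption, which gives at most quadratic growth of $W$. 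Then
\[
\varphi^S(A_N\ge\eps)\le \frac{\varphi^0\bigl(\I_{A_N\ge\eps}\,e^{\beta C\sum_{\partial B_N}(1+\abs{\sigma_i})}\bigr)}{\varphi^0\bigl(e^{-\beta\sum_{\partial B_N}(W(\sigma_i-S)-W(\sigma_i))}\bigr)}.
\]
For the numerator, apply Cauchy--Schwarz. The factor $e^{2\beta C\sum_{\partial B_N}\abs{\sigma_i}}$ is controlled by \zcref{thm:exponential_brascamp_lieb} applied to the vectors $v=\pm 2\beta C e_i$ (summing over sign patterns, or using $e^{\abs{x}}\le e^x+e^{-x}$ together with Hölder over coordinates). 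This gives a bound $e^{C'N^{d-1}}$. For the denominator, use Jensen: it is at least $\exp(-\beta\sum_{\partial B_N}\varphi^0(W(\sigma_i-S)-W(\sigma_i)))\ge e^{-C''N^{d-1}}$, since second moments of the $\sigma_i$ are bounded uniformly by Step 1's moment bound and $W$ grows at most quadratically. Altogether,
\[
\varphi^S(A_N\ge\eps)\le e^{-c(\eps)N^d/2+C'''N^{d-1}}\to 0.
\]
The event $A_N\le-\eps$ is handled the same way; alternatively, FKG with $S\ge 0$ gives $\varphi^S\ge_{st}\varphi^0$, which directly yields $\varphi^S(A_N\le -\eps)\le\varphi^0(A_N\le-\eps)$.

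\emph{Step 3 (conclusion).} We need uniform integrability to pass from probabilities to the expectation. Bound $\varphi^S(A_N^2)$ by the same change-of-measure and Cauchy--Schwarz argument, using the fact that $\varphi^0(A_N^4)$ is bounded, which follows from the moment generating function bound in Step 1. Then
\[
\varphi^S(\abs{A_N})\le \eps+\varphi^S(A_N^2)^{1/2}\varphi^S(\abs{A_N}\ge\eps)^{1/2}\to\eps,
\]
and letting $\eps\downarrow 0$ finishes the proof.

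The main obstacle is Step 2: controlling the boundary energy when the spins are unbounded. The plan is to rely entirely on the uniform Gaussian tails supplied by \zcref{thm:exponential_brascamp_lieb} to show that the Radon--Nikodym factor costs only $e^{\mathcal O(N^{d-1})}$. This requires a growth bound on $W$, and if the super-Gaussian bound is not enough for that, we would fall back on inserting an intermediate truncation of the boundary spins.
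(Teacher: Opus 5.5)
Your proof is correct, but it gets the key estimate by a different mechanism than the paper. The key estimate is that replacing the $0$ boundary condition by $S$ costs only $e^{\mathcal O(N^{d-1})}$, which cannot compete with the $e^{-c(\eps)N^d}$ large deviation bound.

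\textbf{Your route.} You keep the box $B_N$ fixed and control the unbounded Radon--Nikodym factor $e^{-\beta\Delta}$ directly. In the numerator you use Cauchy--Schwarz together with the exponential Brascamp--Lieb lemma; in the denominator you use Jensen together with uniform second moments. This needs the growth bounds $W(x)-W(x-S)\le C(1+\abs{x})$ and $W(x)\le C(1+x^2)$, and these do hold. Super-Gaussianity gives $W^{(2)}(n)\le W^{(2)}(0)$, so the increments of $W$ grow at most linearly. The mixture-of-Gaussians representation plus convexity would also give this. So your fallback truncation is not needed.

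\textbf{The paper's route.} The paper never handles an unbounded density. It realizes $\varphi^S_{B_N;\beta,\rho}$ as $\varphi^0_{B_{N+1};\beta,\rho}$ conditioned on $\sigma\equiv S$ on the outer layer $\partial B_{N+1}$, and drops the indicator in the numerator. It then bounds the probability of this conditioning event from below by $e^{-CN^{d-1}}$, by first forcing $\sigma=0$ on $\partial B_N$. Two ingredients make this work:
\begin{itemize}
\item the absolute-value FKG inequality, which gives $\varphi^0(\sigma_{\partial B_N}=0)\ge\prod_x\varphi^0(\sigma_x=0)$;
\item symmetry and log-concavity of the one-site marginals, which give $\varphi^0(\sigma_x=0)\ge e^{-c'}$.
\end{itemize}
The paper also works with the exponential moment $\varphi^S(e^{2u\abs{A_N}})$ from the start, so no separate integrability step arises.

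\textbf{What each buys.} Your argument is more elementary. It uses super-Gaussianity only through the growth of $W$, so in this proposition that assumption is essentially dispensable given mixture of Gaussians and convexity. The paper's argument needs no growth control on $W$, but relies essentially on the absolute-value FKG structure.

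\textbf{Two small points.}
\begin{itemize}
\item In the numerator, use the sign-pattern expansion $\prod_i(e^{c\sigma_i}+e^{-c\sigma_i})=\sum_\kappa e^{c\kappa\cdot\sigma}$, which yields the bound $2^{\abs{\partial B_N}}e^{c^2\abs{\partial B_N}/(2\rho)}$. Your alternative of a generalized H\"older inequality over the $\abs{\partial B_N}$ coordinates would only give $e^{\mathcal O(N^{2(d-1)})}$, which is too large for $d\ge 3$.
\item In Step 3, your bound on $\varphi^S(A_N^2)$ is $e^{\mathcal O(N^{d-1})}$, not uniform, so this is not uniform integrability in the strict sense. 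It suffices only because it is multiplied by $\varphi^S(\abs{A_N}\ge\eps)^{1/2}\le e^{-c(\eps)N^d/4+\mathcal O(N^{d-1})}$.
\end{itemize}
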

\begin{proof}
    For $t,u>0$ we have that
    \begin{equation}\label{eq:magnetization_split_markov}
        \varphi_{B_N;\beta,\rho}^S\left(\abs{A_N} \right)
        \leq t + \varphi_{B_N;\beta,\rho}^S\left(\abs{A_N} \I_{\abs{A_N} > t}\right)
        \leq t + \frac{e^{-ut}}{e u}\varphi_{B_N;\beta,\rho}^S\left(e^{2u \abs{A_N}}\right).
    \end{equation}
    We have,
    \begin{equation}
        \varphi_{B_N;\beta,\rho}^S\left(e^{2u \abs{A_N}}\right)
        =
        \frac{\varphi_{B_{N+1};\beta,\rho}^{0}\left(e^{2u \abs{A_N}} \prod_{i \in \partial B_{N+1}} \I_{\sigma_i = S}\right)}{\varphi_{B_{N+1};\beta,\rho}^{0}\left(\prod_{i \in \partial B_{N+1}} \I_{\sigma_i = S}\right)}.
    \end{equation}
    By \zcref{thm:exponential_brascamp_lieb} the numerator is bounded by
    \begin{equation}\label{eq:upper_bound_numerator_MGF_new_boundary_conditions}
        \varphi_{B_{N+1};\beta,  \rho}^{0}\left(e^{2u \abs{A_N}} \prod_{i \in \partial B_{N+1}} \I_{\sigma_i = S}\right)
        \leq 2 \expa{\frac{2u^2}{\rho \abs{B_N}}}.
    \end{equation}
    To bound the denominator, enumerate $\set{\sigma_1, \dots, \sigma_K}=\partial B_N$. Then, by the absolute value FKG inequality (cf.~\zcref{remark:FKG_for_convex_W})
    \begin{equation}
    \begin{aligned}\label{eq:lower_bound_denominator_MGF_new_boundary_conditions}
        \varphi_{B_{N+1};\beta,\rho}^{0}\left(\sigma_{B_{N+1}} = S\right)
        &\geq \varphi_{B_{N+1};\beta,\rho}^{0}\left(\sigma_{\partial B_{N+1}} = S\mid \sigma_{\partial B_{N}} = 0\right) \varphi_{B_{N+1};\beta,\rho}^{0}\left(\sigma_{\partial B_{N}} = 0\right) \\
        &\geq \varphi_{B_{N+1};\beta,\rho}^{0}\left(\sigma_{\partial B_{N+1}} = S\mid \sigma_{\partial B_{N}} = 0\right) \prod_{k=1}^K\varphi_{B_{N+1};\beta,\rho}^{0}\left(\sigma_{k} = 0 \right).
    \end{aligned}
    \end{equation}
    A computation shows that
    \begin{equation}
        \varphi_{B_{N+1};\beta,\rho}^{0}\left(\sigma_{\partial B_{N+1}} = S\mid \sigma_{\partial B_{N}} = 0\right)
        \geq e^{-(\rho S^2+c_\rho)\abs{\partial B_{N+1}}}
    \end{equation}
    where $c_\rho := \ln \sum_{k \in \Z} e^{-\rho k^2}$.
    We claim that there exists a constant $c' > 0$ such that for every $x \in \partial B_N$
    \begin{equation}\label{eq:IVGFF_prob_of_zero}
        \varphi_{B_{N+1};\beta,\rho}^{0}\left(\sigma_x = 0\right) \geq e^{-c'}.
    \end{equation}
    Combining \eqref{eq:upper_bound_numerator_MGF_new_boundary_conditions} and \eqref{eq:lower_bound_denominator_MGF_new_boundary_conditions} we obtain, with $c'' = \rho S^2 + c_\rho + c'$
    \begin{equation}
        \varphi_{B_N;\beta,\rho}^S\left(e^{2u \abs{A_N}}\right)
        \leq 2e^{c'' \abs{\partial B_{N+1}}+\frac{2u^2}{\rho \abs{B_N}}}.
    \end{equation}
    Therefore, choosing $u=\frac {\rho t}4 \abs{B_N}^d$ and inserting into \eqref{eq:magnetization_split_markov}, we have that
    \begin{equation}
        \varphi_{B_N;\beta,\rho}^S\left(\abs{A_N} \right)
        \leq t + \frac{8e^{-\frac {\rho t^2}8 \abs{B_N}^d+ c'' \abs{\partial B_{N+1}}}}{e \rho t \abs{B_N}}.
    \end{equation}
    Thus, choosing $t = N^{-\frac13}$, the right hand side converges to zero. It remains to prove \eqref{eq:IVGFF_prob_of_zero}. 
    For this, first note that the distribution of $\sigma_x$ is symmetric when we have $0$ boundary conditions:
    \begin{equation}
        \varphi_{B_{N+1};\beta,\rho}^{0}\left(\sigma_x = k\right) 
        =\varphi_{B_{N+1};\beta,\rho}^{0}\left(\sigma_x = -k\right).
    \end{equation}
    Moreover, it is log-concave:
    \begin{equation}
        \begin{aligned}
            \varphi_{B_{N+1};\beta,\rho}^{0}\left(\sigma_x = k\right)^2
            &= \varphi_{B_{N+1};\beta,\rho}^{0}\left(\sigma_x = k\right) \varphi_{B_{N+1};\beta,\rho}^{1}\left(\sigma_x = k+1\right) \\
            &\geq \varphi_{B_{N+1};\beta,\rho}^{0}\left(\sigma_x = k+1\right) \varphi_{B_{N+1};\beta,\rho}^{1}\left(\sigma_x = k\right) \\
            &= \varphi_{B_{N+1};\beta,\rho}^{0}\left(\sigma_x = k+1\right) \varphi_{B_{N+1};\beta,\rho}^{0}\left(\sigma_x = k-1\right),
        \end{aligned}
    \end{equation}
    where the inequality follows from the FKG lattice condition of $\varphi_{B_{N+2};\beta,\rho}^{0}$ and the four function theorem. Therefore,
    \begin{equation}
        \N \ni k \mapsto \varphi_{B_{N+1};\beta,\rho}^{0}\left(\sigma_x = k\right)
    \end{equation}
    is non-increasing in $k$ and maximal at $k=0$. Combining this monotonicity with the exponential Markov inequality and \zcref{thm:exponential_brascamp_lieb}, we have for all $k \in \N$
    \begin{equation}
        2k\varphi_{B_{N+1};\beta,\rho}^{0}\left(\sigma_x = 0\right)
        \geq \sum_{l=0}^{k-1} \varphi_{B_{N+1};\beta,\rho}^{0}\left(\abs{\sigma_x} = l\right)
        = 1 - \varphi_{B_{N+1};\beta,\rho}^{0}\left(\abs{\sigma_x} \geq k\right)
        \geq 1 - 2e^{-\frac{\rho k^2}2}.
    \end{equation}
    Hence, choosing $k$ large enough, we obtain \eqref{eq:IVGFF_prob_of_zero} for a constant $c'(d, \rho)$ independent of $x$ and $N$.
\end{proof}

\begin{remark}
    The condition that $W$ is a mixture of Gaussians could be removed, if one is able to prove \eqref{eq:integer_valued_field_MGF} differently.
\end{remark}

\begin{proof}[Proof of uniqueness in \zcref{thm:main_mixtures_of_gaussians}.]
    We claim there exists $\rho_*(\beta) > 0$ such that for any $\rho \in (-\rho_*, \infty)$ there exists a $R(\beta,\rho) > \max(0, \rho)$ such that for every $x \in \Lambda$
    \begin{equation}\label{eq:domination_IVGFF_magnetization}
        \mu_{\Lambda;\beta,\rho}^+(\sigma_x)
        \leq \varphi_{\Lambda;\beta,R}^S(\sigma_x)
    \end{equation}
    and such that both $\rho_*$ and $R$ depend continuously on $\beta$. Assuming this claim, we bound
    \begin{equation}
        0 \leq \mu_{\beta,\rho}^+(\sigma_0)
        = \mu_{\beta,\rho}^+(A_N)
        \leq \mu_{B_N;\beta,\rho}^+(A_N)
        \leq \varphi_{B_N;\beta,R}^S(A_N)
    \end{equation}
    and by \zcref{thm:decay_magnetization_IVGFF} the right hand side converges to zero as $N \to \infty$. Then \zcref{thm:unique_gibbs_measure_if_magnetization_zero} implies uniqueness of the infinite volume Gibbs measure.
    
    To prove \eqref{eq:domination_IVGFF_magnetization}, we follow the proof of~\cite[Lemma 8.2]{DAlimonte:FreeEnergyAnalyticity2026}. Let $\lambda(\sigma) := \prod_{i \in \Lambda \cup \set{*}} \lambda_i(\sigma_i)$ be the product of measures on $\Z$, and on $\Z^{\Lambda \cup\set{*}}$ define the measure
    \begin{equation}
        \phi_\lambda(\sigma) \propto \expa{-\beta \sum_{\set{i,j} \in \E^b(\Lambda)} W(\sigma_i - \sigma_j)}\lambda(\sigma),
    \end{equation}
    where we understand $\sigma_j=\sigma_*$ for all $j \notin \Lambda$.
    For $\rho \in \R$, $R > 0$, $\Delta \subset \Lambda$ and $i \in \Lambda \cup \set{*}$, we define
    \begin{equation}
        \lambda_i^\Delta(\sigma_i) := \begin{cases}
            \delta_{-S} + \delta_{S} & \text{if } i = *,\\
            \sum_{k=-S}^S e^{-\rho k^2} \delta_{k} & \text{if } i \neq * \text{ and } i \notin \Delta,\\
            \sum_{k \in \Z} e^{-R k^2} \delta_{k} & \text{if } i \neq * \text{ and } i \in \Delta.\\
        \end{cases}
    \end{equation}
    Note that with these definitions
    \begin{equation}
        \mu_{\Lambda;\beta,\rho}^+(\sigma_x)
        = \phi_{\lambda^\emptyset}(\sigma_x \sigma_*)
        \quad\text{and}\quad
        \varphi_{\Lambda;\beta,R}^S(\sigma_x)
        = \phi_{\lambda^\Lambda}(\sigma_x \sigma_*).
    \end{equation}
    By induction it suffices to show that for $\Delta \subset \Lambda$ and $u \in \Lambda \setminus \Delta$
    \begin{equation}\label{eq:one_step_measure_change}
        \phi_{\lambda^\Delta}(\sigma_x \sigma_*)
        \leq \phi_{\lambda^{\Delta \cup \set{u}}}(\sigma_x \sigma_*).
    \end{equation}
    A computation shows that for $\xi \in \N^{\Lambda \cup \set*}$
    \begin{equation}
        \phi_{\lambda^\Delta}(\sigma_x \sigma_* \mid \abs{\sigma}=\xi)
        = h(\xi)
        =\phi_{\lambda^{\Delta \cup \set{u}}}(\sigma_x \sigma_* \mid \abs{\sigma} = \xi),
    \end{equation}
    because conditional on $\abs{\sigma}$ the two measures are the same. Here,    
    \begin{equation}
        h(\xi) := \frac{S\xi_x \sum_{\kappa \in \set{\pm}^{\Lambda \cup \set{*}}} \kappa_x\kappa_* e^{\beta \sum_{\set{i,j} \in \E^b(\Lambda)} J(\xi_i, \xi_j) \kappa_i\kappa_j}}{\sum_{\kappa \in \set{\pm}^{\Lambda \cup \set{*}}} e^{\beta \sum_{\set{i,j} \in \E^b(\Lambda)} J(\xi_i, \xi_j) \kappa_i\kappa_j}}
    \end{equation}
    is the two point correlation of a ferromagnetic Ising model with coupling constants
    \begin{equation}
        J(\xi_i, \xi_j) := \frac12\left(W(\xi_i+\xi_j) - W(\xi_i - \xi_j)\right).
    \end{equation}
    By convexity of $W$, these coupling constant are increasing in $\xi$. Griffith's second inequality for the Ising model implies that correlation functions of the Ising model are increasing in the coupling constants and thus that $h$ is increasing in $\xi$.
    Hence,
    \begin{equation}
        \phi_{\lambda^\Delta}(\sigma_x \sigma_*)
        = \phi_{\lambda^\Delta}(h(\abs{\sigma}))
        \quad\text{and}\quad
        \phi_{\lambda^{\Delta \cup \set{u}}}(\sigma_x \sigma_*)
        = \phi_{\lambda^{\Delta \cup \set{u}}}(h(\abs{\sigma}))
    \end{equation}
    Assume for now that for all $t \geq 0$
    \begin{equation}\label{eq:stochastic_domination_absolute_value_sigma}
        \phi_{\lambda^\Delta}(\abs{\sigma_u} > t) 
        \leq \phi_{\lambda^{\Delta \cup \set{u}}}(\abs{\sigma_u} > t).
    \end{equation}
    We will prove this below. We claim that this inequality implies
    \begin{equation}
        \phi_{\lambda^\Delta}(f(\abs{\sigma})) 
        \leq \phi_{\lambda^{\Delta \cup \set{u}}}(f(\abs{\sigma}))
    \end{equation}
    for any $f \colon \N^{\Lambda \cup \set{*}} \to \R_{\geq 0}$ which is non-decreasing, and thus in particular \eqref{eq:one_step_measure_change}.
    
    As before,
    \begin{equation}
        \phi_{\lambda^\Delta}(f(\abs{\sigma}) \mid \abs{\sigma_u} = t)
        = g(t)
        =\phi_{\lambda^{\Delta \cup \set{u}}}(f(\abs{\sigma}) \mid \abs{\sigma_u} = t).
    \end{equation}
    It follows from the absolute value lattice FKG condition~\cite[Theorem 6.3]{Lammers:DelocalisationAbsolutevalueFKGSolidonsolid2024} and Holleys inequality~\cite[Theorem 2.1]{Grimmett:RandomClusterModel2006} that $g$ is non-decreasing in $t$. Hence,
    \begin{equation}
        \phi_{\lambda^\Delta}(g(\abs{\sigma_u}))
        = \int_{[0,\infty)} \phi_{\lambda^\Delta}(g(\abs{\sigma_u}) > t) \d t
        \leq \int_{[0,\infty)} \phi_{\lambda^{\Delta \cup \set{u}}}(g(\abs{\sigma_u}) > t) \d t
        =\phi_{\lambda^{\Delta \cup \set{u}}}(g(\abs{\sigma_u})),
    \end{equation}
    where the inequality follows form \eqref{eq:stochastic_domination_absolute_value_sigma}. 
    
    It remains to prove \eqref{eq:stochastic_domination_absolute_value_sigma}. Write
    \begin{equation}
        \phi := \phi_{\lambda^\Delta}, \quad 
        \phi' := \phi_{\lambda^{\Delta \cup \set{u}}}(\cdot \mid \abs{\sigma_u} \leq S), \quad 
        \phi'' := \phi_{\lambda^{\Delta \cup \set{u}}}.
    \end{equation}
    Clearly, the claim is true for $t \geq S$. Thus assume that $t < S$. Then
    \begin{equation}
        \phi''(\abs{\sigma_u} \leq t)
        = \phi'(\abs{\sigma_u} \leq t) \phi''(\abs{\sigma_u} \leq S).
    \end{equation}
    For any $R \geq \rho$ it holds that
    \begin{equation}
        \phi'(\abs{\sigma_u} \leq t) 
        = \frac{\phi\left(\I_{\abs{\sigma_u}\leq t} e^{-(R- \rho)\sigma_u^2}\right)}{\phi\left(e^{-(R - \rho) \sigma_u^2}\right)}
        \leq e^{(R - \rho) S^2}\phi\left(\abs{\sigma_u} \leq t\right).
    \end{equation}
    Moreover, by the absolute value FKG inequality for $\phi''$ we have that
    \begin{equation}
    \begin{aligned}
        \phi''(\abs{\sigma_u} \leq S) 
        &\leq \phi''(\abs{\sigma_u} \leq S \mid \abs{\sigma_i} = 0 \text{ for all } i \notin \set{u, *})\\
        &\leq \max_{\alpha \in \set{0, \dots, 2d}}\frac{\sum_{\sigma_* \in \set{-S, S}}\sum_{\sigma = -S}^S e^{-R \sigma^2 -\beta \alpha W(\sigma)- \beta (2d-\alpha)W(\sigma - \sigma_*)}}{\sum_{\sigma_* \in \set{-S, S}}\sum_{\sigma \in \Z} e^{-R \sigma^2 -\beta \alpha W(\sigma)- \beta (2d - \alpha)W(\sigma - \sigma_*)}} 
        := \delta_\beta(R).
    \end{aligned}
    \end{equation}
    where $d_i^\Lambda := \abs{\set{j \in \Lambda \mid \set{i,j} \in \E^b(\Lambda)}}$. We make the choice
    \begin{equation}\label{eq:def_widehat_m}
        R := \min\left(\abs{\rho}+1, \rho - S^{-2}{\ln \delta_\beta(\abs{\rho}+1)}\right).
    \end{equation}
    Since $\delta_\beta < 1$ we have that $R > \rho$. Further, $R(\rho=0) > 0$ and by continuity of $R$ in $\rho$ this remains true for a neighborhood of the origin. In particular there exists $\rho_*(\beta)>0$ such that $R>0$ for all $\rho \in (-\rho_*, \infty)$. For future reference, we remark that $\rho_*(\beta)$ depends continuously on $\beta$. 
    
    By the absolute value FKG inequality we have that $\delta_\beta$ is increasing in $R$ and thus 
    \begin{equation}
    \begin{aligned}
        \phi''(\abs{\sigma_u} \leq t)
        &\leq \delta_\beta(R) e^{(R - \rho)S^2} \phi(\abs{\sigma_u} \leq t) \\
        &\leq \delta_\beta(\abs{\rho} +1) e^{(R - \rho)S^2} \phi(\abs{\sigma_u} \leq t) \\
        &\leq \phi(\abs{\sigma_u} \leq t)
    \end{aligned}
    \end{equation}
    which is \eqref{eq:stochastic_domination_absolute_value_sigma}. 
\end{proof}

\begin{remark}
    In the special case $W(x)=x^2$ one can use an inequality of van Beijeren--Sylvester \cite[Corollary 2.4]{VanBeijeren:PhaseTransitionsContinuousspin1978} to obtain the domination \eqref{eq:domination_IVGFF_magnetization}, providing an alternative to the method of \cite{DAlimonte:FreeEnergyAnalyticity2026} outlined above. This was also remarked on in \cite[Equation (3.20)]{Bricmont:RandomSurfacesStatistical1986a}.
\end{remark}

\subsection{Exponential decay of correlations}
\label{sec:exp_decay_mixture_of_gaussians}

As in the two dimensional case, we prove exponential decay of correlations by showing that our model is in a certain sense a ``subcritical'' percolation model. To do so, it will be convenient to work in the dilute random cluster representation introduced in \zcref{sec:dilute_random_cluster}. Recall its definition \eqref{eq:def_DRC} and the definition of $\Psi^\xi_{\Lambda;p,\lambda}$ from \zcref{def:Psi_Phi_successive_expectations}.

The following is the key estimate, which is proved by an extension of the OSSS inequality to our setting. Since the proof is mainly a verbatim extension of the arguments from \cite{Duminil-Copin:SharpPhaseTransition2019} and \cite[Section 8]{Gunaratnam:ExistenceTricriticalPoint2024a}, we include most of the details in \zcref{appendix:OSSS}. Our OSSS inequality applies to the following class of measures.

\begin{definition}[Weakly monotonic measure]\label{def:weak_monotonicity}
    Let $(V,E)$ be a finite graph.
    We say that a positive measure $\pi$ on $\set{0, \dots, S}^V \times \set{0,1}^E$ is \emph{weakly monotonic} if the following holds for any $X \subset V$ and any $F \subset E$ such that if $\set{i,j} \in F$ then $i,j \in X$:

    For any choice $\eta_1, \eta_2 \in \set{0, \dots, S}^X$ and $\xi_1, \xi_2 \in \set{0,1}^{F}$ satisfying $(\eta_1, \xi_1) \leq (\eta_2, \xi_2)$, and 
    $\pi(\varphi_X=\eta_1 \text{ and } \omega_E = \xi_1) > 0$ and $\pi(\varphi_X = \eta_2 \text{ and } \omega_E = \xi_2) > 0$,
    the following two conditions hold:
    \begin{enumerate}[label=(\roman*)]
        \item For any $y \in V \setminus X$ and $s \in \set{0,\dots,S}$:
        \begin{equation}
            \pi(\varphi_y \geq s \mid \varphi_X = \eta_1 \text{ and } \omega_E = \xi_1)
            \leq 
            \pi(\varphi_y \geq s \mid \varphi_X = \eta_2 \text{ and } \omega_E = \xi_2),
        \end{equation}
        \item For any $e \in E \setminus F$:
        \begin{equation}
            \pi(\omega_e = 1 \mid \varphi_X = \eta_1 \text{ and } \omega_E = \xi_1)
            \leq 
            \pi(\omega_e = 1 \mid \varphi_X = \eta_2 \text{ and } \omega_E = \xi_2).
        \end{equation}
    \end{enumerate}
\end{definition}

\begin{lemma}[{\cite[Lemma 8.2]{Gunaratnam:ExistenceTricriticalPoint2024a},~\cite[Lemma 3.2]{Duminil-Copin:SharpPhaseTransition2019}}]
\label{thm:weakly_monotonic_measure_covariance_lower_bound}
    Let $p, \lambda \colon \N \times \N \to [0,\infty)$ be symmetric and assume that $p \in [0,1)$ and $\lambda > 0$. Additionally, assume that $a\mapsto {p(a,b)}$ is non-decreasing for any choice of $b \in \set{0,\dots,S}$ and assume that for any choice of $E \subset \E^b(\Lambda)$ and $\xi \in \set{0,1}^E$ the measure $\Psi_{\Lambda;p,\lambda}^\xi$ satisfies the lattice FKG condition.

    Then, for every $N \in \N$
    \begin{equation}
    \begin{aligned}
        &\phntm\sum_{i \in \Lambda} \Cov_{\Lambda;p,\lambda}\left(\I_{0 \overset{\text{open}}{\longleftrightarrow} \partial B_N}; \sigma_i\right) + 
        \sum_{e \in \E^b(\Lambda)} \Cov_{\Lambda;p,\lambda}\left(\I_{0 \overset{\text{open}}{\longleftrightarrow} \partial B_N}; \omega \right) \\
        &\geq \frac{n}{4d Q_N} \pi^+_{\Lambda;p,\lambda}\left(0 \overset{\text{open}}{\longleftrightarrow} \partial B_N\right)(1- \pi^+_{\Lambda;p,\lambda}\left(0 \overset{\text{open}}{\longleftrightarrow} \partial B_N)\right),
    \end{aligned}
    \end{equation}
    where $Q_N := \max_{x \in B_N}\sum_{k=0}^{N-1} \pi^+_{\Lambda;p,\lambda}\left(x \overset{\text{open}}{\longleftrightarrow} \partial B_k(x)\right)$ and where $\Cov_{\Lambda;p,\lambda}$ is the covariance with respect to $\pi^+_{\Lambda;p,\lambda}$. As before, the connectivity events refer to connectivity using open edges in $\omega$.
\end{lemma}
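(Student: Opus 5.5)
This is an OSSS-type lower bound on the total influence for the generalized dilute random cluster measure. I would follow the scheme of Duminil-Copin--Raoufi--Tassion, with the weakly monotonic OSSS inequality of Gunaratnam--Krachun--Panagiotis in place of the monotone one.

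\textbf{Step 1: weak monotonicity.} I would first show that $\pi^+_{\Lambda;p,\lambda}$ is weakly monotonic in the sense of \zcref{def:weak_monotonicity}.

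For condition (i), I would use \zcref{thm:mu_conditioned_as_iterated_measures}. Conditionally on $\varphi_X=\eta$ and $\omega_F=\xi$, the $\varphi$-marginal is $\Psi^{\xi}_{\Lambda;p,\lambda}(\cdot\mid\varphi_X=\eta)$. Then \zcref{thm:stochastic_ordering_dilute_measures}, applied with $p_1=p_2=p$ and $\lambda_1=\lambda_2=\lambda$, gives
\[
\Psi^{\xi_1}(\cdot\mid\eta_1)\leq_{st}\Psi^{\xi_2}(\cdot\mid\eta_2).
\]
The hypotheses (ii), (iii) of that proposition are trivial when the two parameter choices coincide, and the FKG assumption on $\Psi^\xi$ is exactly one of our hypotheses.

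For condition (ii), conditionally on $\varphi$ the edge law is the random cluster measure $\phi^{\xi}_{\Lambda;p(\varphi)}$. Its edge probabilities are non-decreasing in $\varphi$ because $p$ is monotone in each argument. It is also monotone in the boundary condition $\xi$ by the standard random cluster comparison results. I would combine this with the stochastic ordering of the $\varphi$-marginals via a coupling: the edge marginal $\pi(\omega_e=1\mid\cdot)$ equals the $\Psi$-average of $\phi^{\xi}_{\Lambda;p(\varphi)}(\omega_e=1)$, which is an increasing function of $\varphi$, so it is ordered.

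\textbf{Step 2: OSSS for weakly monotonic measures.} Next I would invoke the OSSS inequality for weakly monotonic measures on $\set{0,\dots,S}^V\times\set{0,1}^E$. I would obtain it by reducing to the case $S=1$: encode each spin $\varphi_i$ by the increasing binary threshold variables $\I_{\varphi_i\geq s}$ for $s=1,\dots,S$. These are revealed in increasing order of $s$, so weak monotonicity transfers. This yields, for any decision tree $T$ computing $f=\I_{0\leftrightarrow\partial B_N}$,
\[
\Var(f)\le \sum_i \delta_i(T)\,\mathrm{Inf}_i(f)+\sum_e\delta_e(T)\,\mathrm{Inf}_e(f),
\]
where the influences of increasing $f$ are bounded by constant multiples of covariances with $\sigma_i$ (that is, $\varphi_i$) and with $\omega_e$. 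The bound on influences uses the monotone coupling from Step 1, exactly as in Gunaratnam--Krachun--Panagiotis, Lemma 8.2.

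\textbf{Step 3: choice of decision trees.} For each $k\in\set{1,\dots,N}$ I would use the tree that explores the open clusters intersecting $\partial B_k$: it reveals edges (and the spins at their endpoints) reachable from $\partial B_k$ by open paths. An edge or vertex at $x$ is then revealed only if $x\overset{\text{open}}{\longleftrightarrow}\partial B_k$, which forces $x\overset{\text{open}}{\longleftrightarrow}\partial B_{|k-|x||}(x)$. Averaging over $k$ bounds every revelment by $\frac{2}{N}\sum_{j<N}\max_x\pi^+(x\overset{\text{open}}{\longleftrightarrow}\partial B_j(x))$, which is at most $2Q_N/N$. Combining with Step 2 and absorbing constants (including the $2d$ edges per vertex) gives the stated bound with $\frac{N}{4dQ_N}$; I read the $n$ in the statement as $N$.

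\textbf{Main obstacle.} The hard part is Step 2 together with condition (ii) of Step 1. The measure is not monotonic in the strong sense, so the influence-to-covariance comparison must be done with care: revealing $\omega$ feeds back into the law of $\varphi$ through the factor $2^{K(\omega)}$. Handling this is why the FKG hypothesis on $\Psi^\xi$ for all $\xi$ is assumed. I would first try to prove (ii) through the explicit single-edge conditional probabilities of the random cluster measure, as in \zcref{thm:vertex_marginal_FKG_general}.
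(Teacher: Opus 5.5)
Your proposal is correct and follows the paper's route: weak monotonicity of $\pi^+_{\Lambda;p,\lambda}$ via \zcref{thm:mu_conditioned_as_iterated_measures} and \zcref{thm:stochastic_ordering_dilute_measures} (with $p_1=p_2$, $\lambda_1=\lambda_2$), plus monotonicity of the random cluster measure in $p(\varphi)$ and in $\xi$. This is followed by the OSSS inequality for weakly monotonic measures, obtained by the threshold reduction to $S=1$, and the Duminil-Copin--Raoufi--Tassion exploration from $\partial B_k$, which the paper simply delegates to the cited Lemma 8.2 of Gunaratnam--Krachun--Panagiotis.

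One small fix in Step 3: $\sum_{j<N}\max_x \pi^+_{\Lambda;p,\lambda}(x\overset{\text{open}}{\longleftrightarrow}\partial B_j(x))$ is at least, not at most, $Q_N$. Instead, bound the $k$-averaged revealment at each fixed $x\in B_N$ directly by $\frac{2}{N}\sum_{j<N}\pi^+_{\Lambda;p,\lambda}(x\overset{\text{open}}{\longleftrightarrow}\partial B_j(x))\le \frac{2Q_N}{N}$, up to the factor $2d$ for vertices revealed through an adjacent edge.
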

\begin{proof}
    We can view the measure $\pi^+_{\Lambda;p,\lambda}$ as a measure on $\set{0,\dots,S}^{\Lambda \cup \set{*}}\times\set{0,1}^{\E^b(\Lambda)}$, where $*$ is a single vertex representing all vertices in $\Lambda^c$. 
    Then the result follows form \zcref{thm:weakly_monotonic_measure_covariance_lower_bound_appendix}, if we show that the measure $\pi^+_{\Lambda;p,\lambda}$ is \emph{weakly monotonic}.

    Indeed, by \zcref{thm:stochastic_ordering_dilute_measures} we have that
    \begin{equation}
    \begin{aligned} 
        \pi^+_{\Lambda;p,\lambda}(\varphi_y \geq s \mid \varphi_X = \eta_1 \text{ and } \omega_E = \xi_1) 
        &= \Psi^{\xi_1}_{\Lambda;p,\lambda}(\varphi_y \geq s \mid \varphi_X=\eta_1)\\
        &\leq \Psi^{\xi_2}_{\Lambda;p,\lambda}(\varphi_y \geq s \mid \varphi_X=\eta_2)\\
        &=\pi^+_{\Lambda;p,\lambda}(\varphi_y \geq s \mid \varphi_X = \eta_2 \text{ and } \omega_E = \xi_2) .
    \end{aligned}
    \end{equation}
    Since $\phi^{\xi_1}_{\Lambda;p(\varphi)}(\omega_e=1)$ is increasing in $\varphi$, we can apply \zcref{thm:mu_conditioned_as_iterated_measures} to obtain
    \begin{equation}
    \begin{aligned}
        \pi^+_{\Lambda;p,\lambda}(\omega_e = 1 \mid \varphi_X = \eta_1 \text{ and } \omega_E = \xi_1) 
        &= \Psi^{\xi_1}_{\Lambda;p,\lambda}(\phi^{\xi_1}_{\Lambda;p(\varphi)}(\omega_e=1) \mid \varphi_X=\eta_1) \\
        &\leq \Psi^{\xi_2}_{\Lambda;p,\lambda}(\phi^{\xi_1}_{\Lambda;p(\varphi)}(\omega_e=1) \mid \varphi_X=\eta_2).
    \end{aligned}
    \end{equation}
    where we used again \zcref{thm:stochastic_ordering_dilute_measures}.
    Finally, we use that the random cluster measures are stochastically increasing in the boundary configurations, i.e. 
    \begin{equation}
        \phi^{\xi_1}_{\Lambda;p(\varphi)}(\omega_e=1)  \leq \phi^{\xi_2}_{\Lambda;p(\varphi)}(\omega_e=1)
    \end{equation}
    concluding the proof.
\end{proof}

A central assumption of the previous lemma is that $\Psi_{\Lambda;p,\lambda}^\xi$ satisfies the FKG lattice condition. We show that this holds for the class of models we are interested in by checking the conditions of \zcref{thm:vertex_marginal_FKG_general}. The next lemma shows that \eqref{eq:FKG_condition_p_am} is satisfied for the interaction-energies we are considering.

\begin{lemma}\label{thm:super-gaussian_potential_satisfies_p_lambdam_condition}
    Let $\beta > 0$ and $\rho \in \R$. Assume that $W$ is even, convex and super-Gaussian.
    Define 
    \begin{equation}
        p(a,b) := 1-e^{-\beta (W(a+b) - W(a-b))}
        \qquad\text{and}\qquad
        \lambda(a,b) = e^{-\rho(a^2+b^2)-\beta W(a+b)}.
    \end{equation}
    Then \eqref{eq:FKG_condition_p_am} is satisfied for all $a,b \in \N$. The inequality is strict if $W$ is strictly convex.
\end{lemma}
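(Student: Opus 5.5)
The plan is to compute everything explicitly. The factors $e^{-\rho(\cdot)}$ should cancel, and~\eqref{eq:FKG_condition_p_am} should reduce to a one-variable inequality about the function $u\mapsto\ln(1+e^{-\beta u})$. Throughout, write $a'=a+1$, $b'=b+1$, $s:=a+b+1$ and $t:=a-b$. Set
\begin{equation}
    D(a,b):=W(a+b)-W(a-b),
\end{equation}
so that $1-p(a,b)=e^{-\beta D(a,b)}$ and $2-p(a,b)=1+e^{-\beta D(a,b)}$.

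\emph{Step 1: the two explicit ratios.} In the left-hand side of~\eqref{eq:FKG_condition_p_am} the $\rho$-terms cancel, since $a^2+b^2$ is additive in $a$ and $b$. The $W(a+b)$-terms combine to
\begin{equation}
    \frac{\lambda(a',b')\lambda(a,b)}{\lambda(a',b)\lambda(a,b')}=e^{-\beta W^{(2)}(s)}.
\end{equation}
Similarly, the ratio of the $(1-p)$ factors equals $\exp(-\beta[D(a',b')+D(a,b)-D(a',b)-D(a,b')])$. Expanding, this is
\begin{equation}
    \exp\bigl(-\beta[W^{(2)}(s)+W^{(2)}(t)]\bigr).
\end{equation}
Hence~\eqref{eq:FKG_condition_p_am} is equivalent to
\begin{equation}
    e^{\beta W^{(2)}(t)}\ \ge\ \max(1,M),\qquad M:=\frac{(1+e^{-\beta D(a',b)})(1+e^{-\beta D(a,b')})}{(1+e^{-\beta D(a',b')})(1+e^{-\beta D(a,b)})}.
\end{equation}
The bound $e^{\beta W^{(2)}(t)}\ge 1$ is just convexity. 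It remains to show $\ln M\le\beta W^{(2)}(t)$.

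\emph{Step 2: bounding $\ln M$ by convexity of $g(u):=\ln(1+e^{-\beta u})$.} The function $g$ is decreasing and convex. Since $W$ is even and convex, it is non-decreasing on $\Z_{\ge0}$, so $D\ge 0$ for $a,b\ge0$. On $[0,\infty)$ we have
\begin{equation}
    |g'(u)|=\frac{\beta e^{-\beta u}}{1+e^{-\beta u}}\le \frac{\beta}{2}.
\end{equation}
Write
\begin{equation}
    \ln M=[g(D_{a'b})-g(D_{a'b'})]-[g(D_{ab})-g(D_{ab'})].
\end{equation}
Both brackets are nonnegative, because $D$ is non-decreasing in each argument (this is monotonicity of $p$, which follows from convexity). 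Denote the interval lengths by
\begin{equation}
    L_1:=D_{ab'}-D_{ab},\qquad L_2:=D_{a'b'}-D_{a'b}.
\end{equation}
A direct computation gives $L_2-L_1=W^{(2)}(s)+W^{(2)}(t)\ge0$. Since $D_{a'b}\ge D_{ab}$ and $g$ is convex and decreasing, the drop of $g$ over $[D_{a'b},D_{a'b}+L_2]$ is at most its drop over $[D_{ab},D_{ab}+L_2]$. That drop in turn is at most the drop over $[D_{ab},D_{ab}+L_1]$ plus $\tfrac\beta2(L_2-L_1)$. This yields
\begin{equation}
    \ln M\le \tfrac{\beta}{2}\bigl(W^{(2)}(s)+W^{(2)}(t)\bigr).
\end{equation}

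\emph{Step 3: using super-Gaussianity.} Since $W$ is even, $W^{(2)}$ is even. Moreover $|t|=|a-b|\le a+b+1=s$, so the assumption that $W^{(2)}$ is non-increasing on $\Z_{\ge0}$ gives $W^{(2)}(s)\le W^{(2)}(|t|)=W^{(2)}(t)$. Therefore $\ln M\le\beta W^{(2)}(t)$, which concludes the proof.

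For strictness under strict convexity, $W^{(2)}(t)>0$ makes $e^{\beta W^{(2)}(t)}>1$ strict. For the comparison with $M$, strictness should come from $|g'|<\beta/2$ on $(0,\infty)$ whenever the relevant intervals have positive length, or alternatively from strict positivity of $W^{(2)}(t)$ against the factor $\tfrac12$. This is the only place where some care with the degenerate case $b=0$ (where $D_{a0}=0$) is needed.

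The main obstacle is Step 2. The naive bound $|g'|\le\beta$ loses exactly the term $W^{(2)}(s)$, and the proof works only by combining the sharper bound $|g'|\le\beta/2$ for $u\ge0$ with the super-Gaussian comparison $W^{(2)}(s)\le W^{(2)}(t)$.
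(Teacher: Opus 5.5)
Your proof is correct. It departs from the paper's after the shared reduction to $M\le e^{\beta W^{(2)}(a-b)}$.

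\textbf{The paper's route.} It clears denominators with $w_k=e^{-\beta W(k)}$, $u=a-b$, $v=a+b$, which turns the claim into $(w_{u+1}+w_{v+1})(w_{u-1}+w_{v+1})\le(w_u+w_{v+2})(w_u+w_v)$. It then splits the difference into two groups and shows each is nonnegative. The first group is handled via $e^{-2\beta W(u)}\bigl(1-e^{-\beta W^{(2)}(u)}\bigr)\ge e^{-2\beta W(v+1)}\bigl(1-e^{-\beta W^{(2)}(v+1)}\bigr)$. The second is handled via monotonicity of $k\mapsto e^{\beta(W(k)-W(k+1))}+e^{\beta(W(k)-W(k-1))}$. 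Super-Gaussianity is invoked in both, and strictness comes from $W(|u|)<W(v+1)$.

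\textbf{Your route.} You read $\ln M$ as a mixed second difference of the convex function $g(u)=\ln(1+e^{-\beta u})$. The interval shift plus the Lipschitz bound $\beta/2$ on $[0,\infty)$ gives $\ln M\le\frac\beta2\bigl(W^{(2)}(a+b+1)+W^{(2)}(a-b)\bigr)$ for every even convex $W$. Super-Gaussianity then enters exactly once, through $W^{(2)}(a+b+1)\le W^{(2)}(a-b)$.

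\textbf{Comparison.} Your version isolates where the hypothesis is needed and gives an explicit margin. The cost is a calculus argument in place of a purely algebraic one.

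\textbf{On strictness.} Your first suggestion is the right one, and it needs no case distinction. Under strict convexity, $L_2-L_1=W^{(2)}(a+b+1)+W^{(2)}(a-b)>0$, so the interval $[D_{ab'},D_{ab'}+L_2-L_1]$ has positive length. Since $|g'(u)|=\beta/(1+e^{\beta u})<\beta/2$ for every $u>0$, the bound on the drop of $g$ over that interval is strict. This holds even when $D_{ab'}=0$, which happens for $a=0$; the case $b=0$ is harmless. Your ``alternative'' via the factor $\tfrac12$ does not work. For $W(x)=x^2$, $W^{(2)}$ is constant and your Step 3 is an equality, so strictness must come from Step 2.
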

\begin{proof}
    A computation shows that
    \begin{equation}
        \frac{\lambda(a+1,b+1)\lambda(a,b)}{\lambda(a+1,b) \lambda(a,b+1)} 
        = e^{-\beta W^{(2)}(a+b+1)}
    \end{equation}
    and that
    \begin{equation}
        \frac{1-p(a+1,b+1)}{1-p(a+1,b)} \frac{1-p(a,b)}{1-p(a,b+1)} 
        = e^{-\beta W^{(2)}(a+b+1)-\beta W^{(2)}(a-b)}.
    \end{equation}
    Thus we need to show that
    \begin{equation}
        \frac{2-p(a+1,b)}{2-p(a+1,b+1)}\frac{2-p(a,b+1)}{2-p(a,b)} 
        \leq  e^{\beta W^{(2)}(a-b)}.
    \end{equation}
    Write $w_k := e^{-W(k)}$, $u=a-b$, $v=a+b$. After elementary computations, the previous inequality is equivalent to
    \begin{equation}
        w_{u}^2-w_{u+1} w_{u-1} - \left(w_{v+1}^2-w_{v+2} w_{v}\right) + w_u(w_{v+2}+w_v)- w_{v+1}(w_{u+1}+w_{u-1}) \geq 0.
    \end{equation}
    First, note that
    \begin{equation}
        \begin{aligned}
            w_u^2\left(1- \frac{w_{u+1} w_{u-1}}{w_u^2}\right) 
            &= e^{-2\beta W(u)}\left(1-e^{-\beta W^{(2)}(u)}\right)\\
            &\geq e^{-2\beta W(v+1)}\left(1-e^{-\beta W^{(2)}(v+1)}\right)\\
            &=w_{v+1}^2\left(1- \frac{w_{v+2} w_{v}}{w_{v+1}^2}\right),
        \end{aligned}
    \end{equation}
    where we used that $W$ is even and convex (and thus non-decreasing on the non-negative integers) and that the second derivative is non-increasing. Note that if $W^{(2)}>0$, i.e. $W$ is strictly convex, this inequality is strict (because $u<v+1$).
    Moreover,
    \begin{equation}
        w_u(w_{v+2}+w_v)- w_{v+1}(w_{u+1}+w_{u-1}) \geq 0
    \end{equation}
    if the map
    \begin{equation}
        f(k) := e^{\beta W(k)-\beta W(k+1)}+e^{\beta W(k)-\beta W(k-1)}
    \end{equation}
    is non-decreasing.
    Indeed,
    \begin{equation}
    \begin{aligned}
        f(k+1)-f(k) 
        = e^{\beta W(k)-\beta W(k-1)}\left(e^{\beta W^{(2)}(k)}-1\right) - e^{\beta W(k+1)-\beta W(k+2)}\left(e^{\beta W^{(2)}(k+1)}-1\right) \geq 0,
    \end{aligned}
    \end{equation}
    because $W(k) - W(k+1) \geq 0$ and $W^{(2)}(k) \geq W^{(2)}(k+1)$ for all $k$.
\end{proof}

We are now in a position to prove the remaining part of \zcref{thm:main_mixtures_of_gaussians}.

\begin{proof}[Proof of exponential decay of correlations in \zcref{thm:main_mixtures_of_gaussians}.]
    We now consider $\beta > 0$, $\rho \in (-\rho_*, \infty)$ such that $\mathscr G(\beta, \rho) = \set{\mu_{\beta, \rho}}$. We want to show that the correlations of $\mu_{\beta, \rho}$ decay exponentially. Equivalently, by \zcref{thm:two_point_correlation_relation_percolation},
    \begin{equation}\label{eq:3d_exponential_decay_inf_volume}
        \bar\mu_{\beta, \rho} \left(0 \overset{\text{open}}{\longleftrightarrow} \partial B_N\right)
        \leq e^{-c_{\beta, \rho} N}.
    \end{equation}
    For $a,b \in \set{0, \dots, S}$ define
    \begin{equation}
        p_{\beta}(a,b) 
        := 1 - e^{-\beta (W(a+b)-W(a-b))}
        \qquad\text{and}\qquad
        \lambda_{\beta,\rho}(a,b) := e^{-\frac\rho{2d}(a^2+b^2) - \beta W(a+b)}.
    \end{equation}
    We consider the following perturbations of $p_\beta$ and $\lambda_{\beta;\rho}$:
    \begin{equation}
        p^t_\beta(a,b) := \frac{e^{t} p_\beta(a,b)}{1-p_\beta(a,b) + e^{t} p_\beta(a,b)}
    \end{equation}
    and
    \begin{equation}
        \lambda_{\beta,\rho}^t(a,b) := e^{-\frac\rho{2d}(a^2+b^2) + \frac{t}{2d}(a+b) - \beta W(a+b)}.
    \end{equation}
    First, we note that the functions $\lambda_{\beta,\rho}^t$ and $p^t_\beta$ satisfy all conditions of \zcref{thm:weakly_monotonic_measure_covariance_lower_bound} for all $t \geq 0$. 
    Second, we claim that there exists $T > 0$ such that for every $t \in [0,T)$  and that $\Psi_{\Lambda;p_\beta^t,\lambda_{\beta,\rho}^t}^\xi$ satisfies the lattice FKG condition for any choice of $E \subset \E^b(\Lambda)$ finite and $\xi \in \set{0,1}^E$. Indeed, we apply \zcref{thm:vertex_marginal_FKG_general}, so it remains to verify \eqref{eq:FKG_condition_p_am}.
    In the special case $W(x) = |x|$, a direct computation shows that it holds for all $t \in \R$ (for completeness, we give a proof in \zcref{sec:SOS_computation}). On the other hand, if $W$ is strictly convex, even and super-Gaussian, then the condition holds with a strict inequality at $t=0$ by \zcref{thm:super-gaussian_potential_satisfies_p_lambdam_condition}, and therefore by continuity it extends to some interval $[0,T)$.
    For $t \in [0, T)$ define
    \begin{equation}
        \theta_N(t) 
        :={\pi^+_{B_{2N};p_\beta^{t}, \lambda_{\beta,{\rho}}^t}\left({0 \overset{\text{open}}{\longleftrightarrow} \partial B_N}\right)}.
    \end{equation}
    Then, using \zcref{thm:weakly_monotonic_measure_covariance_lower_bound}, 
    \begin{equation}
        \begin{aligned}
        \theta_N'(t) = \sum_{i \in B_{2N}} \Cov_{B_{2N}, p_\beta^{t}, \lambda_{\beta,{\rho}}^t} (\I_{0 \overset{\text{open}}{\longleftrightarrow} \partial B_N}; \varphi_i) + &\sum_{e \in \E^b(B_{2N})} \Cov_{B_{2N}, p_\beta^{t}, \lambda_{\beta,{\rho}}^t} (\I_{0 \overset{\text{open}}{\longleftrightarrow} \partial B_N}; \omega_e)\\
        &\geq \frac{N}{4d Q_N(t)} \theta_N(t)(1-\theta_N(t)).
        \end{aligned}
    \end{equation}
    where 
    \begin{equation}
        Q_N(t):= \max_{x \in B_{N}}\sum_{k=0}^{N-1} {\pi^+_{B_{2N};p_\beta^{t}, \lambda_{\beta,{\rho}}^t}\left({x \overset{\text{open}}{\longleftrightarrow} \partial B_N(x)}\right)}.
    \end{equation}
    We estimate for $x \in B_N$
    \begin{equation}
    \begin{aligned}
        \sum_{k=0}^{N-1} \pi^+_{B_{2N};p_\beta^{t}, \lambda_{\beta,{\rho}}^t}\left({x \overset{\text{open}}{\longleftrightarrow} \partial B_k(x)}\right)
        &\leq 2\sum_{k=0}^{N/2} \pi^+_{B_{2N};p_\beta^{t}, \lambda_{\beta,{\rho}}^t}\left({x \overset{\text{open}}{\longleftrightarrow} \partial B_k(x)}\right)\\
        &\leq 2\sum_{k=0}^{N/2} \pi^+_{B_{2k}(x);p_\beta^{t}, \lambda_{\beta,{\rho}}^t}\left({x \overset{\text{open}}{\longleftrightarrow} \partial B_k(x)}\right)
    \end{aligned}
    \end{equation}
    where we used the decomposition \zcref{thm:mu_conditioned_as_iterated_measures} of $\pi^+_{B_{2N};p_\beta^{t}, \lambda_{\beta,{\rho}}^t}$ together with the FKG inequality for $\Psi^\emptyset_{B_{2N};p_\beta^{t}, \lambda_{\beta,{\rho}}^t}$ and $\phi^\emptyset_{B_{2N};p_\beta^{t}(\varphi)}$ in the last line. By translation invariance of the lattice, we thus have
    \begin{equation}
        Q_N(t) \leq 2\sum_{k=0}^{N/2} \pi^+_{B_{2k};p_\beta^{t}, \lambda_{\beta,{\rho}}^t}\left({0 \overset{\text{open}}{\longleftrightarrow} \partial B_k}\right) \leq 2 \sum_{k=0}^{N-1} \theta_k(t).
    \end{equation}
    Hence, for $t \in [0,T)$,
    \begin{equation}
         \theta_N'(t) \geq \frac{c_0 N \theta_N(t)}{\sum_{k=0}^{N-1} \theta_k(t)}
    \end{equation}
    where $c_0 := \frac{\theta_1(0)}{8d}$.
    Let $f_N(t) := c_0^{-1}\theta_N(t)$. Then,
    \begin{equation}
        f_N'(t) \geq \frac{N f_N(t)}{\sum_{k=0}^{N-1}f_k(t)} > 0.
    \end{equation}
    We claim that there exists $\tau \in [0, T)$ such that $f_N(t) \to 0$ as $N \to \infty$ for all $t \in [0,\tau]$. Then \zcref{thm:exponential_decay_differential_inequality} implies that $f_N(t) \leq c_0^{-1} e^{-c_t N}$ for all $t \in [0, \tau)$. In particular, 
    By \zcref{thm:RC_as_spin_model},
    \begin{equation}
        \bar\mu_{B_{2N};\beta, \rho}^+ \left(0 \overset{\text{open}}{\longleftrightarrow} \partial B_N\right)
        = \pi^+_{B_{2N};p_\beta, \lambda_{\beta,{\rho}}}\left({0 \overset{\text{open}}{\longleftrightarrow} \partial B_N}\right)
        = c_0 f_N(0) \leq e^{-c N}
    \end{equation}
    which implies \eqref{eq:3d_exponential_decay_inf_volume}.
    We now show that this claim holds. 
    Let 
    \begin{equation}
        \bar\beta(t) := \inf\set{\tilde\beta > \beta \colon p_{\tilde\beta}(a,b) > p_\beta^{t}(a,b) \text{ for all } a,b \in \set{0, \dots, S}}.
    \end{equation}
    By the monotonicity of the random cluster measure in $p$, we have
    \begin{equation}
        \begin{aligned}
            \theta_N(t)
            = \Psi^{\emptyset}_{B_{2N};p_\beta^{t}, \lambda_{\beta,\rho}^t}\left(\phi^{\emptyset}_{B_{2N};p_\beta^{t}(\varphi)}\left(0 \overset{\text{open}}{\longleftrightarrow} \partial B_N\right)\right)
            \leq \Psi^{\emptyset}_{B_{2N};p_\beta^{t}, \lambda_{\beta,\rho}^t}\left(\phi^{\emptyset}_{B_{2N};p_{\bar\beta(t)}(\varphi)}\left(0 \overset{\text{open}}{\longleftrightarrow} \partial B_N\right)\right)
        \end{aligned}
    \end{equation}
    Let
    \begin{equation}
        R_t := \rho - t - 2d(\bar\beta(t) - \beta) (W(2S) - W(2S-1)) < \rho.
    \end{equation}
    Assume for now that for any  $0 \leq a \leq S-1$ and $0 \leq b \leq S$
    \begin{equation}\label{eq:lambda/lambda_non-increasing}
        \lambda_{\beta,\rho}^t(a+1,b) \lambda_{\bar\beta,R_t}(a,b)
        \leq \lambda_{\beta, \rho}^t(a,b) \lambda_{\bar\beta,R_t}(a+1,b)
    \end{equation}
    and that for any $0 \leq a \leq S-1$ and $1 \leq b \leq S$
    \begin{equation}\label{eq:p_beta_p_beta_m_relation}
        \frac{p_\beta^t(a,b)}{1-p_\beta^t(a,b)}\frac{1-p_\beta^t(a+1,b)}{p_\beta^t(a+1,b)}
        \geq\frac{p_{\bar\beta}(a,b)}{1-p_{\bar\beta}(a,b)}\frac{1-p_{\bar\beta}(a+1,b)}{p_{\bar\beta}(a+1,b)}.
    \end{equation}
    Then, by \zcref{thm:stochastic_ordering_dilute_measures} we have that $\Psi^{\emptyset}_{B_{2N};p_\beta^t, \lambda_{\beta;\rho}^t} \leq \Psi^{\emptyset}_{B_{2N};p_{\bar\beta}, \lambda_{\bar\beta;R_t}}$. Hence,
    \begin{equation}
    \begin{aligned}
        \theta_N(t)
        &\leq \Psi^{\emptyset}_{B_{2N};p_{\bar\beta}, \lambda_{\bar\beta;R_t}}\left(\phi^{\emptyset}_{B_{2N};p_{\bar\beta}(\varphi)}(0 \leftrightarrow \partial B_N)\right)\\
        &= \pi^+_{B_{2N};p_{\bar\beta}, \lambda_{\bar\beta;R_t}}\left(0 \overset{\text{open}}{\longleftrightarrow} \partial B_N\right)
        = \bar\mu_{B_{2N};\bar \beta, R_t}^+\left(0 \overset{\text{open}}{\longleftrightarrow} \partial B_N\right).
    \end{aligned}
    \end{equation}
    As $t \downarrow 0$ we have that $\bar\beta \downarrow \beta$ and $R_t \uparrow \rho$. By assumption, $R_{t=0} + \rho_*(\bar\beta(t=0)) = \rho + \rho_*(\beta) >0$. By continuity of $\rho_*$ in $\beta$ (see the proof on uniqueness above), we thus find $\tau > 0$ such that $R_{\tau} > -\rho_*(\bar\beta(\tau))$.
    In particular, by uniqueness at all temperatures, $\abs{\mathscr G(\bar\beta(\tau),R_{\tau})} = 1$. By \zcref{thm:equivalence_no_percolation_uniqueness} this implies that $\theta_N(t) \to 0$ as $N \to \infty$ for all $t \in [0, \tau]$ as claimed.
    
    It remains to check the two conditions \eqref{eq:lambda/lambda_non-increasing} and \eqref{eq:p_beta_p_beta_m_relation}. The first follows by convexity of $W$ and our choice of $R_t$:
    \begin{equation}
        \frac{\lambda_{\beta;\bar \rho}^t(a,b)}{\lambda_{\bar\beta;R_t}(a,b)} \frac{\lambda_{\bar\beta;R_t}(a+1,b)}{\lambda_{\beta;\bar \rho}^t(a+1,b)}
        = e^{-\frac1{2d}(R_t - \rho)(2a + 1) - \frac{t}{2d} - (\bar\beta - \beta) (W(a+b+1) - W(a+b))} \geq 1.
    \end{equation}
    Inserting the definitions of $p_\beta^t$ and $p_{\bar\beta}$, the second condition becomes
    \begin{equation}
        \frac{e^{\beta (W(a+b)-W(a-b))}-1}{e^{\beta (W(a+1+b)-W(a+1-b))}-1}
        \geq 
        \frac{e^{\bar\beta (W(a+b)-W(a-b))}-1}{e^{\bar\beta (W(a+1+b)-W(a+1-b))}-1}. 
    \end{equation}
    Since $W$ is convex we have that $W(a+b)-W(a-b) =: \gamma \leq \bar\gamma := W(a+1+b)-W(a+1-b)$ and one only needs to check that
    \begin{equation}
        x \mapsto \frac{e^{x \gamma}-1}{e^{x \bar\gamma}-1}
    \end{equation}
    is non-increasing, which holds e.g. because the derivative in $x$ is non-positive for $0\leq\gamma \leq \bar\gamma$. 
\end{proof}

\begin{remark}
    We remark on an alternative proof of exponential decay when $W(x)=x^2$ using known results, once uniqueness of the infinite volume limit has been established. Define the critical temperature $\beta_c := \inf_{\beta} \set{\mu_{\beta,\rho}^+(\sigma_0)>0}$. For the case $S=1$, \cite[Theorem 1.5]{Gunaratnam:ExistenceTricriticalPoint2024a} implies exponential decay of correlations of $\mu_{\beta,\rho}^+$ for all $\beta < \beta_c$. This result was extended to $S \in \Z_{>0}$ (and even greater generality) in \cite[Theorem 1.1]{Panagiotis:SubcriticalSharpnessRealvalued2026}. Our uniqueness result (cf.~\zcref{sec:uniqueness_mixture_of_gaussians}) implies that $\beta_c = \infty$. We emphasize that both results cited here require $W(x)=x^2$.
\end{remark}

\section{Examples for non-uniqueness}
\label{sec:non-uniqueness}

In this section we show that there is a class of measures with even interaction energy $W$, for which there is no uniqueness in the infinite volume at low and intermediate temperatures, when the dimension $d$ is sufficiently large. The result does not require that $W$ is convex. We follow the general setup of \cite{Peled:LongrangeOrderDiscrete2020}.

Let $A,B \subset \set{-S, \dots, S}$. $(A,B)$ is called a \emph{pattern} if $W(a-b)=0$ for all $a \in A$ and $b \in B$. Further, we call a pattern $(A,B)$ \emph{dominant}, if it maximizes the weight
\begin{equation}
    w_{(A,B)} :=\left(\sum_{a \in A} e^{-\rho a^2}\right)\left(\sum_{b \in B} e^{-\rho b^2}\right)
\end{equation}
among all patterns. Two patterns $(A,B)$ and $(A',B')$ are called equivalent if there exists a bijection $\phi \colon \set{-S, \dots, S} \to \set{-S, \dots, S}$ such that 
\begin{equation}
    (A',B') = (\phi(A), \phi(B)),
    \quad e^{-\rho \sigma^2}=e^{-\rho \phi(\sigma)^2}
    \quad\text{and}\quad W(\sigma-\bar\sigma) = W(\phi(\sigma)-\phi(\bar\sigma))
\end{equation}

\begin{proof}[Proof of \zcref{thm:non-uniqueness}]
    Let $(A,B)$ be a pattern. Under the assumptions of \zcref{thm:non-uniqueness} this means that $\abs{a-b}\leq k$ for all $a \in A$ and $b \in B$.
    We claim that there are exactly two dominant patterns, and they are equivalent. Then \cite[Theorem 1.2]{Peled:LongrangeOrderDiscrete2020} implies the desired result.
    For the rest of this proof, let $(A,B)$ be a dominant pattern and we assume that $\abs{A}\geq \abs{B}$. If the reverse is true, exchange the roles of $A$ and $B$ in the following arguments. Since $(A,B)$ is a pattern, we must have $\abs{B} \leq k+1$. Our goal is to show that every dominant pattern is of the form $(A,A)$ for some discrete interval $A$ of size $k+1$.
    
    Let $a_{\min}, a_{\max} \in A$ be the smallest and largest element of $A$ respectively, and similarly $b_{\min},b_{\max} \in B$ for the set $B$. We first show that $A$ and $B$ must be discrete intervals, i.e., that $A=\set{a_{\min}, \dots, a_{\max}}$ and similarly for $B$. Indeed, assume that there exists $a_0 \in \set{a_{\min}, \dots, a_{\max}} \setminus A$. Define $t \in (0,1)$ such that $a_0 = t a_{\max} + (1-t) a_{\min}$. For any $b \in B$ it holds
    \begin{equation}
        \abs{b - a_0} 
        = \abs{b- t a_{\max} - (1-t) a_{\min}}
        \leq t \abs{b-a_{\max}} + (1-t) \abs{b-a_{\min}}
        \leq k,
    \end{equation}
    because $(A,B)$ is a pattern. Thus $(A \cup \set{a_0}, B)$ is also a pattern and hence $w_{(A,B)}$ was not maximal. Therefore $A$ must be a discrete interval. A similar argument shows that also $B$ must be a discrete interval.

    Now we show that $B \subset A$. Indeed, if there exists $b_0 \in B \setminus A$, then $\abs{b - b_0} \leq b_{\max} - b_{\min} \leq k$ for any $b \in B$, because $B$ is a discrete interval and $\abs{B} \leq k+1$. In particular, $(A \cup \set{b_0}, B)$ is a pattern and hence $w_{(A,B)}$ was not maximal.

    Assume now that $A \neq B$. Let $A_1 := \set{a_{\min},\dots, b_{\min} - 1}$ and $A_2 := \set{b_{\max} + 1,\dots, a_{\max}}$. At least one of the two sets is non-empty and $A=A_1 \cup A_2 \cup B$.
    Define the set $C \subset A$ as follows:
    \begin{equation}
        C:= \begin{cases}
            A_1 \cup B & \text{if } \sum_{a \in A_1} e^{-\rho a^2} \geq \sum_{a \in A_2} e^{-\rho a^2},\\
            A_2 \cup B & \text{otherwise}. \\
        \end{cases}
    \end{equation}
    Assume that $C=A_1 \cup B$, the other case is analogous. Then $b_{\max}-a_{\min}\leq k$, since $(A,B)$ is a pattern, and therefore $(C,C)$ is also a pattern.
    We will show that $w_{(A,B)} < w_{(C,C)}$, which implies that for any dominant pattern we must have $A=B$. Indeed, since $A_1 \neq \emptyset$,
    \begin{equation}
    \begin{aligned}
        w_{(C,C)} - w_{(A,B)}
        &= \left(\sum_{b \in B} e^{-\rho b^2}\right) \left[\sum_{a \in A_1} e^{-\rho a^2} - \sum_{a \in A_2} e^{-\rho a^2}\right] + \left(\sum_{a \in A_1} e^{-\rho a^2}\right)^2 >0.
    \end{aligned}
    \end{equation}
    We have thus shown that all dominant patterns are of the form $(A,A)$ for a discrete interval $A$. To maximize $w_{(A,A)}$ we must have $\abs{A}=k+1$. Recall that $k$ is odd when $\rho > 0$. Hence, the only two dominant patterns are the equivalent patterns $(A,A)$ and $(-A,-A)$ with
    \begin{equation}
        A = \begin{cases}
            \set{-S, \dots, -S+k} & \text{if } \rho \leq 0,\\
            \set{-m, \dots, m, m+1}  & \text{if } \rho > 0 \text{ and } k=2m+1,
        \end{cases}
    \end{equation}
    because translating the interval closer to the edges of $\set{-S,\dots,S}$ increases the weight $w_{(A,A)}$ of the patterns when $\rho < 0$, and centering it at the origin increases the weight when $\rho > 0$. When $\rho = 0$ we assumed $k=2S-1$ and there are only two intervals of this diameter.
\end{proof}

\section*{Acknowledgments}
We are indebted to Thomas Spencer for asking us about uniqueness and exponential decay at all temperatures for $S=1$ and interaction $W(x)=x^2$ (the spin-1 Ising model), initiating this research. We are grateful to Margherita Disertori and Thomas Spencer for initial discussions of the problem, and to Daniel Hadas for a simulation and discussion of the mean-field case which supported the uniqueness for $W(x) = x^2$ and suggested multiplicity in an intermediate temperature regime for $W(x) = |x|^p$ with $p$ large. We thank Michael Aizenman for useful discussions on the problems of this work and related models. We also thank Jonas K\"oppl for an interesting discussion of the ideas of the paper
~\cite{van2011discrete} and its relevance to our work.
We thank Eyal Lubetzky for bringing to our attention the question of entropic repulsion for height functions constrained to lie above a floor (\zcref{cor:delocalization}).

The research of MM is supported by the German Research Foundation (Deutsche Forschungsgemeinschaft, DFG) under Germany’s Excellence Strategy-GZ 2047/1 -- 390685813 and by the DFG -- CRC 1060 -- 211504053. MM acknowledges the hospitality of the Institute for Advanced Study (IAS), where this research was initiated. The research of RP is partially supported by the National Science Foundation grant DMS-2451133 and the Brin professorship at the University of Maryland. 

\paragraph{Statement of AI use.} We have consulted ChatGPT (versions from 5.4 to 5.6) in our literature search, including clarification of proof details of existing papers and questions about the necessity of their assumptions and the possibility of using alternative assumptions. In addition, ChatGPT provided counterexamples to the validity of some classical correlation and domination inequalities in regimes of parameters outside of their standard statements and also helped to verify some of our proofs.  We have come up ourselves with the proofs of all of our main results and all of our writeup was done without AI assistance.

\printbibliography

\appendix

\section{Exponential decay via the OSSS inequality}
\label{appendix:OSSS}

For completeness, we present here a version of the OSSS inequality for weakly monotonic measures in the sense of \zcref{def:weak_monotonicity}. This extends the OSSS inequality for monotonic measures on $\set{0,1}^\Lambda$ which was originally published in~\cite{Duminil-Copin:SharpPhaseTransition2019} and which has been successively extended to broader classes of measures in~\cite{Gunaratnam:ExistenceTricriticalPoint2024a} and~\cite{Panagiotis:SubcriticalSharpnessRealvalued2026}.

Let $G=(V,E)$ be a finite graph and let $\Gamma$ be the set of ordered (finite) sequences of elements in $V \sqcup E$, which satisfy the following:
\begin{enumerate}[label=(\roman*)]
    \item Each element of $V \sqcup E$ appears at most once in a given sequence, and
    \item if $e=\set{i,j} \in E$ appears in a given sequence, then $i$ and $j$ must also appear in the sequence before $e$.
\end{enumerate}

\begin{definition}[Decision tree]
    Let $N = \abs{V \sqcup E}$ and let $\bar\Omega := \set{0, \dots, S}^V \times \set{0,1}^E$.
    Given $\psi \in \bar\Omega$, a \emph{decision tree} is a pair $T(\psi) = \left(z_1, \left(\phi_t\right)_{t = 2, \dots, N}\right)$, where 
    $z_1 \in V$, and
    for each $t$, $\phi_t$ is a function which takes as input a tuple $\left((z_1, \dots, z_{t-1}), (\psi_{z_1}, \dots, \psi_{z_{t-1}})\right)$ and returns an element $z_t \in (V \sqcup E) \setminus \set{z_1, \dots, z_{t-1}}$. Here, $(z_1, \dots, z_{t-1}) \in \Gamma$.

    Given a configuration $\psi \in \bar\Omega$, let $\mathbf{z} = (z_1, \dots, z_n)$ be the $N$-tuple defined recursively by $T(\psi)$. For $f \colon \bar\Omega \to [0,1]$, define
    \begin{equation}
        \tau(\psi) := \tau_{f, T}(\psi) 
        := \min \set{t \geq 1 \mid \forall \psi' \in \bar\Omega \colon ~\psi_{\mathbf{z}_{[t]}} = \psi'_{\mathbf{z}_{[t]}} \Rightarrow f(\psi) = f(\psi')},
    \end{equation}
    where $\mathbf{z}_{[t]} := (z_1, \dots, z_t)$. This is the time it takes the algorithm $T$ to determine the value $f(\psi)$.
\end{definition}

Recall the definition of weakly monotonic measures from \zcref{def:weak_monotonicity}.

\begin{theorem}[OSSS inequality for weakly monotonic measures]
\label{thm:OSSS_weakly_monotonic}
    Let $f \colon \bar\Omega \to [0,1]$ be an increasing function. Then for any weakly monotonic measure $\pi$ on $\bar\Omega$ and any decision tree $T$,
    \begin{equation}\label{eq:OSSS_for_weakly_monotonic}
        \Var_\pi(f) \leq \sum_{z \in V\sqcup E} \delta_z(f, T) \Cov_{\pi}(f, \psi_z),
    \end{equation}
    where $\delta_z(f, T) := \pi(\exists t\leq \tau(\psi) \colon z_t = z)$ is the revealment of $f$ for the decision tree $T$.
\end{theorem}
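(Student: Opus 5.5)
We follow the Duminil-Copin--Raoufi--Tassion proof of the OSSS inequality for monotonic measures, as adapted by Gunaratnam--Krachun--Panagiotis to measures on $\set{0,1}^V\times\set{0,1}^E$. The idea is to realise $\pi$ as the image of i.i.d.\ uniform variables under a sequential, monotone coupling that follows the decision tree. Then the classical OSSS inequality for product measures applies, and its right-hand side is compared with covariances through weak monotonicity.

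\emph{Construction.} Let $U=(U_z)_{z\in V\sqcup E}$ be i.i.d.\ uniform on $[0,1]$. We build $\psi=F(U)$ recursively along the decision tree. At step $t$ the tree (given the revealed values $\psi_{\mathbf z_{[t-1]}}$) selects $z_t$. Since $(z_1,\dots,z_t)\in\Gamma$, the already revealed set is of the form $(X,F)$, where every edge in $F$ has both endpoints in $X$.
\begin{itemize}
\item If $z_t=y\in V$, we set $\psi_y$ equal to the largest $s$ such that $U_y\ge 1-\pi(\varphi_y\ge s\mid \psi_{\mathbf z_{[t-1]}})$.
\item If $z_t=e\in E$, we set $\omega_e=\I\{U_e\ge 1-\pi(\omega_e=1\mid\psi_{\mathbf z_{[t-1]}})\}$.
\end{itemize}
By the chain rule, $F(U)$ has law $\pi$. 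The conditioning events always have positive probability along realised paths, so the conditional laws are well defined.

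\emph{Monotonicity of the coupling.} We claim that $F$ is non-decreasing in $U$. We prove this by induction on $t$, comparing two inputs $U\le U'$ that are run with the same decision tree. If the two runs have agreed so far, then $z_t$ is the same in both. If they already differ, the partial configurations satisfy $(\eta_1,\xi_1)\le(\eta_2,\xi_2)$ on a common revealed set. Weak monotonicity, conditions (i) and (ii) of \zcref{def:weak_monotonicity}, then shows that the thresholds are ordered, so the next revealed value is ordered as well.

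The subtle point is that the tree may choose different next coordinates $z_t$ in the two runs. To handle this we use the standard DRT device. We fix any ordering and define the coupling $\psi$ by revealing along the tree, and we prove monotonicity only for the map that reveals coordinates in a \emph{fixed} order respecting $\Gamma$. Because $\pi$-conditional probabilities do not depend on the order in which a given set is revealed, the revealment bound is unaffected. I expect this step to be the main obstacle: we must reconcile the tree-dependent order with monotonicity. I would first try to follow \cite[Lemma 8.2]{Gunaratnam:ExistenceTricriticalPoint2024a} verbatim. If that fails, I would reduce the spin space $\set{0,\dots,S}$ to $S$ binary coordinates $\I\{\varphi_y\ge s\}$, revealed in increasing $s$, and invoke their $S=1$ result directly, as the paper indicates.

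\emph{Conclusion.} Apply the product-measure OSSS inequality to $g=f\circ F$ with the decision tree induced on $U$:
\begin{equation}
\Var_\pi(f)=\Var(g)\le \sum_z \delta_z\,\mathrm{Inf}_z(g),
\qquad \mathrm{Inf}_z(g)=\mathbb E\abs{g(U)-g(U^{(z)})},
\end{equation}
where $U^{(z)}$ denotes $U$ with $U_z$ resampled independently. Since $g$ is increasing in $U_z$, we have
\begin{equation}
\mathrm{Inf}_z(g)\le 2\Cov(g,\I\{U_z\ge \cdot\})\le C\Cov(g,U_z).
\end{equation}
The last step is to bound $\Cov(g,U_z)$ by $\Cov_\pi(f,\psi_z)$, up to the normalisation in \eqref{eq:OSSS_for_weakly_monotonic}. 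For this we condition on all $U_{z'}$ with $z'\neq z$, which makes $g$ and $\psi_z$ co-monotone functions of the single variable $U_z$, and then use $\mathbb E[g\mid\psi_z]$. This is the same computation as in \cite[Lemma 3.2]{Duminil-Copin:SharpPhaseTransition2019}.
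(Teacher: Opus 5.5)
The step you flag as the main obstacle is a genuine gap, and your proposed fix does not close it. Your concluding argument needs a single map $F$ with two properties at once.
\begin{itemize}
\item[(a)] $f\circ F$ is computed by a decision tree on $U$ whose revealment of $U_z$ is $\delta_z(f,T)$.
\item[(b)] $F$ is non-decreasing. This is what lets you write $f\circ F=G(\psi_z)$ with $G$ non-decreasing given $(U_{z'})_{z'\neq z}$. Using $\abs{G(k)-G(l)}\le (G(k)-G(l))(k-l)$ for integers $k,l$ and Harris on $U_{-z}$, you can then turn $\mathbb{E}\abs{g(U)-g(U^{(z)})}$ into $2\Cov_\pi(f,\psi_z)$.
\end{itemize}
The tree-adapted map has (a) but not (b). Once two runs disagree, the tree may query different coordinates. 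A coordinate is then sampled from conditional laws given \emph{different} revealed sets, and conditions (i)--(ii) say nothing, since they only compare conditionings on the same $(X,F)$. Monotonicity really fails: if the larger run first reveals a low value at a coordinate strongly correlated with $y$, its threshold for $y$ can drop below that of the smaller run. The fixed-order map has (b) but not (a). Its value at $z$ depends on every $U_{z'}$ with $z'$ before $z$ in the fixed order, so any tree on $U$ computing $f\circ F$ must query all of these. That the law of $F(U)$ does not depend on the order is irrelevant: the product-measure OSSS inequality sees the function $f\circ F$ of $U$, not the law of $F(U)$. The constant is also not tracked. Constant $1$ requires the factor $\tfrac12$ in the product bound for $[0,1]$-valued $g$ together with exactly the factor $2$ above, not an unspecified $C$ obtained via $\Cov(g,U_z)$.

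The missing idea is that of Duminil-Copin--Raoufi--Tassion, which Gunaratnam--Krachun--Panagiotis reuse: index the uniforms by \emph{time}, and do not pass through the product inequality at all.
\begin{itemize}
\item Run $T$ on $U=(U_1,\dots,U_N)$, producing $\psi$, the order $\mathbf z$ and $\tau$. With an independent $V$, decode $W^t=(V_1,\dots,V_t,U_{t+1},\dots,U_\tau,V_{\tau+1},\dots,V_N)$ in the \emph{frozen} order $\mathbf z$ to get $Y^t$.
\item Then $f(Y^0)=f(\psi)$, and $Y^N$ has law $\pi$ and is independent of $\psi$. Hence $\Var_\pi(f)\le\frac12\sum_t\mathbb{E}\abs{f(Y^{t-1})-f(Y^t)}$.
\item $Y^{t-1}$ and $Y^t$ differ only through step $t\le\tau$ and are decoded along identical revealed sets, which are admissible since $\mathbf z_{[t]}\in\Gamma$. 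So they are ordered by (i)--(ii), decoding each vertex by the conditional quantile function of a single uniform; this uses only (i).
\item On $\{t\le\tau\}$, given $U_{[t-1]},V_{[t-1]}$, both have law $\pi(\cdot\mid\psi_{\mathbf z_{[t-1]}}=\xi)$, and the cross terms factorize. This gives the bound $2\Cov_{\pi(\cdot\mid\xi)}(f,\psi_{z_t})$.
\item Average over $\xi$. The maps $\xi\mapsto\pi(f\mid\xi)$ and $\xi\mapsto\pi(\psi_{z_t}\mid\xi)$ are increasing, and the revealed marginal is positively associated (both again from (i)--(ii) via a fixed-order monotone decoding). This bounds the average by $2\Cov_\pi(f,\psi_{z_t})$.
\item Finally, $\sum_t\mathbb{P}(t\le\tau,\,z_t=z)=\delta_z(f,T)$.
\end{itemize}

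Your fallback is the paper's route. Each vertex gets $S$ binary copies linked by edges forced open, and the original edges sit on the first copies. A lifted tree reveals all copies of $u$ consecutively. One then uses $\sum_k\Cov_\pi(f,\I_{\psi_u\ge k})=\Cov_\pi(f,\psi_u)$ and $\tilde\delta_{(u,k)}\le\delta_u$, which needs these covariances to be non-negative. As written it is only a pointer, and its key input is that the lifted binary measure is weakly monotonic. That requires $\pi(\psi_u\ge s+1\mid\psi_u\ge s,\dots)$ to be monotone in the remaining conditioning, a hazard-rate condition which (i) alone does not give. Already for two vertices and no edges, (i) can hold in both directions while $\pi(\psi_u\ge 2\mid\psi_u\ge1,\psi_v=j)$ decreases from $j=0$ to $j=1$. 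The time-indexed argument avoids this issue, since it decodes each vertex from one uniform.
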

\begin{proof}
    The case $S=1$ has been proved in~\cite[Theorem 8.1]{Gunaratnam:ExistenceTricriticalPoint2024a}. We explain how to extend it to $S \geq 2$ by reducing it to the case $S=1$. 
    
    We consider a graph $\tilde G = (\tilde V, \tilde E)$ where
    \begin{equation}
        \tilde V := \bigcup_{v \in V} \set{(v,1), \dots, (v,S)}
    \end{equation}
    are $S$ copies of $V$ and
    \begin{equation}
        \tilde E 
        := \set{\set{(u,1), (v,1)} \subset \tilde V \mid \set{u,v} \in E} 
        \cup \bigcup_{v \in V} \set{\set{(v,1), (v,2)}, \dots, \set{(v,S-1),(v,S)}}.
    \end{equation}
    Let $\tilde\psi \in \set{0,1}^{\tilde V \sqcup \tilde E}$. We define a configuration $\psi \in \bar\Omega$ as follows:
    \begin{equation}
        \psi_z := \begin{cases}
            \tilde\psi_{(u,1)} + \dots + \tilde\psi_{(u,S)} & \text{if } z=u \in V,\\
            \tilde\psi_{\set{(u,1), (v,1)}} & \text{if } z=\set{u,v} \in E.
        \end{cases}
    \end{equation}
    Moreover, we say that $\tilde\psi$ is admissible if $\tilde\psi_{(u,1)} \leq \dots \leq \tilde\psi_{(u,S)}$ for all $v \in V$ and $\tilde\psi_{\set{(v,k), (v,k+1)}}=1$ for any $v \in V$ and $k \in \set{1, \dots, S-1}$. For an admissible configuration, we think of $\tilde\psi_{(u,k)}$ encoding the information $\psi_u \geq k$.
    Moreover, we define a measure $\tilde\pi$ on $\set{0,1}^{\tilde V \sqcup \tilde E}$ by setting
    \begin{equation}
        \tilde\pi(\tilde \psi) := \begin{cases}
            \pi(\psi) & \text{if $\tilde\psi$ is admissible},\\
            0 & \text{otherwise}.
        \end{cases}
    \end{equation}
    Finally, we modify the decision tree $T$ as follows: If $T$ explores the vertex $u$, then the modified tree $\tilde T$ explores all vertices $(u,1), \dots, (u,S)$ in this order. Similarly, if $T$ explores the edge $\set{u,v}$, $\tilde T$ explores the edge $\set{(u,1), (v,1)}$. The measure $\tilde\pi$ is weakly monotonic and the proof of~\cite{Gunaratnam:ExistenceTricriticalPoint2024a} goes through verbatim, which proves that
    \begin{equation}
        \Var_{\tilde \pi}(g) \leq \sum_{z \in \tilde V\sqcup \tilde E} \tilde\delta_z(g, \tilde T) \Cov_{\tilde\pi}(g, \tilde \psi_z),
    \end{equation}
    where $\tilde\delta_z(g, T) := \tilde\pi(\exists t\leq \tau_{\tilde T}(\tilde\psi) \colon z_t = z)$.

    We claim that applying this OSSS inequality to $g(\tilde\psi) := f(\psi)$ yields the result \eqref{eq:OSSS_for_weakly_monotonic}. To this end, first note that $g$ is increasing, since $f$ is increasing. Moreover, since admissible $\tilde\psi$ are in one-to-one correspondence with configurations of $\bar\Omega$, we have
    \begin{equation}
        \Var_{\tilde \pi}(g) = \Var_{\pi}(f)
        \quad\text{and}\quad 
        \Cov_{\tilde\pi}(g, \tilde \psi_z) = \begin{cases}
            \Cov_{\pi}(f, \I_{\psi_u \geq k}) & \text{if } z= \set{(u,k)} \\
            \Cov_{\pi}(f, \psi_{\set{u,v}}) & \text{if } z= \set{(u,1), (v,1)} \\
            0 & \text{else}.
        \end{cases} 
    \end{equation}
    Finally, for $u \in V$ and $k \in \set{1, \dots S}$, 
    \begin{equation}
        \tilde\delta_{(u,k)}(g, \tilde T)
        \leq \delta_u(f, T)
    \end{equation}
    by construction of $\tilde T$. Hence,
    \begin{equation}
    \begin{aligned}
         \Var_{\pi}(f) 
         &\leq \sum_{v \in V} \delta_v(f, T) \sum_{k=1}^S \Cov_{\pi}(f, \I_{\psi_v \geq k}) + \sum_{e \in E} \delta_e(f, T) \Cov_{\pi}(f, \psi_e)\\
         &= \sum_{z \in V\sqcup E} \delta_z(f, T) \Cov_{\pi}(f, \psi_z),
    \end{aligned}
    \end{equation}
    concluding the proof.
\end{proof}

Weak monotonicity gives the following useful lower bound on the covariance of a connectivity event with the vertex and edge spin variables. As in the main text, we use the notation $x \overset{\text{open}}{\longleftrightarrow} y$ to denote the existence of an open path in $\omega$, i.e., the edge variables.

\begin{lemma}[{\cite[Lemma 8.2]{Gunaratnam:ExistenceTricriticalPoint2024a},~\cite[Lemma 3.2]{Duminil-Copin:SharpPhaseTransition2019}}]
\label{thm:weakly_monotonic_measure_covariance_lower_bound_appendix}
    Assume that $0 \in V$. Let $\pi$ be a weakly monotonic measure on $\bar\Omega$ such that for every edge $\set{i,j} \in E$ it holds $\pi(\omega_{ij} = 0 \mid \varphi_i\varphi_j = 0) = 1$.
    Then for every $N \in \N$
    \begin{equation}
    \begin{aligned}
        &\sum_{i \in V} \Cov_\pi(\I_{0 \overset{\text{open}}{\longleftrightarrow} \partial B_N}; \sigma_i) + 
        \sum_{e \in E} \Cov_\pi(\I_{0 \overset{\text{open}}{\longleftrightarrow} \partial B_N}; \omega_e) \\
        &\geq \frac{n}{4d Q_N} \pi(0 \overset{\text{open}}{\longleftrightarrow} \partial B_N)(1- \pi(0 \overset{\text{open}}{\longleftrightarrow} \partial B_N)),
    \end{aligned}
    \end{equation}
    where $Q_N := \max_{x \in B_N}\sum_{k=0}^{N-1} \pi(x \overset{\text{open}}{\longleftrightarrow}  \partial B_k(x))$.
\end{lemma}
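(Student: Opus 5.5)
This lemma is the Duminil-Copin--Raoufi--Tassion covariance bound, and I would derive it from the OSSS inequality for weakly monotonic measures (\zcref{thm:OSSS_weakly_monotonic}). The plan follows the standard randomized-algorithm argument. Fix $N$ and put $f:=\I_{0 \overset{\text{open}}{\longleftrightarrow} \partial B_N}$. This is an increasing function of $(\varphi,\omega)$ with values in $\{0,1\}$, so $\Var_\pi(f)=\pi(f)(1-\pi(f))$.

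\textbf{Step 1: the decision trees.} For each $k\in\{1,\dots,N\}$ I build a decision tree $T_k$ that explores the open cluster of $\partial B_k$. It starts by revealing the vertices of $\partial B_k$ one at a time. It then reveals the edges incident to the explored set, in some deterministic order. Before querying an edge $\{i,j\}$, it first reveals both endpoints $i,j$, which is what membership in $\Gamma$ requires. Whenever an edge turns out open, its endpoints join the explored set and exploration continues from them. Once this cluster is exhausted, the tree reveals the remaining variables in an arbitrary order. Up to that point $f$ is already determined: any open path from $0$ to $\partial B_N$ crosses $\partial B_k$, so it lies in the explored cluster.

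\textbf{Step 2: revealment bounds.} An edge is queried before time $\tau$ only if one of its endpoints is connected to $\partial B_k$. A vertex $x$ is queried only if $x\in\partial B_k$ or $x$ is adjacent to a vertex connected to $\partial B_k$. From $x \overset{\text{open}}{\longleftrightarrow} \partial B_k$ we get $x \overset{\text{open}}{\longleftrightarrow} \partial B_{|k-\|x\|_\infty|}(x)$. This gives
\begin{equation}
\delta_z(f,T_k)\le \sum_{y\sim z}\pi\left(y \overset{\text{open}}{\longleftrightarrow} \partial B_{|k-\|y\|_\infty|}(y)\right)+\I_{z\in\partial B_k},
\end{equation}
where the sum over neighbours contributes at most a factor $2d$. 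The hypothesis $\pi(\omega_{ij}=0\mid\varphi_i\varphi_j=0)=1$ is used here: a vertex with $\varphi=0$ has no open incident edges, so revealing such a vertex never leads the exploration further. This keeps the vertex revealments controlled by connection events, as in \cite{Gunaratnam:ExistenceTricriticalPoint2024a}.

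\textbf{Step 3: averaging over $k$.} I apply \zcref{thm:OSSS_weakly_monotonic} to each $T_k$, sum over $k=1,\dots,N$, and divide by $N$. All covariances are nonnegative, since $f$, each $\varphi_i$ and each $\omega_e$ are increasing and $\pi$ is weakly monotonic. So it is enough to bound $\frac1N\sum_k\delta_z(f,T_k)\le \frac{c\,d}{N}\,Q_N$ uniformly in $z$. For fixed $y$, as $k$ ranges over $1,\dots,N$ the radius $|k-\|y\|_\infty|$ takes each value in $\{0,\dots,N-1\}$ at most twice. Hence the sum over $k$ is at most $2\sum_{j=0}^{N-1}\pi(y \overset{\text{open}}{\longleftrightarrow}\partial B_j(y))\le 2Q_N$, and the boundary indicator adds at most $1\le Q_N$, using the $j=0$ term. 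Rearranging gives the claimed bound with $n=N$, with constant $4d$ after a careful count of neighbours.

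\textbf{Main obstacle.} The hardest part is the bookkeeping for vertex revealments in the joint vertex-edge exploration. I must make sure that every queried vertex is either on $\partial B_k$ or adjacent to an explored open cluster, so that the factor is $2d$ rather than something larger. I also need $\partial B_j(y)$ to mean the box boundary intersected with $V$, for vertices near the boundary of the graph. If the constant comes out as $8d$ instead of $4d$, I would tighten it by charging an edge's revealment only to its endpoint already in the cluster.
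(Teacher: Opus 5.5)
Your approach is the paper's: it proves this lemma by rerunning the argument of \cite[Lemma 8.2]{Gunaratnam:ExistenceTricriticalPoint2024a} with the OSSS inequality for weakly monotonic measures. That argument is exactly what you propose: exploration trees $T_k$ from $\partial B_k$, revealments bounded by one-arm probabilities, and averaging over $k\le N$.

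\textbf{The step that fails.} Your trees explore the open cluster of $\partial B_k$ in the whole graph $(V,E)$, so before time $\tau$ they may query vertices and edges outside $B_N$. For such coordinates the vertices $y$ in your Step 2 bound need not lie in $B_N$. Then $\sum_{j=0}^{N-1}\pi(y \overset{\text{open}}{\longleftrightarrow} \partial B_j(y))$ is not controlled by $Q_N$, which is a maximum over $x\in B_N$ only. Since $\pi$ need not be translation invariant, nothing transfers that bound to points outside $B_N$. In the paper's application the graph also contains the merged boundary vertex $*$, whose degree is of order $|\partial\Lambda|$, so your factor $2d$ would break there too.

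\textbf{The fix.} Confine the exploration to the vertices of $B_N$ and the edges of $\E(B_N)$. This is legitimate: an open path from $0$ to $\partial B_N$ uses only edges of $\E(B_N)$ up to its first visit to $\partial B_N$. Hence $f$ is already determined once the confined cluster of $\partial B_k$ is exhausted. Every other coordinate is revealed after $\tau$ and has revealment zero, and every $y$ in your bound lies in $B_N$.

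\textbf{The constant.} With the confinement you also get $4d$ exactly. For $d\ge 2$ and $k\ge 1$, every $z\in\partial B_k$ has a neighbour $y\in\partial B_k$, and $\{y \overset{\text{open}}{\longleftrightarrow} \partial B_k\}$ is a sure event. So $\I_{z\in\partial B_k}$ is absorbed into the sum over neighbours. This gives $\sum_{k=1}^N\delta_z(f,T_k)\le 2d\cdot 2Q_N$ for vertices and $\le 4Q_N$ for edges.

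\textbf{Nonnegative covariances.} Your averaging step needs all covariances to be nonnegative. This is true, but it is not automatic, since $\pi$ is only assumed weakly monotonic and no lattice condition is given. You can derive it as follows. Condition first on the vertex variables one at a time, then on the edge variables one at a time. Every conditioning is then of the admissible form $(\varphi_X,\omega_F)$, with every edge of $F$ having both endpoints in $X$. Weak monotonicity therefore gives a monotone sequential coupling of the conditional laws. Chebyshev's inequality on each single ordered coordinate then closes an induction that yields positive association for increasing functions.
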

\begin{proof}
    One uses the same proof as for~\cite[Lemma 8.2]{Gunaratnam:ExistenceTricriticalPoint2024a}, but uses the version of the OSSS inequality from \zcref{thm:OSSS_weakly_monotonic}.
\end{proof}

\section{A computation for the \texorpdfstring{$SOS$-model}{SOS-model}}
\label{sec:SOS_computation}

We give a sketch that \eqref{eq:FKG_condition_p_am} holds for
\begin{equation}
    \lambda(a,b) := e^{-\frac\rho{2d}(a^2+b^2) + \frac{t}{2d}(a+b) - \beta \abs{a+b}}
    \quad\text{and}\quad
    p(a,b) := \frac{e^{t} (1-e^{-\beta \abs{a+b} + \beta \abs{a-b}})}{e^{-\beta \abs{a+b} + \beta \abs{a-b}} + e^{t} (1 - e^{-\beta \abs{a+b} + \beta \abs{a-b}})}
\end{equation}
for all $\rho \in \R$, $\beta>0$, $t \geq 0$.
Let $0 \leq a, b \leq S-1$ and set $a'=a+1$, $b' = b+1$. By symmetry we can assume $a \geq b$. Then,
\begin{equation}
        \frac{\lambda(a',b')\lambda(a,b)}{\lambda(a',b) \lambda(a,b')} = 1.
\end{equation}
Further, writing $m(a,b) = \min(a,b)$ and using that $a\geq b$, we obtain
\begin{equation}
    \frac{1-p(a',b')}{1-p(a',b)} \frac{1-p(a,b)}{1-p(a,b')} 
    = \frac{1+e^t(e^{2\beta m(a', b)}-1)}{1+e^t(e^{2\beta m(a, b)}-1)} \frac{1+e^t(e^{2\beta m(a, b')}-1)}{1+e^t(e^{2\beta m(a', b')}-1)}
    = \begin{cases}
        1 & \text{if } a > b,\\
         \frac{1+e^t(e^{2\beta a}-1)}{1+e^t(e^{2\beta a'}-1)} & \text{if } a = b.
    \end{cases}
\end{equation}
Since the right hand side is $\leq 1$, it remains to verify
\begin{equation}
    \frac{1-p(a',b')}{1-p(a',b)} \frac{1-p(a,b)}{1-p(a,b')} \frac{2-p(a',b)}{2-p(a',b')}\frac{2-p(a,b')}{2-p(a,b)} \leq 1.
\end{equation}
It holds that
\begin{equation}
    \frac{2-p(a,b)}{1-p(a,b)} = 2 + e^t(e^{2 \beta m(a,b)}-1).
\end{equation}
Thus
\begin{equation}
\begin{aligned}
    \phntm\frac{1-p(a',b')}{1-p(a',b)} \frac{1-p(a,b)}{1-p(a,b')} \frac{2-p(a',b)}{2-p(a',b')}\frac{2-p(a,b')}{2-p(a,b)}
    &= \frac{2+e^t(e^{2\beta m(a', b)}-1)}{2+e^t(e^{2\beta m(a, b)}-1)} \frac{2+e^t(e^{2\beta m(a, b')}-1)}{2+e^t(e^{2\beta m(a', b')}-1)}\\
    &=\begin{cases}
        1 & \text{if } a > b,\\
         \frac{2+e^t(e^{2\beta a}-1)}{2+e^t(e^{2\beta a'}-1)} & \text{if } a = b.
    \end{cases}
\end{aligned}
\end{equation}
As before, this quantity is $\leq 1$, concluding the proof.\qed
\end{document}